\def\@seccntformat#1{\@ifundefined{#1@cntformat}%
   {\csname the#1\endcsname\quad}  
   {\csname #1@cntformat\endcsname}
}
\let\oldappendix\appendix 
\renewcommand\appendix{%
    \oldappendix
    \newcommand{\section@cntformat}{\appendixname~\thesection\quad}
}
\newenvironment{lessspaceitemize*}%
  {\begin{itemize}%
  \vspace{-1mm}
    \setlength{\itemsep}{0pt}%
    \setlength{\parskip}{0pt}}%
    {\vspace{-1mm} \end{itemize}}
\newenvironment{lessspaceenum*}%
  {\begin{enumerate}%
  \vspace{-1mm}
    \setlength{\itemsep}{0pt}%
    \setlength{\parskip}{0pt}}%
  {\end{enumerate}}
\newenvironment{definition}[1][Definition]{\begin{trivlist}
\item[\hskip \labelsep {\bfseries #1}]}{\end{trivlist}}
\newenvironment{assumption}[1][Assumption]{\begin{trivlist}
\item[\hskip \labelsep {\bfseries #1}]}{\end{trivlist}}
\DeclareMathOperator*{\argmin}{arg\,min}
\newtheorem{theorem}{Theorem}[section]
\newtheorem{lemma}[theorem]{Lemma}
\newtheorem{claim}[theorem]{Claim}
\newtheorem{proposition}[theorem]{Proposition}
\newtheorem{corollary}[theorem]{Corollary}
\newenvironment{thm_app}[1]{\noindent\textbf{Theorem~\ref{#1}.}}{\par\addvspace{\baselineskip}}
\newenvironment{lem_app}[1]{\noindent\textbf{Lemma~\ref{#1}.}}{\par\addvspace{\baselineskip}}
\newenvironment{prop_app}[1]{\noindent\textbf{Proposition~\ref{#1}.}}{\par\addvspace{\baselineskip}}
\newcommand{\lx}{\ensuremath{\lambda(x)}}%
\newcommand{\ldx}{\ensuremath{\lambda'(x)}}%
\newcommand{\bi}{\ensuremath{\overline{p}_i}}%
\title{Envy-Free Pricing in Large Markets: Approximating Revenue and Welfare}
\author{Elliot Anshelevich\thanks{Computer Science Dept, RPI, Troy, NY. \texttt{eanshel@cs.rpi.edu}}
\and Koushik Kar\thanks{Dept. of Electrical, Computer \& Systems Engg., RPI, Troy, NY \texttt{koushik@ecse.rpi.edu}}
\and Shreyas Sekar\thanks{Computer Science Dept, RPI, Troy, NY. \texttt{sekars@rpi.edu}}}
\begin{document}


\maketitle

\begin{abstract}
We study the classic setting of envy-free pricing, in which a single seller chooses prices for its many items, with the goal of maximizing revenue once the items are allocated. Despite the large body of work addressing such settings, most versions of this problem have resisted good approximation factors for maximizing revenue; this is true even for the classic unit-demand case. In this paper we study envy-free pricing with unit-demand buyers, but unlike previous work we focus on {\em large} markets: ones in which the demand of each buyer is infinitesimally small compared to the size of the overall market. We assume that the buyer valuations for the items they desire have a nice (although reasonable) structure, i.e., that the aggregate buyer demand has a monotone hazard rate and that the values of every buyer type come from the same support.

For such large markets, our main contribution is a $1.88$ approximation algorithm for maximizing revenue, showing that good pricing schemes can be computed when the number of buyers is large. We also give a $(e,2)$-bicriteria algorithm that simultaneously approximates both maximum revenue and welfare, thus showing that it is possible to obtain both good revenue and welfare at the same time. We further generalize our results by relaxing some of our assumptions, and quantify the necessary tradeoffs between revenue and welfare in our setting. Our results are the first known approximations for large markets, and crucially rely on new lower bounds which we prove for the revenue-maximizing prices.
\end{abstract}

\setcounter{page}{1}
\section{Introduction}
How should a seller controlling multiple goods choose prices for these goods, so that these prices yield good revenue and yet are efficiently computable? This question is among the most fundamental of algorithmic challenges motivated by Economic paradigms. At a high level, this setting can be modeled as a two-stage game: the seller chooses prices, and the buyers respond by purchasing goods at these prices. A common constraint in this context is one of \emph{envy-freeness}, i.e., every buyer receives items that maximize her utility, and thus would not want to ``switch places" with any other buyer.

Despite the surge of papers studying envy-free pricing in recent years~\cite{briest2011buying, cheung2008approximation, guruswami2005profit}, even the simplest versions of this problem have resisted good approximation factors for maximizing revenue. This is true even for the common setting of {\em unit-demand} buyers, where every buyer desires one unit of good from a demand set $S_i$ (possibly different for each buyer $i$); she values all items in $S_i$ equally and has no value for items outside $S_i$. The problem of revenue-maximization with unit-demand buyers is among the most popular versions of the pricing problem. While the best known approximation algorithm for the item-pricing version of this problem has only a logarithmic factor~\cite{briest2011buying, guruswami2005profit}, more sophisticated pricing mechanisms have yielded some beautiful, near-optimal mechanisms, but only by giving up envy-freeness~\cite{feldman2015posted, chawla2010multi}.

In this paper, we study envy-free pricing with unit-demand buyers, and form good approximation algorithms for maximizing both revenue and welfare. Unlike most previous work on this subject, we focus on {\em large} markets: ones in which the demand of each buyer is infinitesimally small compared to the market size. For envy-free settings, studying large markets is much more reasonable than markets with only a few buyers. Indeed, in such a market, a seller may not be able to price discriminate (i.e., sell the same good to at different prices), and would instead simply post a price for each good, which would apply to all of the buyers. The fact that all buyers who receive a copy of the same good pay the same price, along with buyers always purchasing a unit of the {\em cheapest} good in their set $S_i$, would guarantee that the allocation is envy-free.

\subsubsection*{Our Model}
We consider a single monopolist producing a set $S$ of goods, which are near-substitutes (see examples below). The seller can produce any desired quantity $x_t$ of a good $t \in S$, for which he incurs a cost of $C_t(x_t)$. The seller's main objective is to set prices on the goods to maximize revenue; in addition to revenue, the seller may also be interested in welfare guarantees. The market consists of a set $B$ of buyer types: for a given type $i \in B$, all the buyers having this type desire the same set $S_i \subseteq S$ of items. Every individual buyer's demand is infinitesimal compared to the market size. Therefore, we can represent every type $i \in B$ by a (inverse) demand function $\lambda_i(x)$ such that for any given $v$, we know how many buyers $x$ have a valuation of $v$ or more for items in $S_i$.

Although different buyer types may have different demand functions, it is natural to assume that the valuations of all buyers are often sampled (albeit differently) from some global distribution. Because of this, we will make the assumption that the buyer valuations for every type have the same support $[\lambda^{min}, \lambda^{max}]$ (although we will relax this assumption later). We show that as long as this is true and that the distribution of buyer valuations has reasonable structure (i.e., monotone hazard rate \cite{bagnoli2005log}), then we can compute prices which extract more than half of the optimal revenue. Our model captures several scenarios of interest; we illustrate two of them below.


\begin{enumerate}
\item \textbf{PEV Charging}: As Plug-in Electric Vehicles become commonplace, it is expected that charging stations will be set up at many locations. Due to the variable cost of electricity generation, these stations may have different prices for charging during different time intervals. We can model each time slot as an item $t$; every buyer has a set of time slots during which she can charge, and the seller may be able to predict the demand using prior data~\cite{tushar2012economics}.

\item \textbf{Display Advertising}: A publisher may have a set of items (e.g., advertising slots) being sold via simultaneous posted price auctions. The ad-items are differentiated (in their position or location on the website) and a large number of buyers are interested in buying these items, each interested in some specific subset depending on their target audience.
\end{enumerate}

Our model retains the combinatorial flavor of the general envy-free pricing problem: different buyer types have access to different subsets of items, and these subsets are not correlated in any way. It is this combinatorial aspect which contributes to the hardness of the problem. In fact, recent complexity results~\cite{briest2011buying, chalermsook2012improved} indicate that the general unit-demand problem with uniform valuations may not admit approximation algorithms with factors better than $O(\log|B|)$. The starting point of our work is the fact that in large markets with many buyer types, $O(\log|B|)$ algorithms are not acceptable. A large body of work has circumvented this hardness by studying interesting instances in which the combinatorial aspect of the model is limited or rendered moot~\cite{chen2014envy, chawla2007algorithmic, feldman2012revenue}. In contrast, we impose no such restriction on the model, instead making the assumption that the buyer valuations follow a nice (monotone hazard rate) structure, while the sets $S_i$ can be arbitrary. For the large market settings we are interested in, our assumptions seem more reasonable than restricting demand sets.

Two aspects of large markets that we will feature in this paper warrant further discussion. First, while the majority of literature has focused on envy-free pricing to maximize revenue (see Related Work for exceptions), we focus on maximizing both revenue and social welfare, and the trade-offs therein. This is motivated by the fact that in large markets with repeated engagement, compromising on welfare may often lead to poor revenue in the long-run. Second, in our model sellers face convex production costs $C_t(x)$ for each item $t$. This strictly generalizes models with limited or unlimited supply which are usually the norm. In large markets, assuming limited supply is too rigid as sellers may often be able to increase production, albeit at a higher cost. Costs, however, are a non-trivial addition to the envy-free model. Many of the standard techniques that previously yielded good algorithms, especially single-pricing for all items, fail to do so in our framework. The seller now faces the onerous task of balancing demand with production costs, which may be different for different items.

\subsection{Our Results}
Recall that every buyer type $i \in B$ is represented by an inverse demand function $\lambda_i(x)$ such that for any $v$, $\lambda_i(x) = v$ indicates that $x$ population of buyers hold a value of $v$ or more for the items in the demand set. Throughout this work, we will assume that $\forall i$, $\lambda_i(x)$ has a Monotone Hazard Rate (MHR, see Section 2 for definition). This coincides with the inverse demand being log-concave ($\log(\lambda_i(x))$ is concave) and encompasses several popular inverse demand functions previously considered in literature, including concave, power-law, and exponential demand~\cite{bagnoli2005log}. 

Our main contribution in this work is a {\em $1.88$ approximation algorithm for maximizing revenue} and a {\em $(e,2)$-bicriteria algorithm that simultaneously approximates both maximum revenue and welfare} respectively. These results hold as long as the `peak of the support' of the buyer demand function is the same for all buyer types, i.e., $\forall i$ $\lambda_i(0) = \lambda^{max}$. Notice that the setting where all buyer types have the same support $[\lambda^{min}, \lambda^{max}]$ is a special case of our uniform peak assumption. Although both of our results use a continuous ascending-price algorithm, we describe an efficient implementation for this algorithm as well. Note that revenue-maximization is still NP-Hard in this setting, due to its combinatorial nature.

We next generalize the uniform peak (or support) assumption and consider markets where every buyer type $i$ has a (potentially different) support $[\lambda^{min}_i, \lambda^{max}_i]$. For this setting, our results are parameterized by a factor $\Delta$ that equals the ratio of the maximum $\lambda^{max}_i$ to the minimum $\lambda^{max}_i$ across buyer types. We show a $O(\log\Delta)$ approximation to the optimal revenue in this setting, and thus imply that as long as the valuations for different buyer types are not too different, we can still extract high revenue. Moreover, we show that this $O(\log\Delta)$ solution also guarantees one fourth of the optimum social welfare. Although the actual buyer demand may be quite asymmetric, our result depends only on the difference in the peak of the supports; it is reasonable to expect that this difference is not too large if the goods are similar.

We now summarize the two high-level contributions that enable our results.
\begin{enumerate}
\item We provide a general framework to derive good algorithms for large markets with {\em production costs}, extensively using techniques from the theory of min-cost flows.

\item Our constant-approximation factors depend crucially on the insight that we gain on the prices in the revenue-maximizing solution. In contrast to previous work, where the approximation factor of the revenue of the computed solution is usually obtained by comparing it to the optimum social welfare~\cite{cheung2008approximation, guruswami2005profit} (which is an upper bound on optimum revenue), we are able to directly compare the revenue of our solution to the profit-maximizing solution.	
\end{enumerate}

\subsection{Related Work}
Our work is a part of a rather extensive body of literature studying envy-free or item-pricing; the field is too vast to survey here and we will only sample the most relevant results. The Unit-Demand Pricing (UDP) problem where buyers have different valuations for different items was first considered in~\cite{guruswami2005profit}, which gave a $O(\log|B|)$ approximation algorithm for maximizing revenue. The version that we study (each buyer has equal valuation for all items in $S_i$, and $0$ otherwise) has been referred to as UDP-MIN or UDP with Uniform Valuations. Surprisingly, the addition of uniform values has not lead to any improved algorithms for the general UDP problem. Moreover, recent complexity results~\cite{briest2011buying, chalermsook2012improved} indicate that a sub-logarithmic approximation factor may be unlikely for both problems.

Assuming more structure on the combinatorial aspect of UDP (i.e., sets $S_i$ stating which buyers have access to which items) has yielded more tractable instances. For example, good approximation algorithms exist when each item is desired by at most $k$ buyer types~\cite{chen2014envy, ImLW12}. For settings with budgeted buyers who have access to all items but have a limit on the amount of money then can spend, \cite{feldman2012revenue} give a $0.5$-approximation algorithm; we remark that budgeted buyers can be captured with an inverse demand $\lambda(x) = \sfrac{c}{x}$. In contrast, ours is among the few papers that makes no assumptions on the demand sets $S_i$ but still obtains a constant approximation factor. Finally, another active line of work has looked at envy-free pricing when each buyer demands a single bundle of items (Single-Minded Pricing). For more details, the reader is asked to refer to~\cite{briest2011buying},~\cite{cheung2008approximation}, and some of the references therein.

More broadly, our work bears certain similarities to algorithmic pricing mechanisms~\cite{chawla2007algorithmic} in a Bayesian setting, especially posted price mechanisms. In fact, the aggregate demand that we consider can be interpreted as buyers deriving values from a known distribution. Although posted pricing provides excellent guarantees, even in multi-parameter settings~\cite{chawla2010multi,feldman2015posted}, the mechanisms seldom result in envy-free allocations because it is assumed that buyers choose items in some order. At a high level, our work is a part of the literature exploring the space of multi-parameter settings with some structure. In addition to a valuation, buyers have a demand set ($S_i$) in our model, whereas researchers have looked at other models where the additional parameter is the quantity demanded~\cite{chen2014revenue} or a position in a metric space~\cite{chen2011metric}.

Finally, envy-free pricing to maximize welfare coincides with the notion of Walrasian Equilibrium minus the market clearing constraint. In large markets such as ours, Walrasian Equilibria are guaranteed to exist~\cite{azevedo2013walrasian}, although their revenue may be poor. In discrete markets, Walrasian prices are not guaranteed to exist and so, the focus has been on solutions that are \emph{approximately envy-free} but still guarantee good welfare~\cite{feldman2013combinatorial,chen2008walrasian}. There has also been some work on approximating both revenue and welfare over a restricted space of solutions; for instance, the space of all equilibria in GSP~\cite{lucier2012revenue}, or all competitive equilibria for \emph{sharp} multi-unit demand~\cite{chen2014revenue}. In contrast, bi-criteria approximations like ours, which compare both objectives for the same solution to the unrestricted global optima, have not been previously considered in envy-free literature to the best of our knowledge.


\section{Model and Preliminaries}
We study the pricing problem faced by a central seller controlling a set $S$ of goods with a large number of buyers, each belonging to one of the buyer types in $B$. All the buyers having a given type $B_i$ have the same set of desired items $S_i \subseteq S$. We model the market structure as a bipartite graph $G=(B\cup S, E)$ where there is an edge between each buyer type $B_i$ and every good in $S_i$. For every individual buyer $j \in B_i$, her valuation is $v_j$ for items in her demand set $S_i$ and $0$ otherwise. Note that different buyers belonging to the same type $B_i$ can have different valuations for the items in $S_i$.

\noindent\textbf{Aggregate Demand and Production Cost:}
Every individual buyer's demand is infinitesimal compared to the market size. Therefore, we can model the aggregate demand of all buyers having type $B_i$ using an inverse demand function $\lambda_i(x)$; $v=\lambda_i(x_i)$ means that $x_i$ of these buyers have a value of $v$ or more for the items in $S_i$. As an example, consider $\lambda_i(x) = 1-x$ for $x \in [0,1]$. This means that the total population of buyers with type $B_i$ is one; $\lambda_i(0.25) = 0.75$ implies that one-fourth of these buyers have a valuation of $0.75$ or more. Finally, the seller incurs a production cost of $C_t(x)$ for producing $x$ amount of good $t \in S$.

\noindent\textbf{Best-Reponse and Envy-Freeness:} A complete solution consists of prices and an allocation, and is specified by three vectors $(\vec{p}, \vec{x}, \vec{y})$. The seller's strategy is to select a price vector $\vec{p}$ where $p_t$ is the price on item $t \in S$. We define $\vec{x}$ to be the buyer demand vector such that $x_i$ is the amount of good allocated to buyers from type $B_i$. Finally, $\vec{y}$ is the allocation such that $y_t$ is the total amount of good $t$ allocated to buyers and $y_t(i)$, the amount to buyer type $i$. We only consider allocations $\vec{y}$ that are \emph{feasible} with $\vec{x}$ and $G$: for all $i$, $\sum_{t} y_t(i)$ should equal $x_i$, and buyers in $B_i$ must only receive allocations of items belonging to $S_i$. Then,

\begin{itemize}
\item Given $\vec{p}$, we let $\bi$ denote the minimum price available to buyers from type $B_i$, i.e., $\bi = \min_{t \in S_i} p_t.$

\item The buyer demand $\vec{x}$ is said to be a \textbf{best-response} to the prices $\vec{p}$ iff $\forall B_i$, $\bi=\lambda_i(x_i)$. That is, a population of $x_i$ buyers from $B_i$ have a value of $\bi$ or larger, and thus are maximizing their utility by deciding to purchase items at a price of $\bi$.

\item Given $\vec{p}$ and $\vec{x}$, the allocation $\vec{y}$ is said to be \textbf{envy-free} if buyer demand is a best-response to the prices, and if for every buyer the items they are allocated are the lowest priced items available to them, i.e., $y_t(i) > 0 \Rightarrow p_t = \bi$.
\end{itemize}

Our main objective is an envy-free solution that maximizes revenue. Given $(\vec{p}, \vec{x}, \vec{y})$, the {\em revenue} of the seller is the total payment minus costs incurred, i.e.,
\begin{align*}
\text{Revenue} = & \sum_{t \in S}(p_ty_t - C_t(y_t))\\
 = & \sum_{i \in B}\bar{p}_i x_i - C(\vec{y}) & \text{(if solution is envy-free)}
\end{align*}
Note that as long as the instance is clear, we will use $C(\vec{y})$ to denote the total cost of all items when the allocation is $\vec{y}$. We also consider solutions with good social welfare, i.e., the total utility of all the buyers plus that of the seller. As long as the solution is envy-free, buyers are utility-maximizing, and so the aggregate utility of buyers belonging to type $i$ is the sum of their values minus payments, which is $\int_{0}^{x_i}\lambda_i(x)dx - \bi x_i.$ Since the payments cancel out, the total social welfare of a solution is equal to
$$\text{Social Welfare} = \sum_{B_i \in B}\int_{x=0}^{x_i}\lambda_i(x)dx - C(\vec{y}).$$

 We make the following assumptions on the inverse demand and cost functions.
\begin{enumerate}
\item $\lambda_i(x)$ cannot increase with $x$. This is by definition: if $x_1$ buyers hold a value of $v_1$ or more and $x_2 < x_1$, then the value of these $x_2$ buyers is at least $v_1$. 

\item We also assume that $\lambda_i(x)$ is \emph{continuously differentiable} on $(0,T_i)$ (here $T_i$ is the population of buyers in $B_i$), and has a monotone hazard rate (see definition below). Notice that $\lambda'_i(x)$ cannot be positive since $\lambda_i$ is non-increasing.

\item For all $t \in S$, we take the production costs $C_t(y)$ to be convex, which is the norm in the literature. In addition, we assume that $C_t(y)$ is continuously differentiable and define $c_t(y)$ to be its derivative. All our results also hold if an item $t$ has a limited supply of $Y_t$, and $C_t(y)$ is only differentiable until $y=Y_t$, at which point it becomes infinite.

\end{enumerate}

\begin{definition}{\emph{(MHR)}} An inverse demand function $\lx$ is said to be log-concave or equivalently, have a monotone hazard rate if $\frac{\ldx}{\lx}$ is non-increasing with $x$. Since $\ldx$ is not positive, this is equivalent to saying $\frac{|\ldx|}{\lx}$ is non-decreasing.
\end{definition}
Many commonly used buyer demand functions belong to this class including uniform ($\lx = a$), linear ($\lx = a - x$) and exponential inverse demand ($\lx = e^{-x}$). Although the monotone hazard rate requirement gives the appearance of being somewhat restrictive, this assumption is actually rather weak. We show (proof in the Appendix) that even with only MHR demand, our framework encompasses the well-studied unit-demand pricing problem in finite markets.

\begin{proposition}
\label{prop_mhrnottoobad}
Any UDP instance with uniform valuations in markets with a finite number of buyers can be reduced to an instance of our problem where all buyer types have monotone hazard rate inverse demand.
\end{proposition}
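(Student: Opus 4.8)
The plan is to give an approximation-preserving reduction that turns each atomic buyer of the finite UDP-MIN instance into its own buyer type in our large-market model, using a strictly decreasing MHR inverse demand whose mass is concentrated at valuations just below the buyer's value. Concretely, given a UDP-MIN instance with item set $S$ and buyers $1,\dots,n$, where buyer $i$ has value $v_i$ and demand set $S_i$, I would create a buyer type $B_i$ with the \emph{same} demand set $S_i$, population $T_i = 1$, and inverse demand $\lambda_i(x) = v_i(1+\epsilon) - v_i\epsilon x$ on $(0,1)$ for a small parameter $\epsilon$, with all production costs zero (or, if the UDP-MIN instance has limited supply $Y_t$, the limited-supply variant of the model from Assumption~3). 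Each $\lambda_i$ is affine, hence log-concave, hence MHR, and is continuously differentiable with $\lambda_i' = -v_i\epsilon < 0$, so the constructed instance is a legal instance of our model; it has size polynomial in the UDP-MIN instance provided $\epsilon$ is specified with polynomially many bits.

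The second step is to read off how prices in the reduced instance translate into UDP-MIN behavior. A type $B_i$ facing minimum available price $\bar p_i$ has best-response demand equal to the full mass $x_i = 1$ whenever $\bar p_i \le v_i$ (the demand saturates), equal to $0$ whenever $\bar p_i \ge v_i(1+\epsilon)$, and strictly between $0$ and $1$ only in the tiny window $\bar p_i \in (v_i,v_i(1+\epsilon))$. Choosing $\epsilon$ smaller than the minimum relative gap between distinct buyer values (so that no $v_j$ lies in any interval $(v_i,v_i(1+\epsilon))$ — this needs only polynomially many bits), one checks that for every price vector with all $p_t \in \{0\}\cup\{v_1,\dots,v_n\}$ the revenue and the envy-free allocation of the reduced instance coincide exactly with those of the corresponding UDP-MIN pricing: each buyer takes a cheapest affordable item of $S_i$ and pays its price. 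Hence every UDP-MIN solution lifts to a reduced-instance solution of equal revenue, giving $\mathrm{OPT}_{\text{reduced}} \ge \mathrm{OPT}_{\text{UDP}}$.

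For the reverse direction I would argue that any reduced-instance solution can be converted, losing at most a $(1+\epsilon)$ factor in revenue, into one whose prices all lie in $\{0\}\cup\{v_1,\dots,v_n\}$, which then reads back as a UDP-MIN solution. The only extra freedom the continuous model grants is pricing some item $t$ inside a window $(v_j,v_j(1+\epsilon))$ (prices at or above $v_n(1+\epsilon)$ sell nothing and can be dropped to $v_n$): rounding such a price down to $v_j$ can only decrease each type's minimum available price, can only weakly increase each type's demand, and lowers the per-unit price paid by any still-served type by a factor at most $1+\epsilon$. Performing this rounding for all items simultaneously and summing the per-type revenue comparisons gives the $(1+\epsilon)$-factor bound, so an $\alpha$-approximation for our problem yields an $\alpha(1+\epsilon)$-approximation for UDP-MIN, and $\epsilon$ can be taken small enough that this is essentially lossless.

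I expect the reverse-direction exchange argument to be the main obstacle: when one item's price is rounded down, the identity of the cheapest item in some buyer's set $S_i$ can change, so the bookkeeping must be carried out for all items at once and the revenue comparison argued uniformly over types rather than item by item. A secondary point needing care is checking that the boundary cases of the best-response definition (prices exactly equal to $v_i$, or at or above $\lambda_i(0)$) behave as intended, which is exactly why the construction uses a strictly decreasing $\lambda_i$ rather than the flat function $\lambda_i\equiv v_i$. Finally, note that the constructed types have distinct peaks $\lambda_i(0) = v_i(1+\epsilon)$, so this instance does not satisfy the uniform-peak hypothesis of our main theorems — consistent with the fact that under that hypothesis we do obtain a constant-factor approximation.
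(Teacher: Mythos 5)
Your construction is essentially sound for the unlimited-supply case, but it takes a genuinely different route from the paper, and it is weaker in two respects. The paper simply uses the flat function $\lambda_i(x)=v_i$ for $x\le 1$ (and $0$ beyond), which is constant, hence log-concave/MHR, and continuously differentiable on $(0,1)$ --- exactly what the model requires, since differentiability is only demanded on the open interval $(0,T_i)$. With this choice the reduction is \emph{exact}: any UDP solution is verbatim a solution of the reduced instance with the same revenue, and conversely any fractional solution of the reduced instance can be made integral (raise each fractional buyer's allocation to one, consolidate each buyer's flow onto a single item) without losing revenue, so the optima coincide. Your $\epsilon$-sloped demand avoids the flat function you were worried about, but at the cost of an approximation-preserving rather than exact reduction, a $(1+\epsilon)$ loss, and the price-rounding bookkeeping in the reverse direction; your rounding argument itself (each rounded price drops by at most a factor $1+\epsilon$, so each type's minimum price drops by at most that factor while its demand weakly increases, and after rounding no price lies strictly inside any window so reduced revenue equals UDP revenue) does go through with zero costs. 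For the purpose this proposition serves in the paper (transferring UDP hardness), a $(1+\epsilon)$-preserving reduction suffices, so this is an acceptable alternative, just heavier than needed.

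The genuine gap is limited supply, which the paper's proof explicitly covers and yours only mentions in passing. If capacities $Y_t$ are encoded as costs that jump to infinity (or grow steeply) beyond $Y_t$, then your reverse-direction rounding step is no longer innocuous: rounding prices down weakly \emph{increases} every type's best-response demand, and in the paper's model an envy-free solution must serve the full best-response demand, so the rounded solution may exceed capacities and incur infinite cost, destroying the $(1+\epsilon)$ revenue comparison. The paper sidesteps this with an integrality argument: assuming integral capacities, any fractional envy-free solution of the reduced instance can be rearranged into an integral one of no smaller revenue that respects the capacities, and hence reads back as a UDP solution. To complete your variant you would need an analogous argument (or restrict the proposition to unlimited supply), since the price-rounding step alone does not handle capacities.
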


Therefore, our setting strictly generalizes previously studied UDP problems, which are unlikely to admit sub-logarithmic approximation factors~\cite{briest2011buying, chalermsook2012improved}. Our main contribution, however, is proving that the addition of a little bit of structure (via uniform peak or support) to our general framework provides much greater insight into the nature of the revenue-maximizing solution, and leads to good algorithms.

{\bf Optimal Solutions.~}
We use the notation $(\vec{p^{opt}},\vec{x^{opt}}, \vec{y^{opt}})$ to denote an envy-free solution maximizing revenue, and $(\vec{x^*}, \vec{y^*})$ to denote an allocation that maximizes welfare (since welfare depends only on the allocation, not the prices). Given a graph $G$, functions $\lambda_i$ and $C_t$, it is easy to see that the solution maximizing social welfare can be computed using a convex program. We also remark here that once the welfare maximizing solution is computed, there exist prices $\vec{p^*}$ so that $(\vec{x^*}, \vec{y^*})$ is an envy-free allocation to these prices. The more challenging task is to compute prices that (approximately) maximize revenue and perhaps, simultaneously welfare. 

\begin{proposition}
\label{lem_optbrmincost}
Consider the optimum solution $\vec{x^*}, \vec{y^*}$ for a given instance. Define the price vector $\vec{p^*}$ as $p^*_t = c_t(y^*_t)$ for item $t$. Then $(\vec{p^*},\vec{x^*}, \vec{y^*})$ is an envy-free solution to the prices. 
\end{proposition}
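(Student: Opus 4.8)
The plan is to recognize $(\vec{x^*}, \vec{y^*})$ as the optimum of a concave program and read the claim off its first-order (KKT) optimality conditions. Since welfare depends only on the allocation, I would eliminate $\vec{x}$ via the feasibility constraint $x_i = \sum_{t \in S_i} y_t(i)$ and write welfare purely as a function of $\vec{y}$,
$$W(\vec{y}) \;=\; \sum_{i \in B} \int_{0}^{\sum_{t \in S_i} y_t(i)} \lambda_i(x)\,dx \;-\; \sum_{t \in S} C_t\Big(\sum_{i} y_t(i)\Big),$$
to be maximized over $y_t(i) \ge 0$ with $y_t(i) = 0$ whenever $t \notin S_i$ (equivalently, over edges $(i,t) \in E$). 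This $W$ is concave: each $\int_0^{z} \lambda_i$ is concave because $\lambda_i$ is non-increasing, each $-C_t$ is concave because $C_t$ is convex, and both are composed with linear maps. Thus $(\vec{x^*}, \vec{y^*})$ is exactly a maximizer of $W$, and since all constraints are linear the KKT conditions are both necessary and sufficient, so I may invoke them without any separate constraint qualification.

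Next I would write the stationarity condition. Differentiating $W$ with respect to $y_t(i)$ for an edge $(i,t)\in E$ gives $\lambda_i(x^*_i) - c_t(y^*_t)$ (the chain rule on the integral term contributes $\lambda_i(x^*_i)$, the cost term contributes $-c_t(y^*_t)$). Complementary slackness with $y_t(i)\ge 0$ then yields, for every edge $(i,t)$,
$$c_t(y^*_t) \;\ge\; \lambda_i(x^*_i), \qquad \text{with equality whenever } y^*_t(i) > 0.$$
Since $p^*_t = c_t(y^*_t)$ by definition, this says $p^*_t \ge \lambda_i(x^*_i)$ for all $t \in S_i$, with equality for any item $t \in S_i$ that type $i$ actually receives.

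From here the two defining properties of an envy-free solution fall out. For the allocation condition: if $y^*_t(i) > 0$ then $p^*_t = c_t(y^*_t) = \lambda_i(x^*_i)$, while every other $t' \in S_i$ has $p^*_{t'} = c_{t'}(y^*_{t'}) \ge \lambda_i(x^*_i) = p^*_t$, so $p^*_t = \min_{t' \in S_i} p^*_{t'} = \overline{p}^*_i$ — exactly $y^*_t(i) > 0 \Rightarrow p^*_t = \overline{p}^*_i$. For best-response I split on $x^*_i$: if $x^*_i > 0$, some $t \in S_i$ has $y^*_t(i) > 0$ and the argument just given shows $\overline{p}^*_i = p^*_t = \lambda_i(x^*_i)$; if $x^*_i = 0$, the inequality gives $\overline{p}^*_i = \min_{t \in S_i} c_t(y^*_t) \ge \lambda_i(0)$, so no buyer of type $i$ values an item strictly above $\overline{p}^*_i$ and zero demand is the best response, consistent with $x^*_i = 0$.

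The only genuinely delicate point is this last case: the definition $\overline{p}_i = \lambda_i(x_i)$ does not literally make sense when demand is zero and the cheapest available price exceeds $\lambda_i(0)$, so I would state explicitly that $x_i = 0$ is understood to be a best-response to any price at least $\lambda_i(0)$. That is the only subtlety; everything else is a direct transcription of the KKT conditions, so I would keep the write-up short — set up $W$, assert concavity and applicability of KKT, derive the edge inequality, and read off both envy-free properties.
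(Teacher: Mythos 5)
Your proof is correct, but it reaches the conclusion by a different route than the paper. The paper's own argument is an elementary exchange/perturbation: if some buyer $i$ received item $t'$ while having access to an item $t$ with $p^*_t < p^*_{t'}$, i.e.\ $c_t(y^*_t) < c_{t'}(y^*_{t'})$, then shifting an infinitesimal amount of $i$'s flow from $t'$ to $t$ leaves the buyers' value term unchanged and strictly reduces production cost, contradicting welfare optimality; the same observation also gives the companion claim (stated in the appendix version) that $\vec{y^*}$ is a min-cost flow for $\vec{x^*}$. You instead write welfare as a concave program in $\vec{y}$ and read the conclusion off the KKT conditions, obtaining $c_t(y^*_t) \geq \lambda_i(x^*_i)$ on every edge $(i,t) \in E$ with equality on edges carrying flow. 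The two arguments are first-order optimality in different clothing, but yours buys something the paper's sketch does not: it explicitly verifies the best-response half of envy-freeness, $\bar{p}^*_i = \lambda_i(x^*_i)$ for buyers with $x^*_i > 0$, which the paper's proof (which only argues that allocated items are the cheapest available) leaves implicit, and you correctly flag the $x^*_i = 0$ boundary convention. Conversely, the paper's exchange argument needs no optimization machinery and delivers the min-cost-flow statement for free, which your write-up does not address (though the statement as given does not require it). One small symmetric caveat: your program omits the population cap $x_i \leq T_i$ (equivalently, extends $\lambda_i$ by zero beyond $T_i$), so if demand saturates at $T_i$ the literal equation $\bar{p}^*_i = \lambda_i(x^*_i)$ can fail for the same definitional reason as at $x^*_i = 0$ and should be covered by the same convention; this does not affect the substance of the argument.
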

We prove this in the Appendix and also show that in the revenue maximizing solution, every item's price is at least its price in $\vec{p^*}$, i.e, for all $t$, $p^{opt}_t \geq p^*_t$. For the rest of this paper, we will only consider solutions where the prices dominate $\vec{p^*}$. In fact, in Lemma~\ref{lem_lowerboundmain} we show much stronger lower bounds on $\vec{p^{opt}}$ which enable us to prove our results.
%

\textbf{Connection to Flows:} We can view a feasible allocation $\vec{y}$ as a flow from the items $S$ to the buyers with a demand of $\vec{x}$, assuming that $G$ is fixed. Notice that there are several feasible flows for a given demand $\vec{x}$. We will be most interested in min-cost flows: the feasible allocation $\vec{y}$ that also minimizes the total production cost $\sum_t C_t(y_t)$. The min-cost flow is independent of the prices and, given $\vec{x}$, can be computed efficiently using a convex program.

It is easy to see that $\vec{y^*}$ is a min-cost flow, but general envy-free solutions, including  $(\vec{p^{opt}},\vec{x^{opt}}, \vec{y^{opt}})$, may not use min-cost flows, since envy-freeness constrains the buyers to use only the items with cheapest {\em price,} while min-cost flows form allocations to optimize production costs.
We reiterate that given a price vector $\vec{p}$, the best-response buyer demand $\vec{x}$ can be computed using $\bi = \lambda_i(x_i)$, and given $(\vec{p}, \vec{x})$, we can always determine an envy-free allocation $\vec{y}$. Interestingly, the solutions returned by our algorithms are not only envy-free, but also will use min-cost flows for the corresponding buyer demand $\vec{x}$. Finally, the proof of our 1.88-Approximation Algorithm crucially uses the following property that relates best-response allocations and min-cost flows.
\begin{lemma}
\label{lem_diffcostmonoton}
Consider two price vectors $\vec{p_1}$ and $\vec{p_2}$ such that $\vec{p_1} \geq \vec{p_2}$. Let $\vec{x_1}, \vec{x_2}$ be the corresponding best-response buyer demands to these prices and $\vec{y_1}$, $\vec{y_2}$ be the minimum-cost flows for the buyer demands $\vec{x_1}$ and $\vec{x_2}$ respectively. Then,
\begin{enumerate}
\item $\vec{x_2} \geq \vec{x_1}$, i.e, every buyer type's demand is higher under $\vec{p_2}$ than under $\vec{p_1}$.

\item For all $t \in S$, $c_t(y^1_t) \leq c_t(y^2_t)$.
\end{enumerate}
\end{lemma}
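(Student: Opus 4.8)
The plan is to prove the two parts separately: part~1 is an immediate monotonicity observation, and part~2 is where the real work lies, so I would establish it via the first-order optimality conditions of the min-cost flow together with an exchange/counting argument.

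For part~1, recall that by the best-response condition the demand of a type $B_i$ is pinned down by $\lambda_i(x_i) = \bar{p}_i$, where $\bar{p}_i = \min_{t\in S_i} p_t$ is the cheapest price $B_i$ sees; equivalently $x_i$ is the mass of buyers in $B_i$ with value at least $\bar{p}_i$. Since $\vec{p_1}\ge\vec{p_2}$ componentwise, for every $i$ and every $t\in S_i$ we have $p^1_t \ge p^2_t \ge \bar p^2_i$, hence $\bar p^1_i = \min_{t\in S_i} p^1_t \ge \bar p^2_i$. Because $\lambda_i$ is non-increasing, raising the threshold price can only weakly shrink the set of willing buyers, so $x^1_i \le x^2_i$ for all $i$, i.e.\ $\vec{x_2} \ge \vec{x_1}$.

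For part~2 I would use the standard characterization of a min-cost flow for a fixed demand $\vec x$: since each $C_t$ is convex and differentiable, there are shadow prices $\pi_i$ at the buyer types with $c_t(y_t)\ge \pi_i$ for every $t\in S_i$ and equality whenever $y_t(i)>0$; equivalently $\pi_i = \min_{t\in S_i} c_t(y_t)$, attained by every edge of $B_i$ carrying positive flow (a capacity on $t$, if present, only adds to $\pi_i$, which does no harm). Let $\pi^1,\pi^2$ be these prices for $\vec{y_1},\vec{y_2}$. Suppose for contradiction that $T^- := \{\, t : c_t(y^2_t) < c_t(y^1_t)\,\}$ is nonempty, and set $B^- := \{\, i : \pi^2_i < \pi^1_i\,\}$. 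Convexity of $C_t$ (so $c_t$ is non-decreasing) forces $y^2_t < y^1_t$ for every $t\in T^-$. Two closure facts then drop out of the optimality conditions: if $i\in B^-$ and $y^2_t(i)>0$, then $c_t(y^2_t)=\pi^2_i<\pi^1_i\le c_t(y^1_t)$, so $t\in T^-$; and if $t\in T^-$ and $y^1_t(i)>0$, then $\pi^1_i=c_t(y^1_t)>c_t(y^2_t)\ge\pi^2_i$, so $i\in B^-$. In words: under $\vec{y_2}$ the demand of $B^-$ is met entirely from $T^-$ items, while under $\vec{y_1}$ the output of $T^-$ items flows entirely into $B^-$. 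Counting flow then gives
$\sum_{t\in T^-} y^1_t \le \sum_{i\in B^-} x^1_i \le \sum_{i\in B^-} x^2_i \le \sum_{t\in T^-} y^2_t$,
where the outer inequalities are the two closure facts and the middle one is part~1; this contradicts $y^2_t<y^1_t$ on the nonempty set $T^-$. Hence $T^-=\emptyset$ and $c_t(y^1_t)\le c_t(y^2_t)$ for every $t$.

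The step I expect to be the crux is isolating the right certificate of min-cost optimality (the node potentials $\pi_i$, with the minor subtlety of capacitated or non-smooth costs) and then choosing the ``bad'' sets $T^-$ and $B^-$ so that their closure properties make the flow count telescope; once this combinatorial structure is in place, the remainder is routine bookkeeping. I would expect the authors' proof to follow essentially this route.
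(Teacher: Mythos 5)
Your proof is correct, and Part 2 takes a genuinely different route from the paper's. The paper also argues by contradiction with a flow-counting finish, but it builds its ``closed'' subsystem by reachability: starting from a single violating item $t$, it forms a directed graph on items with an edge $(t_1,t_2)$ whenever some buyer shifted flow from $t_1$ to $t_2$ between the two allocations, and then uses the min-cost optimality conditions along these edges to propagate the marginal-cost inequalities $c_{t_1}(z^2_{t_1}) \leq c_t(z^2_t) < c_t(z^1_t) \leq c_{t_1}(z^1_{t_1})$ to every item reachable from $t$, before counting the flow into that reachable set. You instead define the offending sets globally and directly from the KKT potentials -- $T^-$ as the items whose marginal cost drops and $B^-$ as the buyers whose potential drops -- prove the two closure facts (under $\vec{y_2}$ the demand of $B^-$ is served only by $T^-$, and under $\vec{y_1}$ the output of $T^-$ goes only to $B^-$), and then run the same style of conservation count. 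Your construction handles all violating items at once and avoids the auxiliary flow-difference graph, which makes the closure argument shorter and, in my view, cleaner; the paper's reachability construction is more hands-on but needs the transitive propagation step and the extra argument that buyers of the reachable set cannot leak flow outside it. One caveat: your parenthetical dismissal of capacities is the weakest point -- if a flow-carrying item is at a hard capacity, the equality $c_t(y_t)=\pi_i$ on flow-carrying edges acquires a capacity dual and your closure facts would need revisiting -- but the paper models limited supply through convex differentiable costs that blow up near the cap, and its own proof relies on the same uncapacitated KKT form, so within the paper's model this is not a gap.
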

\emph{(Proof Sketch)} The lemma merely formalizes a very intuitive idea, namely that increasing prices from $\vec{p_2}$ to $\vec{p_1}$ can only lead to lowered buyer demand. Since the buyer demand in $\vec{x_1}$ is smaller, it means that the allocation of any item cannot strictly increase for a min-cost flow. Finally, a smaller allocation implies a smaller marginal cost. Rigorously proving Statement 2 is actually not that easy and we defer the full proof to the appendix. $\blacksquare$

\section{Large Markets with Uniform Peak Valuations}\label{sec:uniformpeak}
As argued in the Introduction, for markets with a large number of buyers it often makes sense to assume that the inverse demand functions $\lambda_i$ have the same support $[\lambda^{min}, \lambda^{max}]$ for all $i$, which is what we do in this section. In fact, all our results hold as long as the {\em peak} values for every $\lambda_i$ are the same, i.e., that $\lambda^{\max}=\lambda_i(0)$ is the same for all $i$. This would occur, for example, when a very large population of buyers is assigned to different buyer types in a random way. Not too surprisingly, the problem of revenue maximization remains NP-Hard even when the demand functions have uniform peak valuations.

\begin{proposition}
\label{prop_nphard}
The Unit Demand Pricing problem in large markets with MHR Inverse Demand and Uniform Peaks $(\lambda_i(0) = \lambda^{max}$ for all $i$) is NP-Hard even with unlimited supply.
\end{proposition}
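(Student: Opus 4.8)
The plan is to reduce from a known NP-hard problem that captures the combinatorial essence of unit-demand pricing with uniform valuations. The natural candidate is the same type of reduction used to show hardness of UDP-MIN in finite markets; a clean source is the hardness of pricing to maximize revenue when buyers are single-minded or unit-demand with equal values, which in turn can be obtained from problems like \textsc{Max Independent Set} or \textsc{Set Packing} (or, following the UDP-MIN literature, from \textsc{SAT}/\textsc{Vertex Cover}-style constructions). First I would take a finite-market UDP instance with uniform valuations — a set of items, a set of buyers each with a demand set $S_i$ and a common value pattern — for which revenue maximization is known to be NP-hard. Then I would invoke Proposition~\ref{prop_mhrnottoobad}, which already says any such finite UDP instance embeds into our large-market model with MHR inverse demand; the only extra thing to check is that this embedding can be done so that every resulting $\lambda_i$ has the same peak $\lambda_i(0) = \lambda^{max}$, and that it uses unlimited supply (zero or constant marginal cost).

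The key steps, in order: (1) fix a finite UDP-with-uniform-valuations instance whose decision version (``is there a pricing with revenue $\ge R$?'') is NP-complete; (2) recall the construction behind Proposition~\ref{prop_mhrnottoobad} that turns each finite buyer population into an inverse demand curve $\lambda_i$ — roughly, a buyer with value $v_j$ contributes a ``step'' of infinitesimal width at height $v_j$, and these steps are smoothed into an MHR (log-concave) curve; (3) modify the construction so all curves start at the same height: add to each type a negligible mass of ``phantom'' high-value buyers at value exactly $\lambda^{max}$ (the global maximum value in the instance), contributing only an arbitrarily small amount of revenue and not affecting which items any real buyer buys, so that $\lambda_i(0) = \lambda^{max}$ for every $i$ while log-concavity is preserved; (4) set all production costs to zero (unlimited supply), so revenue equals total payments; (5) argue the optimal revenue in the constructed large-market instance is, up to a vanishing additive term controlled by the phantom mass, exactly the optimal revenue of the finite instance, so a polynomial-time algorithm for the former decides the latter.

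The main obstacle I anticipate is step (3)–(5): forcing a uniform peak without destroying the reduction. Equalizing the peaks changes each demand curve near $x=0$, and one must ensure (a) log-concavity/MHR still holds after gluing in the phantom segment — this requires the phantom part to be, say, linear or exponential with a slope steep enough to dominate but chosen to match smoothly, and the smoothing lemma from Proposition~\ref{prop_mhrnottoobad}'s proof to still apply; and (b) the revenue contributed by these phantom buyers is uniformly $o(1)$ relative to the gap $R$ between yes- and no-instances, so that thresholding still works and the reduction is gap-preserving. A secondary subtlety is that in the large-market model the seller posts one price per item and all buyers of a type buy the cheapest item in $S_i$ — I would confirm this matches the envy-free finite-market semantics used in the hardness source, which it does precisely because uniform valuations make ``cheapest available item'' the unique rational choice. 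Once these points are handled, the NP-hardness follows immediately by composition with Proposition~\ref{prop_mhrnottoobad}.
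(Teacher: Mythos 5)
There is a genuine gap at your step (3), and it is not a technicality that smoothing can fix: equalizing the peaks by gluing a ``negligible mass of phantom high-value buyers'' onto a step demand curve is incompatible with the MHR assumption. MHR means the hazard rate $\frac{|\lambda_i'(x)|}{\lambda_i(x)}$ is \emph{non-decreasing} in $x$. If a type with value $v_i < \lambda^{max}$ is given a curve that drops from $\lambda^{max}$ to $v_i$ over a tiny prefix of width $\epsilon$ and then sits (approximately) flat at $v_i$ for a unit of mass, the hazard rate must be at least $\ln(\lambda^{max}/v_i)/\epsilon$ somewhere in the prefix, and MHR then forces it to stay at least that large for all larger $x$; hence $\lambda_i$ collapses to essentially zero within an additional mass of order $\epsilon$, and the unit-mass ``step'' at value $v_i$ cannot survive. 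Equivalently, log-concavity forbids ``steep then flat.'' The only MHR curves with peak $\lambda^{max}$ that retain substantial mass near value $v_i$ (e.g.\ the exponential interpolation $\lambda^{max}(v_i/\lambda^{max})^x$) necessarily place non-negligible mass at intermediate values, which distorts the revenue of each type by a constant factor rather than the $o(1)$ additive error your step (5) needs. So composing Proposition~\ref{prop_mhrnottoobad} with phantom buyers does not yield a gap-preserving reduction, and the whole route through finite-market UDP-UV breaks precisely at the uniform-peak requirement.

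The paper avoids this tension by a direct reduction from Vertex Cover in which every demand curve has peak $2$ by construction: each vertex contributes one item and one buyer type with constant demand $\lambda_i(x)=2$ on $[0,1]$, and each edge contributes $r=n$ buyer types with \emph{linear} demand $\lambda_i(x)=2-x$ whose demand set is the two endpoint items. Both shapes are MHR with the same peak, yet they behave like ``high-value'' and ``low-value'' buyers respectively, because the linear type's monopoly price is $1$ while the constant type pays $2$. One then argues that the items priced below $2$ must form a vertex cover, that such items are priced in $[1,1+\tfrac{1}{2r}]$ in any near-optimal solution, and that the profit $2(n-k)+k(1+\tfrac{1}{2r})+mr$ versus the achievable $2(n-k)+k+mr$ forces optimal solutions to use minimum vertex covers. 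The lesson relative to your proposal: under a uniform-peak constraint you should encode differing willingness-to-pay through differently \emph{shaped} MHR curves (whose revenue-optimal prices differ), not by truncating or perturbing step functions at different heights.
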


In this section we establish our main result: a 1.877 approximation algorithm for maximizing revenue, which works as long as the inverse demand functions are MHR and have uniform peak values. We begin with a general, parameter-dependent procedure for generating prices, which will be a building block of both this algorithm, and the algorithms in later sections. Although the algorithm is described here as a more intuitive continuous-time procedure, it can be efficiently implemented using $O(|B|\log \lambda^{\max})$ min-cost flow computations, as we argue in Section~\ref{sec:efficient}. To simplify discussion, henceforth we will use ``buyer" interchangeably with ``buyer type" as long as the context is clear.

\begin{algorithm}[htbp]
\caption{Ascending-Price Procedure with Stop Parameter $k$}
\label{alg_genprocedurebody}
\algsetup{indent=2em}
\begin{algorithmic}[1]
\STATE Set initial prices on the items, $p_t = p^*_t$.
\STATE ACTIVE $\gets$ All minimally priced items and all buyers using these items.
\STATE INACTIVE $\gets B\cup S$ $\setminus$ ACTIVE; FINISH $\gets \emptyset$
\WHILE{FINISH $\neq B \cup S$}
\STATE Increase the price of all ACTIVE items by an infinitesimal amount\\
\COMMENT{All ACTIVE items have the same price, the active price.}
\STATE Compute the min-cost flow for the sub-graph induced by ACTIVE
\COMMENT{We prove later: At every stage active buyers only receive allocations of active items}
\IF{$t \in$ INACTIVE s.t $p^*_t$ equals the active price}
\STATE Remove $t$, buyers using $t$ from INACTIVE and add to ACTIVE
\ENDIF
\IF {$t \in$ ACTIVE meets the stopping criterion in the current solution}
\STATE Remove $t$, buyers using $t$ from ACTIVE and add to FINISH.
\ENDIF
\ENDWHILE
\begin{equation}
\label{eqn_stopconditionmain}
\textbf{Stopping Criterion($p_t, y_t, k$)}:~~~~p_t - c_t(y_t) \geq \frac{1}{k}(\lambda_{max} - c_t(y_t))
\end{equation}
\end{algorithmic}
\end{algorithm}

Algorithm \ref{alg_genprocedurebody} begins by pricing all the items at the price vector $\vec{p^*}$, which as we mentioned makes $(\vec{p^*}, \vec{x^*}, \vec{y^*})$ an envy-free allocation. We gradually increase prices on the items belonging to an `active set', initialized to the set of cheapest items in $\vec{p^*}$ and the buyers receiving these items. At each stage, every item in the active set has the same price (active price) allowing us to compute the min-cost flow for only the active buyers and items. As we increase the active price, if it equals $p^*_t$ for some inactive $t$, we add $t$ and buyers using $t$ to the active set. An item $t$ remains active until a stopping condition dependent on a parameter $k \geq 1$ is reached (Equation~\ref{eqn_stopconditionmain}); once this happens the price of item $t$ becomes fixed, and item $t$ is removed from the active set along with buyers using $t$. The simple ascending-price algorithm stops once all the items have met the stopping criterion.

We now sketch some properties of this algorithm that hold for all $k$; the full proofs and formal lemmas are in the Appendix. For a given parameter $k \geq 1$, we will use $(\vec{p^{k}}, \vec{x^{k}}, \vec{y^{k}})$ to denote the solution returned by our algorithm. We begin with an important observation regarding the stopping criterion: at any stage of the algorithm, for any two active items, if the item with the higher marginal cost ($c_t(y_t)$) meets the stopping criterion, then the item with the smaller marginal cost must also satisfy the condition (Lemma~\ref{lem_ep1stopcrithier} in the Appendix).

\textbf{Notation}
Every `stage' of our algorithm corresponds to a unique value of the active price (i.e., price of all the active items), so we can refer to the allocation formed by the algorithm at some point as the allocation at active price $p$. Formally, we define $\vec{x}(p)$ to be the buyer demand vector when the active price is $p$, and $\vec{y}(p)$ is the allocation of items at that price. At any price $p$, for the inactive buyers $x_i(p)$ coincides with $x^*_i$ and for inactive items $y_t(p) = y^*_t$. For finished items (or buyers), the allocation (demand) is the same as it was when that item (buyer) met the stopping criterion and became finished. Finally, we use $B_A(p)$ to denote the buyers and $S_A(p)$ for the items in the active set when the active price is $p$; we define the analogous sets $(B_I(p), S_I(p))$, $(B_F(p), S_F(p))$ for the inactive and finished blocks respectively.

\subsection*{Properties Satisfied by Algorithm~\ref{alg_genprocedurebody} and price hierarchy}

\begin{SCfigure}
\label{fig_sethierarchy}
\includegraphics[scale=0.8]{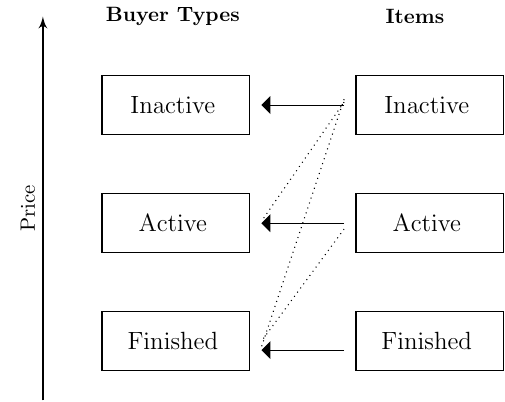}
\caption{At any stage of Algorithm~\ref{alg_genprocedurebody}, the order of prices for the different items is: Inactive $>$ Active $>$ Finished. Thick edges indicate that buyers in a certain set (Active, Inactive or Finished) receive allocations only from the items in the same set. Dotted edges between a buyer set and an item set indicate although the buyers have access to the items in that set, they are not currently receiving any allocation of that item.
}
\end{SCfigure}

Figure~\ref{fig_sethierarchy} describes the natural hierarchy between the Active, Inactive, and Finished sets at every value of the active price $p$. It is not difficult to show that the statement in Figure~\ref{fig_sethierarchy} always holds, starting with the initial envy-free solution $(\vec{p^*}, \vec{x^*}, \vec{y^*})$. We prove this formally in the Appendix. Our next claim shows that as we increase the active price, the marginal cost of items in the active set cannot increase. 

\begin{lemma}
\label{lem_marginaldec}
Suppose that some item $t$ belongs to the active set at two different active prices $p_1$ and $p_2$ with $p_1 < p_2$, then
$c_{t}(y_t(p_1)) \geq c_{t}(y_t(p_2))$.
\end{lemma}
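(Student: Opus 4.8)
The plan is to reduce the claim to Statement~2 of Lemma~\ref{lem_diffcostmonoton} by exhibiting two global price vectors whose best-response demands restricted to the active set coincide with the allocations $\vec{x}(p_1)$ and $\vec{x}(p_2)$, and then showing that the active-set min-cost flow computed by Algorithm~\ref{alg_genprocedurebody} agrees with the global min-cost flow for those demands. First I would fix the item $t$ and the two active prices $p_1 < p_2$, and define auxiliary price vectors $\vec{q_1}, \vec{q_2}$ on all of $S$: on active items set the price to $p_1$ (resp.\ $p_2$), on finished items keep the price they had when they became finished, and on inactive items keep $p^*$. By the price hierarchy of Figure~\ref{fig_sethierarchy} (Inactive $>$ Active $>$ Finished), increasing the active price from $p_1$ to $p_2$ only raises the prices of items that are active at price $p_1$ and leaves all others unchanged or — for items that finish between $p_1$ and $p_2$ — frozen at a value between $p_1$ and $p_2$; in all cases $\vec{q_2} \geq \vec{q_1}$ coordinatewise. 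The minimum available price $\bar p_i$ to each buyer type under $\vec{q_1}$ is exactly the price that buyer sees at stage $p_1$ in the algorithm (active buyers see $p_1$, finished buyers see their frozen price, inactive buyers see $p^*_i$), and similarly for $\vec{q_2}$; hence the best-response demands to $\vec{q_1}$ and $\vec{q_2}$ are precisely $\vec{x}(p_1)$ and $\vec{x}(p_2)$.

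Next I would invoke the fact — proved in the Appendix and summarized in Figure~\ref{fig_sethierarchy} — that active buyers receive allocations only from active items, finished buyers only from finished items, and inactive buyers only from inactive items. This means the allocation $\vec{y}(p_1)$ produced by the algorithm decomposes into three independent min-cost flows on the three induced subgraphs, and the same holds for $\vec{y}(p_2)$. I claim this decomposed allocation is in fact a global min-cost flow for the demand $\vec{x}(p_1)$: any global min-cost flow for $\vec{x}(p_1)$ must also respect this block structure, because the price hierarchy forces the best-response of every buyer into its own block (a finished buyer's cheapest items are the finished items it finished on, etc.), and min-cost flows for $\vec{x}$ are supported on the cheapest items available — wait, that last step needs care, since min-cost flow optimizes production cost, not price. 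The cleaner route is: the three-block structure is a property of the \emph{support graph} the algorithm restricts to, and within each block the algorithm computes a genuine min-cost flow; since the blocks share no buyers or items, the union is the min-cost flow on the union of the three subgraphs, and I then argue this equals the global min-cost flow for $\vec{x}(p_1)$ on $G$ using that buyers in one block have zero effective demand that could be cheaply routed through another block — this is exactly the invariant established when proving Figure~\ref{fig_sethierarchy}. With this identification, $\vec{y}(p_1)$ is the min-cost flow for best-response demand $\vec{x}(p_1) = $ best-response$(\vec{q_1})$ and $\vec{y}(p_2)$ is the min-cost flow for best-response$(\vec{q_2})$, with $\vec{q_1} \leq \vec{q_2}$, so Statement~2 of Lemma~\ref{lem_diffcostmonoton} gives $c_t(y_t(p_1)) \leq c_t(y_t(p_2))$ — the reverse of what we want, so I must be careful with the direction.

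Let me re-examine the direction: Lemma~\ref{lem_diffcostmonoton} says if $\vec{p_1} \geq \vec{p_2}$ then $c_t(y^1_t) \leq c_t(y^2_t)$, i.e.\ higher prices give lower marginal cost. Here $\vec{q_2} \geq \vec{q_1}$, so the marginal cost under $\vec{q_2}$ is \emph{at most} that under $\vec{q_1}$, giving $c_t(y_t(p_2)) \leq c_t(y_t(p_1))$, which is precisely the stated inequality $c_t(y_t(p_1)) \geq c_t(y_t(p_2))$. Good. So the proof goes through once the two reductions are in place. The main obstacle, and the step I expect to require the most work, is the second one: rigorously arguing that the per-block min-cost flow maintained by the algorithm coincides with the global min-cost flow for the corresponding best-response demand, i.e.\ that the algorithm never "misses" a cheaper global routing that would cross block boundaries. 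This should follow from the invariant underlying Figure~\ref{fig_sethierarchy} — namely that at every stage the current solution is envy-free and the block prices are strictly ordered, so no buyer has positive best-response demand through an edge leaving its block — combined with the fact that a min-cost flow for a given demand only ever uses edges, and the optimality conditions (equal marginal costs along the support) are already satisfied block-by-block; a short exchange/augmenting-path argument on the residual graph then rules out improving moves across blocks. I would present the first reduction in full and cite the block-structure invariant from the Appendix for the second, keeping the cross-block exchange argument to a couple of sentences.
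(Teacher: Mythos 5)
Your reduction to Lemma~\ref{lem_diffcostmonoton} via the global price vectors $\vec{q_1}\leq\vec{q_2}$ is a genuinely different route from the paper's, and the direction of the inequality you finally settle on is correct. But the proposal has a real gap at exactly the step you flag as needing ``the most work'': the claim that the algorithm's stage-$p$ allocation is a \emph{global} min-cost flow for the best-response demand to $\vec{q}$. This cannot be obtained just by citing the block invariants behind Figure~\ref{fig_sethierarchy} plus a ``short exchange argument.'' Min-cost optimality is the KKT condition $r_i(\vec{y})\leq c_t(y_t)$ on \emph{every} edge of $E$, including cross-block edges that carry no flow, and such edges exist (a finished buyer may have edges to active and inactive items, and an active buyer may have edges to inactive items). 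The finished-to-active and finished-to-inactive comparisons can be salvaged from the stopping-criterion equality (Proposition~\ref{prop_stopequality}), but the active-buyer-to-inactive-item condition requires knowing that the marginal cost $r_i$ of an active buyer at price $p_1$ has not risen above its value when that buyer entered the active set (so that it is still below $c_{t'}(y^*_{t'})=p^*_{t'}$ for inactive $t'$). That monotonicity of active marginal costs over the run of the algorithm is precisely the content of Lemma~\ref{lem_marginaldec} itself, so as written your identification step is circular, or at best requires an independent argument of the same difficulty as the lemma.

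The paper avoids this entirely by never asserting global min-cost optimality at intermediate stages. It works inside the sub-instance $(B_A(p_2),S_A(p_2))$: it compares the min-cost flows there for the demands at active prices $p_1$ and $p_2$ using Lemma~\ref{lem_diffcostmonoton} (giving $c_t(y^1_t)\geq c_t(y_t(p_2))$), and then relates that auxiliary flow $\vec{y^1}$ to the algorithm's actual allocation at $p_1$ by restricting to the intersection sets $B_1=B_A(p_1)\cap B_A(p_2)$, $S_1=S_A(p_1)\cap S_A(p_2)$ and invoking Corollary~\ref{corr_diffcostmorebuyers}, which handles the fact that the active set at $p_1$ contains extra buyers. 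If you want to keep your global-price-vector framing, you would need to either (i) prove the cross-block KKT conditions by an induction on the stages of the algorithm (effectively proving the lemma along the way), or (ii) retreat, as the paper does, to comparisons on sub-instances where the needed domination of demands is immediate.
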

\emph{(Proof Sketch)} The result relies heavily on Lemma~\ref{lem_diffcostmonoton} and Corollary~\ref{corr_diffcostmorebuyers}. When we increase the active price, the buyer demand decreases and thereby the allocation of items and the marginal cost decreases. 
$\blacksquare$ 

The next proposition gives us additional insight regarding the stopping condition. It tells us that every item $t$ actually meets the stopping criterion at equality and therefore, the greater than or equal to sign in Equation~\ref{eqn_stopconditionmain} is redundant.

\begin{proposition}
\label{prop_stopequality}
For any given item $t$ and fixed $k$, the stopping condition must be obeyed at equality. Formally, suppose that $t$ meets the stopping criterion at an active price of $p$, then
$$p - c_t(y_t(p)) = \frac{1}{k}(\lambda^{max} - c_t(y_t(p)).$$
\end{proposition}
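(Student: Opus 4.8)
The plan is to argue by continuity and by tracking how the two sides of the stopping inequality~\eqref{eqn_stopconditionmain} evolve as the active price rises. First I would observe that at the moment item $t$ first becomes active (either at the very start, when $p_t = p^*_t = c_t(y^*_t)$, so the left side is $0$ while the right side $\frac1k(\lambda^{max}-c_t(y^*_t))$ is strictly positive since $\lambda^{max} > c_t$ — otherwise no buyer would ever purchase — and more generally at a later activation the active price equals $p^*_t = c_t(y_t)$ at that instant by the way inactive items are absorbed), the stopping criterion is \emph{strictly violated}: $p_t - c_t(y_t(p)) < \frac1k(\lambda^{max} - c_t(y_t(p)))$. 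So the ``$\geq$'' in~\eqref{eqn_stopconditionmain} does not hold when $t$ enters the active set.

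Next I would examine the quantity $D_t(p) := \bigl(p - c_t(y_t(p))\bigr) - \frac1k\bigl(\lambda^{max} - c_t(y_t(p))\bigr)$ as a function of the active price $p$, while $t$ is active. Rearranging, $D_t(p) = p - \lambda^{max}/k - (1 - 1/k)\,c_t(y_t(p))$. As $p$ increases, the term $p$ increases linearly; by Lemma~\ref{lem_marginaldec} the marginal cost $c_t(y_t(p))$ is non-increasing in $p$, and since $1 - 1/k \geq 0$ (because $k \geq 1$), the term $-(1-1/k)c_t(y_t(p))$ is non-decreasing. Hence $D_t(p)$ is non-decreasing in $p$ over the interval during which $t$ is active; in fact it is strictly increasing (the leading $p$ term alone gives slope $1$, and one has to check the min-cost-flow allocation $y_t(p)$ changes continuously — which follows from the continuous differentiability and convexity assumptions on the $C_t$ together with continuity of the best-response $\vec x(p)$, so $c_t(y_t(\cdot))$ has no upward jumps). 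The stopping criterion is, by definition, the first price $p$ at which $D_t(p) \geq 0$. Since $D_t$ starts strictly negative and is continuous and strictly increasing, the first time it reaches the nonnegative region it does so exactly at $D_t(p) = 0$, which is the claimed equality.

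The main obstacle I anticipate is the continuity claim for $c_t(y_t(p))$ as the active price varies — specifically ruling out a downward jump that could cause $D_t$ to leap from negative past zero to a positive value at a single instant. This requires knowing that the min-cost flow allocation on the active subgraph, and hence $y_t(p)$, varies continuously with $p$: when the active price crosses a value $p^*_{t'}$ of some currently inactive item $t'$, that item and its buyers are absorbed, but at that exact threshold $t'$ carries allocation $y^*_{t'}$ at marginal cost $c_{t'}(y^*_{t'}) = p^*_{t'} = $ the active price, so splicing it in does not discontinuously perturb $y_t$; away from such thresholds, the best-response demands $x_i(p) = \lambda_i^{-1}(p)$ move continuously (the $\lambda_i$ are continuously differentiable and strictly decreasing on their support), and the parametric min-cost convex flow is continuous in the demand vector. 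Once continuity is in hand, the rest is the elementary ``continuous strictly-increasing function crossing zero'' argument above. I would also note the edge case where the stopping criterion is met at the very instant $t$ becomes active with no further price increase — this cannot happen precisely because $D_t < 0$ strictly at activation, which is why the observation in the first step matters.
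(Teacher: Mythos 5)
Your argument is correct and is essentially the paper's own proof: both reduce the claim to left-continuity of $c_t(y_t(p))$ in the active price (the paper's Lemma~\ref{lem_contri}, including the same treatment of newly absorbed inactive items whose marginal cost equals the active price), note that the criterion is strictly violated when $t$ enters the active set, and conclude that the first crossing of the threshold must occur at equality. The only cosmetic difference is that you additionally invoke monotonicity of $D_t(p)$ via Lemma~\ref{lem_marginaldec}, which is not needed --- the paper simply takes the limit of the strict inequality that holds just below $p$.
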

\emph{(Proof Sketch)} 
For any item $t$, its initial price $p^*_t = c_t(y^*_t)$ and so initially, the LHS of Equation~\ref{eqn_stopconditionmain} is zero and the RHS is non-zero, so the LHS is strictly smaller. The initial price for any item cannot be larger than $\lambda^{max}$. So, if the active price is $\lambda^{max}$, the LHS must be greater than or equal to the RHS. It stands to reason therefore, that every item meets the stopping criterion at some intermediate price. Moreover, if we increase the price by a small amount, then the marginal cost can also decrease only by a small amount due to the reduced demand. Since, both the LHS and the RHS in the above equation change continuously, they must be equal at some point. $\blacksquare$

We also remark here that at every stage of the algorithm, for any buyer $i$, all the items she uses at that stage must have the same marginal cost. Given an allocation $\vec{y}$, we will use $r_i(\vec{y})$ to denote the (unique) marginal cost of the items being used by buyer $i$ in that allocation as long as all the items she uses have the same marginal. We are now ready to prove our first main result regarding our algorithm. We show that for any value of $k$, the stopping parameter, the solution returned by our algorithm is an envy-free allocation. In addition, the allocation $\vec{y^k}$ is also the minimum cost flow for the demand $\vec{x^k}$.

\begin{theorem}
\label{thm_ep1envyfree}
For any given value of $k$, Algorithm \ref{alg_genprocedurebody} returns prices $\vec{p^k}$ and an envy-free allocation $\vec{x^k}, \vec{y^k}$. Moreover $\vec{y^k}$ is also the minimum cost flow corresponding to the demand $\vec{x^k}$.
\end{theorem}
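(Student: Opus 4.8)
The plan is to prove something stronger than envy-freeness of the final output: an \emph{invariant} maintained throughout the continuous run of Algorithm~\ref{alg_genprocedurebody}, namely that at every active price $p$ the configuration $(\vec p(p),\vec x(p),\vec y(p))$ is envy-free \emph{and} $\vec y(p)$ is a minimum-cost flow for the demand $\vec x(p)$ in the whole graph $G$. Theorem~\ref{thm_ep1envyfree} is then just the special case $p=p_{\mathrm{final}}$, where $\mathrm{FINISH}=B\cup S$. The induction is over the finitely many ``events'' of the process (an inactive item reaching its threshold $p^*_t$ and becoming active, or an active item meeting the stopping criterion and becoming finished); between consecutive events the active set is fixed and the best-response equations $\bi(p)=\lambda_i(x_i(p))$ together with min-cost optimality on the active subgraph hold at every instant by construction, so the real work is the base case and re-establishing the invariant across an event.

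\textbf{Base case and envy-freeness.} At the initial active price $p=\min_t p^*_t$ we have $\vec x(p)=\vec x^*$ and $\vec y(p)=\vec y^*$, so the invariant is exactly Proposition~\ref{lem_optbrmincost} together with the (already noted) fact that $\vec y^*$ is a min-cost flow. To maintain envy-freeness I would lean on the block hierarchy of Figure~\ref{fig_sethierarchy} (inactive prices $>$ active price $>$ finished prices, and buyers of a block receive allocations only from items of the same block), established in the Appendix. The one delicate point is ruling out that an active buyer $i$ has access to a \emph{finished} item priced below the current active price $p$: she uses only active items (all priced $p$), every still-unfinished item in $S_i$ has price $\ge p$, and $x_i(p)$ is by construction her best response $\lambda_i^{-1}(p)$, so nothing else could violate envy-freeness. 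When an item $t$ finishes at price $p$, Lemma~\ref{lem_ep1stopcrithier} (the stopping criterion is monotone in marginal cost at a common active price) forces every active item of marginal cost at most $c_t(y_t(p))$ to finish simultaneously; since all items a buyer uses share a common marginal cost $r_i(\vec y)$, and if $i$ has access to $t$ but does not use it then min-cost optimality gives $c_t(y_t(p))\ge r_i(\vec y)$, every active item $i$ uses also finishes at price $p$, so $i$ is moved to $\mathrm{FINISH}$ at that moment too. Hence no active buyer ever retains access to a strictly cheaper finished item; for inactive buyers envy-freeness is immediate from $\vec x(p)=\vec x^*$ and Proposition~\ref{lem_optbrmincost}, and for finished buyers it holds because their part of the configuration was frozen at an envy-free instant.

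\textbf{Maintaining the global min-cost property.} The algorithm only ever recomputes a min-cost flow on the \emph{active} subgraph, so the heart of the proof is to show that gluing this with the frozen inactive flow ($y_t=y^*_t$) and the frozen finished sub-flows yields a flow that is min-cost in all of $G$. Using Figure~\ref{fig_sethierarchy}, the flow actually in use decomposes over three subgraphs with no flow crossing between blocks; I would argue each piece is min-cost on its own subgraph (the active piece by construction; the inactive piece because it is the restriction of the globally min-cost $\vec y^*$ and inactive buyers can route only to inactive items, as they join the active set the moment any item of theirs does; the finished piece by the inductive hypothesis at the moment it was frozen), and then rule out that any \emph{unused} cross-block edge could lower the cost. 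For instance, for an active buyer $i$ with an inactive item $t\in S_i$, Lemma~\ref{lem_marginaldec} together with the initialization gives $r_i(\vec y)\le \min_{t'} p^*_{t'}\le p^*_t = c_t(y^*_t)$, so diverting flow onto $t$ cannot help, and analogous marginal-cost comparisons handle the other block pairs and the finished block. Across an activation event the new item $t$ enters with $c_t(y^*_t)=p^*_t$ equal to the active price, so the glued flow stays consistent; across a finish event the finished piece is just the restriction of the (by hypothesis) globally min-cost current flow, hence min-cost on its subgraph, and remains so once frozen.

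I expect the gluing argument — certifying \emph{global} min-cost optimality from what is only ever an active-subgraph computation, via the cross-block marginal-cost comparisons — to be the main obstacle; the remaining work is bookkeeping on the event structure, with the simultaneous-finish argument through Lemma~\ref{lem_ep1stopcrithier} being the other point that genuinely needs care.
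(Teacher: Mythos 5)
Your envy-freeness argument is essentially the paper's (both rest on the block hierarchy of Figure~\ref{fig_sethierarchy}/Proposition~\ref{prop_envyfreeinvariants}, and your simultaneous-finish observation via Lemma~\ref{lem_ep1stopcrithier} is exactly how the paper establishes the second invariant). The gap is in the min-cost half, at precisely the point you flag as the main obstacle. In your gluing argument the problematic cross-block pair is a \emph{finished} buyer $i$ and an item $t\in S_i$ that is still \emph{active} when $i$ is frozen (such edges do exist: when $i$'s item finishes, other items of $i$ with strictly larger marginal cost need not finish). At the freeze price $p_i$ you do have $r_i(\vec{y})\leq c_t(y_t(p_i))$, but this is not preserved as the algorithm continues: Lemma~\ref{lem_marginaldec} says the marginal cost of an active item is \emph{non-increasing} in the active price, so monotonicity pushes $c_t(y_t(p))$ \emph{down}, i.e.\ in the wrong direction, and "analogous marginal-cost comparisons" cannot certify $r_i(\vec{y^k})\leq c_t(y^k_t)$ at the end (nor the global min-cost invariant at intermediate times). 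The missing ingredient is Proposition~\ref{prop_stopequality}, which you never invoke: since every item exits with its stopping condition holding at equality, an item's final price is a strictly increasing function of its final marginal cost, namely $p^k_t=\frac{1}{k}\bigl(\lambda^{max}+(k-1)c_t(y^k_t)\bigr)$. With this, $c_t(y^k_t)<r_i(\vec{y^k})$ would force $p^k_t<\bar{p}^k_i$, contradicting the envy-freeness you already proved; equivalently, the unfinished item $t$ stays strictly above the threshold $\frac{kp-\lambda^{max}}{k-1}$, which is increasing in $p$ and equals $r_i$ at $p_i$, so the needed inequality does hold --- but only via the stopping rule, not via cost monotonicity.

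This is exactly how the paper closes the argument: it never maintains a global min-cost invariant during the run, but argues only about the final configuration, using the hierarchy for envy-freeness and then the equality form of the stopping criterion to reduce the min-cost condition (on marginal costs) to the envy-freeness condition (on prices). If you insert that conversion step into your cross-block case analysis, your stronger instant-by-instant invariant does go through (the same threshold argument works at every intermediate price), but as written the proposal is incomplete at its crucial step.
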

\begin{proof}
Recall that an allocation is envy-free if all buyers only purchase from the minimally priced items available to them, i.e., $\forall (i,t) \in E$, $\bar{p}^k_i \leq p^k_t$. An allocation $\vec{y^k}$ is a min-cost flow for the demand $\vec{x^k}$ if and only if $\forall (i,t) \in E$, $r_i(\vec{y}) \leq c_t(y^k_t)$. That is, in addition to using the minimally priced items buyers are also using the items with the smallest marginal costs.

Assume by contradiction that the allocation is not envy-free. Then, for some $(i,t) \in E$, $\bar{p}^k_i > p^k_t$. This means that when $t$ reached the stopping criterion at price $p^k_t$, $i$ was either active or inactive. Now consider some price $p \in (p^k_t, \bar{p}^k_i)$. At this price $t \in S_F(p)$ and $i \in B_A(p) \cup B_I(p)$. However, by Figure~\ref{fig_sethierarchy}, there can be no edge between $i$ and $t$ which is a contradiction. Therefore, the allocation returned by our algorithm is indeed envy-free.

Next, suppose that the allocation is not a minimum cost flow. Then $\exists (i,t) \in E$ such that $c_t(y^k_t) < r_i(\vec{y^k})$. We claim that this implies $p^k_t < \bar{p}^k_i$. Rearranging the equation in the statement of Proposition~\ref{prop_stopequality}, we get that (remember $\bar{p}_i^k = p^k_{t'}$ for some item $t'$ that $i$ receives and $r_i(\vec{y^k}) = c_{t'}(y^k_{t'})$)
$$\bar{p}^k_i = \frac{1}{k}(\lambda^{max} + (k-1)r_i(\vec{y^k})) > \frac{1}{k}(\lambda^{max} + (k-1)c_t(y^k_t)) = p^k_t.$$

This means that $\exists (i,t)\in E$ with $p^k_t < \bar{p}^k_i$, which violates the envy-free condition. Therefore, the allocation is also the minimum cost flow satisfying the given demand.
$\blacksquare$ \end{proof}

The most crucial lemma that allows us to prove our approximation factor is the following lower bound which we prove on the prices in the revenue maximizing solution $\vec{p^{opt}}$. Unlike most existing work, this lower bound allows us to compare our solution directly to the revenue-maximizing solution, instead of using the welfare-maximizing solution as a proxy.

\begin{lemma}
\label{lem_lowerboundmain}
For MHR inverse demand functions, the price of every item $t$ in the profit-maximizing solution $\vec{p^{opt}}$ is at least its price in $\vec{p^e}$, i.e., $p_t^{opt}\geq p^e_t$.
\end{lemma}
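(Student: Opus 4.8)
The plan is to reduce the statement to a single-variable ``monopoly'' inequality governed by the stopping rule, and then lift it to the combinatorial setting by a local perturbation of the revenue-optimal prices.

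\emph{The single-variable core.} For one item with cost $C(\cdot)$ facing one buyer type with MHR inverse demand $\lambda(\cdot)$, I would parametrize revenue by quantity, $R(x)=\lambda(x)x-C(x)$. Using that $h(x):=|\lambda'(x)|/\lambda(x)$ is non-decreasing (MHR) together with convexity of $C$, one checks that $R$ is unimodal in $x$, with $R'(x)=(\lambda(x)-c(x))-x|\lambda'(x)|$; consequently ``price at least the revenue-maximizing price'' is equivalent to $R'\le 0$, i.e. to $p-c\le x|\lambda'(x)|$ at the current point. At a point produced by Algorithm~\ref{alg_genprocedurebody} with $k=e$ we have the exact identity $p-c=\frac1e(\lambda^{max}-c)$ (Proposition~\ref{prop_stopequality}), so it suffices to prove $\frac1e(\lambda^{max}-c)\le x|\lambda'(x)|$. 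Writing $|\lambda'(x)|=\lambda(x)h(x)$ and using $\log\frac{\lambda^{max}}{\lambda(x)}=\int_0^x h(s)\,ds\le x\,h(x)$ --- this is exactly where the \emph{uniform peak} assumption $\lambda(0)=\lambda^{max}$ is used --- gives $x|\lambda'(x)|\ge p\log\frac{\lambda^{max}}{p}$, so the claim reduces to $\frac1e(\lambda^{max}-c)\le p\log\frac{\lambda^{max}}{p}$. After normalizing by $\lambda^{max}$ and eliminating $c$ via the identity above, this is an elementary inequality in $p/\lambda^{max}\in[1/e,1]$ that holds precisely because $\log e=1$ (it is tight for exponential demand with zero cost, which is also why the constant is $e$).

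\emph{The lifting.} Suppose toward a contradiction that $W=\{t:p^{opt}_t<p^e_t\}\neq\emptyset$, and pick $t^\star\in W$ whose final price $p^e_{t^\star}$ is smallest, so $t^\star$ \emph{finishes first among $W$-items} and every item that finishes strictly earlier already satisfies $p^{opt}_t\ge p^e_t$. In $\vec{p^{opt}}$, take the buyers whose unique cheapest item is $t^\star$; raising $p^{opt}_{t^\star}$ by an infinitesimal $\epsilon$ affects only those buyers, each of whom buys solely from $t^\star$, so a marginal-cost/envelope argument for min-cost flows (in the spirit of Lemma~\ref{lem_diffcostmonoton}) yields a first-order revenue change $dR=\epsilon\sum_i\bigl(x^{opt}_i-\frac{p^{opt}_{t^\star}-c_{t^\star}(y^{opt}_{t^\star})}{|\lambda_i'(x^{opt}_i)|}\bigr)$, which must be $\le 0$ by optimality of $\vec{p^{opt}}$. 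Substituting the per-buyer MHR bound from the first part (valid because $\lambda_i(0)=\lambda^{max}$ for every $i$) and $c_{t^\star}(y^{opt}_{t^\star})\ge 0$, and running the elementary computation in reverse, yields $p^{opt}_{t^\star}\ge\frac1e\lambda^{max}+(1-\frac1e)\,c_{t^\star}(y^{opt}_{t^\star})$; since $p^e_{t^\star}=\frac1e\lambda^{max}+(1-\frac1e)\,c_{t^\star}(y^e_{t^\star})$ by Proposition~\ref{prop_stopequality}, this contradicts $t^\star\in W$ as soon as $c_{t^\star}(y^{opt}_{t^\star})\ge c_{t^\star}(y^e_{t^\star})$, i.e. $y^{opt}_{t^\star}\ge y^e_{t^\star}$. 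Degenerate cases (no buyer has $t^\star$ as a \emph{unique} cheapest item, or $y^{opt}_{t^\star}=0$) are handled by perturbing the whole $\vec{p^{opt}}$-price component of $t^\star$ instead and replacing $c_{t^\star}(y^{opt}_{t^\star})$ by a demand-weighted average marginal cost over that component.

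\emph{The main obstacle.} What remains is the comparison $y^{opt}_{t^\star}\ge y^e_{t^\star}$ between the optimal flow on $t^\star$ and the algorithm's final flow on $t^\star$. The natural approach is an invariant argument: as the active price rises, look at the \emph{first} instant an active item's price would surpass its $\vec{p^{opt}}$-value; at that instant the whole algorithm price vector is dominated by $\vec{p^{opt}}$, so by Lemma~\ref{lem_diffcostmonoton}(1) every buyer demands at least as much as under $\vec{p^{opt}}$ while the algorithm still has not stopped that item. The subtlety is that Lemma~\ref{lem_diffcostmonoton}(2) compares \emph{global} min-cost flows, whereas $\vec{y^{opt}}$ is only min-cost \emph{among envy-free} allocations (and the algorithm routes min-cost only inside its current active block), so the needed per-item marginal-cost inequality is not immediate. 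I expect closing this gap --- either by showing that on the sub-structure connected to $t^\star$ the optimal envy-free allocation does behave like a min-cost flow (leveraging $p^{opt}_t\ge c_t(y^{opt}_t)$), or by a block-by-block induction aligning the algorithm's successive active sets with price-components of $\vec{p^{opt}}$ --- to be the crux of the argument; the remainder is the unimodality check, the $\log e=1$ inequality, and routine bookkeeping.
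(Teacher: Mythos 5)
There is a genuine gap, and you have located it yourself: the comparison $c_{t^\star}(y^{opt}_{t^\star})\ge c_{t^\star}(y^e_{t^\star})$ is not ``routine bookkeeping'' to be closed later --- it is exactly the step on which the whole contradiction rests, and your selection rule makes it unobtainable. You pick $t^\star$ as the item of $W=\{t:p^{opt}_t<p^e_t\}$ with the smallest \emph{algorithm} price $p^e_{t^\star}$, and then hope to compare its OPT allocation with its algorithm allocation. But as you note, $\vec{y^{opt}}$ is only envy-free (not a global min-cost flow) and the algorithm is min-cost only within its active block, so Lemma~\ref{lem_diffcostmonoton} does not apply to this particular item, and in fact no per-item inequality of this kind holds for an arbitrarily pre-selected member of $W$. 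The paper's proof resolves this by choosing differently: it takes $S_{min}$ to be the undercut items with the smallest \emph{OPT} price $p_{min}$, and proves (Lemma~\ref{lem_mixedpricediffcost}) only the existential statement that \emph{some} $t_{min}\in S_{min}$ satisfies $c_{t_{min}}(y^e_{t_{min}})\le c_{t_{min}}(y^{opt}_{t_{min}})$. The reason this choice works is that envy-freeness of OPT forces every buyer who uses an $S_{min}$-item in the algorithm's solution to confine her purchases in OPT to $S_{min}$ as well (any other item priced at $p_{min}$ in OPT would be at most $p_{min}$ in $\vec{p^e}$, contradicting envy-freeness of $\vec{y^e}$); restricted to this sub-instance both allocations are min-cost flows with ordered demands, and Lemma~\ref{lem_diffcostmonoton} applies. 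Your ``first-finished in the algorithm'' item carries no such structural guarantee, so the invariant/block-induction you sketch would have to be replaced by essentially this argument.

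A secondary but related issue is the perturbation itself. Raising only $p^{opt}_{t^\star}$ and restricting attention to buyers whose \emph{unique} cheapest item is $t^\star$ is not a degenerate corner case in this problem --- ties at the minimum price are exactly what the envy-free structure produces, and buyers may respond to your perturbation by migrating to other items at the same price, invalidating the first-order revenue formula you write. The paper therefore perturbs the entire price component $S^+_{min}$ reachable from $t_{min}$ in the auxiliary digraph $G'$ (allocation edges and tight-price edges), raises all of its prices uniformly up to a carefully chosen $p^+<\hat p$ so that no buyer inside or outside the component changes which items are cheapest for her, recomputes a min-cost flow inside the component, and only then applies the two MHR facts (Lemmas~\ref{lem_subclaim_mhr} and~\ref{mhr_profitchanges}) with $f_i(x)=\lambda_i(x)-\tilde c(p^+)$, where $\tilde c(p^+)$ is the smallest marginal cost in the perturbed component --- this plays the role of your ``demand-weighted average marginal cost,'' but with the inequality going in the provable direction. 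Your single-variable core (working with $\lambda$ and $p\log(\lambda^{max}/p)$ rather than with the shifted function $\lambda_i-\tilde c$) is a reasonable variant of those MHR lemmas and the uniform-peak assumption enters in the same place, but without the component construction and the existential min-cost comparison above, the contradiction with optimality of $\vec{p^{opt}}$ does not go through.
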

\begin{proof} 
The proof proceeds as follows. We first show that in any solution where some items are priced below their price in $\vec{p^e}$, a few of these items do not meet the stopping criterion at $k=e$. Then, we show that for any monotone hazard rate demand function that does not satisfy the stopping criterion at $k=e$, we can always increase the price on the items and improve the profits thereby contradicting the optimality of $\vec{p^{opt}}$.

Assume by contradiction that in the optimal solution some items have a price smaller than their price in $\vec{p^e}$. Let $S_{min}$ be the subset of such items with the smallest price (call it $p_{min}$). Since the optimum solution is envy-free and our solution is a min-cost flow, we can apply Lemma~\ref{lem_mixedpricediffcost}. As per the lemma, there must exist some $t \in S_{min}$ such that $c_{t}(y^e_t) \leq c_{t}(y_t^{opt})$. Call this item $t_{min}$.

Construct a directed graph $G$' whose vertices are the same as in $G$ but with the following edges $E'$
\begin{enumerate}
\item $(t,i) \in E'$ if $i$ is receiving non-zero amounts of item $t$ in $\vec{y^{opt}}$.
\item $(i,t) \in E'$ if $(i,t) \in E$ and $p^{opt}_t = \bar{p}^{opt}_i$, i.e., $t$'s price coincides with the price of the cheapest item available to $i$.
\end{enumerate}

Let $S^+_{min}$ be the set of items that reachable from $t_{min}$ and $B^+_{min}$ be the set of buyers reachable from $t_{min}$ in this graph $G$'. We make three simple observations here: first, for every item $t \in S^+_{min}$, its price must equal $p_{min}$. Second, every buyer in $B^+_{min}$ is only receiving allocations of the items in $S^+_{min}$ and has no edge in $E$ to any item outside of $S^+_{min}$ also priced at $p_{min}$. Finally, for every item $t \in S^+_{min}$, its marginal cost in OPT is at least $c_{t_{min}}(y^{opt}_{t_{min}})$.

These three observations imply that in a revenue maximizing solution, for the reduced instance with only the buyers and sellers in $B^+_{min}$ and $S^+_{min}$, for the corresponding demand in $OPT$, the sub-allocation on these items must be a min-cost flow. Since $p^e_{t_{min}} > p_{min}$, $t_{min}$ cannot satisfy the stopping criterion ($k=e$) based on its price and allocation at $OPT$, i.e.,

$$p_{min} - c_{t_{min}}(y^{opt}_{t_{min}}) < \frac{1}{e} (\lambda^{max} - c_{t_{min}}(y^{opt}_{t_{min}})).$$

 Moreover, for every other $t \in S^+_{min}$, its price in $OPT$ is $p_{min}$ and marginal cost is at least as much as that of $t_{min}$. Therefore,

\begin{equation}
\label{eqn_notsatisfystop}
p_{min} - c_{t}(y^{opt}_{t}) < \frac{1}{e} (\lambda^{max} - c_{t}(y^{opt}_t))
\end{equation}

Now, our idea is the following: we will uniformly increase the price on only the items in $S^+_{min}$ by a sufficiently small amount so that the buyers from $B^+_{min}$ still use only these items in an envy-free solution. Then we will use the stopping criterion to show that at the new price, the seller's profit strictly increases thereby violating the fact that $OPT$ is a revenue-maximizing solution.

We let $\hat{p}$ denote the smallest price in $\vec{p^e} \cup \vec{p^{opt}}$ that is strictly larger than $p_{min}$. It is clear that as long as we increase the price of all $t \in S^+_{min}$ to some $p \in [p_{min}, \hat{p})$, the cheapest items for buyers in $B^+_{min}$ will only come from $S^+_{min}$. Moreover, for any other buyer $i \notin B^+_{min}$, the set of cheapest items will not change.

Now, gradually increase the price of only the items from $S^+_{min}$, compute the min-cost allocation for the buyers in $B^+_{min}$ using only these items. Retain the same price and allocation for every other buyer and item. At any $p \in  [p_{min}, \hat{p})$, denote by $\tilde{c}(p)$, the smallest marginal cost of any item in $S^+_{min}$ at the new allocation at price $p$. Define a price $p^+$ based on one of two cases,

\begin{enumerate}
\item At some minimal $p'$ in the domain $(p_{min}, \hat{p})$, the following condition is met,
$$p' - \tilde{c}(p') = \frac{1}{e}(\lambda^{max}- \tilde{c}(p')).$$

Recall that the above condition is not met at $p=p_{min}$. Then, set $p^+= p'$.

\item At no $p \in [p_{min}, \hat{p})$ is the above condition met. Set $p^+ = \frac{1}{2}(\hat{p} + p_{min})$.
\end{enumerate}

We remark that if some item meets the stopping condition above at price $p^+$ at all, then it must be the item(s) whose marginal cost equals $\tilde{c}(p^+)$ (See Propostion~\ref{lem_ep1stopcrithier}).

Define $\vec{p^+}$ as the price vector where items in $S^+_{min}$ are priced at $p^+$ and the rest retain their price in $OPT$. Let $\vec{x^+}$ and $\vec{y^+}$ be the corresponding buyer demand and envy-free solution. Our main claim is the following: the profit $\pi^+$ at $(\vec{p^+}, \vec{x^+}, \vec{y^+})$ is larger than the optimal profit $\pi^*$, which is a contradiction. Consider the difference between the two profits (note that the payments and cost remains the same for buyers and items not in $B^+_{min}$ and $S^+_{min}$ respectively).

\begin{align*}
\pi^+ - \pi^* = & \sum_{i \in B^+_{min}}(\bar{p}^+_i x^+_i - \bar{p}^{opt}_i x^{opt}_i) - (C(\vec{y^+}) - C(\vec{y^{opt}}))\\
\geq & \sum_{i \in B^+_{min}}(\bar{p}^+_i x^+_i - \bar{p}^{opt}_i x^{opt}_i) - \sum_{t \in S^+_{min}}(c_t(y^+_t)(y^+_t - y^{opt}_t)) \\
\geq & \sum_{i \in B^+_{min}}(\bar{p}^+_i x^+_i - \bar{p}^{opt}_i x^{opt}_i) - \sum_{t \in S^+_{min}}(\tilde{c}(p^+) (y^+_t - y^{opt}_t))\\
= & \sum_{i \in B^+_{min}}(\lambda_i(x^+_i) - \tilde{c}(p^+))x^+_i - \sum_{i \in B^+_{min}}(\lambda_i(x^{opt}_i) - \tilde{c}(p^+))x^{opt}_i
\end{align*}

The first inequality comes from observing that $\vec{p^{+}}$ dominates $\vec{p^{opt}}$ and then applying Corollary~\ref{corr_flowmagnitudemincost}. The final equality is from rearranging the allocation from the items to the buyers and using the fact $\bar{p}^+_i$ and $\bar{p}^{opt}_i$ are simply equal to the respective $\lambda_i$ values.

Now we make a strong claim: that for all $i \in B^+_{min}$, $(\lambda_i(x^+_i) - \tilde{c}(p^+))x^+_i - (\lambda_i(x^{opt}_i) - \tilde{c}(p^+))x^{opt}_i > 0$. Clearly this would imply that $\pi^+ > \pi^*$, thereby completing the contradiction. So for the rest of the proof, we will focus on showing this claim.

Essentially the claim follows from the following two nice properties that hold for any non-increasing MHR function $f_i(x)$.
\begin{enumerate}
\item (Lemma~\ref{lem_subclaim_mhr}) If $f_i(0) \geq ef_i(x)$ for some $x$, then $\frac{|f'_i(x)|}{f_i(x)} \geq \frac{1}{x}$.
\item (Lemma~\ref{mhr_profitchanges}) If $\frac{|f'_i(x_1)|}{f_i(x_1)} \geq \frac{1}{x_1}$ and $x_2 > x_1$, then $f(x_1)x_1 > f(x_2)x_2$.
\end{enumerate}

We show how the above two properties give us the desired claim. Define for all $i$, $f_i(x) = \lambda_i(x) - \tilde{c}(p^+)$. Clearly, this function still has a monotone hazard rate since $\lambda_i$ is MHR. Now, by definition of $p^+$, we know that $f_i(0) \geq f_i(x^+_i)$, i.e,

$$\frac{1}{e}(\lambda^{max} - \tilde{c}(p^+)) \geq (p^+ - \tilde{c}(p^+)).$$

Therefore, from Lemma~\ref{lem_subclaim_mhr}, we can conclude that
$$\frac{|f'_i(x^+_i)|}{f_i(x^+_i)} \geq \frac{1}{x^+_i}.$$

Now, we use this in the second lemma with $x_1 = x^+_i$ and $x_2 = x^{opt}_i$. We know $x^{opt}_i > x^+_i$. Therefore, we get, $f_i(x^+_i)x^+_i > f_i(x^{opt}_i)x^{opt}_i.$ Replacing $f_i$ with the actual definition, we get the desired result

$$(\lambda_i(x^+_i) - \tilde{c}(p^+))x^+_i > (\lambda_i(x^{opt}_i) - \tilde{c}(p^+))x^{opt}_i.$$ \end{proof}

We now describe our actual algorithm to approximately maximize profit that uses the general procedure described in Algorithm~\ref{alg_genprocedurebody}. The algorithm is reasonably straightforward. We make two calls to Procedure~\ref{alg_genprocedurebody} for $k=e$ and $k=\sqrt{e}$. 

\begin{algorithm}[htbp]
\caption{$0.53$-Approximate Algorithm to Maximize Profit}
\label{alg_uniformpeak}
\algsetup{indent=2em}
\begin{algorithmic}[1]
\STATE Let $\pi_1$ be the profit of the solution returned by Algorithm~\ref{alg_genprocedurebody} for $k=e$.
\STATE Let $\pi_2$ be the profit of the solution returned by Algorithm~\ref{alg_genprocedurebody} for $k=\sqrt{e}$.
\STATE Return $\max(\pi_1, \pi_2)$ and its corresponding prices and allocation.
\end{algorithmic}
\end{algorithm}

\begin{theorem}
\label{thm:1.88approx}
Algorithm~\ref{alg_uniformpeak} returns an envy-free allocation which is a $(4\sqrt{e}-2-e)\approx 1.877$ approximation to the optimal profit.
\end{theorem}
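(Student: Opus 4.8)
The plan is to bound the optimal profit against \emph{each} of the two runs and then show the two bounds cannot both be loose. First I would pass to an idealized target: let $\vec{z^{opt}}$ be the min-cost flow for the demand $\vec{x^{opt}}$ and set $\Pi:=\sum_i\lambda_i(x^{opt}_i)x^{opt}_i-C(\vec{z^{opt}})$. Since replacing an envy-free solution's allocation by a cheaper feasible flow only raises the seller's profit, $\Pi\ge\pi^{opt}$, so it suffices to prove $\Pi\le(4\sqrt{e}-2-e)\max(\pi_1,\pi_2)$; the output of Algorithm~\ref{alg_uniformpeak} is envy-free by Theorem~\ref{thm_ep1envyfree}.

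The core tool is a master inequality for each $k\in\{e,\sqrt{e}\}$. Because $\vec{y^k}$ is itself a min-cost flow (Theorem~\ref{thm_ep1envyfree}), for every edge $(i,t)$ we have $r_i(\vec{y^k})\le c_t(y^k_t)$; feeding this, together with convexity of the $C_t$ (and $C_t(0)=0$), into the per-item bound $C(\vec{z^{opt}})\ge C(\vec{y^k})+\sum_t c_t(y^k_t)(z^{opt}_t-y^k_t)$ and re-routing the right-hand side from items to buyers gives $C(\vec{z^{opt}})\ge C(\vec{y^k})+\sum_i r_i(\vec{y^k})(x^{opt}_i-x^k_i)$. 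Writing $C(\vec{y^k})=\sum_i\lambda_i(x^k_i)x^k_i-\pi_k$ and $f^k_i(x):=\lambda_i(x)-r_i(\vec{y^k})$, this rearranges to
\[
\Pi\ \le\ \pi_k+\sum_i\big(f^k_i(x^{opt}_i)\,x^{opt}_i-f^k_i(x^k_i)\,x^k_i\big).
\]
I would then read off what the stopping rule guarantees: by Proposition~\ref{prop_stopequality}, $f^k_i(x^k_i)=\tfrac1k f^k_i(0)$; each $f^k_i$ is MHR (a constant shift of the MHR $\lambda_i$) and $0\le f^k_i(0)\le\lambda^{max}$. Lemmas~\ref{lem_subclaim_mhr} and~\ref{mhr_profitchanges} then show the per-buyer ``revenue curve'' $x\mapsto x f^k_i(x)$ is non-increasing once $f^k_i$ has dropped to $\tfrac1e f^k_i(0)$ or below, so $x^e_i$ lies at or past the revenue peak, and log-concavity also yields that the point at which $f^{\sqrt{e}}_i$ reaches $\tfrac1e f^{\sqrt{e}}_i(0)$ is at most $2x^{\sqrt{e}}_i$. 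Combining these with $x^{opt}_i\le x^e_i$ (Lemma~\ref{lem_lowerboundmain}), $\pi_k\ge\sum_i f^k_i(x^k_i)x^k_i$, and bounding $f^k_i(x^{opt}_i)x^{opt}_i$ by the revenue peak already delivers the crude bounds $\Pi\le e\,\pi_1$ and $\Pi\le 2\,\pi_2$, hence a $2$-approximation from $\max(\pi_1,\pi_2)$.

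To sharpen $2$ to $4\sqrt{e}-2-e$ I would not use the two bounds separately but take, for a weight $\theta$ to be optimized, $\theta$ times the $k=e$ master inequality plus $(1-\theta)$ times the $k=\sqrt{e}$ one. The $\pi_k$ parts contribute at most $\max(\pi_1,\pi_2)$, and the residual $\sum_i[\theta(f^e_i(x^{opt}_i)x^{opt}_i-f^e_i(x^e_i)x^e_i)+(1-\theta)(f^{\sqrt{e}}_i(x^{opt}_i)x^{opt}_i-f^{\sqrt{e}}_i(x^{\sqrt{e}}_i)x^{\sqrt{e}}_i)]$ must be bounded, buyer by buyer, by a multiple of $\theta f^e_i(x^e_i)x^e_i+(1-\theta)f^{\sqrt{e}}_i(x^{\sqrt{e}}_i)x^{\sqrt{e}}_i$. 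The point that makes this possible is that the two stopping criteria are genuinely complementary: a buyer for whom the $k=e$ residual is near worst (revenue curve nearly flat up to $x^e_i$, so OPT's price is barely above $\bar p^e_i$) is precisely one for whom the stronger price pressure of $k=\sqrt{e}$ — note $\vec{p^{\sqrt{e}}}\ge\vec{p^e}$, so $x^{\sqrt{e}}_i\le x^e_i\le 2x^{\sqrt{e}}_i$ — recaptures almost all remaining revenue, and conversely a buyer on which $k=\sqrt{e}$ stops well short of the peak is handled tightly by $k=e$. Carrying out this per-buyer case analysis and optimizing $\theta$ bounds the residual by $(4\sqrt{e}-3-e)$ times $\theta f^e_i(x^e_i)x^e_i+(1-\theta)f^{\sqrt{e}}_i(x^{\sqrt{e}}_i)x^{\sqrt{e}}_i$; summing over $i$, using $\sum_i f^k_i(x^k_i)x^k_i\le\pi_k$ and $\theta\pi_1+(1-\theta)\pi_2\le\max(\pi_1,\pi_2)$, collapses everything to $\Pi\le(4\sqrt{e}-2-e)\max(\pi_1,\pi_2)$.

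I expect the main obstacle to be the production costs: the three relevant allocations $\vec{z^{opt}},\vec{y^e},\vec{y^{\sqrt{e}}}$ carry different marginal-cost profiles, so $f^{opt}_i,f^e_i,f^{\sqrt{e}}_i$ are shifts of $\lambda_i$ by different amounts and cannot be compared by rescaling a single function; making both the master inequality and the per-buyer comparison hold in this generality is exactly where Lemma~\ref{lem_diffcostmonoton}, its corollaries, and the min-cost-flow optimality conditions have to be used with care. The secondary difficulty is the final optimization itself — the naive use of the two master inequalities only recovers the factor $2$, and extracting the sharper $4\sqrt{e}-2-e$ hinges on the tight, MHR-driven relation $x^e_i\le 2x^{\sqrt{e}}_i$ and on locating the revenue peak relative to $x^e_i$ and $x^{\sqrt{e}}_i$.
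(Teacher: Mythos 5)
Your scaffolding is sound, and it is genuinely different from the paper's: you fold the cost bookkeeping into a single ``master inequality'' $\Pi\le\pi_k+\sum_i\bigl(f^k_i(x^{opt}_i)x^{opt}_i-f^k_i(x^k_i)x^k_i\bigr)$ per run (using convexity of $C_t$, min-cost-flow optimality of $\vec{y^k}$, and $\sum_i f^k_i(x^k_i)x^k_i\le\pi_k$), whereas the paper never forms such a per-buyer inequality: it partitions buyers into $B^H$ and $B^L$ according to $\bar p^{opt}_i$ versus $\bar p^{\sqrt{e}}_i$, splits the optimum via the flow-partition and tri-partition lemmas, and proves the two explicit bounds $\pi_1\ge\frac1e\pi^{opt}(B^H)+\frac1{\sqrt{e}}\pi^{opt}(B^L)$ and $\pi_2\ge\frac1{\sqrt{e}}\pi^{opt}(B^H)+\frac12\pi^{opt}(B^L)$, whose balancing against $\pi^*\le\pi^{opt}(B^H)+\pi^{opt}(B^L)$ is exactly where $4\sqrt{e}-2-e$ comes from. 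Your derivation of the master inequality, of envy-freeness, of $x^{opt}_i\le x^e_i$ (Lemma~\ref{lem_lowerboundmain}), of $x^e_i\le 2x^{\sqrt{e}}_i$, and of the crude bounds $\Pi\le e\,\pi_1$ and $\Pi\le 2\,\pi_2$ is correct.

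The gap is the final, constant-determining step. You assert that a per-buyer case analysis with a single weight $\theta$ bounds the combined residual by $(4\sqrt{e}-3-e)\bigl(\theta f^e_i(x^e_i)x^e_i+(1-\theta)f^{\sqrt{e}}_i(x^{\sqrt{e}}_i)x^{\sqrt{e}}_i\bigr)$, but you never exhibit the case analysis, and the MHR facts you actually invoke (revenue non-increasing past the $\frac1e$-point, $x^e_i\le 2x^{\sqrt{e}}_i$) are not by themselves strong enough to certify that inequality: with only those facts, a buyer with $x^{\sqrt{e}}_i<x^{opt}_i\le x^e_i$ is only known to satisfy $f^{\sqrt{e}}_i(x^{opt}_i)x^{opt}_i\le 2B_i$, so at small $\theta$ the required per-buyer constant is forced up towards $1>4\sqrt{e}-3-e$, while at larger $\theta$ the ``flat-demand'' buyer (where $R^e_i$ approaches $(e-1)A_i$ and $R^{\sqrt{e}}_i$ approaches $(\sqrt{e}-1)B_i$ with $B_i\approx\sqrt{e}A_i$) pushes the constant up as well; exhibiting a $\theta$ and the finer MHR estimates that make both regimes work simultaneously is precisely the missing content of the theorem. (In fact a sharper log-concavity bound on $x\mapsto x f^{\sqrt{e}}_i(x)$ beyond $x^{\sqrt{e}}_i$, using convexity of the cumulative hazard, would close this, but that estimate is neither stated nor proved in your sketch.) A second loose end you flag but do not resolve is the cross-comparison of the shifts: proving $x^e_i\le 2x^{\sqrt{e}}_i$ for the shifted functions requires $r_i(\vec{y^{\sqrt{e}}})\le r_i(\vec{y^e})$ and the relation between the two stopping conditions, which the paper supplies via Lemma~\ref{lem_diffcostmonoton}, Lemma~\ref{sublem_finalproofmargcost} and Lemma~\ref{sublem_relationstopcond}; your proposal presumes these comparisons rather than establishing them. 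As written, the argument proves a $2$-approximation, not the claimed $4\sqrt{e}-2-e$.
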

\emph{(Proof Sketch)}
Since we already argued that Algorithm \ref{alg_genprocedurebody} returns envy-free solutions, we only need to establish the approximation bound. We also claim that $\vec{p^{\sqrt{e}}} \geq \vec{p^e}$; this is proved in the Appendix. 

Define $B^H$ to be the buyers whose payment in $\vec{p^{opt}}$ is larger than in $\vec{p^{\sqrt{e}}}$, and $B^L$ the buyers whose payments are between $\vec{p^e}$ and  $\vec{p^{\sqrt{e}}}$. We show in the Appendix that $\vec{p^e}$ extracts a large fraction of optimum profit from the buyers in $B^L$ and $\vec{p^{\sqrt{e}}}$ from $B^H$. A key lemma that completes the bound is that for MHR functions, for an increase in price from $\vec{p^e}$ to $\vec{p^{\sqrt{e}}}$, the profit loss is at most a factor two. Therefore, $\vec{p^{\sqrt{e}}}$ extracts at least half the profit from the buyers in $B^L$. The precise factor of 1.88 comes from carefully balancing these bounds; this leads to the choice of $\vec{p^{\sqrt{e}}}$ and $\vec{p^e}$. $\blacksquare$

It is important to note that $\vec{p^{\sqrt{e}}}$ is not simply a scaled version of the prices in $\vec{p^e}$; its construction crucially depends on the stopping condition, which in turn depends on both the price and the production cost. The presence of production costs means that previous approaches (e.g., scale prices uniformly, choose a single price for all items) do not work well, as they can end up with solutions with high production cost and thus low overall profit.

\subsection{Approximating Revenue and Social Welfare Simultaneously}
For sellers who care about both revenue and welfare, as is common in repeated mechanisms where you want the buyers to ``leave happy", we also provide the following guarantees.

\begin{theorem}\label{thm:welfare}
Algorithm~\ref{alg_genprocedurebody} for $k=e$ provides an envy-free solution which is a $e$-approximation to the optimal profit with at least half the optimal welfare.
\end{theorem}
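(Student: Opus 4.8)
The plan is to analyze the solution $(\vec{p^e}, \vec{x^e}, \vec{y^e})$ returned by Algorithm~\ref{alg_genprocedurebody} with $k = e$ and bound both its profit and its welfare against the respective optima. The profit bound is essentially already in hand: by Lemma~\ref{lem_lowerboundmain} we have $p^{opt}_t \ge p^e_t$ for every item $t$, and by Proposition~\ref{prop_stopequality} every item $t$ in $\vec{p^e}$ satisfies $p^e_t - c_t(y^e_t) = \frac{1}{e}(\lambda^{max} - c_t(y^e_t))$, i.e. the per-unit margin of every item is a $\frac1e$ fraction of the ``maximum possible margin'' $\lambda^{max} - c_t(y^e_t)$. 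Since in any envy-free solution the per-unit payment $\bar p_i$ can be at most $\lambda^{max}$ and the marginal cost of the items buyer $i$ uses is at least $c_t(y^*_t)$-type quantities (controlled via Lemma~\ref{lem_diffcostmonoton} and monotonicity of marginal costs in demand), one shows that the profit contributed by each buyer in $\vec{p^e}$ is at least $\frac1e$ of the profit that buyer could contribute in \emph{any} envy-free solution, in particular in $\vec{p^{opt}}$. Summing over buyers gives $\pi^e \ge \frac1e \pi^{opt}$. Here I would be careful to route the comparison through marginal costs: because $\vec{p^{opt}} \ge \vec{p^e}$, Lemma~\ref{lem_diffcostmonoton}(2) gives $c_t(y^e_t) \le c_t(y^{opt}_t)$, so the ``headroom'' $\lambda^{max} - c_t(\cdot)$ is only larger at $\vec{p^e}$, which is exactly the direction needed.

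For the welfare bound, the key structural fact is that $\vec{y^e}$ is a min-cost flow for the demand $\vec{x^e}$ (Theorem~\ref{thm_ep1envyfree}), and that $\vec{x^e} \le \vec{x^*}$ (increasing prices from $\vec{p^*}$ to $\vec{p^e}$ only decreases demand, by Lemma~\ref{lem_diffcostmonoton}(1), using that $\vec{p^*}$ is the base price in the algorithm). Welfare of a solution is $\sum_i \int_0^{x_i}\lambda_i(x)\,dx - C(\vec{y})$. The plan is to lower bound $\int_0^{x^e_i}\lambda_i - (\text{cost attributable to } i)$ by a constant fraction of $\int_0^{x^*_i}\lambda_i - (\text{cost attributable to } i)$ for each buyer $i$, using that at demand $x^e_i$ the per-unit value $\lambda_i(x^e_i) = \bar p^e_i$ is still at least a $\frac1e$-fraction of the way from the marginal cost up to $\lambda^{max}$ (again Proposition~\ref{prop_stopequality}), combined with an MHR-type integral inequality of the flavor already isolated in Lemma~\ref{lem_subclaim_mhr}/Lemma~\ref{mhr_profitchanges}: if $f_i(x) = \lambda_i(x) - (\text{marginal cost})$ is non-increasing MHR and $f_i(0) \ge e f_i(x^e_i)$, then $\int_0^{x^e_i} f_i \ge \frac12 \int_0^{x^*_i} f_i$. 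Since the marginal-cost terms telescope correctly when $\vec{y^e}$ and $\vec{y^*}$ are both min-cost flows (so that the cost charged to buyer $i$ is $\int_0^{x_i} r_i$-like and everything lines up edge-by-edge through the flow decomposition), summing the per-buyer inequalities yields $\mathrm{SW}(\vec{x^e},\vec{y^e}) \ge \frac12 \mathrm{SW}(\vec{x^*},\vec{y^*})$.

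The main obstacle I anticipate is making the ``cost attributable to buyer $i$'' bookkeeping rigorous: welfare is naturally written per item ($C(\vec{y}) = \sum_t C_t(y_t)$), not per buyer, and to run the per-buyer MHR argument I need to split $C(\vec{y^e})$ and $C(\vec{y^*})$ among buyers consistently. The clean way is to use that both $\vec{y^e}$ and $\vec{y^*}$ are min-cost flows and appeal to a flow/LP-duality argument — along each edge used, the marginal cost $c_t$ equals a common value $r_i(\vec{y})$ for all items buyer $i$ uses, so $C_t(y_t) \le c_t(y_t) y_t$ (convexity) can be apportioned, and one compares $\int_0^{x^e_i}(\lambda_i(x) - r_i(\vec{y^e}))\,dx$ against $\int_0^{x^*_i}(\lambda_i(x) - r_i(\vec{y^*}))\,dx$ with the extra subtlety that $r_i(\vec{y^e}) \le r_i(\vec{y^*})$ is false in general — it goes the \emph{other} way (smaller demand, smaller marginal cost), which actually helps. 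I would want to double-check the direction of every such monotonicity and, if the per-buyer split proves too delicate, fall back to a global argument: bound $\mathrm{SW}(\vec{x^e},\vec{y^e})$ below by $\frac1e\big(\text{total value at }\vec{x^e}\big) + \pi^e$ minus lower-order terms and compare to $\mathrm{SW}(\vec{x^*},\vec{y^*}) \le \text{total value at }\vec{x^*}$, absorbing the cost difference via convexity and $\vec{x^e}\le\vec{x^*}$. The factor $\tfrac12$ should emerge from the same MHR integral bound that gives the factor $2$ profit-loss statement cited in the sketch of Theorem~\ref{thm:1.88approx}.
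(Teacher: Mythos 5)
Your high-level ingredients (stopping condition at equality, an MHR integral bound, min-cost flow structure, the price lower bound $\vec{p^{opt}}\ge\vec{p^e}$) are the right ones, and the MHR fact you isolate ($f_i(0)\ge e f_i(x^e_i)$ implies $\int_0^{x^e_i}f_i\ge\tfrac12\int_0^{x^*_i}f_i$) is true. But two of your monotonicity directions are backwards, and the per-buyer accounting you build on them does not close. For the profit half: since $\vec{p^{opt}}\ge\vec{p^e}$, demand drops, so Lemma~\ref{lem_diffcostmonoton} gives $c_t(z^{opt}_t)\le c_t(y^e_t)$ (one must use $\vec{z^{opt}}$, the min-cost flow for $\vec{x^{opt}}$, as $\vec{y^{opt}}$ need not be min-cost) --- the opposite of what you wrote --- so the ``headroom'' $\lambda^{max}-c_t(\cdot)$ is larger at OPT and the naive per-unit-margin comparison fails in exactly the direction you thought helped. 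Moreover, charging each OPT buyer $r_i(\vec{z^{opt}})x^{opt}_i$ over-counts OPT's cost (convexity gives $\sum_t c_t(z^{opt}_t)z^{opt}_t\ge C(\vec{z^{opt}})$), so a per-buyer split does not upper bound $\pi^*$. The paper's $e$-bound is instead obtained by simply citing the inequality $\pi_1\ge\frac1e\pi^{opt}(B^H)+\frac{1}{\sqrt e}\pi^{opt}(B^L)\ge\frac1e\pi^*$ already established in the proof of Theorem~\ref{thm:1.88approx}, whose derivation routes through Lemma~\ref{lem_costcomptwoprice}, the flow-partition lemma, and $r_i(\vec{y^e})\ge r_i(\vec{z^{opt}})$ to absorb $(1-\frac1e)$ of OPT's cost --- the true direction of the marginal-cost inequality is precisely what is needed there.

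The welfare half has the same structural gap, which you correctly flag but do not resolve: attributing the optimum's cost at marginal cost, $\sum_i r_i(\vec{y^*})x^*_i\ge C(\vec{y^*})$, makes $\sum_i\int_0^{x^*_i}(\lambda_i(x)-r_i(\vec{y^*}))dx$ a \emph{lower} bound on $SW(\vec{x^*},\vec{y^*})$, so chaining it with your MHR inequality yields $SW(\vec{x^e},\vec{y^e})\ge\tfrac12(\text{something}\le SW^*)$, which proves nothing; the $C_t(y_t)\le c_t(y_t)y_t$ apportioning you propose cannot fix this because it errs on the wrong side for OPT. Also, $r_i(\vec{y^e})\le r_i(\vec{y^*})$ is in fact true (by $\vec{x^e}\le\vec{x^*}$ and Lemma~\ref{lem_diffcostmonoton}), not false as you assert, and the paper uses it. The paper's argument bounds the welfare \emph{difference} instead: by Lemma~\ref{lem_costdiff}, $C(\vec{y^*})-C(\vec{y^e})\ge\sum_i r_i(\vec{y^e})(x^*_i-x^e_i)$, so the loss is at most $\sum_i\int_{x^e_i}^{x^*_i}[\lambda_i(x)-r_i(\vec{y^e})]dx$; Proposition~\ref{prop_stopequality} plus MHR give $\frac{\lambda_i(x)-r_i(\vec{y^e})}{|\lambda'_i(x)|}\le x^e_i$ on $[x^e_i,x^*_i]$, so each tail integral is at most $x^e_i(\lambda_i(x^e_i)-\lambda_i(x^*_i))\le x^e_i(\bar p^e_i-r_i(\vec{y^e}))$, and summing (using $\sum_i r_i(\vec{y^e})x^e_i\ge C(\vec{y^e})$) bounds the loss by $\pi_1\le SW(\vec{x^e},\vec{y^e})$, giving the factor $2$. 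Your ``global fallback'' gestures at absorbing the cost difference by convexity, but without the specific increment inequality of Lemma~\ref{lem_costdiff} evaluated at \emph{your} marginal costs $r_i(\vec{y^e})$, the bookkeeping does not go through.
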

\begin{proof}
The first part, bounding the profit, is rather easy. We simply refer to the Proof of Theorem~\ref{thm:1.88approx} where we used $\pi_1$ to denote the profit from the $k=e$ solution. We have already shown that

$$\pi_1 \geq \frac{1}{e}(\pi^{opt}(B^H)) + \frac{1}{\sqrt{e}}(\pi^{opt}(B^L)).$$

This means that $\pi_1 \geq \frac{1}{e}(\pi^{opt}(B^H) + \pi^{opt}(B^L)) \geq \frac{1}{e}(\pi^*)$, and so the profit returned by the algorithm is at most a factor $e$ smaller than the optimal profit. We now move on to the social welfare. The social welfare of our solution and the optimum are as given below,

$$SW(\vec{x^e}, \vec{y^e}) = \sum_{i \in B} \int_{x=0}^{x_i^e} \lambda_i(x)dx - \sum_{t \in S}C_t(y^e_t).$$

\begin{align*}
SW(\vec{x^*}, \vec{y^*}) = & \sum_{i \in B} \int_{x=0}^{x_i^*} \lambda_i(x)dx - \sum_{t \in S}C_t(y^*_t)\\
= & SW(\vec{x^e}, \vec{y^e})+ \sum_{i \in B} \int_{x=x_i^e}^{x_i^*} \lambda_i(x)dx - \sum_{t \in S}(C_t(y^*_t) - C_t(y^e_t))\\
= & SW(\vec{x^e}, \vec{y^e}) + \text{Welfare Loss}.
\end{align*}

For the rest of the proof, we will attempt to bound the lost welfare in terms of the social welfare of our solution. In particular, we will show that the lost welfare for MHR functions cannot be any larger than the welfare of our solution, which will give us the half approximation. We know that for every $i$ the following is true for $k=e$ due to Proposition~\ref{prop_stopequality} (recall that $\lambda_i(x^e_i)=\bar{p}_i^e$):

$$\lambda_i(x^e_i) - r_i(\vec{y^e}) = \frac{1}{e}(\lambda^{max} - r_i(\vec{y^e})).$$

Look at the function $\lambda_i(x) -  r_i(\vec{y^e})$: since the latter term is a constant, we know that this function has a monotone hazard rate. Applying the contrapositive of Lemma~\ref{lem_subclaim_mhr}, we get that for all $i$,
$$\frac{\lambda_i(x^e_i) - r_i(\vec{y^e})}{|\lambda'_i(x^e_i)|} \leq x^e_i.$$

Next, we claim that the total difference in production costs at the optimum and our solution is at least $\sum_i r_i(\vec{y^e})(x^*_i - x^e_i)$. This is formally shown in Lemma~\ref{lem_costdiff} in the Appendix. Therefore, the following is an upper bound for the Lost Welfare:
\begin{align*}
\text{Welfare Loss} \leq & \sum_{i \in B} \int_{x=x_i^e}^{x_i^*} \lambda_i(x)dx - \sum_i r_i(\vec{y^e})(x^*_i - x^e_i)\\
= & \sum_{i \in B} \int_{x=x_i^e}^{x_i^*}  [\lambda_i(x) - r_i(\vec{y^e})]dx.
\end{align*}

For every $i$, the second term inside the integral is a constant and so the function inside the integral also has a monotone hazard rate in the desired interval. This means that $\forall x \in [x^e_i, x^*_i]$,

$$\frac{\lambda_i(x) - r_i(\vec{y^e})}{|\lambda'_i(x)|} \leq \frac{\lambda_i(x^e_i) - r_i(\vec{y^e})}{|\lambda'_i(x^e_i)|} \leq x^e_i.$$

So we can bound every integral as follows,
\begin{align*}
\int_{x^e_i}^{x^*_i}[\lambda_i(x) -r_i(\vec{y^e})] dx \leq & \int_{x^e_i}^{x^*_i}x^e_i |\lambda'_i(x)|dx\\
= & x^e_i \int_{x^e_i}^{x^*_i}(-\lambda'_i(x))dx \\
= & x^e_i(\lambda_i(x^e_i) - \lambda_i(x^*_i)).
\end{align*}

\noindent Now consider $\lambda_i(x^*_i)$. Since our solution $\vec{p^e}$ dominates $\vec{p^*}$, it is not hard to see that $\lambda_i(x^*_i) \geq r_i(\vec{y^*}) \geq r_i(\vec{y^e})$ (Lemma~\ref{lem_diffcostmonoton}). So, we finally bound the lost welfare as follows:

\begin{align*}
\text{Lost Welfare} \leq & \sum_{i \in B} \int_{x=x_i^e}^{x_i^*}  [\lambda_i(x) - r_i(\vec{y^e})]dx\\
\leq & \sum_{i \in B} x^e_i(\lambda_i(x^e_i) - \lambda_i(x^*_i))\\
\leq & \sum_{i \in B} \lambda_i(x^e_i) x^e_i - x^e_i r_i(\vec{y^e}) \\
\leq & \sum_{i \in B}\bar{p}^e_i x^e_i - \sum_t C_t(y^e_t) \\
= & \pi_1\\
\leq & SW(\vec{x^e}, \vec{y^e}).
\end{align*}
The last step is true because for any given solution, the profit cannot be larger than the social welfare of the same solution. So the optimum social welfare is $ SW(\vec{x^e}, \vec{y^e}) + \text{Lost Welfare}$ which is no larger than $2\cdot SW(\vec{x^e}, \vec{y^e})$. This completes the proof. \end{proof}

This result provides an additional, stronger revenue-welfare trade-off. Suppose we run the algorithm in Theorem \ref{thm:welfare}, and obtain welfare which is exactly $\frac{1}{\alpha}$ of optimum (we know that $\alpha\leq 2$). Then, our analysis guarantees that the profit of the resulting solution is actually at least $\max(\frac{1}{e},\frac{\alpha-1}{\alpha})$ of optimum; for instance if $\alpha = 2$, then we actually get half the optimal revenue.

\begin{corollary}
$\exists \alpha \in [1,2]$ such that the solution returned by Theorem~\ref{thm:welfare} has a fraction $\frac{1}{\alpha}$ of the optimum welfare and $\max(\frac{1}{e}, \frac{\alpha-1}{\alpha}))$ of the optimum revenue.
\end{corollary}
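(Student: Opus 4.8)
The plan is to simply re-read the chain of inequalities established inside the proof of Theorem~\ref{thm:welfare} and turn its ``slack'' into the claimed parametrized trade-off. Concretely, I would \emph{define} $\alpha$ to be the exact ratio between the optimal welfare and the welfare of the solution $(\vec{p^e},\vec{x^e},\vec{y^e})$ returned by Algorithm~\ref{alg_genprocedurebody} with $k=e$, i.e. $\alpha = SW(\vec{x^*},\vec{y^*}) / SW(\vec{x^e},\vec{y^e})$ (the interesting case being $SW(\vec{x^e},\vec{y^e})>0$; otherwise both quantities are zero and the statement is vacuous with, say, $\alpha=1$). Then $\alpha \ge 1$ trivially, since the computed solution cannot beat the welfare optimum, and $\alpha \le 2$ is exactly the conclusion of Theorem~\ref{thm:welfare}. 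So the first step just records that $\alpha\in[1,2]$ and that, by construction, the computed solution achieves a $\frac1\alpha$ fraction of the optimal welfare.

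The second step extracts the revenue bound. From the welfare decomposition already written in the proof of Theorem~\ref{thm:welfare}, $SW(\vec{x^*},\vec{y^*}) = SW(\vec{x^e},\vec{y^e}) + \text{Lost Welfare}$, and the definition of $\alpha$, we get $\text{Lost Welfare} = (\alpha-1)\,SW(\vec{x^e},\vec{y^e})$. But the main computation in that proof established $\text{Lost Welfare} \le \pi_1$ (the long displayed chain ending in ``$= \pi_1$''), so
\[
\pi_1 \;\ge\; (\alpha-1)\,SW(\vec{x^e},\vec{y^e}) \;=\; \frac{\alpha-1}{\alpha}\,SW(\vec{x^*},\vec{y^*}) \;\ge\; \frac{\alpha-1}{\alpha}\,\pi^{opt},
\]
where the last inequality uses that the optimal revenue is at most the optimal welfare. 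Combining this with the $e$-approximation $\pi_1 \ge \frac1e \pi^{opt}$ already proved in Theorem~\ref{thm:welfare} yields $\pi_1 \ge \max\!\big(\frac1e, \frac{\alpha-1}{\alpha}\big)\pi^{opt}$, which is the claimed revenue guarantee.

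There is essentially no hard step here: the corollary is a bookkeeping consequence of facts already in hand, and I would present it as such. The only point that deserves a sentence of care is the well-definedness of $\alpha$ (handling the degenerate zero-welfare case) and making explicit that the inequality ``$\text{Lost Welfare}\le \pi_1$'' is precisely the intermediate conclusion of the Theorem~\ref{thm:welfare} proof rather than something new — i.e., that the theorem's argument already proves slightly more than its stated $\frac12$-bound, and the corollary just names the extra slack $\alpha$. If anything is an ``obstacle,'' it is resisting the temptation to re-derive the welfare-loss bound from scratch; it should simply be cited.
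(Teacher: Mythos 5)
Your proposal is correct and follows exactly the route the paper intends: the paragraph preceding the corollary explains that the extra slack in the Theorem~\ref{thm:welfare} analysis (namely the intermediate bound $\text{Lost Welfare}\le \pi_1$, combined with $\pi_1\ge\frac{1}{e}\pi^{opt}$ and the fact that optimal revenue is at most optimal welfare) is precisely what yields the parametrized trade-off, with $\alpha$ named as the realized welfare ratio. Your bookkeeping, including the degenerate-case remark, matches the paper's implicit argument.
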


\section{Efficient Implementation of the Ascending-Price Procedure}
\label{sec:efficient}
We now describe how to implement Algorithm~\ref{alg_genprocedurebody} efficiently. The following algorithm uses \\$O(|B|\log(\lambda^{max}))$ min-cost computations, where $|B|$ is the number of buyer types and $\lambda^{max}$ is the peak of the buyer valuation functions. The efficient implementation depends crucially on the following fact: Fix any active set $(B_1, S_1)$ and an active price range $[p_1, p_2]$. Look at only min-cost allocations. Suppose that at $p_1$ (all active items have this price), none of the items in the active set meet the stopping criterion (for some $k$) and at some $p \in (p_1, p_2]$, an item $t$ meets the stopping criterion. Then, if we compute the minimum cost allocation/flow when all the active items are priced at $p_2$, $t$ must still meet the stopping criterion. This property hints at a `binary search'-like approach to identify the exact price at which an item meets the stopping criterion. We first prove the property and then provide the algorithm.

\begin{lemma}
\label{lem_prop_binarysearch}
Consider some set of (active) buyers and items $(B_1, S_1)$ and suppose that when all items are priced at $p$, some item $t$ meets the stopping criterion in Equation~\ref{eqn_stopconditionmain} for the corresponding min-cost allocation $(\vec{x^a}, \vec{y^a})$. For any $p_2 \geq p$, let $\vec{x^b}, \vec{y^b}$ be the min-cost best-response solution when all of the items in $S_1$ are priced at $p_2$. Then $t$ must meet the stopping criterion at this new allocation, i.e.,
$$p_2 - c_t(y^b_t) \geq \frac{1}{k}(\lambda^{max} - c_t(y^b_t)).$$
\end{lemma}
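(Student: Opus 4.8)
The plan is to reduce the statement to the monotonicity of min-cost flows already established in Lemma~\ref{lem_diffcostmonoton} (equivalently Corollary~\ref{corr_diffcostmorebuyers}), after rewriting the stopping criterion so that the price and the marginal cost appear with convenient signs. Concretely, Equation~\ref{eqn_stopconditionmain} for item $t$, namely $p_t - c_t(y_t) \geq \frac{1}{k}(\lambda^{max} - c_t(y_t))$, is equivalent to $p_t - \frac{k-1}{k}\,c_t(y_t) \geq \frac{\lambda^{max}}{k}$. The key structural point is that since $k \geq 1$ we have $\frac{k-1}{k} \geq 0$, so the left-hand side of this reformulation is nondecreasing in the price and nonincreasing in the marginal cost of $t$.

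With this in hand, it suffices to check two monotonicity facts when we pass from the common active price $p$ to the larger common price $p_2$. First, the price term can only go up, since $p_2 \geq p$. Second, I would argue that the marginal cost of $t$ in the new min-cost allocation cannot go up. Restricted to the sub-graph on $(B_1, S_1)$ every item carries the same price, so the best-response demand of a buyer $i \in B_1$ is $x_i = \lambda_i^{-1}$ evaluated at that common price; as each $\lambda_i$ is non-increasing, raising the common price from $p$ to $p_2$ yields $\vec{x^b} \leq \vec{x^a}$ componentwise. Since $\vec{y^a}$ and $\vec{y^b}$ are the min-cost flows for $\vec{x^a}$ and $\vec{x^b}$ on this sub-graph, Corollary~\ref{corr_diffcostmorebuyers} (equivalently, Statement 2 of Lemma~\ref{lem_diffcostmonoton} applied to the two uniform price vectors on $S_1$) gives $c_t(y^b_t) \leq c_t(y^a_t)$ for every $t \in S_1$, in particular for the item $t$ of the hypothesis.

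Combining the two facts, $p_2 - \frac{k-1}{k}\,c_t(y^b_t) \geq p - \frac{k-1}{k}\,c_t(y^a_t) \geq \frac{\lambda^{max}}{k}$, where the first inequality uses $p_2 \geq p$, $c_t(y^b_t) \leq c_t(y^a_t)$ and $\frac{k-1}{k}\geq 0$, and the second is exactly the assumption that $t$ meets the stopping criterion at price $p$. Undoing the rearrangement gives $p_2 - c_t(y^b_t) \geq \frac{1}{k}(\lambda^{max} - c_t(y^b_t))$, which is the claim. I do not expect a genuine obstacle here; the only things to be careful about are the sign of $\frac{k-1}{k}$, which is precisely where the hypothesis $k \geq 1$ is used, and the (minor) observation that Lemma~\ref{lem_diffcostmonoton} is really a statement about min-cost flows for comparable demands, so it applies verbatim to the sub-graph $(B_1, S_1)$ with the two uniform price levels.
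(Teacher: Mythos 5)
Your proof is correct and follows essentially the same route as the paper's: both rearrange the stopping criterion so that the marginal cost appears with coefficient $\frac{k-1}{k}\geq 0$, and both invoke Lemma~\ref{lem_diffcostmonoton} (via the domination $\vec{x^b}\leq\vec{x^a}$ induced by the higher uniform price) to conclude $c_t(y^b_t)\leq c_t(y^a_t)$, after which the inequality chain is identical. The only difference is that you spell out why the demands are comparable, which the paper leaves implicit.
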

\begin{proof}
The proof is straightforward and follows from an application of Lemma~\ref{lem_diffcostmonoton}. Since $\vec{x^a}$ dominates $\vec{x^b}$, by the lemma, we know that $c_t(y^b_t) \leq c_t(y^a_t)$. We also know from the stopping criterion at $p$ that
$$p \geq \frac{1}{k}(\lambda^{max} + (k-1)c_t(y^a_t)) \geq \frac{1}{k}(\lambda^{max} + (k-1)c_t(y^b_t)).$$
$p$ is smaller than $p_2$ and therefore, $t$ also meets the stopping criterion for the flow $\vec{y^b}$.
$\blacksquare$ \end{proof}

We need some additional notation before we define the algorithm. Consider the set of prices $\vec{p^*}$ given by marginal cost pricing at the optimum solution. This is the starting point for Algorithm~\ref{alg_genprocedurebody}. Define the boundary price vector $P = \left\lbrace p_0, p_1, \cdots, p_m \right\rbrace$, such that $p_0$ is the unique smallest price in $\vec{p^*}$, $p_1$ is the unique second smallest price in $\vec{p^*}$ and so on. Finally set $p_m = \lambda^{max}$. For all $j < m$, when the active price is $P(j)=p_{j-1}$, some new items and buyers enter the active set because their initial price was also $P(j)$. We first describe the algorithm semi-formally and then show correctness.
\begin{enumerate}
\item Initialize the active set to be the same as the initial active set in Algorithm~\ref{alg_genprocedurebody}.
\item Iterate for $j=0$ to $m$.
\item Set the price of active items to be $P(j)$ and compute a min-cost best-response allocation for the active buyers, $(\vec{x}(j), \vec{y}(j))$.
\item Let $S_F(j)$ be the set of items that meet the stopping condition at this allocation.
\item Run a binary search for the items in $S_F(j)$ in the interval $[P(j-1), P(j)]$ and remove them at the exact price at which they become finished along with the buyers using them.
\item Add items to the active set whose price in $\vec{p^*}$ is $P(j)$ and the buyers using them in $\vec{y^*}$.
\item Let $B_A(j)$ be the new set of active buyers and $S_A(j)$, the active items.
\item Repeat the process.
\end{enumerate}

Before showing correctness of the above algorithm and elucidating upon the `binary search' in Step 6, we show some simple invariants of the above algorithm that prove that the above algorithm `simulates' Algorithm~\ref{alg_genprocedurebody}. The proof is in the Appendix.

\begin{lemma}
\label{lem_complementbinarysearch}
The following invariants hold during the course of the above Algorithm.
\begin{enumerate}
\item For any $j$, $B_A(j) \cup S_A(j) = B_A(P(j)) \cup S_A(P(j))$.

\item For any $j$, all the items in $S_F(j)$ meet the stopping criterion in the interval $[P(j-1), P(j)]$ during the course of Algorithm~\ref{alg_genprocedurebody}.
\end{enumerate}
\end{lemma}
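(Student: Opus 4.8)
The plan is to prove the two invariants simultaneously by induction on $j$, using the key monotonicity property established in Lemma~\ref{lem_prop_binarysearch} together with the structural hierarchy of Figure~\ref{fig_sethierarchy}. The base case $j=0$ is immediate: the efficient algorithm initializes its active set exactly as Algorithm~\ref{alg_genprocedurebody} does, so $B_A(0)\cup S_A(0)=B_A(P(0))\cup S_A(P(0))$, and there are no finished items yet. For the inductive step, I would assume that at the start of iteration $j$ the active set of the efficient algorithm equals the active set of Algorithm~\ref{alg_genprocedurebody} at active price $P(j-1)^+$ (i.e., just after the items and buyers that finished in $[P(j-2),P(j-1)]$ have been removed and the new ones at price $P(j-1)$ added), and then track what happens as the price rises to $P(j)$.

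The first main step is to argue that the set $S_F(j)$ computed by the efficient algorithm — the items meeting the stopping criterion when all active items are priced at $P(j)$ under the min-cost allocation — is exactly the set of items that finish in the half-open interval $(P(j-1),P(j)]$ during Algorithm~\ref{alg_genprocedurebody}. One direction is Lemma~\ref{lem_prop_binarysearch}: if an item $t$ meets the stopping criterion at some $p\in(P(j-1),P(j)]$ along the continuous run, then it still meets it when the active items are all priced at $P(j)$, so $t\in S_F(j)$. The converse requires showing that if $t\in S_F(j)$ then $t$ did not already finish before $P(j-1)$ (true by the inductive hypothesis, since $t$ is still active at the start of iteration $j$) and that $t$ does finish by the time the price reaches $P(j)$ — which follows from Proposition~\ref{prop_stopequality} and the intermediate-value argument: the stopping quantity $p-c_t(y_t(p))-\tfrac1k(\lambda^{max}-c_t(y_t(p)))$ is continuous, strictly negative at $P(j-1)$ (else $t$ would have finished earlier), and nonnegative at $P(j)$, hence crosses zero in between. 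A subtlety here is that Algorithm~\ref{alg_genprocedurebody} removes finished items one at a time (or in marginal-cost order, per Lemma~\ref{lem_ep1stopcrithier}), so I must check that removing the finished items in $S_F(j)$ does not cause a different min-cost allocation for the items that finish later within the same interval — this is where Lemma~\ref{lem_ep1stopcrithier} (items with smaller marginal cost finish first) and the fact that removing a finished item together with the buyers using it does not change allocations on the remainder become essential.

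Granting the characterization of $S_F(j)$, invariant (2) follows directly, and invariant (1) follows by bookkeeping: after removing $S_F(j)$ and the buyers using those items, and then adding the items whose $\vec{p^*}$-price equals $P(j)$ together with the buyers using them in $\vec{y^*}$, the efficient algorithm's active set coincides with the active set of Algorithm~\ref{alg_genprocedurebody} at active price $P(j)^+$; this is precisely the content of the hierarchy in Figure~\ref{fig_sethierarchy}, which guarantees that the buyers removed with finished items had no edges to remaining active items, and that the newly activated items were genuinely inactive (priced above the active price) up to this point. Finally, I would note that the "binary search" in Step 6 is well-defined precisely because of Lemma~\ref{lem_prop_binarysearch}: within $[P(j-1),P(j)]$ the property "item $t$ meets the stopping criterion at price $p$ under the min-cost allocation" is monotone in $p$, so the exact finishing price of each $t\in S_F(j)$ can be located by bisection, each step costing one min-cost flow computation.

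The main obstacle I anticipate is the interaction between the \emph{batch} removal of $S_F(j)$ in the efficient algorithm and the \emph{sequential, marginal-cost-ordered} removal in the idealized Algorithm~\ref{alg_genprocedurebody}: I need to verify carefully that computing a single min-cost allocation at price $P(j)$ and reading off all items that meet the stopping criterion yields the same set as the continuous process would produce after its cascade of one-at-a-time removals in $(P(j-1),P(j)]$. The resolution should be that removing a finished item $t$ along with exactly the buyers it serves leaves the min-cost flow on the complementary subgraph unchanged (since those buyers received \emph{only} $t$ by the hierarchy, and $t$ sent flow \emph{only} to them), so the order of removal is immaterial and the batch computation is faithful.
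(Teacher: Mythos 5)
Your overall skeleton (induction on $j$, with the direction ``$t$ finishes in $(P(j-1),P(j)]$ $\Rightarrow$ $t\in S_F(j)$'' handled by Lemma~\ref{lem_prop_binarysearch}) matches the paper, but the converse direction---that every $t\in S_F(j)$ really does meet the stopping criterion somewhere in $[P(j-1),P(j)]$ during the continuous run---is exactly where your argument has a genuine gap. Your intermediate-value argument needs the stopping quantity to be nonnegative at $P(j)$ \emph{for the continuous algorithm's allocation}, whereas membership in $S_F(j)$ only says the criterion holds for the \emph{batch} min-cost allocation $\vec{y}(j)$, computed for all of $B_A(j-1)$ over all of $S_A(j-1)$ at price $P(j)$. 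These two allocations genuinely differ: in the continuous run, a buyer that finished at some $p<P(j)$ keeps the larger frozen demand $\lambda_i^{-1}(p)$ and is served by the items that finished with it, while in the batch instance that same buyer is re-priced at $P(j)$ and its (smaller) demand may be re-routed onto surviving items. Your proposed resolution---that removing a finished item together with the buyers it serves leaves the min-cost flow on the remainder unchanged, so ``the batch computation is faithful''---only addresses what happens at the instant of a removal inside the continuous run; it does not relate the continuous run's state at $P(j)$ to the batch recomputation that defines $S_F(j)$, which is the comparison you actually need. (A smaller inaccuracy in the same spirit: a removed buyer need not receive only the finishing item; it may receive other items of equal marginal cost, which then finish simultaneously by Proposition~\ref{lem_ep1stopcrithier}. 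Likewise, finished buyers may well retain edges to surviving active items; they just receive no allocation from them.)

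What closes the argument in the paper is a one-sided marginal-cost inequality, not an equality of allocations. Let $S^{alg}(j)$ and $B^{alg}(j)$ be the items and buyers still active at $P(j)$ in the continuous run, with min-cost allocation $\vec{y^{alg}}$ at that price. By the hierarchy of Figure~\ref{fig_sethierarchy}, buyers in $B^{alg}(j)$ have no edges to items that finished inside the interval, so in the batch flow $\vec{y}(j)$ their entire demand also lands on $S^{alg}(j)$; comparing the instance $(B^{alg}(j),S^{alg}(j))$ with the instance $(B_A(j-1),S^{alg}(j))$ carrying the batch flow restricted to $S^{alg}(j)$, Corollary~\ref{corr_diffcostmorebuyers} gives $c_t(y^{alg}_t)\le c_t(y_t(j))$ for every surviving item $t$. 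Now if some $t\in S_F(j)$ survived the whole interval, it fails the criterion at $P(j)$ with marginal cost $c_t(y^{alg}_t)$; since the criterion $p\ge \frac1k\lambda^{max}+(1-\frac1k)c$ only becomes harder as $c$ increases, $t$ would also fail it with $c_t(y_t(j))$, contradicting $t\in S_F(j)$. This use of the flow-monotonicity machinery (Lemma~\ref{lem_diffcostmonoton} and Corollary~\ref{corr_diffcostmorebuyers}) is the missing ingredient in your proposal; your bookkeeping for invariant (1), via items of equal marginal cost finishing together with their buyers, is otherwise in line with the paper's argument.
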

Recall that $B_A(P(j)) \cup S_A(P(j))$ denote the contents of the active set in Algorithm~\ref{alg_genprocedurebody} when the active price was $P(j)$. We already know any item that meets the stopping criterion in the interval should show up in $S_F(j)$. The above lemma complements this result by saying all items that show up in $S_F(j)$ must meet the stopping criterion. Conditional upon Invariant 1 holding up to some iteration $j-1$ and invariant 2 holding up to iteration $j$, we now explain the binary search procedure before proving the invariants.

\textbf{Binary Search:} Consider $S_F(j)$ and let $t \in S_F(j)$ be the item that reaches the stopping condition first in $[P(j-1), P(j)]$ in the original algorithm. Clearly for all $p' > p$ in that interval, it must meet the stopping criterion and for $p' < p$, no item in $S_F(j)$ could have met the stopping criterion. Therefore, we can effectively use binary search to identify $p$. Now we can repeat this for all $t' \in S_F(j)$ in the reduced interval $[p, P(j)]$.

\section{Relaxing the Uniform Peak Valuation Assumption}
In this section, we relax the assumption that for all demand functions, $\lambda_i(0)$ is the same. We capture the distortion in this quantity via a parameter $\Delta$ which is the ratio of the maximum value of $\lambda_i(0)$ over all $i$ to the minimum. Even though the $\lambda_i$'s may not be the same, it is likely that they are closely distributed if all buyer types are interested in a similar type of good. Our next result shows that in such markets, we can still extract a good fraction of the optimum revenue and welfare. For this result, we also require that each cost function $C_t(x)$ is doubly convex, i.e., its derivative $c_t(x)$ is also convex with $c_t(0) =0$.

\begin{theorem} 
\label{thm_generalmhr}
For any instance with MHR Demand and Doubly Convex Costs, we can compute an envy-free solution which has a $O(\log \Delta)$-approximation to the optimal revenue and which also guarantees $\frac{1}{4}^{th}$ of the optimum welfare.
\end{theorem}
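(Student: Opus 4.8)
\noindent\emph{(Proof plan.)}
The plan is to reduce a non-uniform instance to $O(\log\Delta)$ \emph{near}-uniform ones by bucketing buyer types according to their peak. Set $\lambda^{max}_{\min}=\min_i\lambda_i(0)$ and place type $i$ into bucket $G_j$ if $\lambda_i(0)\in[2^{j-1}\lambda^{max}_{\min},\,2^{j}\lambda^{max}_{\min})$; then $m=O(\log\Delta)$ buckets suffice and within a bucket all peaks agree up to a factor of $2$. For each $j$ we run Algorithm~\ref{alg_genprocedurebody} with $k=e$ on the sub-instance of the buyers in $G_j$ and the items they touch, using the bucket's top peak $\Lambda_j:=2^{j}\lambda^{max}_{\min}$ in place of $\lambda^{max}$ in the stopping criterion~(\ref{eqn_stopconditionmain}); we then extend the resulting price vector to the whole market by pricing every remaining item at marginal cost and taking the coordinatewise maximum with $\vec p^*$, and recompute a best-response envy-free allocation for all of $B$. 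This yields $m$ candidate solutions $\sigma_1,\dots,\sigma_m$; the algorithm returns the one of largest revenue (and, as discussed below, either that same solution or a price-wise combination of the $\sigma_j$ for the welfare guarantee).

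\textbf{Revenue.}
First I would decompose $\pi^{opt}$ over buckets. Attributing to $G_j$ the quantity $\pi^{opt}_j:=\sum_{i\in G_j}\bar p^{opt}_i x^{opt}_i-\sum_t C_t(y^{opt,j}_t)$, where $y^{opt,j}$ is the portion of $\vec y^{opt}$ routed to $G_j$, double convexity of the costs ($c_t$ convex, $c_t(0)=0$, whence $c_t$ is non-decreasing and $C_t$ is superadditive with $C_t(0)=0$) gives $\sum_j C_t(y^{opt,j}_t)\le C_t(y^{opt}_t)$, so $\sum_j \pi^{opt}_j\ge \pi^{opt}$. Since restricting $\vec p^{opt}$ to the $G_j$-sub-instance and routing its buyers' demand by a min-cost flow is a feasible sub-instance solution, its optimal revenue $\pi^{opt}_{(j)}$ satisfies $\pi^{opt}_{(j)}\ge\pi^{opt}_j$, hence $\max_j\pi^{opt}_{(j)}\ge\pi^{opt}/m$. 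It then suffices to show each per-bucket run extracts a constant fraction of $\pi^{opt}_{(j)}$, which is the direct analogue of Theorem~\ref{thm:1.88approx}: I would rerun the proof of the lower bound Lemma~\ref{lem_lowerboundmain} on the sub-instance with $\Lambda_j$ playing the role of $\lambda^{max}$, strengthening its MHR sub-claims (Lemmas~\ref{lem_subclaim_mhr} and~\ref{mhr_profitchanges}) to tolerate the factor-$2$ spread of peaks inside a bucket — this costs only a constant in the ratio. Finally the extension step is harmless: all extended prices dominate $\vec p^*$ (Proposition~\ref{lem_optbrmincost}), so the buyers outside $G_j$ contribute only non-negative revenue, and the price hierarchy of Figure~\ref{fig_sethierarchy} still makes the combined allocation envy-free and a min-cost flow for its demand.

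\textbf{Welfare.}
For the welfare bound I would replay the proof of Theorem~\ref{thm:welfare} bucket by bucket. Inside $G_j$, Proposition~\ref{prop_stopequality} at $k=e$ gives $\Lambda_j-r_i(\vec y^e)=e\big(\lambda_i(x^e_i)-r_i(\vec y^e)\big)$, and since $\lambda_i(0)\le\Lambda_j$ we still obtain $\lambda_i(0)-r_i(\vec y^e)\le e\big(\lambda_i(x^e_i)-r_i(\vec y^e)\big)$, exactly the hypothesis needed to invoke the contrapositive of Lemma~\ref{lem_subclaim_mhr} and bound each buyer's welfare loss as in Theorem~\ref{thm:welfare}; combined with the cost-difference bound (the analogue of Lemma~\ref{lem_costdiff}, where double convexity is again used) and the superadditive cost decomposition above, the run captures a constant fraction of the sub-instance's optimal welfare, and summing over buckets (once more via superadditivity on the cost side) yields a constant fraction of the global optimum. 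The factor of $2$ lost to the peak spread inside a bucket is what degrades the $\tfrac12$ of Theorem~\ref{thm:welfare} to $\tfrac14$. To obtain a single solution meeting both guarantees, one option is to price each item at the minimum price it receives across the $m$ runs: this keeps the market envy-free (the hierarchy of Figure~\ref{fig_sethierarchy} is inherited run-by-run) and only lowers $\bar p_i$ for every buyer, hence only improves each buyer's utility, while the resulting production cost must be bounded carefully using the fact that every run's prices dominate $\vec p^*$ together with double convexity; alternatively one argues directly that the best-revenue run, being the $k=e$ variant, already carries the welfare bound.

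\textbf{Expected main obstacle.}
The crux is the non-uniform version of Lemma~\ref{lem_lowerboundmain}. Its present proof couples the stopping criterion and the MHR inequality $f_i(0)\ge e\,f_i(x)$ through a \emph{single} global $\lambda^{max}$; once peaks spread by a factor of $2$, the marginal-cost term $\tilde c$ in $f_i(x)=\lambda_i(x)-\tilde c$ blocks a naive substitution of $\Lambda_j$ — the worst case is when $\tilde c$ is close to a low-peak buyer's value, where the direct inequality genuinely fails — and it is precisely the doubly-convex structure of the costs that is needed to re-derive the cost-comparison step (the analogue of Lemma~\ref{lem_costdiff}) with a constant good enough to absorb the spread. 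A secondary difficulty, as noted above, is assembling the $m$ per-bucket solutions into one envy-free solution carrying both bounds, since the natural min-price combination can increase served demand and hence production cost; controlling this again leans on double convexity and on all prices dominating $\vec p^*$.
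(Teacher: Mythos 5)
Your bucketing plan is genuinely different from the paper's construction, and the two places you flag as "expected obstacles" are in fact unresolved gaps, not technicalities. For the revenue side, your per-bucket claim needs an analogue of Lemma~\ref{lem_lowerboundmain} with the bucket's \emph{top} peak $\Lambda_j$ in the stopping rule, and, as you yourself concede, the key MHR step $f_i(0)\geq e\,f_i(x^+_i)$ genuinely fails for a buyer whose peak is near $\Lambda_j/2$ when $\tilde{c}(p^+)$ is close to her peak; you offer no replacement argument beyond the hope that double convexity absorbs the factor-$2$ spread, and double convexity plays no role in that lemma. The paper avoids this entirely by never bucketing: it reruns Algorithm~\ref{alg_genprocedurebody} once with the stopping rule of Equation~\ref{eqn_stopconditionnew}, i.e.\ with the \emph{global minimum} peak $\lambda^{min}_0$, so that $\lambda_i(0)\geq\lambda^{min}_0$ makes the MHR inequality valid for every buyer (Lemma~\ref{clm_profitmaxprices2}). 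The price of that choice is that $\vec{p^e}$ alone no longer gives a constant-factor revenue guarantee, and the $O(\log\Delta)$ is recovered by a different mechanism: the geometric price family $p_t(j)=\max\bigl(p^e_t,\,e^{j-1}\lambda^{min}_0\bigr)$ together with the bound $\pi^*\leq SW(0)$ (Lemma~\ref{sublem_boundonsw0}), so that some level $j$ necessarily extracts an $\Omega(1/\log\Delta)$ fraction of $SW(0)$. Your superadditive cost decomposition of $\pi^{opt}$ across buckets is fine as far as it goes, but it feeds into a per-bucket constant-factor claim you have not established.

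The welfare half is the larger gap: the theorem requires a \emph{single} envy-free solution carrying both bounds, and neither of your two options is substantiated. Returning the best-revenue bucket run gives no welfare guarantee — buyers outside that bucket who share items with it face prices calibrated to $\Lambda_j$, possibly above their own peaks, so their (possibly dominant) welfare can be wiped out, and you give no argument that the best-revenue bucket is also welfare-rich; the min-price combination across runs has the cost and envy-freeness problems you admit you cannot control, and it also endangers the revenue bound. The paper resolves this by \emph{not} taking the max-revenue level: Algorithm~\ref{alg_genmhr_bicrit} returns the smallest $j$ with $\pi(j)\geq\frac{SW(0)}{9(1+\log\Delta)}$ and proves the telescoping inequality $SW(j)-SW(j+1)\leq 4.5\,\pi(j)$ (Lemma~\ref{sublem_welfaredifference}); this is precisely where double convexity enters, via $C_t(x)\leq\frac{1}{2}x\,c_t(x)$. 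Since every level before $j^*$ has profit below the threshold, the total welfare lost relative to $SW(0)$ is at most $\frac{1}{2}SW(0)$, and $SW(0)\geq\frac{1}{2}SW^*$ gives the $\frac{1}{4}$. In short, double convexity is not a patch for the lower-bound lemma (which does not need it) but the engine of the welfare telescoping; your plan invokes it in the wrong place and leaves both crux steps unproved.
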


Moreover, we also prove in the Appendix that this result is actually tight under mild complexity assumptions and that the doubly convex assumption is required. 

\begin{proposition}
\label{clm_complexitygeneralmhr}
\begin{enumerate}
\item There cannot be a $O(\Delta^{k})$-approximation algorithm for any $k > 0$ for UDP in Large Markets with MHR Inverse Demand and Convex Costs (instead of doubly convex) unless $NP \subseteq DTIME(n^{(log^{c}n)})$ for some constant $c$.

\item There is no constant factor approximation algorithm for our UDP problem in large markets with MHR inverse demand and doubly convex costs unless $NP \subseteq DTIME(n^{(log^{c}n)})$ for some constant $c$.
\end{enumerate}
\end{proposition}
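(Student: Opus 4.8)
The plan is to obtain both statements by approximation-preserving reductions from the Unit-Demand min-buying Pricing problem with uniform valuations (UDP-MIN), whose inapproximability under the complexity assumption in the statement follows from the results of \cite{briest2011buying, chalermsook2012improved}. The reduction mirrors that of Proposition~\ref{prop_mhrnottoobad}: each buyer of the source instance with valuation $v$ and demand set $S_i$ becomes a buyer type whose inverse demand $\lambda_i$ is MHR with peak $\lambda_i(0)$ essentially equal to $v$, the demand graph $G$ is inherited unchanged, and the items are given cost functions tailored to the variant being proved. In both cases the point is to track how the approximation gap of the source instance reappears as a revenue gap in the image instance, while controlling $\Delta$, the ratio of the largest to the smallest peak $\lambda_i(0)$.

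\emph{Part 1 (merely-convex costs, no $O(\Delta^{k})$-approximation).} The key observation is that convex --- but not doubly convex --- cost functions can simulate hard supply limits: setting $C_t(y)=0$ for $y \le Y_t$ and making $C_t$ steeply (but convexly) increasing afterwards is, in the limit, exactly the limited-supply model the paper already allows. This lets us reduce from limited-supply UDP-MIN, which is inapproximable within a polynomial factor $n^{\epsilon}$ unless $NP \subseteq DTIME(n^{\mathrm{polylog}(n)})$. The crucial design choice is to push \emph{all} of the combinatorial difficulty into the bipartite graph $G$ and the supply caps $Y_t$, so that the valuations --- hence the peaks $\lambda_i(0)$ --- need vary only within a $\mathrm{polylog}(n)$ window; thus $\Delta = \mathrm{polylog}(n) = n^{o(1)}$. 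For every fixed $k>0$ we then have $\Delta^{k}=n^{o(1)}\ll n^{\epsilon}$, so an $O(\Delta^{k})$-approximation would approximate limited-supply UDP-MIN within $n^{o(1)}$, contradicting the assumption. This simultaneously shows the $O(\log\Delta)$ guarantee of Theorem~\ref{thm_generalmhr} genuinely fails once double convexity is dropped.

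\emph{Part 2 (doubly convex costs, no constant-factor approximation).} Here I would start from a UDP-MIN instance on $n$ buyers that cannot be approximated within some super-constant factor $g(n)=\omega(1)$ under the same assumption. Since UDP-MIN is stated with unlimited supply, no hard capacity is needed, so the cost functions of the image instance may be chosen smooth with convex derivative, i.e.\ doubly convex. The peaks $\lambda_i(0)$ range over the distinct source valuations; arranging these to span a multiplicative factor $\Delta = \mathrm{poly}(n)$ gives $\log\Delta = \Theta(\log n)$, so the gap $g(n)$ is $\omega(1)$ and in fact $O(\log\Delta)$, consistent with the upper bound of Theorem~\ref{thm_generalmhr}. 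A constant-factor approximation for our problem would then approximate UDP-MIN within $O(1)$, a contradiction.

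The step I expect to be the main obstacle is the $\Delta$-bookkeeping in Part 1: one must verify that the source instance can genuinely be encoded with $\Delta$ polylogarithmic --- i.e.\ that the ``which buyer can beat which price'' structure is carried by $G$ and the supply caps rather than by the spread of valuations --- while keeping every $\lambda_i$ MHR and every $C_t$ convex, and then check that the optimal revenue of the image instance tracks the optimum of the source UDP-MIN instance up to the claimed factor. In Part 2 the analogous bookkeeping is lighter; the only care needed is to keep the cost derivatives convex while still transmitting the full logarithmic-scale hardness, which the absence of capacity constraints makes routine.
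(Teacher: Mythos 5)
Your Part 2 is essentially the paper's argument: reduce from unlimited-supply UDP with uniform valuations via the construction of Proposition~\ref{prop_mhrnottoobad} (zero production costs, which are trivially doubly convex), so a constant-factor algorithm for our problem would contradict the $\log^{1-\epsilon}|B|$-hardness of that problem under the stated assumption; the extra bookkeeping about making $\Delta=\mathrm{poly}(n)$ is unnecessary, since the contradiction needs only superconstant hardness. Part 1, however, has a genuine gap. It rests on an unproven hardness claim: that limited-supply UDP-MIN remains $n^{\epsilon}$-hard \emph{even when all valuations are confined to a polylogarithmic multiplicative window}. No such result is cited or proved, and it is exactly the load-bearing step ("push all the difficulty into $G$ and the supply caps"). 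Worse, the mechanism you invoke --- simulating hard capacities by steep convex costs --- does not isolate what breaks when double convexity is dropped: a hard cap $Y_t$ can just as well be emulated by a doubly convex cost (zero up to $Y_t$, then steeply increasing, with $c_t(0)=0$ and convex derivative), and the paper explicitly states its results extend to limited supply. So if your reduction worked, the same instances would be covered by the $O(\log\Delta)=O(\log\log n)$ guarantee of Theorem~\ref{thm_generalmhr}, contradicting the very hardness you assume; the argument proves too much, which signals that the assumed hardness is unavailable.

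What merely-convex costs actually buy --- and what the paper exploits --- is a \emph{nonzero marginal cost at zero}, not capacities. The paper's reduction takes an unlimited-supply UDP-UV instance, sets $\lambda_i(x)=1+v_i/v_{max}$ for $x\le 1$, and gives every item the cost $C_t(x)=x$ (convex, but $c_t(0)=1\neq 0$, so not doubly convex). Then the seller's profit is $\sum_t (p_t-1)y_t$, i.e.\ the linear cost performs an additive shift of all prices, so the instance has $\Delta\le 2$ while its optimal profit is exactly $1/v_{max}$ times that of the source instance under the price map $p^2_t=v_{max}(p_t-1)$. Hence an $O(\Delta^k)=O(2^k)=O(1)$-approximation would yield a constant-factor approximation for UDP-UV, contradicting its hardness. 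This additive-shift trick, which decouples the multiplicative spread $\Delta$ from the true hardness, is the idea missing from your proposal; without it (or a proof of your compressed-spread hardness claim) Part 1 does not go through.
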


\section{Conclusion}
In this paper, we considered envy-free pricing in very large markets. Our results suggest that, unlike in markets with few buyers, very good pricing schemes can be computed efficiently for such large markets. For example, if the seller wants to maximize his revenue, our algorithm provides prices which result in a 1.88 approximation to maximum revenue. If the seller cares about both revenue and welfare (as is often the case in repeated interactions), then our second result provides a pricing scheme which results in provably high revenue {\em and} welfare.

\bibliography{bibliography}
\bibliographystyle{plain}

\appendix

\section{More Preliminary Results and Proofs from Section 2}

\subsection*{Additional Notation}

\begin{prop_app}{prop_mhrnottoobad}
Any UDP instance with uniform valuations in markets with a finite number of buyers can be reduced to an instance of our problem where all buyer types have monotone hazard rate inverse demand.
\end{prop_app}
\begin{proof}
First consider a UDP-Uniform-Valuations (UDP-UV) problem with unlimited supply on the items. Let $B'$ be the set of buyers with unit demand and valuation $v_i$ and $S'$ be the set of items. Reducing this to an instance of our problem, create one buyer type for every $i \in B'$ such that $\lambda_i(x) = v_i$ for $x \leq 1$ and $0$ otherwise. Clearly $\lambda_i$ is continuously differentiable in $(0,1)$ and is MHR in that interval, so these functions do follow our framework. Next the items are the same as $S'$ and the cost functions are zero everywhere.

Consider an solution of $UDP-UV$, clearly the solution is feasible for our instance and has the same value of the objective function. Consider a solution of our problem, we show that $\exists$ a solution of the UDP-UV with equal or larger revenue. Suppose that some buyer is receiving a positive but fractional allocation, then we can simply increase the allocation of this buyer to one leading to a solution with increased profit. If $\exists$ some $i \in B$ sending flow to more than one item, we can simply transfer all of the flow to any one of these items without any decrease in profit. Therefore, we can convert any allocation to an \emph{integral} allocation and therefore, a feasible solution for UDP-UV with the same prices. Therefore, the optima must also coincide.

Next, what if we have limited supply? We have already shown how to model these limited supply functions using cost. That is, we can set $C_t(x) = 0$ for $x \leq \gamma_t$ and $C_t(x) = \infty$ for $x > \gamma_t$, where $\gamma_t$ is the capacity of the item. Alternatively, we can also take $C_t(x)$ to be zero until $x$ gets very close to $\gamma_t$ at which point $C_t$ starts increasing very fast upto a large number at $C_t(\gamma_t)$. We assume that the capacities or supply of every item is integral in the original UDP-UV instance or else we round it down while forming our instance. So any solution for UDP-UV is still a feasible solution for our problem. Consider any solution for our problem with finite cost (we can safely ignore the solutions where capacities are violated). Once again, suppose that some buyer is receiving a fractional allocation. Either we can increase the allocation to this buyer to one without violating the supply constraints or since all the supplies are integral, we can increase the allocation of this buyer to one, reducing the allocations of some other buyers with fraction allocation without any change in profit. In this way, we can convert the solution to one every buyer is receiving an integral amount of the good. Similarly, we can rearrange flow on the items so that every buyer is receiving one unit of one single good without violating capacity constraints (integral capacities imply integral flows). \end{proof}


\begin{prop_app}{lem_optbrmincost}
Consider the optimum solution $\vec{x^*}, \vec{y^*}$ for a given instance. Define the price vector $\vec{p^*}$ as $p^*_t = c_t(y^*_t)$ for item $t$. Then $(\vec{p^*},\vec{x^*}, \vec{y^*})$ is an envy-free solution to the prices. Moreover, $\vec{y^*}$ is the min-cost flow for buyer demand $\vec{x^*}$.
\end{prop_app}
\begin{proof}
Look at the optimum solution and the prices $p^*_t=c_t(y^*_t)$. Every buyer has to necessarily send flow on the cheapest items available to her. Indeed, assume by contradiction that some buyer $i$ has non-zero flow on item $t'$ and access to item $t$ such that $p^*_t < p^*_{t'}$. Then $c_t(y^*_t) < c_{t'}(y^*_{t'})$. Therefore, one can shift some infinitesimal flow corresponding to buyer $i$ from $t'$ to $t$ and reduce the cost of the optimal solution, a contradiction. Also, notice if $\exists$ another flow satisfying the same buyer demand but with a smaller cost, then we can use that flow and reduce the cost of the optimal solution. So $\vec{y^*}$ is a min-cost flow. \end{proof}

\begin{lemma}
\label{lem_optisgood}
Let $(\vec{p^{opt}}, \vec{x^{opt}}, \vec{y^{opt}})$ be the revenue-maximizing solution. Then for every $t$, $p^{opt}_t \geq p^*_t$. That is, the prices at the welfare maximizing solution provide a weak lower bound for optimal prices.
\end{lemma}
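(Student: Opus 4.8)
The plan is to prove this by contradiction, exploiting the fact that an item priced below $p^*_t = c_t(y^*_t)$ forces the seller to produce strictly below marginal cost on an entire cluster of equally-priced ``cheap'' items, so that uniformly raising the prices of that cluster strictly improves profit. This is the weak precursor of Lemma~\ref{lem_lowerboundmain}, and I would run the argument along the same template as the proof of that lemma, but it is considerably simpler because the comparison is against $\vec{p^*}$ rather than $\vec{p^e}$: here one only needs ``selling strictly below marginal cost is suboptimal,'' and none of the MHR lemmas.

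Concretely, suppose $\{\, t : p^{opt}_t < p^*_t \,\}$ is nonempty; let $p_{min}$ be the smallest value of $p^{opt}_t$ over such items and let $S_{min}$ be the set of those items whose $\vec{p^{opt}}$-price equals $p_{min}$. By Proposition~\ref{lem_optbrmincost}, $\vec{x^*}$ is the best response to $\vec{p^*}$ and $\vec{y^*}$ is its min-cost flow, so I would apply the comparison lemma that pits an envy-free price vector ($\vec{p^{opt}}$) against a min-cost-flow solution (the same Lemma~\ref{lem_mixedpricediffcost} used inside the proof of Lemma~\ref{lem_lowerboundmain}) to obtain a seed item $t_{min} \in S_{min}$ with $c_{t_{min}}(y^{opt}_{t_{min}}) \geq c_{t_{min}}(y^*_{t_{min}}) = p^*_{t_{min}} > p_{min}$. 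I would then form the directed reachability graph $G'$ on $B \cup S$ exactly as in the proof of Lemma~\ref{lem_lowerboundmain} --- arc $t \to i$ whenever $i$ receives a positive amount of item $t$ in $\vec{y^{opt}}$, arc $i \to t$ whenever $(i,t) \in E$ and $p^{opt}_t = \bar{p}^{opt}_i$ --- and let $S^+_{min}$ and $B^+_{min}$ be the items and buyers reachable from $t_{min}$. The same observations carry over: every item of $S^+_{min}$ is priced $p_{min}$ in $\vec{p^{opt}}$; buyers of $B^+_{min}$ are served only by items of $S^+_{min}$, those items serve only buyers of $B^+_{min}$, and no buyer of $B^+_{min}$ has an $E$-edge to an outside item priced $p_{min}$; and, since rerouting among the equally-priced items of $S^+_{min}$ preserves envy-freeness while lowering production cost, $\vec{y^{opt}}$ restricted to $(B^+_{min}, S^+_{min})$ is a min-cost flow, so that propagating the min-cost-flow potentials along $G'$ from $t_{min}$ gives $c_t(y^{opt}_t) \geq c_{t_{min}}(y^{opt}_{t_{min}}) > p_{min}$ for every $t \in S^+_{min}$.

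Now I would uniformly raise the price of every item in $S^+_{min}$ from $p_{min}$ to $p_{min} + \varepsilon$, with $\varepsilon$ small enough to stay below the next-larger value in $\vec{p^*} \cup \vec{p^{opt}}$ and small enough that all marginal costs on $S^+_{min}$ remain above $p_{min}$; then recompute the envy-free min-cost allocation and leave everything else unchanged. Because the two blocks are decoupled, this modifies the solution only on $B^+_{min} \cup S^+_{min}$ and keeps it envy-free, and by Lemma~\ref{lem_diffcostmonoton} the demand of each buyer in $B^+_{min}$ weakly decreases, so the total production over $S^+_{min}$ drops by some $D \geq 0$ and all marginal costs on $S^+_{min}$ stay above $p_{min}$. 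Writing out the change in profit: the payments from $B^+_{min}$ change by $-p_{min} D + \varepsilon\,(\text{new demand of } B^+_{min})$, while convexity of the $C_t$'s makes the total production cost drop by at least $\tilde c\, D$ for some $\tilde c > p_{min}$; hence the profit rises by at least $(\tilde c - p_{min}) D + \varepsilon\,(\text{new demand of } B^+_{min}) > 0$ whenever $B^+_{min}$ buys a positive amount, contradicting the optimality of $\vec{p^{opt}}$. Items $t$ with $y^{opt}_t = 0$ (and, more generally, clusters with zero total demand) are trivial: their prices contribute nothing to revenue or cost, so without loss of generality they already satisfy $p^{opt}_t = p^*_t$.

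I expect the bite of the proof to lie in the two places where we must imitate the harder argument: first, getting the seed via Lemma~\ref{lem_mixedpricediffcost} and then propagating ``marginal cost $> p_{min}$'' to the whole reachable cluster from the min-cost-flow structure of $\vec{y^{opt}}$ on $(B^+_{min}, S^+_{min})$ --- without this, items in $S^+_{min} \setminus S_{min}$ could a priori have marginal cost below $p_{min}$ and the perturbation would not obviously help; and second, checking carefully that raising the prices on $S^+_{min}$ leaves every buyer outside $B^+_{min}$ and every item outside $S^+_{min}$ completely untouched, which is what legitimizes the purely local profit computation. The inequality chain for the profit change itself is routine once ``all marginal costs on $S^+_{min}$ exceed $p_{min}$'' is in hand.
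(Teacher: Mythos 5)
Your proposal is correct and follows essentially the same route as the paper's own proof: contradiction via the cheapest underpriced cluster $S_{min}$, the seed item from Lemma~\ref{lem_mixedpricediffcost}, the reachability graph $G'$ giving $S^+_{min}$, $B^+_{min}$ with all marginal costs exceeding $p_{min}$, and a small uniform price increase on that cluster that strictly raises profit. Your added remarks (explicitly justifying that the sub-allocation on the cluster is a min-cost flow, and disposing of zero-demand clusters) only make explicit what the paper's terser argument leaves implicit.
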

\begin{proof}
Assume by contradiction that in the optimal solution some items have a price strictly smaller than their price in $\vec{p^*}$. Let $S_{min}$ be the subset of such items with the smallest price (call it $p_{min}$). As per Lemma~\ref{lem_mixedpricediffcost}, $\exists$ $t_{min} \in S_{min}$, such that $c_t(y^*_t) \leq c_t(y^{opt}_t)$.

Construct a directed graph $G'$ whose vertices are the same as in $G$ but with the following edges $E'$
\begin{enumerate}
\item $(t,i) \in E'$ if $i$ is receiving non-zero amounts of item $t$ in $\vec{y^{opt}}$.
\item $(i,t) \in E'$ if $(i,t) \in E$ and $p^{opt}_t = \bar{p}^{opt}_i$, i.e., $t$'s price coincides with the price of the cheapest item available to $i$
\end{enumerate}

Let $S^+_{min}$ be the set of items that are reachable from $t_{min}$ and $B^+_{min}$ be the set of buyers reachable from $t_{min}$ in this graph $G'$. We make two simple observations here: first, for every item $t \in S^+_{min}$, its price must equal $p_{min}$. Second, every buyer in $B^+_{min}$ is only receiving allocations of the items in $S^+_{min}$ and has no edge in $E$ to any item outside of $S^+_{min}$ priced at $p_{min}$.

We know that in the welfare maximizer, for every item $t$, $p^*_t = c_t(y^*_t)$. But we know that for $t_{min}$, $p_{min} < p^*_{t_{min}}$. Therefore for every item $t \in S^+_{min}$, $p^{opt}_t = p_{min} <  p^*_{t_{min}} = c_{t_{min}}(y^*_{t_{min}}) \leq c_t(y^{opt}_t)$. This cannot be a good sign for any profit maximizing solution because the price has to be at least the marginal cost, otherwise the seller can increase his price, lower the marginal cost and improve profits. We show this more formally.

Let $x(p)$ be the total demand from the buyers in $B^+_{min}$ at a price of $p$. Look at the revenue-maximizing solution and increase the price of \textbf{only} the items in $S^+_{min}$ by a sufficiently small $\epsilon$. Clearly, for a small enough $\epsilon$, the best response of all buyers in $B^+_{min}$ is to still send a flow on $S^+_{min}$. Moreover, for a small enough increase in price, buyer demand can only decrease by a small amount.

Recall that in the allocation $\vec{y^{opt}}$, we showed that for all $t\in S_{min}^+$, we have $c_t(y_t^{opt})>p_{min}$. Since the marginal costs on all items are continuous with the allocation, it is not hard to reason that for a sufficiently small increase in price, for every $t$, the marginal cost on every item $t \in S^+_{min}$ after the price increase still cannot be smaller than $p_{min} + \epsilon$. Let the new allocation on every item be $y^2_t$. Then, the difference in profit from OPT is due to only these items and buyers and is,
\begin{align*}
= & (p_{min} + \epsilon)(x(p_{min}+\epsilon)) - p_{min}x(p_{min}) - \sum_{t \in S^+_{min}}(C_t(y^2_t) - C_t(y^{opt}_t)\\
\geq & (p_{min} + \epsilon)(x(p_{min}+\epsilon)) - p_{min}x(p_{min}) - \sum_{t \in S^+_{min}}\left(c_t(y^2_t)(y^{opt}_t - y^2_t))\right)\\
\geq & (p_{min} + \epsilon)(x(p_{min}+\epsilon)) - p_{min}x(p_{min}) - (p_{min}+\epsilon)(x(p_{min}+\epsilon) - x(p_{min}))\\
\geq & \epsilon x(p_{min}) > 0.
\end{align*}
This is a contradiction with the fact that $OPT$ maximizes the profit.  \end{proof}

\subsection*{Important Properties of Min-Cost Flows}
We now show some nice properties of min-cost flows that we will use extensively in the following sections. For a flow $\vec{y}$, we define $r_i(\vec{y})$ to be the minimum marginal cost $c_t(y_t)$ of all items received by buyer $i$ (i.e., with $y_t(i)>0$). Due to KKT conditions, if $\vec{y}$ is a min-cost flow, then $i$ is only allocated items with marginal cost equal to the minimum marginal cost of any item available to $i$, i.e., for a min-cost flow $r_i(\vec{y})=\min_{(i,t)\in E}c_t(y_t)$. Given an allocation $\vec{y}$, we will also use $C(\vec{y})$ to denote the total (production) cost of all the items as long as the instance is clear.

Finally, for the rest of the Appendix, we are only concerned about solutions where the prices are at least $\vec{p^*}$. This is for obvious reasons since we know by Lemma~\ref{lem_optisgood} that the revenue maximizing prices cannot be smaller than $\vec{p^*}$. We call this the weak price lower bound assumption.

\begin{assumption}{(Weak Price Lower Bound Assumption for given $\vec{p})$}
\label{assumption_weakpricelowerbound}
$$ \vec{p} \geq \vec{p^*}.$$
\end{assumption}

\begin{lem_app}{lem_diffcostmonoton}
Consider buyer demand vectors $\vec{x^1}$ and $\vec{x^2}$ such that $\vec{x^2}$ dominates $\vec{x^1}$, i.e., $\vec{x^2} \geq \vec{x^1}$ componentwise. Let $\vec{z^1}$ and $\vec{z^2}$ be the min-cost flows corresponding to the two demands respectively. Then for all items $t$, $c_t(z^1_t) \leq c_t(z^2_t)$. Moreover, for every buyer $i$, $r_i(\vec{z^1}) \leq r_i(\vec{z^2})$.
\end{lem_app}
\begin{proof}
We only need to prove the first part of the lemma, i.e., for all $t$, $c_t(z^1_{t}) \leq c_t(z^2_{t})$. Once we show this, the second part follows almost directly. Indeed, suppose that a buyer $i$ received some quantity of item $t$ in $\vec{z^2}$, then whatever item she receives in $\vec{z^1}$ has to have a marginal that is smaller or equal to that of $t$. We know that the marginal cost of $t$ in $\vec{z^1}$ is not larger than that in $\vec{z^2}$.

We now proceed to prove our main claim by contradiction, suppose for some item $t$, $c_t(z^1_t) > c_t(z^2_t)$. Since the marginal cost function is monotone non-decreasing, this must mean that $z^1_t > z^2_t$.

Now let us construct the following graph $G'=(S, E')$ where $S$ is the set of all items. We say that there is a directed edge from item $t_1$ to $t_2$ if $\exists$ some buyer $i$ such that
$$z^1_{t_1}(i) > z^2_{t_1}(i) \text{ and } z^1_{t_2}(i) < z^2_{t_2}(i).$$

In simple terms, this means that $i$ is receiving more amount of $t_2$ and less of $t_1$ in $\vec{z^2}$ than what she received in $\vec{z^1}$. This also means that $i$ is receiving non-zero amounts of $t_1$ in $\vec{z^1}$ and $t_2$ in $\vec{z^2}$. Now, look at item $t$. Since the total allocation of this item is smaller in $\vec{z^2}$, this must mean that there is at least one buyer $i$ who is sending less flow on $t$ in $\vec{z^2}$ as compared to before. However, the total demand of $i$ is only larger in $\vec{x^2}$, which means there must be some other item $t_1$ to which she is sending more flow than before. This implies that $(t,t_1) \in E'$.

Suppose that $S_t$ represents the set of items that are reachable from $t$ in $G'$ including $t$ itself. We have already shown that $S_t$ has at least one item other than $t$. Our first claim is that all the nodes in $S_t$ have a marginal cost in $\vec{z^2}$ that is no larger than the marginal cost of $t$ in the same allocation. To show this consider an edge $(t_1 , t_2)$ where both the items belong to $S_t$. By definition, there must be some buyer who has access to both these items and is sending non-zero flow on $t_2$ in $\vec{z^2}$. Since is a min-cost allocation, it means the marginal cost of $t_2$ in $\vec{z^2}$ cannot be larger than that of $t_1$. Applying this transitively from $t$, all nodes reachable from $t$ must have a marginal cost smaller than or equal to $c_t(z^2_t)$.

Similarly, for $(t_1, t_2) \in E'$, both belonging to $S_t$, some buyer has non-zero flow on $t_1$ in $\vec{z^1}$ and this must imply that $c_t(z^1_t) \leq c_{t_1}(z^1_{t_1})$ for all $t_1 \in S_t$. Using these inequalities regarding the marginal costs in $\vec{z^1}$ and $\vec{z^2}$, we get for all $t_1 \in S_t$,
\begin{equation}
\label{eqn_diffcostinequalities}
c_{t_1}(z^2_{t_1}) \leq c_{t}(z^2_{t}) < c_t(z^1_t) \leq c_{t_1}(z^1_{t_1}).
\end{equation}

What this means is that for all the items in $S_t$, the incoming flow is larger in $\vec{z^1}$ as compared to $\vec{z^2}$. Suppose that $B^1_t$ is the complete set of buyers who receive non-zero amounts of the items in $S_t$ in $\vec{z^1}$. Our final claim is that every buyer in $B^1_t$ receives more or equal amount of the items in $S_t$ in $\vec{z^2}$ as compared to $\vec{z^1}$. That is for buyers in $B^1_t$,
$$\sum_{t_1 \in S_t}z^1_{t_1}(i) \leq  \sum_{t_1 \in S_t}z^2_{t_1}(i).$$

Notice that for any buyer $i$ if this is not true, then there must exist at least one $t_1 \in S_t$ which she receives more in $\vec{z^1}$ than $\vec{z^2}$. However buyer $i$'s total demand has increased in $\vec{x^2}$ but the consumption from $S_t$ has decreased and so there must be some $t_3$ outside of $S_t$ to which she sends more flow in $\vec{z^2}$ than $\vec{z^1}$. But this means that there must be an edge from $t_1$ to $t_3$ and so $t_3 \in S_t$, a contradiction.

Now, we are ready to prove our main result. Recall that $\forall t_1 \in S_t$, $z^1_{t_1} > z^2_{t_1}$. Since for all $t_1 \in S_t$, the incoming flow in $\vec{z^1}$ can only come from the buyers in $B^1_t$.
\begin{align*}
\sum_{t_1 \in S_t}z^1_{t_1} = & \sum_{i \in B^1_t} \sum_{t_1 \in S_t}z^1_{t_1}(i) \\
\leq & \sum_{i \in B^1_t} \sum_{t_1 \in S_t}z^2_{t_1}(i)\\
\leq & \sum_{t_1 \in S_t}z^2_{t_1} \\
< & \sum_{t_1 \in S_t}z^1_{t_1}.
\end{align*}
This is a contradiction. \end{proof}

\begin{corollary}
\label{corr_flowmagnitudemincost}
Consider two price vectors $\vec{p}$ and $\vec{p'}$ such that $\vec{p'} \geq \vec{p}$ for every component. Let $(\vec{p},\vec{x},\vec{y})$ be a solution in which $x$ is a best-response demand vector to prices $\vec{p}$, and $\vec{y}$ is a min-cost flow with demand $\vec{x}$. Let $(\vec{p'},\vec{x'},\vec{y'})$ be the similar solution for prices $\vec{p'}$. Then, for every $t$, either $y'_t < y_t$ or $c_{t}(y_t) = c_t(y'_t)$.
\end{corollary}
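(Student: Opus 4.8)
The plan is to derive the corollary directly from Lemma~\ref{lem_diffcostmonoton} together with monotonicity of the marginal cost. First I would note that raising prices can only shrink best-response demand: since $\vec{p'}\geq\vec{p}$ componentwise, the cheapest price available to each buyer type satisfies $\bar{p}'_i=\min_{t\in S_i}p'_t\geq\min_{t\in S_i}p_t=\bar{p}_i$, so the best-response identities $\lambda_i(x_i)=\bar{p}_i$ and $\lambda_i(x'_i)=\bar{p}'_i$, together with $\lambda_i$ being non-increasing, give $x'_i\leq x_i$ for every $i$. This is exactly statement~1 of Lemma~\ref{lem_diffcostmonoton}, which I would simply invoke (it also handles any ambiguity coming from flat pieces of $\lambda_i$). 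Hence $\vec{x}$ dominates $\vec{x'}$.

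Next, since $\vec{y}$ is a min-cost flow for $\vec{x}$ and $\vec{y'}$ is a min-cost flow for the dominated demand $\vec{x'}$, I would apply statement~2 of Lemma~\ref{lem_diffcostmonoton} with $\vec{x^1}=\vec{x'}$ and $\vec{x^2}=\vec{x}$ to conclude $c_t(y'_t)\leq c_t(y_t)$ for every item $t\in S$.

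The last step is a one-line case split. Fix any $t$. If $y'_t<y_t$, the first alternative in the statement holds and there is nothing more to show. Otherwise $y'_t\geq y_t$, and because $c_t$ is non-decreasing (as $C_t$ is convex) this forces $c_t(y'_t)\geq c_t(y_t)$; combined with $c_t(y'_t)\leq c_t(y_t)$ from the previous paragraph we get $c_t(y'_t)=c_t(y_t)$, which is the second alternative. Therefore for every $t$ either $y'_t<y_t$ or $c_t(y_t)=c_t(y'_t)$.

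There is no genuine obstacle here: the corollary is essentially a repackaging of Lemma~\ref{lem_diffcostmonoton} plus the monotonicity of $c_t$. The only spot needing a word of care is the legitimacy of the comparison $\vec{x}\geq\vec{x'}$ when the $\lambda_i$ have flat regions (so best responses need not be unique), but that is already subsumed in the statement of Lemma~\ref{lem_diffcostmonoton}, so I would not belabor it.
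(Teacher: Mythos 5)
Your proposal is correct and follows essentially the same route as the paper: the paper's own (very terse) proof also observes that higher prices shrink the best-response demand, invokes Lemma~\ref{lem_diffcostmonoton} to compare marginal costs of the two min-cost flows, and lets monotonicity of $c_t$ finish the case analysis. You have merely written out the short case split that the paper leaves implicit.
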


\begin{proof}
Higher prices imply that the demand vector is smaller. This allows us to apply Lemma \ref{lem_diffcostmonoton}, and the rest follows due to these being min-cost flows. \end{proof}

\begin{corollary}
\label{corr_diffcostmorebuyers}
Consider an instance $(B_1 \cup S, E_1)$ where the buyers have a demand vector $\vec{x^1}$ and the corresponding min-cost flow is $\vec{z^1}$. Let $(B_2 \cup S, E_1 \cup E_2)$ be another instance with $B_1 \subseteq B_2$ and $E_2$ only has edges between $B_2 \setminus B_1$ and $S$. Let $\vec{x^2}$ be some demand vector for this instance and $\vec{z^2}$ is the min-cost flow for this demand such that $\vec{x^2}$ dominates $\vec{x^1}$, i.e., $\forall i \in B_1, x^2_i \geq x^1_i$. Then for all $t \in S$, $c_t(x^1_t) \leq c_t(x^2_t)$.
\end{corollary}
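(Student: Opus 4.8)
The plan is to deduce this from Lemma~\ref{lem_diffcostmonoton} by embedding the smaller instance into the larger one via a zero-padded demand vector, so that both flows live on the \emph{same} graph and the lemma applies verbatim. Concretely, I would work entirely in the graph $(B_2 \cup S, E_1 \cup E_2)$ and introduce the demand vector $\vec{\hat{x}^1}$ on $B_2$ that agrees with $\vec{x^1}$ on the buyers in $B_1$ and is $0$ on every buyer in $B_2 \setminus B_1$.

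The first step is to check that a min-cost flow for the demand $\vec{\hat{x}^1}$ in the large instance is exactly $\vec{z^1}$, with all flow components incident to $B_2\setminus B_1$ equal to zero. This holds because any feasible flow for $\vec{\hat{x}^1}$ must route zero flow through each buyer in $B_2 \setminus B_1$ (their demand is $0$ and flows are nonnegative), so the edges of $E_2$ carry no flow and may be deleted; consequently feasible flows for $\vec{\hat{x}^1}$ in $(B_2\cup S, E_1\cup E_2)$ are in cost-preserving bijection with feasible flows for $\vec{x^1}$ in $(B_1\cup S, E_1)$, and their cost-minimizers — in particular the item allocations and the marginal costs $c_t(\cdot)$ — coincide. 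The second step is the trivial observation that $\vec{x^2}$ dominates $\vec{\hat{x}^1}$ componentwise over $B_2$: for $i \in B_1$ we are given $x^2_i \ge x^1_i = \hat{x}^1_i$, and for $i \in B_2\setminus B_1$ we have $x^2_i \ge 0 = \hat{x}^1_i$. Now I would invoke Lemma~\ref{lem_diffcostmonoton} inside the fixed graph $(B_2\cup S, E_1\cup E_2)$ with dominated demand $\vec{\hat{x}^1}$ (min-cost flow $\vec{z^1}$) and dominating demand $\vec{x^2}$ (min-cost flow $\vec{z^2}$): the lemma yields $c_t(z^1_t) \le c_t(z^2_t)$ for every $t \in S$, which is precisely the asserted inequality (reading the statement's $c_t(x^1_t)\le c_t(x^2_t)$ as $c_t(z^1_t)\le c_t(z^2_t)$).

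The only point requiring care — and the main, rather minor, obstacle — is the first step: justifying that enlarging the buyer set and the edge set is inert as long as the new buyers demand nothing, so that padding with zeros genuinely preserves the min-cost flow and its marginal costs. Once that is nailed down, the corollary is an immediate application of Lemma~\ref{lem_diffcostmonoton}. If one worries about non-uniqueness of min-cost flows, it suffices to note that Lemma~\ref{lem_diffcostmonoton} is phrased purely in terms of the marginal costs $c_t(\cdot)$, which are pinned down by the demand through the KKT/optimality conditions, so the argument goes through regardless of which min-cost flow is selected.
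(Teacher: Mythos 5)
Your proposal is correct and follows essentially the same route as the paper's own (very brief) proof: extend the small instance to the larger graph by giving the new buyers zero demand, observe this leaves the min-cost flow and marginal costs unchanged, and then invoke Lemma~\ref{lem_diffcostmonoton}. Your write-up simply fills in the padding/inertness details that the paper leaves implicit.
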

\begin{proof}
We can simply reduce the first instance to another instance where the set of buyers is $B_2$, edges $E_1 \cup E_2$ but the demand for the additional buyers is zero. Now, the corollary reduces to Lemma~\ref{lem_diffcostmonoton}. \end{proof}

\begin{lemma}
\label{lem_flowpartition}
\textbf{(Flow Partition Lemma)} Consider an instance $(B,S)$ with buyer demand $\vec{x}$ and corresponding min-cost flow $\vec{y}$. Suppose that $B^H \subseteq B$ and let $\vec{x}(B^H)$ denote the demand sub-vector for this subset and $\vec{y}(B^H)$ be the min-cost flow for this demand sub-vector alone (i.e., when the buyer set is $B^H$). Then, we can partition the total cost as
$$C(\vec{y}) = C(\vec{y}(B^H)) + \left(C(\vec{y}) - C(\vec{y}(B^H))\right).$$
Moreover, the following must be true
\begin{enumerate}
\item $\forall t$, either $y_t(B^H) \leq y_t$ or $c_t(y_t(B^H)) = c_t(y_t)$.
\item $\sum_{i \in B \setminus B^H}r_i(\vec{y})x_i \geq \sum_{t \in S}c_t(y_t)(y_t - y_t(B^H)) \geq \sum_{t \in S}\left(C_t(y_t) - C_t(y_t(B^H))\right)$.
\end{enumerate}
\end{lemma}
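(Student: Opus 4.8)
The cost-partition identity $C(\vec{y}) = C(\vec{y}(B^H)) + \big(C(\vec{y}) - C(\vec{y}(B^H))\big)$ is a triviality, so the content lies entirely in the two numbered claims, which I would handle in the order 1, then the right-hand inequality of 2, then the left-hand inequality of 2. For Claim~1, the plan is to invoke the monotonicity of min-cost flows: since $\vec{x}$ dominates $\vec{x}(B^H)$ component-wise (the latter simply zeroes out the demands of buyers in $B\setminus B^H$), Corollary~\ref{corr_diffcostmorebuyers} (equivalently Lemma~\ref{lem_diffcostmonoton}) gives $c_t(y_t(B^H)) \leq c_t(y_t)$ for every item $t$. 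Because each $c_t$ is non-decreasing, this forces the dichotomy: if $y_t(B^H) > y_t$, monotonicity gives $c_t(y_t(B^H)) \geq c_t(y_t)$, which together with the reverse inequality yields $c_t(y_t(B^H)) = c_t(y_t)$; otherwise $y_t(B^H) \leq y_t$.

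For the right-hand inequality of Claim~2, I would argue item by item. Fixing $t$ and using convexity of $C_t$ with derivative $c_t$, the first-order inequality taken at the point $y_t$ gives $C_t(y_t) - C_t(y_t(B^H)) \leq c_t(y_t)\,(y_t - y_t(B^H))$, valid irrespective of the sign of $y_t - y_t(B^H)$. Summing over $t \in S$ gives $\sum_t \big(C_t(y_t) - C_t(y_t(B^H))\big) \leq \sum_t c_t(y_t)(y_t - y_t(B^H))$, as required.

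The real work, and the step I expect to be the main obstacle, is the left-hand inequality of Claim~2. Here I would exploit the KKT characterization of min-cost flows recalled in the preliminaries: in $\vec{y}$ each buyer $i$ only receives items whose marginal cost equals $r_i(\vec{y}) = \min_{(i,t)\in E} c_t(y_t)$. Regrouping $\sum_t c_t(y_t) y_t$ by buyer gives $\sum_t c_t(y_t) y_t = \sum_{i\in B} r_i(\vec{y}) x_i$. Now let $\vec{y}^{H}$ denote the restriction of $\vec{y}$ to the buyers of $B^H$; this is a \emph{feasible} flow for the demand $\vec{x}(B^H)$, and by the KKT property it routes each $i \in B^H$ entirely onto items $t$ with $c_t(y_t) = \min_{(i,t')\in E} c_{t'}(y_{t'})$. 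Hence $\vec{y}^{H}$ minimizes the \emph{linear} objective $\sum_t c_t(y_t) f_t$ over all flows $\vec{f}$ feasible for demand $\vec{x}(B^H)$: indeed $\sum_t c_t(y_t) f_t = \sum_{i\in B^H}\sum_t c_t(y_t) f_t(i) \geq \sum_{i\in B^H} x_i \min_{(i,t')\in E} c_{t'}(y_{t'}) = \sum_t c_t(y_t) y^{H}_t$. Applying this with $\vec{f} = \vec{y}(B^H)$ yields $\sum_t c_t(y_t) y_t(B^H) \geq \sum_t c_t(y_t) y^{H}_t = \sum_{i\in B^H} r_i(\vec{y}) x_i$. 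Subtracting from $\sum_t c_t(y_t) y_t = \sum_{i\in B} r_i(\vec{y}) x_i$ gives $\sum_t c_t(y_t)(y_t - y_t(B^H)) \leq \sum_{i\in B\setminus B^H} r_i(\vec{y}) x_i$, completing the chain. The subtlety to watch is that the weights $c_t(y_t)$ in this linear objective come from the \emph{full} flow $\vec{y}$, not from $\vec{y}(B^H)$; it is precisely the min-cost (KKT) structure of $\vec{y}$ that makes its $B^H$-restriction optimal for these particular weights, even though $\vec{y}(B^H)$ is the flow that is genuinely cost-minimizing. Claim~1 (giving $c_t(y_t) \geq c_t(y_t(B^H)) \geq 0$) and the right-hand inequality then combine with this to deliver the full statement.
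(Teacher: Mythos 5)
Your proposal is correct and follows essentially the same route as the paper's proof: Claim~1 via the marginal-cost monotonicity of min-cost flows (Corollary~\ref{corr_diffcostmorebuyers}), the right-hand inequality of Claim~2 via convexity of $C_t$, and the left-hand inequality by regrouping $\sum_t c_t(y_t)y_t$ as $\sum_i r_i(\vec{y})x_i$ and lower-bounding $\sum_t c_t(y_t)y_t(B^H)$ by $\sum_{i\in B^H} r_i(\vec{y})x_i$ using $r_i(\vec{y})\leq c_t(y_t)$ along edges $(i,t)\in E$ --- your packaging of this last step as optimality of the restricted flow $\vec{y}^H$ for the linear objective with weights $c_t(y_t)$ is the same inequality in different words. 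The only (harmless) refinement is that your one-shot gradient inequality $C_t(y_t)-C_t(y_t(B^H))\leq c_t(y_t)\left(y_t-y_t(B^H)\right)$ avoids the paper's case split on the sign of $y_t-y_t(B^H)$.
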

\begin{proof}
The partitioning of the cost is a trivial result (just add and subtract $C_t(y_t(B^H))$ for all $t$). Moreover, Point 1 in the second part of the lemma follows almost directly from Corollaries~\ref{corr_diffcostmorebuyers} and ~\ref{corr_flowmagnitudemincost}. So, we only focus on proving point 2.

Now for any given $t$, suppose that $y_t \geq y_t(B^H)$. Then, by the convexity of $C_t$ it follows that $C_t(y_t) - C_t(y_t(B^H)) \leq c_t(y_t)(y_t - y_t(B^H))$. If $y_t < y_t(B^H)$, then we know that the marginal cost is the same at both these flows. Therefore, $C_t(y_t) - C_t(y_t(B^H)) = c_t(y_t)(y_t - y_t(B^H))$. Summing this up over all $t$, we get one half of Result (Point) 2.

Now,
\begin{align*}
\sum_{t}c_t(y_t)(y_t - y_t(B^H)) =& \sum_{t}c_t(y_t)y_t - \sum_t c_t(y_t)y_t(B^H)\\
= & \sum_{i \in B}r_i(\vec{y})x_i - \sum_t c_t(y_t) \sum_{i \in B} y_t(B^H)(i).
\end{align*}

Note that $y_t(B^H)(i) > 0$ only for $i \in B^H$ and only if $i$ is receiving non-zero amount of item $t$ in $\vec{y}(B^H)$. This means that $i$ has an edge to $t$ and therefore in $\vec{y}$, $r_i(\vec{y}) \leq c_t(y_t)$. Therefore, some flow rearrangement gives us
$$\sum_t \sum_{i \in B}c_t(y_t) y_t(B^H)(i) \geq \sum_t \sum_{i \in B}r_i(\vec{y}) y_t(B^H)(i) = \sum_{i \in B^H}r_i(\vec{y})x_i(B^H).$$
And so we finally get
$$\sum_{t}c_t(y_t)(y_t - y_t(B^H)) \leq \sum_{i \in B}r_i(\vec{y})x_i - \sum_{i \in B^H}r_i(\vec{y})x_i,$$
which gives us Point 2. \end{proof}

\begin{proposition}
\label{prop_ignoreprofit}
(Ignoring Profit Propostion)
Consider any price vector $\vec{p} \geq \vec{p^*}$ componentwise. Let $\vec{x}$ be the corresponding B-R buyer demand and let $\vec{y}$ be the min-cost flow for $\vec{x}$. Then, for any desired subset $B_1$ of the buyers $B$,

$$\pi_1 = \sum_{i \in B}\bi x_i - C(\vec{y}) \geq \sum_{i \in B_1}\bi x_i - C(\vec{y}(B_1)),$$

where $\vec{y}(B_1)$ is the min-cost flow for the reduced demand by only the buyers in $B_1$.
\end{proposition}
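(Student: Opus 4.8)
The plan is to split the buyer set and reduce the inequality to a statement about the ``extra'' buyers $B\setminus B_1$. Write
\[
\pi_1 \;=\; \sum_{i\in B_1}\bi x_i \;+\; \sum_{i\in B\setminus B_1}\bi x_i \;-\; C(\vec{y}),
\]
so that the claimed inequality $\pi_1 \ge \sum_{i\in B_1}\bi x_i - C(\vec{y}(B_1))$ is equivalent to
\[
\sum_{i\in B\setminus B_1}\bi x_i \;\ge\; C(\vec{y}) - C(\vec{y}(B_1)).
\]
This is the single inequality I would aim to establish.

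The natural tool is the Flow Partition Lemma (Lemma~\ref{lem_flowpartition}) applied with $B^H = B_1$: its Point~2 gives directly
\[
\sum_{i\in B\setminus B_1} r_i(\vec{y})\, x_i \;\ge\; \sum_{t\in S} c_t(y_t)\bigl(y_t - y_t(B_1)\bigr) \;\ge\; \sum_{t\in S}\bigl(C_t(y_t) - C_t(y_t(B_1))\bigr) \;=\; C(\vec{y}) - C(\vec{y}(B_1)).
\]
Hence it suffices to show that for every buyer $i$ one has $\bi \ge r_i(\vec{y})$, i.e., that the cheapest \emph{price} available to $i$ is at least the smallest \emph{marginal cost} of an item available to $i$ in the flow $\vec y$.

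For this last step I would use the weak price lower bound assumption $\vec p \ge \vec p^*$. By Lemma~\ref{lem_diffcostmonoton} (Statement~1), raising prices from $\vec p^*$ to $\vec p$ can only lower the best-response demand, so $\vec x \le \vec{x^*}$ componentwise; then Statement~2 of the same lemma, applied to the min-cost flows $\vec y$ and $\vec{y^*}$ for demands $\vec x$ and $\vec{x^*}$, gives $c_t(y_t) \le c_t(y^*_t) = p^*_t \le p_t$ for every item $t$. Taking the minimum over items $t$ available to a fixed buyer $i$ yields $r_i(\vec y) = \min_{(i,t)\in E} c_t(y_t) \le \min_{(i,t)\in E} p_t = \bi$. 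Combining, $\sum_{i\in B\setminus B_1}\bi x_i \ge \sum_{i\in B\setminus B_1} r_i(\vec y) x_i \ge C(\vec{y}) - C(\vec{y}(B_1))$, which is exactly what was needed.

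The only delicate point is the pointwise domination $c_t(y_t)\le p_t$: it is what forces the use of $\vec p \ge \vec p^*$ together with the monotonicity statements of Lemma~\ref{lem_diffcostmonoton}, and without it the proposition would fail. Everything else is bookkeeping: the decomposition of $C(\vec y)$ is immediate, and the Flow Partition Lemma does the real work of relating $C(\vec y) - C(\vec y(B_1))$ to the $r_i(\vec y)x_i$ terms.
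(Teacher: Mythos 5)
Your proposal is correct and follows essentially the same route as the paper's own proof: the flow partition lemma (Lemma~\ref{lem_flowpartition}) with $B_1$ as the distinguished subset, combined with the monotonicity of marginal costs under the domination $\vec{p}\geq\vec{p^*}$ (Lemma~\ref{lem_diffcostmonoton}) to conclude $r_i(\vec{y})\leq \bi$. The only cosmetic difference is that you bound $c_t(y_t)\leq p^*_t\leq p_t$ itemwise and take minima, whereas the paper passes through $r_i(\vec{y})\leq r_i(\vec{y^*})=\bar{p}^*_i\leq\bi$; these are equivalent.
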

That is, we set the demand to be zero for all buyers outside $B_1$ and compute the min-cost flow for the same instance.
\begin{proof}
Since the price dominates $\vec{p^*}$, $\vec{x^*}$ has to dominate $\vec{x}$, i.e., increasing price on all items can only lead to a drop in buyer demand. Applying Lemma~\ref{lem_diffcostmonoton}, we get that for all $i$, $c_t(y_t) \leq c_t(y^*_t) = p^*_t$. Now applying the flow partition lemma (Lemma \ref{lem_flowpartition}), we get,

$$\pi_1 = \sum_{i \in B_1}\bi x_i - C(\vec{y}(B_1)) + \sum_{i \notin B_1}\bi x_i - \sum_{i \notin B_1}r_i(\vec{y}) x_i.$$

But we know that $r_i(\vec{y}) \leq r_i(\vec{y^*}) = \bar{p}^*_i \leq \bi$. Therefore, the second term in the above inequality is non-negative and the first term is a lower bound for $\pi_1$. \end{proof}

\section{Proofs from Section 3}
In this section, we will consider a market where every buyer's\footnote{We use buyer and buyer type interchangeably from now on.} inverse demand function $\lambda_i(x)$ has a monotone hazard rate and the same value of $\lambda_i(0)$. Formally,
\begin{assumption}
\label{assumption_uniformpeak}
There exists some $\lambda^{max} > 0$ such that for every buyer $i$, $\lambda_i(0) = \lambda^{max}$.
\end{assumption}

\begin{prop_app}{prop_nphard}
The Unit Demand Pricing problem in large markets with MHR Inverse Demand and Uniform Peaks $(\lambda_i(0) = \lambda^{max}$ for all $i$) is NP-Hard even with zero production costs.
\end{prop_app}
\begin{proof}
We just sketch the proof here since the general idea is the same as the hardness proof in~\cite{guruswami2005profit}. Consider an instance of vertex cover. Reducing this to our problem, there is one buyer for each vertex and $r$ buyers for every edge $e$ such that $r = |V|=n$.  Moreover, there is also one item in $S$ for every vertex of the original instance All vertex buyers have unit demand with valuation $2$, i.e., $\lambda_i(x) = 2$ for $x \leq 1$ and $0$ otherwise. All the edge buyers have the following $MHR$ inverse demand function $\lambda_i(x) = 2-x$. The vertex buyers have access only to the corresponding vertex item and the edge buyers have access to items corresponding to its two end points. 

First, it is not hard to see that the maximum profit from the set of $r$ buyers for an edge is $r$ (each of these can give only a max profit of $1$ when $p=1$). Next, in any solution of our problem, the items whose prices are less than $2$ must form a vertex cover of the original graph. Moreover, the price of these sellers is at most $1+\frac{1}{2r}$. Say you are given some solution where there are $k$ nodes priced below $2$, then the profit of this solution is at least $2(n-k) + k + mr$, where $m$ is the number of edges. This is obtained by pricing all these nodes at $1$. Next, the profit of this solution is at most, $2(n-k) + k(1+\frac{1}{2r}) + mr$. Now, it is not hard to show that any optimal solution must be a minimal vertex cover.\end{proof}

Recall our notation from Section~\ref{sec:uniformpeak}, $\vec{p^k}, \vec{x^k}, \vec{y^k}$ is the solution returned by Algorithm~\ref{alg_genprocedurebody} for a given value of $k$.

\subsection*{Properties Regarding the Stopping Criterion}

The following simple lemma shows that for a fixed value of $k$, if there are two items and the one with the larger marginal cost meets the stopping criterion, then the other must also satisfy the condition.

\begin{proposition}
\label{lem_ep1stopcrithier}
Consider two items $t$, $t'$ at the same price $p$ and let $\vec{y}$ be some min-cost allocation such that $c_t(y_t) \leq c_{t'}(y_{t'})$. If $t'$ satisfies the stopping criterion, then $t$ must also satisfy the stopping criterion.
\end{proposition}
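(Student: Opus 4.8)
The plan is to recast the stopping criterion in an equivalent ``price threshold'' form and then exploit the fact that this threshold is monotone in the item's marginal cost. Concretely, for an item with price $p$ and allocation $y$, multiply Equation~\ref{eqn_stopconditionmain} through by $k$ (legitimate since $k\geq 1>0$, and note this step does not require $\lambda^{max}-c_t(y_t)$ to have any particular sign) and collect the marginal-cost terms to obtain the equivalent statement
\[
p \;\geq\; \frac{1}{k}\bigl(\lambda^{max} + (k-1)\,c_t(y_t)\bigr).
\]
So the stopping criterion for an item at price $p$ holds if and only if $p$ is at least this threshold value, which depends on the item only through its marginal cost $c_t(y_t)$.

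The second step is the key observation: since $k\geq 1$ we have $k-1\geq 0$, so the map $c\mapsto \frac{1}{k}\bigl(\lambda^{max}+(k-1)c\bigr)$ is non-decreasing in $c$. Hence an item with smaller marginal cost has a smaller (or equal) threshold to clear.

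Finally I apply this to the two items $t,t'$ at the common price $p$ under the allocation $\vec{y}$. By hypothesis $c_t(y_t)\leq c_{t'}(y_{t'})$, so the threshold for $t$ is at most the threshold for $t'$; and since $t'$ satisfies the stopping criterion, $p$ is at least the threshold for $t'$, hence a fortiori at least the threshold for $t$. Rewriting back, $t$ satisfies Equation~\ref{eqn_stopconditionmain}, which is exactly the claim.

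There is no real obstacle here: once the criterion is rewritten in threshold form the argument is a single monotonicity step in the variable $c_t(y_t)$. The only point that warrants a word of care is checking that the algebraic rearrangement of the inequality is valid without sign assumptions on $\lambda^{max}-c_t(y_t)$, which it is because we only ever multiply by the positive constant $k$.
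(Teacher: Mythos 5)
Your proof is correct and is essentially the same as the paper's: both rearrange the stopping criterion into the form $p \geq \frac{1}{k}\left(\lambda^{max} + (k-1)c_t(y_t)\right)$ and then use that this threshold is non-decreasing in the marginal cost since $k \geq 1$. Your explicit note that the rearrangement needs no sign assumption on $\lambda^{max} - c_t(y_t)$ is a minor but harmless addition.
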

\begin{proof}
The proof follows from a rearrangement of Equation~\ref{eqn_stopconditionmain}. The stopping criterion for $t'$ can also be written as
$$\lambda^{max} \leq kp - (k-1)c_{t'}(y_{t'}).$$
The term in the RHS is in turn no larger than $kp - (k-1)c_{t}(y_{t})$, which implies that $t$ also satisfies the stopping criterion. \end{proof}

Our second proposition compares the prices of items in the solution returned by our algorithm for two different values of $k$. We say that vector $\vec{a} \geq \vec{b}$ iff each element of $\vec{a}$ is not smaller than its corresponding element of $\vec{b}$.

\begin{proposition}
\label{prop_stoptwovalues}
Suppose that $k_1, k_2 \geq 1$ are two values of the stopping parameter with $k_1\geq k_2$. Then, $\vec{p^{k_2}} \geq \vec{p^{k_1}}$ and $\vec{x^{k_1}} \geq \vec{x^{k_2}}$.
\end{proposition}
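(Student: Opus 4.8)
The plan is to run Algorithm~\ref{alg_genprocedurebody} for the two parameters $k_1\ge k_2$ side by side, using the common active price $p$ (ranging over $[p_0,\lambda^{max}]$) as a synchronized clock, and to prove by induction on the events of the procedure that the $k_1$-run is always ``ahead'' of the $k_2$-run. The first ingredient is a monotonicity of the stopping criterion in $k$: rewriting Equation~\ref{eqn_stopconditionmain} as $\lambda^{max}\le c_t(y_t)+k\bigl(p_t-c_t(y_t)\bigr)$, and using that throughout the procedure $p_t\ge c_t(y_t)$ (a consequence of Lemma~\ref{lem_marginaldec} and the price hierarchy in Figure~\ref{fig_sethierarchy}), the right-hand side is non-decreasing in $k$. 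Hence if an item with allocation $y_t$ at active price $p$ meets the stopping criterion for $k_2$, it also meets it for $k_1$.

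Next I would set up the coupling. The INACTIVE$\to$ACTIVE transitions occur at exactly the same prices in both runs, since those prices are the boundary prices of $\vec{p^*}$ and do not depend on $k$; only the ACTIVE$\to$FINISH transitions can differ. Let $\bar p$ be the smallest active price at which the two runs differ. Up to $\bar p$ the two runs are literally identical: the active subgraphs coincide, hence so do the (uniquely determined) min-cost marginal costs and the best-response demands, and the only $k$-dependence is in evaluating the stopping criterion. By the monotonicity above, the first discrepancy must be an item $t$ (and the buyers using it) finishing in the $k_1$-run at $\bar p$ but not in the $k_2$-run. I would then maintain, for all $p\ge\bar p$, the invariant: (a) every item finished in the $k_2$-run by price $p$ is also finished in the $k_1$-run, at a price no larger than where it finished in the $k_2$-run; consequently (b) the $k_1$-run's active set (items and buyers) is a subset of the $k_2$-run's active set at price $p$, with the $k_1$-active buyers having weakly larger demand; and (c) for every buyer $i$, $x^{k_1}_i(p)\ge x^{k_2}_i(p)$, where finished buyers carry their frozen demand.

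For the inductive step, I would invoke the min-cost flow monotonicity results: since the $k_1$-run's active set is a subset of the $k_2$-run's with weakly smaller demands, Lemma~\ref{lem_diffcostmonoton} together with Corollary~\ref{corr_diffcostmorebuyers} gives that the marginal cost of every active item in the $k_1$-run is at most its marginal cost in the $k_2$-run. Combined with the $k$-monotonicity of the stopping criterion and Proposition~\ref{lem_ep1stopcrithier} (an item of smaller marginal cost at the same price finishes no later), this shows the $k_1$-run continues to finish items weakly earlier, preserving (a) and (b); and (c) follows because an active buyer shared by both runs best-responds to the same active price $p$, while a buyer finished earlier in the $k_1$-run has her demand frozen at a strictly smaller price and hence weakly larger, using invariant (a) to rule out the reverse situation. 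Carrying the induction to $p=\lambda^{max}$, at which all items are finished in both runs, invariant (a) yields $p^{k_1}_t\le p^{k_2}_t$ for every $t$, i.e.\ $\vec{p^{k_2}}\ge\vec{p^{k_1}}$, and invariant (c) yields $\vec{x^{k_1}}\ge\vec{x^{k_2}}$.

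The main obstacle I anticipate is the bookkeeping around simultaneous finishing events and divergent active subgraphs: once the $k_1$-run finishes an item (and removes its buyers, which may have been feeding other active items as well), its min-cost flow is recomputed on a strictly smaller subgraph than the $k_2$-run's, and one must argue that this divergence only reinforces ``$k_1$ is ahead'' rather than breaking it. This is exactly where monotonicity of min-cost marginal costs under buyer/edge removal (Corollary~\ref{corr_diffcostmorebuyers}, Corollary~\ref{corr_flowmagnitudemincost}) does the heavy lifting; the remainder is a careful but routine case analysis on whether each buyer/item is active, inactive, or finished in each run.
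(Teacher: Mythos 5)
Your proposal is correct and follows essentially the same route as the paper: the paper also couples the two runs at a common active price, notes that items enter the active set at $k$-independent prices, and uses the containment of the $k_1$-run's active set in the $k_2$-run's together with Corollary~\ref{corr_diffcostmorebuyers} and the monotonicity of the stopping rule in $k$ — it just packages your induction as a minimal-counterexample contradiction over the order in which items finish in the $k_2$-run. The only cosmetic difference is that you explicitly carry the demand invariant $\vec{x^{k_1}}\ge\vec{x^{k_2}}$, which the paper leaves implicit via $x^k_i=\lambda_i^{-1}(\bar p^k_i)$.
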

\begin{proof}
Notice that the initial prices $\vec{p^*}$ are independent of $k$. This means that whatever be the value of the stopping parameter $k$, an item enters the active set at the exact same value of the active price. Also, note that when an item becomes finished, all the buyers having an edge to this item must also be finished (or else why is it not using this item?). Suppose that during Algorithm~\ref{alg_genprocedurebody} for $k=k_2$, the items become finished in the order $O=(t_1, t_2, t_3, \ldots, t_{|S|})$, breaking ties arbitrarily. Then, clearly $p^{k_2}_{t_1} \leq p^{k_2}_{t_2} \leq \ldots \leq p^{k_2}_{t_{|S|}}$. Assume by contradiction that $t$ is the item with the smallest index in $O$ such that $p^{k_2}_t < p^{k_1}_t$.

Consider the run of Algorithm~\ref{alg_genprocedurebody} for $k=k_1$ and let $A(p^{k_2}_t)$ be the buyers and items in the active set when the active price is $p^{k_2}_t$. Clearly $t$ belongs to the set but does not meet the stopping criterion yet. We claim that for any buyer $i \in A(p^{k_2}_t)$, for the state of Algorithm~\ref{alg_genprocedurebody}($k=k_2$) at active price $p^{k_2}_t$, $i$ belonged to the active set. Moreover, all the items in $A(p^{k_2}_t)$ must have been active for the same active price but $k=k_2$. Therefore, applying Corollary~\ref{corr_diffcostmorebuyers} comparing the contents of the active set at $k=k_1$ and $k=k_2$, we conclude that the marginal cost of $t$ at $k=k_1$ cannot be larger than its marginal cost for $k=k_2$ at the same active price. Therefore, it is not hard to see that for $k=k_1$, $t$ must satisfy the stopping condition at active price $p^{k_2}_t$. \end{proof}

\noindent\textbf{Price-Hierarchy:} The next proposition is a trivial observation from the definition of the algorithm, which we state without proof. It states that at a given active price $p$, the price of items in $F(p)$ are no larger than $p$ and the items in $I(p)$ cannot be smaller than $p$.

\begin{proposition}
\label{lem_ep1pricehierarchy}
Suppose at some given active price $p$, $t_1, t_2, t_3$ are three items such that $t_1 \in A(p)$, $t_2 \in I(p)$ and $t_3 \in F(p)$. Then,
$$P_{t_3}(p) \leq P_{t_1}(p) = p \leq P_{t_2}(p).$$
\end{proposition}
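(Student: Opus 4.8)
The plan is to derive the statement directly from the mechanics of Algorithm~\ref{alg_genprocedurebody}, by establishing a single invariant maintained at every stage (every value of the active price $p$ attained during the execution): all items currently in the active set carry exactly the common active price $p$; every item still in $I(p)$ retains its starting price $p^*_t$; and every item in $F(p)$ retains the price it held at the instant it met the stopping criterion. Given this invariant, the three inequalities in the statement are immediate.

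First I would prove the ``common active price'' part by induction on the stages of the algorithm. At initialization the active set is the set of minimally priced items in $\vec{p^*}$, which trivially all share the minimal price; this is the base case. For the inductive step there are exactly three kinds of events. (i) The price-increase step raises the price of every active item by the same infinitesimal amount, so a common price is preserved. (ii) The step that activates an inactive item $t$ does so precisely when the active price equals $p^*_t$, so the newly added item's price already coincides with the common active price. (iii) The step that finishes an active item removes it (freezing its price) but leaves the prices of all remaining active items untouched. Hence the invariant persists, and in particular $P_{t_1}(p) = p$.

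Next I would dispose of the inactive and finished items. An item $t_2$ leaves INACTIVE exactly when the active price first reaches $p^*_{t_2}$, and until then its price is never modified; so if $t_2 \in I(p)$ the active price has not yet reached $p^*_{t_2}$, giving $P_{t_2}(p) = p^*_{t_2} \ge p = P_{t_1}(p)$ (with the convention that an inactive item is counted as activated at the very instant the active price hits $p^*_{t_2}$, which is why the non-strict bound is the correct one and matches the statement). Symmetrically, if $t_3 \in F(p)$ then $t_3$ met the stopping criterion at some earlier active price $p' \le p$, at which point its price was frozen and thereafter never raised, so $P_{t_3}(p) = p' \le p = P_{t_1}(p)$. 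Chaining the three relations yields $P_{t_3}(p) \le P_{t_1}(p) = p \le P_{t_2}(p)$.

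Since this is purely a bookkeeping consequence of the algorithm's definition (which is why the paper states it without proof), there is no genuine obstacle. The only point deserving a moment's care is the induction showing that the active set is always a set of items at one common price: this is the fact tacitly used elsewhere to speak of ``the active price'' at all and to apply the min-cost-flow comparison lemmas (e.g.\ Lemma~\ref{lem_diffcostmonoton} and Corollary~\ref{corr_diffcostmorebuyers}) to the active subgraph, so it is worth spelling out cleanly even though each individual case is routine.
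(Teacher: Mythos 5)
Your proof is correct: the invariant that all active items carry the common active price, inactive items retain their initial price $p^*_t \geq p$, and finished items are frozen at the earlier price where they met the stopping criterion is exactly the bookkeeping the statement rests on. The paper itself states this proposition without proof, calling it a trivial observation from the algorithm's definition, so your induction over the three event types simply spells out the intended (omitted) argument and does so accurately.
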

We now establish some easy invariants that hold during the course of our algorithm. In particular, we show that there can be no edges between buyers in $B_I(p)$ and items in $S_A(p)$ or $S_F(p)$ and between buyers in $B_A(p)$ and items in $S_F(p)$. 

\begin{proposition}
\label{prop_envyfreeinvariants}
At any active price $p$, the following must be true
\begin{enumerate}
\item Suppose user $i$ belongs to $B_I(p)$ and item $t \in S_A(p)$ or $t \in S_F(p)$. Then, there cannot be an edge between user $i$ and item $t$.

\item Suppose user $i \in B_A(p)$ and item $t \in S_F(p)$. Then, there cannot be an edge between user $i$ and item $t$.
\end{enumerate}
\end{proposition}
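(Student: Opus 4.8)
The plan is to prove both invariants simultaneously by induction on the active price $p$, tracking the state of Algorithm~\ref{alg_genprocedurebody} as $p$ increases from $p^*_{\min}$ to $\lambda^{max}$. The base case is the initial configuration: at the very start, the active set consists of the minimally priced items in $\vec{p^*}$ together with the buyers receiving them, the finished set is empty (so invariant 2 is vacuous), and any buyer with an edge to an active (minimally priced) item must herself be active — because in the envy-free solution $(\vec{p^*},\vec{x^*},\vec{y^*})$ she would be purchasing from that minimally priced item, hence would have been swept into the active set in line 2. This handles invariant 1 at initialization.

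For the inductive step, I would consider the three kinds of events that change the partition as $p$ increases: (a) an inactive item $t$ whose $p^*_t$ equals the current active price enters the active set along with its buyers; (b) an active item $t$ meets the stopping criterion and moves (with its buyers) to the finished set; (c) the continuous price increase and min-cost flow recomputation with no set changes. Events of type (c) cannot create new edges, so the invariants are preserved trivially. For event (a): moving $t$ from $S_I(p)$ to $S_A(p)$ can only help invariant 1 (an edge from $B_I$ to $t$ would have violated invariant 1 \emph{before} the move by the inductive hypothesis, contradiction), and the buyers brought along are exactly those using $t$; I must check that these newly-active buyers have no edge to any finished item, which follows because by Proposition~\ref{lem_ep1pricehierarchy} finished items are priced strictly below the active price $p = p^*_t$, yet such a buyer was inactive with minimum available price $\overline{p}_i = p^*_i \geq p$ (weak price lower bound / the fact that her cheapest item just became active), so she has no access to anything cheaper. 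For event (b): when $t$ becomes finished, every buyer $i$ with an edge to $t$ must also be finished — this is the key point the algorithm relies on, and it holds because if $i$ were still active or inactive, then in an envy-free/min-cost allocation at the current stage she would be receiving $t$ (since $t$ is now among the cheapest items available to her), so she is active, and then she gets moved to FINISH alongside $t$ in line 11.

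**The main obstacle** I anticipate is handling event (b) rigorously: showing that \emph{all} buyers adjacent to a newly-finished item are themselves finished, which requires arguing that an active buyer adjacent to $t$ is necessarily \emph{using} $t$ in the current min-cost best-response allocation. This is exactly where Theorem~\ref{thm_ep1envyfree}'s reasoning is needed — at every intermediate stage the active sub-allocation is both envy-free and a min-cost flow among active items, so by the KKT characterization ($r_i(\vec{y}) = \min_{(i,t)\in E} c_t(y_t)$) and the price hierarchy, any active buyer adjacent to an active item at the common active price is in fact receiving it. A subtlety is avoiding circularity: Theorem~\ref{thm_ep1envyfree} is stated for the final output, so I would instead phrase the needed fact as a stage-wise invariant proved in the same induction — namely that at each active price the active block's allocation is envy-free and min-cost \emph{within the active block}, and that active buyers only receive active items (the parenthetical claim flagged in lines 6--7 of the algorithm). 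Once that companion invariant is carried along, event (b) goes through and the two stated invariants follow. I would also note the simple monotonicity consequence: once a buyer or item is finished it stays finished and no new edges into the finished block can ever appear, so it suffices to check the invariants only at the moment each set membership changes.
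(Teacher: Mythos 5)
The main gap is in your event (b). You assert that an active buyer $i$ adjacent to the newly-finished item $t$ ``would be receiving $t$'' because $t$ is among the cheapest items available to her, and more generally that ``any active buyer adjacent to an active item at the common active price is in fact receiving it.'' That claim is false: in a min-cost flow a buyer routes her demand only to the items of \emph{minimum marginal cost} among those she can access, so she can be adjacent to $t$ (same price) while receiving only some other active item $t'$ with $c_{t'}(y_{t'}(p'))\leq c_t(y_t(p'))$. In that case she is not ``using $t$,'' the algorithm does not move her to FINISH when $t$ is moved, and your induction does not yield invariant 2. The paper closes exactly this hole with Proposition~\ref{lem_ep1stopcrithier}: since the item $t'$ she \emph{is} using has marginal cost at most that of $t$ and $t$ meets the stopping criterion, $t'$ meets it as well at the same active price, so $i$ is transferred to FINISH along with $t'$; your proposal never invokes this stopping-criterion hierarchy, and without it the key step fails.

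A secondary problem is your event (a) parenthetical for invariant 1: an edge from a still-inactive buyer to the entering item $t$ would \emph{not} have violated invariant 1 before the move, since $t$ was inactive then, so there is no contradiction of the form you state. The correct argument (and the one the paper actually uses, non-inductively) comes from the structure of the initial solution $(\vec{p^*},\vec{x^*},\vec{y^*})$: an inactive buyer's items in $\vec{y^*}$ have initial price $c_{t'}(y^*_{t'})$ strictly above the current active price, whereas any active or finished item has initial price at most the active price, so an edge from an inactive buyer to such an item would allow an infinitesimal flow shift that lowers the cost of $\vec{y^*}$, contradicting that it is a min-cost flow. The paper then handles invariant 2 by the ``at the moment $t$ finished'' argument sketched above; your overall inductive scaffolding could be repaired along these lines, but as written both the invariant-1 step and the crucial invariant-2 step rest on incorrect claims.
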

The invariants indicate a natural hierarchy in the partitions as shown in Figure~\ref{fig_sethierarchy}. Users who are inactive can only have edges to items that are inactive. Users that are active can only have edges to items that are active or inactive.
\begin{proof}
We prove these invariants by contradiction. First, suppose that for some $i \in B_I(p)$ and $t \in S_A(p)$, $(i,t) \in E$. Since $i$ is inactive, this must mean that every item $t'$ that $i$ used in the optimum solution must have an initial price $c_{t'}(y^*_{t'}) > p$, where $p$ is the active price. But since $t \in S_A(p)$, this must mean that at some $p' \leq p$, the active price must have been equal to the initial price $c_t(y^*_t)$ of the item $t$. So, $c_t(y^*_t)$ is smaller than $c_{t'}(y^*_{t'})$ where $t'$ is some item used by $i$ in the optimum solution. Thus, if $i$ had access to $t$, then we could have shifted an infinitesimal amount of flow to $t$ and reduced the cost of the optimum solution, which is a contradiction. The proof for the case when $t \in S_F(p)$ is similar since if a item is finished, then it must have been active at some lower price.

For the second invariant, assume that there is some $i \in B_A(p)$ and item $t \in S_F(p)$ such that $i$ has access to item $t$. Let $p'$ be the price where $t$ met the stopping criterion and was transferred to the finished set. If at this price, $i$ used $t$, then by definition, $i$ would have also been added to FINISH. This means that either $i \in B_I(p')$ or $i$ did not send any flow on item $t$. $i \in B_I(p')$ contradicts the first invariant. Now suppose, $t'$ is some item that was allocated to user $i$ at active price $p'$. Then since we computed a min-cost allocation inside the active set, $c_{t'}(y_{t'}(p')) \leq c_t(y_t(p'))$. However, by Proposition~\ref{lem_ep1stopcrithier}, $t'$ also meets the stopping criterion and thus $i$ would have also been transferred to FINISH, a contradiction.\end{proof}

\begin{lem_app}{lem_marginaldec}
Suppose that some item $t$ belongs to the active set at two different active prices $p_1$ and $p_2$ with $p_1 < p_2$, then
$c_{t}(y_t(p_1)) \geq c_{t}(y_t(p_2))$.
\end{lem_app}
\begin{proof}
Let's begin by considering the set of items and buyers in the active set at price $p_2$, i.e., buyers $B_A(p_2)$ and items $S_A(p_2)$. Suppose that the demand of only these buyers at price $p_2$ is $\vec{x^2}$ and their corresponding allocations from the items in $S_A(p_2)$ are $\vec{y^2}$. By definition, $\vec{y^2}$ is a min-cost flow for $\vec{x^2}$ for the sub-instance $(B_A(p_2), S_A(p_2))$. Moreover for every item $t \in S_A(p_2)$, $y^2_t = y_t(p_2)$ (same allocation as when the active price was $p_2$).

Now, consider the same set of buyers and items as above but when the active price was $p_1$. Define a demand vector $\vec{x^1}$ such that for every buyer $i$ in $B_A(p_2)$, $x^1_i  = x_i(p_1)$, i.e., that buyer's demand when the active price was $p_1$. Let the corresponding min-cost allocation for this demand using only the items in $S_A(p_2)$ be $\vec{y^1}$. 

In summary,
\begin{itemize}
\item $(\vec{x^2}, \vec{y^2})$: For Active buyers at $p_2$, their total demand when the active price is $p_2$ and allocation using only the items $S_A(p_2)$.
\item $(\vec{x^1}, \vec{y^1}$): Same set of buyers and items as before, but with the corresponding buyer demand when the active price was $p_1$, i.e., $x_i(p_1)$.
\end{itemize}
Clearly $\vec{x^1}$ dominates $\vec{x^2}$ and therefore from Lemma~\ref{lem_diffcostmonoton}, it is clear that
$$c_t(y^1_t) \geq c_t(y^2_t) = c_t(y_t(p_2)).$$

So all we need to show now in order to prove the lemma is $c_t(y_t(p_1)) \geq c_t(y^1_t)$. 

Let the buyers in $B_A(p_1) \cap B_A(p_2)$ be $B_1$ and items in $S_A(p_1) \cap S_A(p_2)$ be $S_1$. Now, in $\vec{y^1}$, the flow from the items in $S_1$ only reaches the buyers in $B_1$ (other buyers in $B_A(p_2)$ were inactive at that price). But in the actual solution of our algorithm at active price $p_1$, the items in $S_1$ may have flow to $B_1$ and other buyers in $B_A(p_1)$. We now define two new demand vectors
\begin{itemize}
\item Let $\vec{x^1}(B_1,S_1)$ be sub-demand of $\vec{x^1}$ from the buyers in $B_1$ corresponding to their allocations from $S_1$ in $\vec{y^1}$. The allocation on $t \in S_1$ is still $y^1_t$.

\item Let $\vec{x^3}$ be defined for every $i$ as follows: $x^3_i$ is the total allocation for every buyer when the active price was $p_1$ but only from the items in $S_1$. The allocation on $t \in S^1$ is $y_t(p_1)$.
\end{itemize}

Since $\vec{x^3} \geq \vec{x^1}$ for all the common buyers, we can apply Corollary~\ref{corr_diffcostmorebuyers} and get $c_t(y_t(p_1)) \geq c_t(y^1_t).$ This completes the proof.  \end{proof}

\begin{prop_app}{prop_stopequality}
For any given item $t$ and fixed $k$, the stopping condition must be obeyed at equality. Formally, suppose that $t$ meets the stopping criterion at an active price of $p$, then
$$p - c_t(y_t(p)) = \frac{1}{k}(\lambda^{max} - c_t(y_t(p)).$$
\end{prop_app}
\begin{proof}
We first claim that for every item $t$ in the active set at an active price of $p$, $\lim_{\epsilon \to 0}c_t(y_t(p-\epsilon)) = c_t(y_t(p))$. First assume that no new item joins the active set at the price $p$. In a sufficiently small neighborhood around the price $p$, the contents of the active set cannot change. Moreover, for the restricted instance consisting only of the items and buyers in $A(p-\epsilon)$, $\vec{y}(p-\epsilon)$ is always a min-cost flow since the algorithm specifically computes a min-cost flow for the active set.

Applying Lemma~\ref{lem_contri} to the restricted instance provided by $A(p)$ where all the active items are priced at $p$, we have that $\lim_{\epsilon \to 0}c_t(y_t(p-\epsilon)) = c_t(y_t(p))$. Now, we know that for all $p-\epsilon < p$, item $t$ does not meet the active criterion and therefore
$$p-\epsilon - c_t(y_t(p-\epsilon)) < \frac{1}{k}\left(\lambda^{max} - c_t(y_t(p-\epsilon))\right).$$
Applying the limit $\epsilon \to 0$ to both sides of the above inequality, we get the desired result. Indeed, if instead of equality, we had a strict inequality, then that would imply a jump discontinuity in $c_t(y_t(p))$.

Now, what if some new item $t'$ joins the active set at price $p_t$? By definition, this would imply that, $c_{t'}(y_{t'}(p)) = p$. Then, it is clear that we still have a min-cost flow at $p$ because for any other item $t''$ in the active set, $c_{t''}(y_{t''}(p)) = p$. The same proof is therefore still applicable. \end{proof}

We now show the proof of the main bound of 1.877 on the revenue returned by our algorithm.

\begin{thm_app}{thm:1.88approx}
Algorithm~\ref{alg_uniformpeak} returns a $1.877$ approximation to the optimal profit, i.e., if $\pi^*$ is the profit of the optimal solution and $\tilde{\pi}$ is the profit of the solution returned by Algorithm~\ref{alg_uniformpeak}, then
$$\frac{\pi^*}{\tilde{\pi}} \leq 4\sqrt{e}-2-e\approx 1.877.$$
\end{thm_app}
\begin{proof}
We will now provide lower bounds on both $\pi_1$ and $\pi_2$ as returned by Algorithm~\ref{alg_uniformpeak} and show that for any instance, one of these is close enough to the optimal profit $\pi^*$. First some notation: we partition the buyers into sets $B^H$ and $B^L$ such that for the buyers in $B^H$, the minimally priced item available to them in $\vec{p^{opt}}$ is not smaller than that available to them in $\vec{p^{\sqrt{e}}}$, i.e., $i$ such that $\bar{p}^{opt}_i > \bar{p}^{\sqrt{e}}_i$. We already know due to Lemma \ref{lem_lowerboundmain} that all buyers in $B^L$ have $\bar{p}^{opt}_i > \bar{p}^{e}_i$.

Let us denote by $\vec{z^{opt}}$ the minimum-cost flow corresponding to the buyer demand $\vec{x^{opt}}$ (since the optimum allocation $\vec{y^{opt}}$ may not be a min-cost flow). Since the minimum cost flow always costs lesser than or equal to any given allocation, it is okay to compare our solutions with an upper bound on the optimum which is,
$$\pi^* \leq \sum_{i \in B}\bar{p}^{opt}_i x^{opt}_i - \sum_{t \in B}C_t(z^{opt}).$$

Now we apply the partition lemma (Lemma~\ref{lem_flowpartition}) to the envy-free profit maximizing solution with $B^H$ as the desired subset. Let $\vec{z^{opt}}(B^H)$ be the respective min-cost flow for only the demand due to buyers in $B^H$ in the profit maximizing solution. We then have,
\begin{align}
\pi^* \leq \left(\sum_{i \in B^H }\bar{p}^{opt}_i x^{opt}_i - C(\vec{z^{opt}}(B^H))\right) & +  \left(\sum_{i \in B^L} \bar{p}^{opt}_i x^{opt}_i - \left(C(\vec{z^{opt}}) - C(\vec{z^{opt}}(B^H))\right)\right)\\
\end{align}

As shown above, we have decomposed the profit into that due to $B^H$ and due to $B^L$ respectively. For convenience, we will refer to the left term above as $\pi^{opt}(B^H)$ and the right term as $\pi^{opt}(B^L)$, then we have shown that $\pi^*\leq \pi^{opt}(B^H)+\pi^{opt}(B^L)$. We now show a simple claim on the marginal costs of buyers in $B^H$ and $B^L$.
\begin{lemma}
\label{sublem_finalproofmargcost}
\begin{enumerate}
\item For all $i \in B^H$, $r_i(\vec{z^{opt}}(B^H)) \leq r_i(\vec{z^{opt}}) \leq r_i(\vec{y^{\sqrt{e}}}) \leq r_i(\vec{y^e})$.
\item For all $i \in B^L$, $r_i(\vec{z^{opt}}) \leq r_i(\vec{y^e})$.
\item $\sum_{i \in B^H}r_i(\vec{y^e})x^{opt}_i \geq C(\vec{z^{opt}}(B^H))$.
\item $\sum_{i \in B^L}r_i(\vec{y^e})x^{opt}_i \geq (C(\vec{z^{opt}}) - C(\vec{z^{opt}}(B^H)))$.
\end{enumerate}
\end{lemma}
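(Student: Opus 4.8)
The plan is to prove the four items in the order (2), then the three links of the chain in (1), then (3) and (4), since the last two are short consequences of the marginal-cost comparisons established first.

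For item 2 I would first establish the \emph{global} demand dominance $\vec{x^{opt}} \le \vec{x^e}$ (componentwise). By Lemma~\ref{lem_lowerboundmain} we have $\vec{p^{opt}} \ge \vec{p^e}$, and by Proposition~\ref{prop_stoptwovalues} $\vec{p^{\sqrt{e}}} \ge \vec{p^e}$; since $\bar{p}_i=\min_{t\in S_i}p_t$, both inequalities descend to $\bar{p}^{opt}_i \ge \bar{p}^e_i$ and $\bar{p}^{\sqrt{e}}_i\ge \bar{p}^e_i$ for every $i$. For $i\in B^L$ the relation $\bar{p}^{opt}_i\ge \bar{p}^e_i$ together with the best-response identity $\bar{p}_i=\lambda_i(x_i)$ and monotonicity of $\lambda_i$ gives $x^{opt}_i\le x^e_i$; for $i\in B^H$ the chain $\bar{p}^{opt}_i>\bar{p}^{\sqrt{e}}_i\ge\bar{p}^e_i$ gives the same. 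Hence $\vec{x^{opt}}\le\vec{x^e}$, and since $\vec{z^{opt}},\vec{y^e}$ are the corresponding min-cost flows (the latter by Theorem~\ref{thm_ep1envyfree}), Lemma~\ref{lem_diffcostmonoton} yields $r_i(\vec{z^{opt}})\le r_i(\vec{y^e})$ for every $i$ --- this is exactly item~2, and it also supplies the last link of the chain in item~1.

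For item 1, the first link $r_i(\vec{z^{opt}}(B^H))\le r_i(\vec{z^{opt}})$ follows from Corollary~\ref{corr_diffcostmorebuyers}: extending the $B^H$-only demand by zeros on $B^L$ produces a demand dominated by $\vec{x^{opt}}$, so every item's marginal can only go up when passing to the larger instance. The last link $r_i(\vec{y^{\sqrt{e}}})\le r_i(\vec{y^e})$ follows from $\vec{x^e}\ge\vec{x^{\sqrt{e}}}$ (Proposition~\ref{prop_stoptwovalues}) and Lemma~\ref{lem_diffcostmonoton}. The middle link $r_i(\vec{z^{opt}})\le r_i(\vec{y^{\sqrt{e}}})$ for $i\in B^H$ is the heart of the lemma and the step I expect to be hardest: here we only have $x^{opt}_i\le x^{\sqrt{e}}_i$ \emph{on $B^H$} (on $B^L$ the demand can be larger in $OPT$), so Lemma~\ref{lem_diffcostmonoton} does not apply directly. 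I would argue by contradiction --- if $r_i(\vec{z^{opt}})>r_i(\vec{y^{\sqrt{e}}})$ for some $i\in B^H$, take an item $t^\star\in S_i$ realizing $r_i(\vec{y^{\sqrt{e}}})$, deduce $z^{opt}_{t^\star}>y^{\sqrt{e}}_{t^\star}$, and then run the flow-difference / reachability argument used in the proof of Lemma~\ref{lem_diffcostmonoton} (this is essentially what Lemma~\ref{lem_mixedpricediffcost} packages, comparing an envy-free solution with a min-cost flow): the items reachable from $t^\star$ in the difference graph are over-supplied in $\vec{z^{opt}}$, and one must argue that their net incoming flow can only come from $B^H$ buyers --- using envy-freeness of $\vec{p^{\sqrt{e}}}$ and the defining inequality $\bar{p}^{opt}_i\ge\bar{p}^{\sqrt{e}}_i$ on $B^H$ to rule out $B^L$ buyers as a source of the excess --- whereupon a counting argument on the total flow into the reachable set contradicts $\sum_{i\in B^H}x^{opt}_i\le\sum_{i\in B^H}x^{\sqrt{e}}_i$.

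Items 3 and 4 are then bookkeeping. For item 3, convexity of each $C_t$ (with the harmless normalization $C_t(0)=0$) gives $C(\vec{z^{opt}}(B^H))=\sum_t C_t(z^{opt}(B^H)_t)\le\sum_t c_t(z^{opt}(B^H)_t)\,z^{opt}(B^H)_t$; regrouping the flow from items to buyers and using that in the min-cost flow $\vec{z^{opt}}(B^H)$ buyer $i$ receives only items of marginal $r_i(\vec{z^{opt}}(B^H))$, the right-hand side equals $\sum_{i\in B^H}r_i(\vec{z^{opt}}(B^H))\,x^{opt}_i$, which is $\le\sum_{i\in B^H}r_i(\vec{y^e})\,x^{opt}_i$ by item~1. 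For item 4, apply the Flow Partition Lemma (Lemma~\ref{lem_flowpartition}, Point~2) to $(\vec{x^{opt}},\vec{z^{opt}})$ with subset $B^H$: it gives $\sum_{i\in B\setminus B^H}r_i(\vec{z^{opt}})\,x^{opt}_i\ge C(\vec{z^{opt}})-C(\vec{z^{opt}}(B^H))$, and since $B\setminus B^H=B^L$, replacing $r_i(\vec{z^{opt}})$ by the larger $r_i(\vec{y^e})$ (item~2) finishes the proof. The only genuinely delicate point is the middle link of item~1; everything else rides on the monotonicity lemmas for min-cost flows already in hand.
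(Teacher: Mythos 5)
Your treatment of items 2, 3 and 4, and of the two outer inequalities in item 1, coincides with the paper's: global dominance $\vec{x^{opt}}\le\vec{x^e}$ (from Lemma~\ref{lem_lowerboundmain}) and $\vec{x^{\sqrt{e}}}\le\vec{x^e}$ (from Proposition~\ref{prop_stoptwovalues}) fed into Lemma~\ref{lem_diffcostmonoton}, Corollary~\ref{corr_diffcostmorebuyers} for $r_i(\vec{z^{opt}}(B^H))\le r_i(\vec{z^{opt}})$, the convexity-plus-regrouping bound $C(\vec{z^{opt}}(B^H))\le\sum_{i\in B^H}r_i(\vec{z^{opt}}(B^H))x^{opt}_i$ for item 3, and Lemma~\ref{lem_flowpartition} (Point 2) plus item 2 for item 4. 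These parts are correct and essentially identical to the paper.

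The divergence is the middle inequality $r_i(\vec{z^{opt}})\le r_i(\vec{y^{\sqrt{e}}})$ for $i\in B^H$, and there your proposal has a genuine gap. The paper dispatches this link the same way as the others: it notes $\bar{p}^{opt}_i\ge\bar{p}^{\sqrt{e}}_i\ge\bar{p}^e_i$ on $B^H$, concludes $\vec{x^{opt}}\le\vec{x^{\sqrt{e}}}\le\vec{x^e}$, and applies Lemma~\ref{lem_diffcostmonoton}; you are right that, read literally, this dominance is only justified coordinatewise on $B^H$ (on $B^L$ the OPT demand can exceed the $\sqrt{e}$ demand), so your instinct to be more careful is reasonable. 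But your substitute argument does not close the step: the decisive claim in your sketch --- that the over-supplied items reachable from $t^\star$ in the flow-difference graph can receive their excess inflow only from $B^H$ buyers, with $B^L$ buyers "ruled out" by envy-freeness of $\vec{p^{\sqrt{e}}}$ and the price inequality on $B^H$ --- is exactly the content that needs proving, and it is not implied by those ingredients. The min-cost flow $\vec{z^{opt}}$ is computed without reference to prices, and nothing graph-theoretic prevents a $B^L$ buyer whose demand grew in OPT from routing that extra demand onto an item shared with a $B^H$ buyer, raising the marginal cost that buyer faces; indeed, for arbitrary min-cost flows with dominance only on $B^H$ the desired inequality is simply false (a $B^L$ buyer with a large demand increase can flood a shared item). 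Any correct completion must therefore invoke additional structure --- e.g., the stopping condition at $k=\sqrt{e}$ together with the MHR bound of Lemma~\ref{lem_mhrxchange} limiting how much a $B^L$ buyer's demand can grow, or optimality properties of $\vec{p^{opt}}$ beyond the lower bound of Lemma~\ref{lem_lowerboundmain} --- none of which appears in your sketch. As submitted, the middle link of item 1 (and hence everything downstream that uses $r_i(\vec{y^{\sqrt{e}}})\ge r_i(\vec{z^{opt}})$ on $B^H$) is asserted rather than proven.
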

\begin{proof}
Define $\vec{y^e}(B^H)$ and $\vec{y^{\sqrt{e}}}(B^H)$ to be the min-cost flows for the demand corresponding only to the buyers in $B^H$ in $\vec{x^e}$ and $\vec{x^{\sqrt{e}}}$ respectively. Since for $i \in B^H, \bar{p}^{opt}_i \geq \bar{p}^{\sqrt{e}}_i \geq \bar{p}^e_i$, we have $\vec{x^{opt}} \leq \vec{x^{\sqrt{e}}} \leq \vec{x^e}$. Applying Lemma~\ref{lem_diffcostmonoton} gives us the last three terms of the first result. Corollary~\ref{corr_diffcostmorebuyers} gives us that $r_i(\vec{z^{opt}}(B^H)) \leq r_i(\vec{z^{opt}})$.

For all $i$ including the buyers in $B^L$, $x_i^{opt} \leq x_i^e$ since $\vec{p^{opt}}\geq\vec{p^e}$. So applying Lemma~\ref{lem_diffcostmonoton} to the optimum solution and $\vec{x^e}$, we get the second result.

Now for the third result. Consider $\sum_{i \in B^H}r_i(\vec{y^e})x^{opt}_i$. As per the first result, this is not greater than $\sum_{i \in B^H}r_i(\vec{z^{opt}}(B^H))x^{opt}_i = c_t(z^{opt}_t(B^H))z^{opt}_t(B^H)$. By Convexity arguments, the last term is clearly an upper bound for $C(\vec{z^{opt}}(B^H))$.

For the final result, we begin with the observation (from Point 2) that
$$\sum_{i \in B^L}r_i(\vec{y^e})x^{opt}_i \geq \sum_{i \in B^L}r_i(\vec{z^{opt}})x^{opt}_i.$$
From Lemma~\ref{lem_flowpartition}, we know that $\sum_{i \in B^L}r_i(\vec{z^{opt}})x^{opt}_i \geq \sum_{t \in S}(C_t(z^{opt}_t) - C_t(z^{opt}_t(B^H))$, which completes the result. \end{proof}

\subsubsection*{Stopping Condition Related Properties}
Finally, we rewrite the stopping conditions for all buyers $i$, for $k=e$ and $k=\sqrt{e}$ respectively and show a simple lemma based on this.
$$\bar{p}^e_i = \frac{1}{e}\lambda^{max} + (1-\frac{1}{e})r_i(\vec{y^{e}}).$$
$$\bar{p}^{\sqrt{e}}_i = \frac{1}{\sqrt{e}}\lambda^{max} + (1-\frac{1}{\sqrt{e}})r_i(\vec{y^{\sqrt{e}}}).$$

\begin{lemma}
\label{sublem_relationstopcond}
For all $i$, $\bar{p}^e_i \geq \frac{1}{\sqrt{e}}\bar{p}^{\sqrt{e}}_i + (1-\frac{1}{\sqrt{e}})r_i(\vec{y^{e}}).$
\end{lemma}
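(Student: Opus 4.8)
The plan is to substitute the two displayed stopping-condition identities directly into the target inequality and watch it collapse to a single monotonicity fact about marginal costs. Write $s = 1/\sqrt{e}$, so that $1/e = s^2$, and abbreviate $a = r_i(\vec{y^e})$ and $b = r_i(\vec{y^{\sqrt{e}}})$. The two stopping conditions then read $\bar{p}^e_i = s^2\lambda^{max} + (1-s^2)a$ and $\bar{p}^{\sqrt{e}}_i = s\lambda^{max} + (1-s)b$, and the claim to be shown is $s^2\lambda^{max} + (1-s^2)a \ge s\,\bar{p}^{\sqrt{e}}_i + (1-s)a$.

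First I would record the one external ingredient: $a \ge b$, i.e.\ $r_i(\vec{y^e}) \ge r_i(\vec{y^{\sqrt{e}}})$. This follows from Proposition~\ref{prop_stoptwovalues} applied with $k_1 = e \ge k_2 = \sqrt{e}$, which gives $\vec{p^{\sqrt{e}}} \ge \vec{p^e}$ and hence $\vec{x^e} \ge \vec{x^{\sqrt{e}}}$, combined with Lemma~\ref{lem_diffcostmonoton} (monotonicity of the marginal cost of a min-cost flow in the buyer demand).

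Then I would expand the right-hand side of the target: $s\,\bar{p}^{\sqrt{e}}_i + (1-s)a = s^2\lambda^{max} + s(1-s)b + (1-s)a$. Subtracting the common term $s^2\lambda^{max}$ from both sides, the inequality becomes $(1-s^2)a \ge s(1-s)b + (1-s)a$. Factoring $1 - s^2 = (1-s)(1+s)$ and dividing through by $1-s > 0$ (legitimate since $s = 1/\sqrt{e} < 1$) leaves $(1+s)a \ge sb + a$, i.e.\ $sa \ge sb$, i.e.\ $a \ge b$ — precisely the fact established in the previous step. Since each of these steps is an equivalence, reversing the chain yields the lemma.

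There is essentially no obstacle here; the only points deserving a line of care are the direction of the marginal-cost inequality (it must be $r_i(\vec{y^e}) \ge r_i(\vec{y^{\sqrt{e}}})$, which is exactly why we need $\vec{p^{\sqrt{e}}} \ge \vec{p^e}$ rather than the reverse) and the strict positivity of $1 - 1/\sqrt{e}$, which lets us divide the inequality by it without reversing its sense.
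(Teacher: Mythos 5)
Your proof is correct and is essentially the paper's own argument: substitute the equality form of the two stopping conditions and reduce the claim to $r_i(\vec{y^e}) \geq r_i(\vec{y^{\sqrt{e}}})$, which follows from $\vec{p^{\sqrt{e}}} \geq \vec{p^e}$ (Proposition~\ref{prop_stoptwovalues}) together with Lemma~\ref{lem_diffcostmonoton}. If anything, your sourcing of that marginal-cost inequality is slightly cleaner than the paper's citation of Lemma~\ref{sublem_finalproofmargcost}, whose first statement is nominally restricted to buyers in $B^H$, whereas the inequality is needed (and holds) for all $i$.
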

\begin{proof}
Multiplying the rewritten stop condition for $k=\sqrt{e}$ with $\frac{1}{\sqrt{e}}$, we get $\frac{1}{\sqrt{e}}\bar{p}^{\sqrt{e}}_i = \frac{1}{e}\lambda^{max} + (\frac{1}{\sqrt{e}}-\frac{1}{e})r_i(\vec{y^{\sqrt{e}}}).$ From Lemma~\ref{sublem_finalproofmargcost}, we get that this quantity is no more than $\frac{1}{e}\lambda^{max} + (\frac{1}{\sqrt{e}}-\frac{1}{e})r_i(\vec{y^{e}}).$ Substituing $\frac{1}{e}(\lambda^{max} - r_i(\vec{y^e})) = \bar{p}^e_i - r_i(\vec{y^e})$ gives us the desired result.\end{proof}

\subsubsection*{Lower bound on $\pi_1$: Profit at $k=e$}

We now begin with the main proof by showing a lower bound on $\pi_1$. Recall that $\pi_1$
is the solution where all items are priced at $p^e_t$. First, we apply Lemma~\ref{lem_costcomptwoprice} with $(\vec{p^e}, \vec{x^e}, \vec{y^e})$ as the first solution and the optimum solution as the second. Following this, we apply the flow partition lemma to $\sum_{t \in S}C(z^{opt}_t)$ with $B^H$ as the subset.

\begin{align*}
\pi_1 = & \sum_{i \in B}\bar{p}^e_i x^e_i - C(\vec{y^e})\\
\geq & \sum_{i \in B}\bar{p}^e_i x^{opt}_i - C(\vec{z^{opt}})\\
= & \sum_{i \in B^H}\bar{p}^e_i x^{opt}_i - C(\vec{z^{opt}}(B^H)) + \sum_{i \in B^L}\bar{p}^e_i x^{opt}_i - (C(\vec{z^{opt}}) - C(\vec{z^{opt}}(B^H))).
\end{align*}

\textbf{$\pi_1$: Profit due to buyers in $B^H$}\\
Now consider the first term above corresponding to the buyers in $B^H$. We apply the rewritten stopping condition to get the following lower bound,
$$\sum_{i \in B^H}\bar{p}^e_i x^{opt}_i - C(\vec{z^{opt}}(B^H)) \geq \sum_{i \in B^H}\left(\frac{1}{e}\lambda^{max} + (1-\frac{1}{e})r_i(\vec{x^e})\right) x^{opt}_i - C(\vec{z^{opt}}(B^H)).$$

From Lemma~\ref{sublem_finalproofmargcost} (Point 3), we get $\sum_{i \in B^H}(1-\frac{1}{e})r_i(\vec{x^e})x^{opt}_i \geq (1-\frac{1}{e})(C(\vec{z^{opt}}(B^H)))$. And so, we get a final lower bound on the profit due to the buyers in $B^H$.
\begin{align*}
\sum_{i \in B^H}\bar{p}^e_i x^{opt}_i - C(\vec{z^{opt}}(B^H)) \geq & \sum_{i \in B^H}\frac{1}{e}\lambda^{max} x^{opt}_i - \frac{1}{e}C(\vec{z^{opt}}(B^H))\\
& \geq \frac{1}{e}\left(\sum_{i \in B^H}\bar{p}^{opt}_i x^{opt}_i - C(\vec{z^{opt}}(B^H))\right)\\
= & \frac{1}{e}\pi^{opt}(B^H).
\end{align*}

\textbf{$\pi_1$: Profit due to buyers in $B^L$}\\
We move on to the profit due to the terms in $B^L$ and apply Lemma~\ref{sublem_relationstopcond}.
\begin{equation*}
\sum_{i \in B^L}\bar{p}^e_i x^{opt}_i - (C(\vec{z^{opt}}) - C(\vec{z^{opt}}(B^H))) \geq\\
 \sum_{i \in B^L} \left(\frac{1}{\sqrt{e}}\bar{p}^{\sqrt{e}}_i + (1-\frac{1}{\sqrt{e}})r_i(\vec{y^{e}})\right)x^{opt}_i - (C(\vec{z^{opt}}) - C(\vec{z^{opt}}(B^H))).
\end{equation*}

Applying the final claim in Lemma~\ref{sublem_finalproofmargcost}, we get that $(1-\frac{1}{\sqrt{e}})r_i(\vec{y^e}) x^{opt}_i \geq (1-\frac{1}{\sqrt{e}})(C(\vec{z^{opt}}) - C(\vec{z^{opt}}(B^H))).$ Now we are in a position to get a final lower bound for the profit due to the terms in $B^L$.
\begin{align*}
= & \sum_{i \in B^L}\bar{p}^e_i x^{opt}_i - (C(\vec{z^{opt}}) - C(\vec{z^{opt}}(B^H))).\\
\geq & \frac{1}{\sqrt{e}}\left(\sum_{i \in B^L}\bar{p}^{\sqrt{e}}_i x^{opt}_i - (C(\vec{z^{opt}}) - C(\vec{z^{opt}}(B^H)))\right)\\
\geq & \frac{1}{\sqrt{e}}\left(\sum_{i \in B^L}\bar{p}^{opt}_i x^{opt}_i - (C(\vec{z^{opt}}) - C(\vec{z^{opt}}(B^H)))\right)\\
= & \frac{1}{\sqrt{e}}\pi^{opt}(B^L).
\end{align*}
Recall that $\pi^{opt}(B^H) + \pi^{opt}(B^L) \geq \pi^*$. Our final bound for $\pi_1$ reads as follows
$$\pi_1 \geq \frac{1}{e}\pi^{opt}(B^H) +  \frac{1}{\sqrt{e}}\pi^{opt}(B^L).$$

\subsubsection*{Lower bound on $\pi_2$: Profit at $k=\sqrt{e}$}
Now we move on to $\pi_2$ which is the profit due to the solution returned by our algorithm for $k=\sqrt{e}$. Notice that for the buyers in $B^H$, the demand in this solution is larger than the demand in the optimum, whereas it is smaller for the buyers in $B^L$. It may be possible that by increasing the price from the optimum solution for some buyer in $B^L$, we lose most of her flow and thus we may not be extracting any profit at all from these buyers. Our first main claim leverages a property of MHR functions to show that for every buyer in $B^L$, $x^{\sqrt{e}}_i \geq \frac{1}{2}x^{opt}_i$, i.e., due to the price increase, the drop in the buyer's demand cannot be larger than a factor of two.

Consider the function $f(x) = \lambda_i(x) - r_i(\vec{y^{\sqrt{e}}})$ for any $i \in B^L$. Recall that due to the definition of best-response demand, we have that $\lambda_i(x_i^{\sqrt{e}})=\bar{p}_i^{\sqrt{e}}$. Thus, we know from the stopping condition that $f(0) = \sqrt{e} f(x^{\sqrt{e}}_i)$. We also know from Lemma~\ref{sublem_relationstopcond} that $\sqrt{e}f(x^e_i) \geq f(x^{\sqrt{e}}_i)$. This is true because $r_i(\vec{y^e}) \geq r_i(\vec{y^{\sqrt{e}}})$ for all $i$. Therefore, applying the property of MHR functions from Lemma~\ref{lem_mhrxchange}, we get the desired claim that $x^{\sqrt{e}}_i \geq \frac{1}{2}x^e_i \geq \frac{1}{2}x^{opt}_i$. Now apply Lemma~\ref{lem_tripartition} to the solutions $(\vec{p^{\sqrt{e}}},  \vec{x^{\sqrt{e}}}, \vec{y^{\sqrt{e}}})$ and $(\vec{p^{opt}}, \vec{x^{opt}}, \vec{z^{opt}})$ with $\alpha = \frac{1}{2}$. We get the following lower bound for the profit when $k=\sqrt{e}$:

\begin{align*}
\pi_2 \geq & \left(\sum_{i \in B^H} \bar{p}_i^{\sqrt{e}} x^{opt}_i - C(\vec{z^{opt}}(B^H))\right) & + & \frac{1}{2}\left(\sum_{i \in B^L} \bar{p}^{opt}_i x^{opt}_i - (C(\vec{z^{opt}}) - C(\vec{z^{opt}}(B^H)))\right)
\end{align*}

We will refer to the above two terms as $\pi_2 (B^H)$ and $\pi_2(B^L)$.

\textbf{$\pi_2$: Profit due to buyers in $B^L$:}\\
Clearly since $\alpha = 0.5$, we immediately have that $\pi_2(B^L) = 0.5 \pi^{opt}(B^L)$. Therefore, we only need to focus on bounding $\pi_2 (B^H)$ in terms of $\pi^{opt}(B^H)$, which we do below.\\

%

\textbf{$\pi_2$: Profit due to buyers in $B^H$:}\\
So we now exclusively focus on the profit due to the terms in $B^H$. 

Using the rewritten stopping condition, we see that for all $i \in B^H$, $\bar{p}^{\sqrt{e}}_i \geq \frac{1}{\sqrt{e}}\lambda^{max} + (1-\frac{1}{\sqrt{e}})r_i(\vec{y^{\sqrt{e}}})$. But we also know that $r_i(\vec{y^{\sqrt{e}}}) \geq r_i(\vec{z^{opt}}) \geq r_i(\vec{z^{opt}}(B^H))$ due to Lemma \ref{sublem_finalproofmargcost}. Getting back to $\pi_2$, we have

\begin{align*}
\pi_2 (B^H) = & \sum_{i \in B^H}\bar{p}^{\sqrt{e}}_i x^{opt}_i - C(\vec{z^{opt}}(B^H))\\
\geq & \sum_{i \in B^H}\left(\frac{1}{\sqrt{e}}\lambda^{max} + (1-\frac{1}{\sqrt{e}})r_i(\vec{z^{opt}}(B^H))\right)x^{opt}_i -  C(\vec{z^{opt}}(B^H)).
\end{align*}

Notice that $\sum_{i \in B^H}r_i(\vec{z^{opt}}(B^H))x^{opt}_i = \sum_t c_t(z^{opt}_t(B^H))z^{opt}_t(B^H) \geq \sum_{t\in S}C_t(z^{opt}_t(B^H))$. So cancelling $(1-\frac{1}{\sqrt{e}})$ from the costs, we get
\begin{align*}
\pi_2 (B^H) \geq & \frac{1}{\sqrt{e}}(\sum_{i \in B^H}\lambda^{max}x^{opt}_i -  C(\vec{z^{opt}}(B^H)))\\
\geq & \frac{1}{\sqrt{e}}(\pi^{opt}(B^H)).
\end{align*}

Therefore, our lower bound on $\pi_2$ is $\frac{1}{\sqrt{e}}(\pi^{opt}(B^H)) + \frac{1}{2}(\pi^{opt}(B^L)).$ Recall that $\pi^{opt}(B^H) + \pi^{opt}(B^L) \geq \pi^*$ and

$$\pi_1 \geq \frac{1}{e}(\pi^{opt}(B^H)) + \frac{1}{\sqrt{e}}(\pi^{opt}(B^L)).$$

Some basic algebra gives us that
$$min(\pi_1, \pi_2) \geq \frac{1}{4\sqrt{e} - (2+e)} \pi^* \geq \frac{1}{1.877}\pi^* \approx 0.53 \pi^*.$$
This completes the proof.
\end{proof}

\subsection{Efficient Implementation of the Algorithm}
\label{app:efficient}
\begin{lem_app}{lem_complementbinarysearch}
The following invariants hold during the course of the above Algorithm.
\begin{enumerate}
\item For any $j$, $B_A(j) \cup S_A(j) = B_A(P(j)) \cup S_A(P(j))$.

\item For any $j$, all the items in $S_F(j)$ meet the stopping criterion in the interval $[P(j-1), P(j)]$ during the course of Algorithm~\ref{alg_genprocedurebody}.
\end{enumerate}
\end{lem_app}
\begin{proof}
We show this by induction on $j$. Clearly at $j =1$, no item meets the stopping criterion since since $P(1) = c_t(y^*_t)$. This means that the first invariant is also true trivially. Suppose that the invariants are true up to iteration $j-1$. Notice that at the beginning of iteration $j$, we compute a min-cost flow for the items and buyers in $B_A(j-1) \cup S_A(j-1)$. 

Assume by contradiction that at $j$,  $\exists t \in S_F(j)$ that does not meet the stopping criterion in the interval $[P(j-1), P(j)]$ in the algorithm. We carefully introduce more notation,
\begin{itemize}
\item Let $S'_F(j)$ be the subset of $S_F(j)$ of all the items that do meet the stopping criterion in the interval during the course of the algorithm.

\item Consider the set of buyers and items belonging to the active set at both $P(j-1)$ and $P(j)$, i.e., they did not meet the stopping criterion in the desired interval. Let $B^{alg}(j)$ be the set of such buyers and $S^{alg}(j)$ be the set of such items. It is not hard to see that $B^{alg}(j) \subseteq B_A(j-1)$ and $S^{alg}(j) \subseteq S_A(j-1)$. Moreover, $S^{alg}(j) \cup S'_F(j) = S_A(j-1)$.

\item Let $x^{alg}(j)$ be the demand of the buyers in $B^{alg}(j)$ at price $P(j)$ and let  $\vec{y^{alg}(j)}$ be the corresponding min-cost flow using only the items in $S^{alg}(j)$. Indeed, in algorithm~\ref{alg_genprocedurebody}, the buyers in $B^{alg}(j)$ are only using the items in $S^{alg}(j)$ at active price $P(j)$.
\end{itemize}

Now, it is not hard to see that no buyer in $B^{alg}(j)$ has an edge to any item in $S'_F(j)$ (Recall Figure~\ref{fig_sethierarchy}). Define the demand vector $\vec{x'}(j)$ as $\vec{x}(j)$ but counting only the flow sent to items in $S^{alg}(j)$. For all buyers in $B^{alg}(j)$, $x'_i(j) =  x_i(j)$. Now apply Corollary~\ref{corr_diffcostmorebuyers} for the following two instances:
\begin{enumerate}
\item $(B^{alg}(j), S^{alg}(j))$ with demand $x^{alg}(j)$ and flow $\vec{y^{alg}(j)}$
\item  $(B_A(j-1), S^{alg}(j))$ with demand $\vec{x'}(j)$ with flow being $\vec{y}(j)$ but only for the items in $S^{alg}(j)$.
\end{enumerate}
Since $t \in S^{alg}(j)$, this means that $c_t(y^{alg}_t) \leq c_t(y_t(j))$. But we already know that $t$ did not meet the stopping criteria at $P(j)$, i.e.,

$$P(j) - c_t(y^{alg}_t) < \frac{1}{k}(\lambda^{max} - c_t(y^{alg}_t)).$$

And so, $t$ could have not met the stopping criteria in $\vec{y}(j)$, which is a contradiction.

The second invariant follows almost immediately. We know that the items that reached the stopping criterion in Algorithm~\ref{alg_genprocedurebody} in $(P(j-1), P(j)]$ constitute $S_F(j)$. Let $B_F(j)$ be the corresponding buyers who also became finished along with the items in the same interval but in the original algorithm. Clearly, $B_F(j) \subseteq B_A(j-1)$ as per definition. Clearly, these buyers have edges to at least one item in $S_F(j)$. Now suppose that the binary search algorithm has found a price $p$ where item $t$ exactly meets the stopping criterion and suppose that some buyer $i \in B_F(j)$ who has an edge to this item has not been removed yet. Since we only compute min-cost flows, either this buyer is using $t$ or some item with a marginal cost equal to that of $t$, say $t'$. Indeed, $t'$'s marginal cost cannot be smaller than that of $t$ because that means that $t'$ would have been removed first. Therefore, we also remove $t'$ and along with it buyer $i$. \end{proof}

\begin{theorem}

\label{thm_efficientalgo}
The algorithm in Section~\ref{sec:efficient} returns the same solution as Algorithm~\ref{alg_genprocedurebody} for any value of $k$.
\end{theorem}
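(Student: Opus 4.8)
\emph{(Proof Proposal.)} The plan is to prove that the two algorithms agree by induction on the index $j$ of the boundary price $P(j)$, carrying the inductive hypothesis that after processing $P(j)$ the efficient algorithm has (i) the same set of finished items as Algorithm~\ref{alg_genprocedurebody}, each with the same finishing price and the same allocation, and (ii) the same active set $B_A(P(j)) \cup S_A(P(j))$ with the same min-cost allocation on it. The base case $j=0$ is immediate: the initial prices are $\vec{p^*}$ regardless of the algorithm, the active set is the minimally priced items together with their buyers, and no item is finished because $p^*_t = c_t(y^*_t)$ makes the left-hand side of the Stopping Criterion zero while the right-hand side $\frac{1}{k}(\lambda^{max}-c_t(y^*_t))$ is positive. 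The entire argument is uniform in $k$, since every auxiliary result it invokes (Lemma~\ref{lem_prop_binarysearch}, Proposition~\ref{prop_stopequality}, Lemma~\ref{lem_complementbinarysearch}) holds for all $k \geq 1$.

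For the inductive step, assume agreement through $P(j-1)$. No item enters the active set while the active price lies strictly between $P(j-1)$ and $P(j)$, by the definition of the boundary vector $P$; so in this sub-interval Algorithm~\ref{alg_genprocedurebody} only raises the active price and recomputes min-cost flows on a progressively shrinking active set (items drop out as they finish). The efficient algorithm instead jumps straight to $P(j)$, computes the min-cost flow on the full active set $S_A(j-1)$, and reads off $S_F(j)$, the items meeting the Stopping Criterion there. The first thing to establish is that $S_F(j)$ is \emph{exactly} the set of items Algorithm~\ref{alg_genprocedurebody} finishes in $(P(j-1),P(j)]$. One inclusion is Invariant 2 of Lemma~\ref{lem_complementbinarysearch}: every item in $S_F(j)$ meets the criterion somewhere in $[P(j-1),P(j)]$. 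For the other inclusion, an item finished by Algorithm~\ref{alg_genprocedurebody} at some $p \in (P(j-1),P(j)]$ met the criterion for a min-cost flow on a \emph{sub}-active-set; by Corollary~\ref{corr_diffcostmorebuyers} its marginal cost on the full active set at $P(j)$ is no larger, and by Lemma~\ref{lem_prop_binarysearch} (monotonicity of the Stopping Criterion in the active price) it still meets the criterion at $P(j)$, i.e.\ it lies in $S_F(j)$.

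Next I would validate the binary search. By Lemma~\ref{lem_prop_binarysearch}, for a fixed active set the predicate ``item $t$ meets the Stopping Criterion when all active items are priced at $p$'' is monotone non-decreasing in $p$, so for each $t \in S_F(j)$ there is a threshold price, and the smallest threshold over $S_F(j)$ is exactly the price at which Algorithm~\ref{alg_genprocedurebody} first finishes an item in this interval. Binary search on $[P(j-1),P(j)]$ locates this threshold $p_1$; by Proposition~\ref{prop_stopequality} the Stopping Criterion holds at equality there, which is precisely the finishing price recorded by Algorithm~\ref{alg_genprocedurebody}, and the min-cost flow computed at $p_1$ yields the same allocation and marginal costs on the relevant items as $\vec{y}(p_1)$ in Algorithm~\ref{alg_genprocedurebody}, since both are min-cost flows for the same buyers and items at the same active price (items whose marginal cost equals $\tilde c(p_1)$ are grouped and removed together, exactly as in the proof of Lemma~\ref{lem_complementbinarysearch}, which handles ties). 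Removing these items and their buyers and recursing on the reduced interval $[p_1,P(j)]$ and reduced set, we peel off all of $S_F(j)$ at the correct prices and allocations; finally both algorithms add the items whose $\vec{p^*}$-price equals $P(j)$ together with their buyers, re-establishing the inductive hypothesis at $P(j)$. At $j=m$ we have $P(m)=\lambda^{max}$, at which point the Stopping Criterion holds for every remaining item (since $\lambda^{max}-c_t \geq \frac{1}{k}(\lambda^{max}-c_t)$ for $k\geq 1$), so all items are finished; since finished items retain their finishing price and allocation, the two algorithms produce identical $(\vec{p^k},\vec{x^k},\vec{y^k})$.

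The main obstacle is the reconciliation in the second paragraph: the two algorithms compute min-cost flows over \emph{different} active sets inside a boundary interval --- Algorithm~\ref{alg_genprocedurebody} shrinks the active set as items finish, whereas the efficient algorithm works with the full active set at the jump price $P(j)$ --- and one must argue this never changes which items finish, nor their finishing prices and allocations. This rests on marginal costs being monotone under removal of buyers and items (Corollary~\ref{corr_diffcostmorebuyers}, the workhorse of Lemma~\ref{lem_complementbinarysearch}) together with Lemma~\ref{lem_prop_binarysearch}. A secondary nuisance is making the recursion on $S_F(j)$ precise when several items finish simultaneously, which is dispatched by the equal-marginal-cost grouping used in Proposition~\ref{lem_ep1stopcrithier} and in the proof of Lemma~\ref{lem_complementbinarysearch}.
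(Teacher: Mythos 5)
Your overall route is the same as the paper's: induct over the boundary prices $P(j)$, use Lemma~\ref{lem_complementbinarysearch} to match the active sets and the finished items inside each interval, validate the binary search via the monotonicity of the stopping criterion in the active price (Lemma~\ref{lem_prop_binarysearch}) together with equality at finishing (Proposition~\ref{prop_stopequality}), and dispose of simultaneous finishers by grouping items of equal marginal cost. The paper's proof is a terser version of exactly this induction, carried out over the distinct prices of $\vec{p^k}$ rather than over $j$.

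The one step I would not accept as written is your justification of the inclusion ``every item Algorithm~\ref{alg_genprocedurebody} finishes in $(P(j-1),P(j)]$ lies in $S_F(j)$.'' You derive it from Corollary~\ref{corr_diffcostmorebuyers}, but that corollary compares two min-cost flows over the \emph{same} item set and says that adding buyers or demand can only \emph{raise} marginal costs. Here the two configurations differ in both directions: relative to the shrunken active set at the finishing price $p$, the efficient algorithm's computation at $P(j)$ has lower per-buyer demand but strictly more buyers (those already finished earlier in the interval re-enter with positive demand, and they may have edges to still-active items) and more items. So ``its marginal cost on the full active set at $P(j)$ is no larger'' does not follow from the cited corollary; only the \emph{first} finisher of the interval is covered directly by Lemma~\ref{lem_prop_binarysearch}, since only up to that point is the active set unchanged. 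The claim is true, but it needs an extra idea: show that Algorithm~\ref{alg_genprocedurebody}'s frozen state at $P(j)$, restricted to $B_A(P(j-1))\cup S_A(P(j-1))$ with each finished buyer's demand frozen at its finishing price, is itself a min-cost flow for that (componentwise larger) demand --- this uses the stopping equality and the price hierarchy as in Theorem~\ref{thm_ep1envyfree} (a buyer finished at $p'$ faces marginal $\frac{kp'-\lambda^{max}}{k-1}$, items finishing later have larger such marginals, and still-active items have marginal at least $\frac{kP(j)-\lambda^{max}}{k-1}$ by left-continuity, Lemma~\ref{lem_contri}) --- and then apply Lemma~\ref{lem_diffcostmonoton} to this flow and the full-set flow at $P(j)$, whose demand it dominates. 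To be fair, the paper also asserts this inclusion without proof (``we already know any item that meets the stopping criterion in the interval should show up in $S_F(j)$''), but your proposal claims it follows from Corollary~\ref{corr_diffcostmorebuyers}, and as stated it does not.
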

\begin{proof}
We just need to show that every item $t$ has the same price in the efficient algorithm as it does in $\vec{p^k}$. We show this inductively on the set of distinct prices in $\vec{p^k}$. Let $S_p$ be some set of items priced at $p$ in that vector and let $B_p$ be the buyers using these items. We only have to show that in the efficient algorithm, these items have the same price and these buyers are also using only the items in $S_p$. Suppose that $p \in (P(j-1), P(j)]$ for some $j$. We know that $B_p \subseteq B_A(j-1)$ and $S_p \subseteq S_A(j-1)$ as per Lemma~\ref{lem_complementbinarysearch}.

Now as per the binary search algorithm, we would be searching in an interval $(p', P(j)]$ such that for the active price strictly between $p'$ and $p$, no item would meet the stopping criterion. Also note that the buyers in $B_p$ have not been removed at this stage of the binary search because they do not have edges to any items that finished before price $p$. Therefore, we know that the binary search converges up on price $p$ and at this point we have to remove both item $B_p$ and $S_p$, items not in $S_p$ cannot meet the stopping criterion and have to have a marginal cost larger than those in $S_p$. So buyers in $B_p$ must only be using items in $S_p$ because this is a min-cost flow. \end{proof}

\section{Proofs from Section 4}
All our results in this section depend on a parameter $\Delta$ defined as follows

$$\Delta = \frac{\lambda^{max}_0}{\lambda^{min}_0}.$$

In addition, we require a slightly stronger assumption on the cost functions than just convexity. We call a production cost function $C_t(x)$ \emph{doubly convex} if its derivative $c_t(x)$ is also convex with $c_t(0)=0$. Surprisingly, without the doubly convex assumption, we show that this problem admits no good approximation algorithm. Our starting point for this result is still Algorithm~\ref{alg_genprocedurebody}. However, since $\lambda^{max}$ is no longer uniform for all functions, we redefine the stopping condition as follows.

\begin{definition}{\emph{New Stopping Criterion}($p_t, y_t, k$)\\}
\begin{align}
\label{eqn_stopconditionnew}
p_t - c_t(y_t) \geq \frac{1}{k}(\lambda^{min}_0 - c_t(y_t)).
\end{align}
\end{definition}

The above change introduces a new element into the analysis of the algorithm: an item that is inactive may already satisfy the stopping criterion if their initial price $c_t(y^*_t) = p^*_t \geq \lambda^{min}_0$. The following two statements summarize the final prices of various items according to the new stopping condition.
\begin{enumerate}
\item For any item $t$ whose initial price $p^*_t > \lambda^{min}_0$, its final price when the algorithm terminates is also $p^k_t = p^*_t$.

\item For any item $t$, whose initial price $p^*_t \leq \lambda^{min}_0$, its final price when the algorithm terminates still satisfies,
$$p^k_t - c_t(y^k_t) = \frac{1}{k}(\lambda^{min}_0 - c_t(y^k_t)).$$
\end{enumerate}

Algorithm~\ref{alg_genprocedurebody}, therefore still returns an envy-free allocation that is also a min-cost flow for the buyer demand. The analysis from Theorem~\ref{thm_ep1envyfree} can be easily extended to this case by dividing the graph into two components based on whether $p^*_t > \lambda^{min}_0$ or not. As usual denote the solution returned by Algorithm~\ref{alg_genprocedurebody} for $k=e$ as $(\vec{p^e}, \vec{x^e}, \vec{y^e})$. Now consider the definition of the following price vectors $\vec{p}(j)$ for $j=1$ to $j=\log(\Delta)$. Assume w.l.o.g that $\Delta$ is a power of $e$.

\begin{align}
\label{eqn_pricevectors}
\text{(Price Vector)}~~\vec{p}(j): & & p_t(j) = \max(p^e_t, e^{j-1}\lambda^{min}_0). 	
\end{align}

Note that we can think of $\vec{p^e}$ as $\vec{p}(0)$, since $p_t^e\geq e^{-1}\lambda^{min}_0$ due to the stopping condition.
Let $\vec{x}(j)$ be the corresponding best-response buyer demand to $\vec{p}(j)$ and $\vec{y}(j)$ be the envy-free allocation that minimizes the total cost in the space of all envy-free allocations. We later show that $\vec{y}(j)$ is actually a min-cost flow for $\vec{x}(j)$. With this definition, our algorithm becomes simple, see Algorithm \ref{alg_genmhr_bicrit}.

\begin{algorithm}[htbp]
\caption{$(O(log(\Delta), 4)$-Bicriteria approximation algorithm}
\label{alg_genmhr_bicrit}
\begin{algorithmic}[1]
\STATE Let $\pi(0)$ be the profit of $(\vec{p^e}, \vec{x^e}, \vec{y^e})$.
\STATE Let $\pi(j)$ be the profit of the envy-free solution $(\vec{p}(j), \vec{x}(j), \vec{y}(j))$.
\STATE Let $SW(0)$ be the social welfare of $(\vec{p^e}, \vec{x^e}, \vec{y^e})$.
\STATE Find the smallest $j$ such that $\pi(j) \geq \frac{1}{2}(\frac{SW(0)}{4.5(1+log(\Delta))})$ and return the corresponding solution.
\end{algorithmic}
\end{algorithm}

We show our main theorem after proving that for every $j$, $\vec{y}(j)$ is actually a min-cost flow.
\begin{claim}
For all $j$ between $1$ and $log(\Delta)$, $\vec{y}(j)$ is a min-cost flow for the demand $\vec{x}(j)$.
\end{claim}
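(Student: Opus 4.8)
The plan is to show that some \emph{global} min-cost flow for the demand $\vec{x}(j)$ is envy-free with respect to $\vec{p}(j)$. Since $\vec{y}(j)$ is by definition the minimum-cost allocation among \emph{all} envy-free allocations for $\vec{x}(j)$, this gives $C(\vec{y}(j)) \le C(\text{global min-cost flow}) \le C(\vec{y}(j))$, i.e. $\vec{y}(j)$ is itself a min-cost flow for $\vec{x}(j)$.

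First I would dispatch the easy preliminaries. Since $p_t(j)=\max(p^e_t,e^{j-1}\lambda^{min}_0)\ge p^e_t$ for every $t$, we have $\vec{p}(j)\ge\vec{p^e}$, so by Lemma~\ref{lem_diffcostmonoton} the best-response demand satisfies $\vec{x}(j)\le\vec{x^e}$ and, writing $\vec{y}^{\mathrm{mc}}$ for a global min-cost flow of $\vec{x}(j)$, $c_t(y^{\mathrm{mc}}_t)\le c_t(y^e_t)$ for all $t$. Put $\theta:=e^{j-1}\lambda^{min}_0\ge\lambda^{min}_0$ and split the items into $S^{>}=\{t:p^e_t>\theta\}$ (priced $p^e_t$ in $\vec{p}(j)$) and $S^{=}=\{t:p^e_t\le\theta\}$ (priced exactly $\theta$), and the buyers into $B^{>}=\{i:\bar p_i(j)>\theta\}$ and $B^{=}=\{i:\bar p_i(j)=\theta\}$ (no buyer can be cheaper than $\theta$). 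Using the definitions together with the envy-free/min-cost property of $\vec{y^e}$ from Theorem~\ref{thm_ep1envyfree}, I would record: (a) $i\in B^{>}$ forces $S_i\subseteq S^{>}$, so there is no edge between $B^{>}$ and $S^{=}$, and for such $i$ we have $\bar p_i(j)=\bar p^e_i$ and $x_i(j)=x^e_i$; (b) in $\vec{y^e}$ the buyers of $B^{>}$ route only into $S^{>}$ and those of $B^{=}$ only into $S^{=}$, so $\vec{y^e}$ splits into a min-cost flow on $(B^{>},S^{>})$ and one on $(B^{=},S^{=})$, whence $\sum_{t\in S^{>}}y^e_t=\sum_{i\in B^{>}}x^e_i$; (c) each $t\in S^{=}$ has $c_t(y^{\mathrm{mc}}_t)\le c_t(y^e_t)\le p^e_t\le\theta$ (the stopping condition~\eqref{eqn_stopconditionnew} gives $c_t(y^e_t)\le p^e_t$), while each $t\in S^{>}$ with $p^*_t\le\lambda^{min}_0$ satisfies, from the stopping equality $p^e_t=\frac1e\lambda^{min}_0+\frac{e-1}{e}c_t(y^e_t)$ and $p^e_t>\theta$, that $c_t(y^e_t)=\frac{e\,p^e_t-\lambda^{min}_0}{e-1}>\frac{e^j-1}{e-1}\lambda^{min}_0\ge\theta$.

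Given (a)--(c), the envy-free minimum-cost allocation $\vec{y}(j)$ likewise decomposes: buyers of $B^{>}$ route within $S_i\subseteq S^{>}$, buyers of $B^{=}$ route to their cheapest items which (priced $\theta$) lie in $S^{=}$, so $\vec{y}(j)$ equals $\vec{y^e}$ restricted to $(B^{>},S^{>})$ together with the plain min-cost flow of $\vec{x}(j)|_{B^{=}}$ on $(B^{=},S^{=})$ (envy-freeness is vacuous inside $(B^{=},S^{=})$ since those items share a price). It remains to rule out a strictly cheaper \emph{global} flow that sends some $B^{=}$-demand into $S^{>}$. Suppose $\vec{y}^{\mathrm{mc}}$ routes a positive total $\delta>0$ of $B^{=}$-demand into $S^{>}$; since $B^{>}$-buyers have unchanged demand and can route only into $S^{>}$, $\sum_{t\in S^{>}}y^{\mathrm{mc}}_t\ge\sum_{i\in B^{>}}x^e_i+\delta>\sum_{t\in S^{>}}y^e_t$, so some $t''\in S^{>}$ has $y^{\mathrm{mc}}_{t''}>y^e_{t''}$, hence $c_{t''}(y^{\mathrm{mc}}_{t''})\ge c_{t''}(y^e_{t''})$. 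But $\vec{p}(j)\ge\vec{p^e}$ and Lemma~\ref{lem_diffcostmonoton} give the reverse inequality $c_{t''}(y^{\mathrm{mc}}_{t''})\le c_{t''}(y^e_{t''})$; with strictly increasing marginal costs this forces $y^{\mathrm{mc}}_{t''}=y^e_{t''}$, a contradiction (when $c_{t''}$ is flat on the relevant range one instead uses $c_{t''}(y^e_{t''})\ge\theta$ from (c) together with $c_s(y^{\mathrm{mc}}_s)\le\theta$ for the relevant $B^{=}$-buyer's $S^{=}$-item to push the excess back into $S^{=}$ at no extra cost). Thus the decomposed, envy-free $\vec{y}(j)$ attains the global minimum cost.

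The main obstacle is precisely fact (c) for items $t\in S^{>}$ with $p^*_t>\lambda^{min}_0$: these never enter the ascending procedure and keep $p^e_t=p^*_t$, so the stopping equality is unavailable and one cannot directly bound $c_t(y^e_t)$ below by $\theta$. I would handle them via the two-component split already used for Theorem~\ref{thm_ep1envyfree}: any item with $p^*\le\lambda^{min}_0$ finishes at price $\le\lambda^{min}_0<p^*_t$, so a buyer routing to such a high-$p^*_t$ item in $\vec{y^e}$ has \emph{all} of its items of the high type; hence the high-$p^*$ items and the buyers using them form a sub-instance disjoint from the rest on which $\vec{p}(j)$ agrees with $\vec{p^e}$ up to raising some prices to the common floor $\theta$, and the same decomposition argument applied inside that sub-instance closes the gap. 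The remaining steps are routine bookkeeping with Lemma~\ref{lem_diffcostmonoton} and its corollaries.
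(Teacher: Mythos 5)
Your high-level route differs from the paper's: you try to show that a \emph{global} min-cost flow for $\vec{x}(j)$ can be taken to respect the $(B^{>},S^{>})/(B^{=},S^{=})$ decomposition (via a flow-counting/exchange argument against $\vec{y^e}$), whereas the paper simply verifies the min-cost optimality condition $r_i(\vec{y}(j))\le c_t(y_t(j))$ edge by edge: within each block it holds by construction, there are no edges from $B^{>}$ to $S^{=}$, and for a cross edge $(i,t)$ with $i\in B^{=}$, $t\in S^{>}$ one chains $r_i(\vec{y}(j))\le r_i(\vec{y^e})$ (Lemma~\ref{lem_diffcostmonoton} applied to the sub-instance $(B^{=},S^{=})$, where demand only drops) with $r_i(\vec{y^e})\le c_t(y^e_t)=c_t(y_t(j))$, using that $\vec{y^e}$ is a global min-cost flow and that $S^{>}$-items keep their $k=e$ allocation. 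That route needs no lower bound on marginal costs of $S^{>}$-items and no case split on strict convexity.

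The genuine gap in your version sits exactly at its crux, fact (c) for $S^{>}$. Your derivation of $c_t(y^e_t)>\theta$ uses the stopping equality for items with $p^*_t\le\lambda^{min}_0$; but every such item finishes at price $p^e_t\le\lambda^{min}_0\le e^{j-1}\lambda^{min}_0=\theta$ (since $c_t(y^e_t)\le p^*_t\le\lambda^{min}_0$ in the stopping equality), so for $j\ge 1$ \emph{no} item of that type lies in $S^{>}$: your computed bound covers an empty set, and all of $S^{>}$ consists of the ``obstacle'' items ($p^*_t>\lambda^{min}_0$) that you defer to a sketch. That sketch, as written, does not close the gap: the claim that the high-$p^*$ items and their users form a sub-instance ``disjoint from the rest'' is false in the direction that matters --- buyers of $B^{=}$ may well have edges \emph{into} those items, and these are precisely the cross edges your exchange argument must exclude. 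The fact that would actually rescue (c) is that such items never enter the ascending procedure, so $y^e_t=y^*_t$ and $c_t(y^e_t)=c_t(y^*_t)=p^*_t>\theta$; you never state this. Finally, even granting (c), your treatment of the non-strictly-convex case (``push the excess back into $S^{=}$ at no extra cost'') is only gestured at: the item $t''$ with $y^{mc}_{t''}>y^e_{t''}$ need not be the one receiving the $B^{=}$-flow, so a genuine augmenting-path/rerouting argument over the min-cost flow is needed there, not a single-edge pushback. I would recommend abandoning the exchange argument and instead verifying the optimality condition directly on $\vec{y}(j)$ as the paper does; your preliminary facts (a)--(b) plus Lemma~\ref{lem_diffcostmonoton} already give everything required for that.
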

\begin{proof}
For a given $j$, look at the prices. Items are priced at either $e^{j-1}\lambda^{min}_0$ or at $p^e_t$ if $p^e_t > e^{j-1}\lambda^{min}_0$. The solution is envy-free by definition. So divide the buyers and items as follows: let $S^H$ be the set of items with price higher than $e^{j-1}\lambda^{min}_0$ and let $B^H$ be the buyers using these items. Define $S^L$ as the items with price $e^{j-1}\lambda^{min}_0$ and $B^L$ as the corresponding buyers using these items. Now, for a solution to be a min-cost flow, all buyers should be sending flow on the items with the smallest marginal cost available to them.

By definition, we have two min-cost sub-flows: 1) buyers in $B^L$ are using the cheapest possible allocation using only the items in $S^L$, by definition; 2) the same is true for $B^H$ and $S^H$, this is because for these entities, both the prices and the allocation are exactly the same as in $\vec{p^e}, \vec{x^e}, \vec{y^e}$. So as was the case before, we only need to consider cross-edges. Moreover, since the solution is envy-free, there can be no edges going from buyers in $B^H$ to items in $S^L$. What about the reverse case, can there be a buyer $i$ in $B^L$ and an item $t$ in $S^H$ such that $c_t(y_t(j)) = c_t(y^e_t) < r_i(\vec{y}(j))$?

Consider $S^L$ and $B^L$, but in our first solution $\vec{y^e}$. Recall that these items must have a price smaller than or equal to $e^{j-1}\lambda^{min}_0$ in $\vec{p^e}$. Clearly, since our solution is both envy-free and a min-cost flow, it is clear that buyers in $B^L$ can only receive allocations from $S^L$ in our solution. Moreover, no buyer from $B^H$ has an edge to items in $S^L$. Finally, by definition, the buyers in $B^L$ have a smaller demand in $\vec{x}(j)$ as compared to $\vec{x^e}$. Therefore, applying Lemma~\ref{lem_diffcostmonoton} for the reduced sub-instance $(B^L, S^L)$, we get that $r_i(\vec{y}(j)) \leq r_i(\vec{y^e})$ for any buyer $i$ in $B^L$. But since $\vec{y^e}$ is full min-cost flow, it is true $r_i(\vec{y}(j)) \leq c_t(y^e_t)$ for any $t$ that $i$ has an edge to including items in $S^H$. This completes the proof. $\blacksquare$
\end{proof}

\begin{thm_app}{thm_generalmhr}
For any instance with MHR Demand and Doubly Convex Costs, Algorithm~\ref{alg_genmhr_bicrit} returns an envy-free solution which has a $O(\log \Delta)$-approximation to the optimal revenue and which also guarantees $\frac{1}{4}^{th}$ of the optimum welfare.
\end{thm_app}

\begin{proof}
As with our proof of Theorem~\ref{thm:1.88approx}, the approximation factor depends very crucially on non-trivial lower bounds we show for the optimal prices $\vec{p^{opt}}$.

\begin{lemma}
\label{clm_profitmaxprices2}
The price of every item $t$ in the profit-maximizing solution $\vec{p^{opt}}$ is at least its price in $\vec{p^e}$.
\end{lemma}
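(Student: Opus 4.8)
The plan is to run the contradiction argument of Lemma~\ref{lem_lowerboundmain} almost verbatim, with $\lambda^{min}_0$ in place of $\lambda^{max}$, after first disposing of the items that never become active. Call $t$ a \emph{high} item if $p^*_t > \lambda^{min}_0$ and a \emph{low} item otherwise. A high item already satisfies the new stopping criterion (Equation~\ref{eqn_stopconditionnew}) at the initial allocation, so Algorithm~\ref{alg_genprocedurebody} never raises its price and $p^e_t = p^*_t$; together with the weak lower bound of Lemma~\ref{lem_optisgood} this gives $p^{opt}_t \ge p^*_t = p^e_t$, so the lemma holds for such $t$. For a low item the new stopping criterion holds at equality at termination (as noted above), so $p^e_t = \frac1e\lambda^{min}_0 + (1-\frac1e)c_t(y^e_t)$; since Algorithm~\ref{alg_genprocedurebody} starts at $\vec{p^*}$ and only increases prices we have $\vec{p^e}\ge\vec{p^*}$, hence $c_t(y^e_t)\le c_t(y^*_t)=p^*_t\le\lambda^{min}_0$ by Lemma~\ref{lem_diffcostmonoton}, and therefore $p^e_t\le\lambda^{min}_0$. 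This last inequality will guarantee that the putative violating price is genuinely below $\lambda^{min}_0$.

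Suppose, for contradiction, that some item has $p^{opt}_t<p^e_t$; by the above any such item is low. Let $S_{min}$ be the set of such items of smallest $\vec{p^{opt}}$-price $p_{min}$, so $p_{min}<\lambda^{min}_0$. As in Lemma~\ref{lem_lowerboundmain}, Lemma~\ref{lem_mixedpricediffcost} produces some $t_{min}\in S_{min}$ with $c_{t_{min}}(y^e_{t_{min}})\le c_{t_{min}}(y^{opt}_{t_{min}})$, and I build the digraph $G'$ with an arc from each item to the buyers receiving it in $\vec{y^{opt}}$ and an arc from each buyer to the items whose $\vec{p^{opt}}$-price equals its minimum available price. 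Let $S^+_{min}$ and $B^+_{min}$ be the items and buyers reachable from $t_{min}$; then the same three observations hold: every $t\in S^+_{min}$ is priced $p_{min}$ in $\vec{p^{opt}}$, every buyer of $B^+_{min}$ is served only by $S^+_{min}$ and has no edge to an item outside $S^+_{min}$ priced at most $p_{min}$, and $c_t(y^{opt}_t)\ge c_{t_{min}}(y^{opt}_{t_{min}})$ for all $t\in S^+_{min}$. Since all items of $S^+_{min}$ carry the same price, rerouting flow among them is revenue-neutral and weakly cost-decreasing, so optimality of OPT forces the OPT sub-allocation on $(B^+_{min},S^+_{min})$ to be a min-cost flow for its demand. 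Because $p^e_{t_{min}}>p_{min}$, the stopping criterion fails at $t_{min}$'s OPT price/allocation, and since $c\mapsto(p_{min}-c)-\frac1e(\lambda^{min}_0-c)$ is decreasing (cf. Proposition~\ref{lem_ep1stopcrithier}) it fails for every $t\in S^+_{min}$:
$$p_{min}-c_t(y^{opt}_t)<\tfrac1e\bigl(\lambda^{min}_0-c_t(y^{opt}_t)\bigr).$$

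Next I would uniformly raise the prices of exactly the items in $S^+_{min}$ to a common $p^+\in(p_{min},\hat p)$, where $\hat p$ is the smallest price in $\vec{p^e}\cup\vec{p^{opt}}$ strictly above $p_{min}$, recomputing the min-cost flow of $B^+_{min}$ inside $S^+_{min}$ and leaving all other buyers and items fixed; $p^+$ is chosen exactly as in Lemma~\ref{lem_lowerboundmain} (the first price in $(p_{min},\hat p)$ where the smallest active marginal cost $\tilde c$ makes Equation~\ref{eqn_stopconditionnew} hold with equality, or the midpoint of the interval if no such price exists). The same telescoping estimate as there — using $\vec{p^+}\ge\vec{p^{opt}}$ and Corollary~\ref{corr_flowmagnitudemincost} to bound the cost increase, then pushing the allocation back to the buyers — reduces the inequality $\pi^+>\pi^*$ to showing $(\lambda_i(x^+_i)-\tilde c(p^+))x^+_i>(\lambda_i(x^{opt}_i)-\tilde c(p^+))x^{opt}_i$ for each $i\in B^+_{min}$ (with $x^{opt}_i>x^+_i$). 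Setting $f_i(x)=\lambda_i(x)-\tilde c(p^+)$, which is MHR, the choice of $p^+$ gives $p^+-\tilde c(p^+)\le\frac1e(\lambda^{min}_0-\tilde c(p^+))$, whence, crucially using $\lambda_i(0)\ge\lambda^{min}_0$,
$$f_i(0)=\lambda_i(0)-\tilde c(p^+)\ \ge\ \lambda^{min}_0-\tilde c(p^+)\ \ge\ e\,(p^+-\tilde c(p^+))=e\,f_i(x^+_i).$$
Lemma~\ref{lem_subclaim_mhr} then gives $|f_i'(x^+_i)|/f_i(x^+_i)\ge1/x^+_i$, and Lemma~\ref{mhr_profitchanges} with $x_1=x^+_i<x_2=x^{opt}_i$ gives $f_i(x^+_i)x^+_i>f_i(x^{opt}_i)x^{opt}_i$, the desired inequality; summing over $B^+_{min}$ contradicts the optimality of $\vec{p^{opt}}$.

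The only substantive deviation from Lemma~\ref{lem_lowerboundmain} is that Equation~\ref{eqn_stopconditionnew} references $\lambda^{min}_0$ rather than the per-buyer peak $\lambda_i(0)$, and the one place this could bite is the MHR hypothesis $f_i(0)\ge e\,f_i(x^+_i)$ above; it goes through precisely because $\lambda^{min}_0\le\lambda_i(0)$, so shrinking the target from the peak to $\lambda^{min}_0$ only makes the hypothesis of Lemma~\ref{lem_subclaim_mhr} easier to meet. I expect the bulk of the remaining work to be routine bookkeeping: re-deriving the three reachability observations and the ``sub-allocation is min-cost'' claim with the high/low split present, and verifying $p^e_t\le\lambda^{min}_0$ for low items so that $p_{min}<\lambda^{min}_0$ and the stopping criterion is genuinely violated at $p_{min}$.
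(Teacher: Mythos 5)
Your proposal is correct and follows essentially the same route as the paper: split items by whether $p^*_t$ exceeds $\lambda^{min}_0$ (the paper's type A/B), dispose of the high items via $p^e_t=p^*_t$ and Lemma~\ref{lem_optisgood}, and rerun the contradiction argument of Lemma~\ref{lem_lowerboundmain} with the new stopping criterion, noting as the paper does that the only place the change matters is the MHR hypothesis, which survives because $\lambda_i(0)\ge\lambda^{min}_0$. Your explicit verification that $p^e_t\le\lambda^{min}_0$ for low items (so $S^+_{min}$ contains no high item) matches the paper's corresponding observation.
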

The proof is somewhat similar to that of Lemma~\ref{lem_lowerboundmain}, so we will only sketch the relevant details.
\begin{proof}
First, as per the new stopping condition (Equation~\ref{eqn_stopconditionnew}), there are two types of items in the solution. Those whose initial prices are larger than $\lambda^{min}_0$ (type A) and items whose prices are smaller than or equal to $\lambda^{min}_0$ (type B). Recall that for every item in type A, its price $p^e_t = p^*_t$. But we know from Lemma~\ref{lem_optisgood} that in the revenue-maximizing solution, the price of any item is at least its price in $\vec{p^*}$. This means that for all type A items, $p^{opt}_t \geq p^e_t$. So, we only need to worry about the type B items.

Assume by contradiction that in the optimal solution some items (from type B) have a price strictly smaller than their price in $\vec{p^e}$. Let $S_{min}$ be the subset of such items with the smallest price (call it $p_{min}$). As per Lemma~\ref{lem_mixedpricediffcost}, $\exists$ $t_{min} \in S_{min}$, such that $c_t(y^e_t) \leq c_t(y^{opt}_t)$. Construct the graph $G'$ as in the proof of Lemma~\ref{lem_lowerboundmain} and define the sets $S^+_{min}$ and $B^+_{min}$ accordingly. The only additional observation we need for this case is that $S^+_{min}$ cannot include any type $A$ item since $p_{min} < p^e_{t_{min}} \leq \lambda^{min}_0$, 

The rest of the proof is extremely similar to what was shown in Lemma~\ref{lem_lowerboundmain}. Since $p^e_{t_{min}} > p_{min}$, $t_{min}$ cannot satisfy the new stopping criterion ($k=e$) based on its price and allocation at $OPT$. Moreover, for every other $t \in S^+_{min}$, its price is $p_{min}$ and marginal is at least as much as that of $t_{min}$. Therefore,
\begin{equation}
\label{eqn_violatestop}
p_{min} - c_{t}(y^{opt}_{t}) < \frac{1}{e} (\lambda^{min}_0 - c_{t}(y^{opt}_t))
\end{equation}

Consider increasing the price of only the items in $S^+_{min}$ and recomputing the min-cost flow for the buyers in $B^+_{min}$ for any $p$. We define the quantities $p^+$ and $\tilde{c}(p)$ exactly as mentioned in the proof of Lemma~\ref{lem_lowerboundmain}. Let $\vec{p^+}$ be the full price vector when items in $S^+_{min}$ have a price $p^+$ and other items retain their price in $OPT$. Also, define the corresponding best-response demand $\vec{x^+}$ and envy-free allocation $\vec{y^+}$.

Our main claim is the following: the profit at $(\vec{p^+}, \vec{x^+}, \vec{y^+})$ is larger than the optimal profit which is a contradiction. The proof proceeds in the exact same manner as that of Lemma~\ref{lem_lowerboundmain}. By definition, for all $p$ smaller than $p^+$, no item meets the stopping criterion in Equation~\ref{eqn_stopconditionnew}. We remark that if some item meets the stopping condition above at price $p^+$ at all, then it must be the item(s) whose marginal cost equals $\tilde{c}(p^+)$ (See Propostion~\ref{lem_ep1stopcrithier}).

Once again, we can bound the difference in profits as with Lemma~\ref{lem_lowerboundmain}, and make the claim that for all $i$, $(\lambda_i(x^+_i) - \tilde{c}(p^+))x^+_i - (\lambda_i(x^{opt}_i) - \tilde{c}(p^+))x^{opt}_i > 0$. Define $f_i(x) = \lambda_i(x) - \tilde{c}(p^+)$. The only property required to show the claim that we make is $f(0) \geq  ef_i(x^+_i)$. But, from the stopping criterion, we know that

$$p^+ - \tilde{c}(p^+) \leq \frac{1}{e}(\lambda^{min}_0 - \tilde{c}(p^+)) \leq \frac{1}{e}(\lambda_i(0) - \tilde{c}(p^+)).$$

Indeed, the above inequalities are true because for all $i$, $\lambda_i(0) \geq \lambda^{min}_0$. The rest of the proof follows. $\blacksquare$ \end{proof}

We are now ready to prove our main theorem, that the solution returned by Algorithm~\ref{alg_genmhr_bicrit} gives us a good fraction of both revenue and welfare. We know that in $\vec{p^{opt}}$, every item is priced between its price in $\vec{p^e}$ and $\lambda^{max}_0$, i.e., the latter being the largest valuation any infinitesimal buyer may hold for the items. Next, define $\vec{z^{opt}}$ to be the min-cost flow for the buyer demand $\vec{x^{opt}}$ since $\vec{y^{opt}}$ is envy-free but not necessarily cost minimizing. 

Finally every $j$, define $SW(j)$ to be the social welfare of the solution $(\vec{p}(j), \vec{x}(j), \vec{y}(j))$.  Let $\pi^*$ be the optimal profit and $SW^*$ be the welfare of the social welfare maximizing solution. We show our result using two small lemmas, which we state first and then prove after showing how this leads to the main result.
\begin{lemma}
\label{sublem_boundonsw0}
$SW(0) \geq \pi^*$ (the optimal profit) and $SW(0) \geq  \frac{1}{2}SW^*.$
\end{lemma}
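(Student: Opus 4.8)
Both inequalities are obtained by replaying the welfare bookkeeping behind Theorem~\ref{thm:welfare} with the new stopping criterion \eqref{eqn_stopconditionnew} in place of the old one; the only genuinely new ingredient is the inequality $\lambda_i(0)\ge\lambda^{min}_0$, which is exactly what keeps Lemma~\ref{lem_subclaim_mhr} applicable. Two preliminary facts will be used throughout. First, $\vec{p^*}\le\vec{p^e}\le\vec{p^{opt}}$: the left inequality holds because Algorithm~\ref{alg_genprocedurebody} only raises prices above $\vec{p^*}$, and the right one is Lemma~\ref{clm_profitmaxprices2}; hence Lemma~\ref{lem_diffcostmonoton} gives $\vec{x^{opt}}\le\vec{x^e}\le\vec{x^*}$ componentwise, $r_i(\vec{y^e})\le r_i(\vec{y^*})$ for all $i$, and $r_i(\vec{z^{opt}})\le r_i(\vec{y^e})$ for the min-cost flow $\vec{z^{opt}}$ of $\vec{x^{opt}}$. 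Second, for every buyer $i$ we have $\bar{p}^e_i=\lambda_i(x^e_i)\ge r_i(\vec{y^e})$, since $\bar{p}^e_i\ge\bar{p}^*_i=r_i(\vec{y^*})\ge r_i(\vec{y^e})$, using $\vec{p^e}\ge\vec{p^*}$ and the fact that at the welfare optimum the cheapest-price and cheapest-marginal items of $i$ coincide.

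\emph{The bound $SW(0)\ge\pi^*$.} First, $SW(\vec{x^{opt}},\vec{z^{opt}})\ge\pi^*$: indeed $\int_0^{x^{opt}_i}\lambda_i(x)\,dx\ge x^{opt}_i\lambda_i(x^{opt}_i)=x^{opt}_i\bar{p}^{opt}_i$ because $\lambda_i$ is non-increasing, and $C(\vec{z^{opt}})\le C(\vec{y^{opt}})$, so $SW(\vec{x^{opt}},\vec{z^{opt}})\ge\sum_i\bar{p}^{opt}_i x^{opt}_i-C(\vec{y^{opt}})=\pi^*$. It therefore suffices to show $SW(0)=SW(\vec{x^e},\vec{y^e})\ge SW(\vec{x^{opt}},\vec{z^{opt}})$. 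Write the difference as $\sum_i\int_{x^{opt}_i}^{x^e_i}\lambda_i(x)\,dx-\big(C(\vec{y^e})-C(\vec{z^{opt}})\big)$. By convexity of the $C_t$ together with a flow-rearrangement (as in the proof of Lemma~\ref{lem_flowpartition}), $C(\vec{y^e})-C(\vec{z^{opt}})\le\sum_t c_t(y^e_t)(y^e_t-z^{opt}_t)\le\sum_i r_i(\vec{y^e})(x^e_i-x^{opt}_i)$, so the difference is at least $\sum_i\int_{x^{opt}_i}^{x^e_i}\big(\lambda_i(x)-r_i(\vec{y^e})\big)\,dx$, which is non-negative since on $[x^{opt}_i,x^e_i]$ we have $\lambda_i(x)\ge\lambda_i(x^e_i)=\bar{p}^e_i\ge r_i(\vec{y^e})$.

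\emph{The bound $SW(0)\ge\tfrac12 SW^*$.} Write $SW^*=SW(0)+\mathrm{Loss}$ with $\mathrm{Loss}=\sum_i\int_{x^e_i}^{x^*_i}\lambda_i(x)\,dx-\big(C(\vec{y^*})-C(\vec{y^e})\big)$, and bound $C(\vec{y^*})-C(\vec{y^e})\ge\sum_i r_i(\vec{y^e})(x^*_i-x^e_i)$ by Lemma~\ref{lem_costdiff}, so that $\mathrm{Loss}\le\sum_i\int_{x^e_i}^{x^*_i}\big(\lambda_i(x)-r_i(\vec{y^e})\big)\,dx$. A buyer whose cheapest item has initial price above $\lambda^{min}_0$ is never activated by Algorithm~\ref{alg_genprocedurebody} under \eqref{eqn_stopconditionnew}, hence has $x^e_i=x^*_i$ and contributes nothing; for every other buyer the stopping condition holds at equality, so $\lambda_i(x^e_i)-r_i(\vec{y^e})=\tfrac1e\big(\lambda^{min}_0-r_i(\vec{y^e})\big)\le\tfrac1e\big(\lambda_i(0)-r_i(\vec{y^e})\big)$, i.e.\ $f_i(0)\ge e\,f_i(x^e_i)$ for the MHR function $f_i(x)=\lambda_i(x)-r_i(\vec{y^e})$. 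Lemma~\ref{lem_subclaim_mhr} then gives $f_i(x^e_i)/|f_i'(x^e_i)|\le x^e_i$, and monotonicity of the hazard rate propagates $f_i(x)/|f_i'(x)|\le x^e_i$ to all $x\ge x^e_i$, whence $\int_{x^e_i}^{x^*_i}f_i(x)\,dx\le x^e_i\int_{x^e_i}^{x^*_i}|\lambda_i'(x)|\,dx=x^e_i\big(\lambda_i(x^e_i)-\lambda_i(x^*_i)\big)\le x^e_i\big(\lambda_i(x^e_i)-r_i(\vec{y^e})\big)$, using $\lambda_i(x^*_i)=r_i(\vec{y^*})\ge r_i(\vec{y^e})$. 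Summing over $i$ and using convexity ($\sum_t c_t(y^e_t)y^e_t\ge\sum_t C_t(y^e_t)$) yields $\mathrm{Loss}\le\sum_i\bar{p}^e_i x^e_i-\sum_t C_t(y^e_t)=\pi(0)\le SW(0)$, so $SW^*\le 2\,SW(0)$.

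\emph{Main obstacle.} The MHR manipulations in the second bound are essentially verbatim from Theorem~\ref{thm:welfare}, and the cost estimates are routine convexity/flow-rearrangement arguments; the one point deserving real care is the bookkeeping for the ``type A'' items (those with initial price above $\lambda^{min}_0$). One must verify, directly from the dynamics of Algorithm~\ref{alg_genprocedurebody} under the modified criterion, that buyers served only by such items remain inactive for the entire run, so that they retain their welfare-optimal demand $x^e_i=x^*_i$ and drop out of the welfare-loss sum without ever needing a stopping-condition equality; this is also implicitly what makes the extension of Theorem~\ref{thm_ep1envyfree} (envy-freeness and min-cost-flow structure of $(\vec{p^e},\vec{x^e},\vec{y^e})$) go through in this setting.
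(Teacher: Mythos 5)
Your proposal is correct and follows essentially the same route as the paper: part one compares $SW(0)$ to the welfare of the profit maximizer evaluated at the min-cost flow $\vec{z^{opt}}$ and kills the cost gap via convexity and flow rearrangement (the paper packages this step through a dummy-buyer application of Lemma~\ref{lem_flowpartition}, but the substance is identical), and part two replays the Theorem~\ref{thm:welfare} bookkeeping, re-establishing $f_i(0)\geq e f_i(x^e_i)$ from the new stopping criterion via $\lambda_i(0)\geq\lambda^{min}_0$ and handling the items priced above $\lambda^{min}_0$ separately, exactly as the paper does (it disposes of those buyers by noting $f_i(x^e_i)=0$, you by noting $x^e_i=x^*_i$; both make their contribution vanish).
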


\begin{lemma}
\label{sublem_welfaredifference}
For all $j$, $SW(j) - SW(j+1) \leq 4.5\cdot\pi(j)$.
\end{lemma}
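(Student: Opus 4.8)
The plan is to follow the pattern of the proof of Theorem~\ref{thm:welfare}: rewrite the welfare gap between two consecutive levels as a sum of per-buyer integrals against a lower bound on the cost drop, and then match it against a lower bound on $\pi(j)$ that exploits the doubly-convex cost assumption.

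First I would record some monotonicity. Since $\vec{p}(j+1)\geq\vec{p}(j)$ componentwise, the best-response demands satisfy $\vec{x}(j+1)\leq\vec{x}(j)$, and since both $\vec{y}(j)$ and $\vec{y}(j+1)$ are min-cost flows, $y_t(j+1)\leq y_t(j)$ for every $t$ (Lemma~\ref{lem_diffcostmonoton}). Writing
\[
SW(j)-SW(j+1)=\sum_{i\in B}\int_{x_i(j+1)}^{x_i(j)}\lambda_i(x)\,dx\;-\;\bigl(C(\vec{y}(j))-C(\vec{y}(j+1))\bigr),
\]
I would lower-bound the cost drop by the same convexity-plus-flow-rearrangement argument used for Lemma~\ref{lem_costdiff}, getting $C(\vec{y}(j))-C(\vec{y}(j+1))\geq\sum_i r_i(\vec{y}(j+1))\,(x_i(j)-x_i(j+1))$, hence
\[
SW(j)-SW(j+1)\;\leq\;\sum_{i\in B}\int_{x_i(j+1)}^{x_i(j)}\bigl(\lambda_i(x)-r_i(\vec{y}(j+1))\bigr)\,dx.
\]

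Next comes the key use of double convexity: if $c_t(0)=0$ and $c_t$ is convex, then $C_t(y)\leq\tfrac12\,y\,c_t(y)$, so since $\vec{y}(j)$ is a min-cost flow, $C(\vec{y}(j))\leq\tfrac12\sum_i r_i(\vec{y}(j))\,x_i(j)$; combined with $r_i(\vec{y}(j))\leq\bar{p}_i(j)$ (which holds because $p_t(j)\geq p^e_t\geq c_t(y^e_t)\geq c_t(y_t(j))$), this gives $\pi(j)\geq\tfrac12\sum_i\bar{p}_i(j)\,x_i(j)$ — i.e.\ $\pi(j)$ already recovers at least half the revenue at level $j$. It therefore suffices to prove the per-buyer bound $\int_{x_i(j+1)}^{x_i(j)}(\lambda_i(x)-r_i(\vec{y}(j+1)))\,dx\leq 2.25\,\bar{p}_i(j)\,x_i(j)$. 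Only buyers with $\bar{p}^e_i<e^{j}\lambda^{min}_0$ contribute (otherwise $\bar{p}_i(j)=\bar{p}_i(j+1)$ and $x_i(j)=x_i(j+1)$); for such a buyer $\bar{p}_i(j+1)=e^{j}\lambda^{min}_0$ and $\bar{p}_i(j)=\max(\bar{p}^e_i,e^{j-1}\lambda^{min}_0)$, so the price ratio $\gamma:=\bar{p}_i(j+1)/\bar{p}_i(j)\in(1,e]$. Using log-concavity of $\lambda_i$ — comparing the slope of $\log\lambda_i$ on $[0,x_i(j+1)]$ with its slope on $[x_i(j+1),x_i(j)]$, the mechanism underlying Lemmas~\ref{lem_subclaim_mhr} and \ref{lem_mhrxchange} — one gets $x_i(j)-x_i(j+1)\leq x_i(j)\,\tfrac{\log\gamma}{\log(\lambda^{max}_i/\bar{p}_i(j))}$ when $x_i(j+1)>0$, while if $x_i(j+1)=0$ one uses $\int_0^{x_i(j)}\lambda_i(x)\,dx\leq x_i(j)\lambda^{max}_i$ together with the bound on $\lambda^{max}_i/\bar{p}_i(j)$ coming from $\bar{p}_i(j)\geq e^{j-1}\lambda^{min}_0$. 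In each case $\int_{x_i(j+1)}^{x_i(j)}(\lambda_i(x)-r_i(\vec{y}(j+1)))\,dx\leq(x_i(j)-x_i(j+1))\,\bar{p}_i(j+1)$ is bounded by $\bar{p}_i(j)\,x_i(j)$ times a constant depending only on $\gamma$ (and the ratio $\lambda^{max}_i/\bar{p}_i(j)$), and optimizing this constant over the feasible range, together with the half-revenue bound on $\pi(j)$, produces the factor $4.5$. Summing over all contributing buyers finishes the proof.

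The main obstacle is precisely this per-buyer MHR estimate: bounding how far a buyer's demand can collapse when her price rises by a factor as large as $e$, and carrying out the bookkeeping sharply enough that the welfare lost is absorbed by the half-revenue lower bound on $\pi(j)$ with a clean constant. Note that the doubly-convex hypothesis ($c_t(0)=0$) is exactly what is used here — it is what upgrades ``profit $\geq$ revenue $-$ cost'' into ``profit $\geq$ half the revenue'' — and, consistent with Proposition~\ref{clm_complexitygeneralmhr}, no statement of this flavor can survive its removal.
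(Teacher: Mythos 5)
Your preliminary steps (demand/allocation monotonicity, the cost-drop lower bound in the spirit of Lemma~\ref{lem_costdiff}, and the half-revenue bound $\pi(j)\geq\tfrac12\sum_i\bar{p}_i(j)x_i(j)$ from double convexity) are fine, but the plan then hinges on the per-buyer estimate $\int_{x_i(j+1)}^{x_i(j)}(\lambda_i(x)-r_i(\vec{y}(j+1)))\,dx\leq 2.25\,\bar{p}_i(j)x_i(j)$, and this is false. Take zero production costs (doubly convex) and a buyer whose MHR inverse demand is essentially flat at $e\,\bar{p}_i(j)$ on $[0,x_i(j))$ and then drops steeply to $\bar{p}_i(j)$ at $x_i(j)$ (the hazard rate is non-decreasing, so this is admissible); when the price rises to $\bar{p}_i(j+1)=e^{j}\lambda^{min}_0\approx e\,\bar{p}_i(j)$ the demand collapses to nearly $0$, $r_i(\vec{y}(j+1))=0$, and the integral approaches $e\,\bar{p}_i(j)x_i(j)\approx 2.72\,\bar{p}_i(j)x_i(j)>2.25\,\bar{p}_i(j)x_i(j)$. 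The MHR machinery cannot rescue this: precisely in the collapse case $\bar{p}_i(j+1)\geq\lambda_i(0)$ log-concavity gives no control on $x_i(j)-x_i(j+1)$ relative to $x_i(j)$ (and the specific inequality you write, with $\bar{p}_i(j)$ inside the logarithm and $x_i(j)$ as prefactor, does not follow from MHR in general). More structurally, decoupling the two worst cases — maximal demand collapse on the welfare side, and $C(\vec{y}(j))=\tfrac12\sum_i\bar{p}_i(j)x_i(j)$ on the profit side — cannot yield $4.5$: any bound of this decoupled per-buyer type tops out around $2e\approx 5.44$.

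The paper's argument is simpler and uses no MHR property in this lemma at all. Bound $\lambda_i(x)\leq\lambda_i(x_i(j+1))=\bar{p}_i(j+1)$ on the integration interval, and then add and subtract so that the welfare drop equals $\sum_i\bar{p}_i(j+1)x_i(j)-C(\vec{y}(j))$ minus the quantity $\sum_i\bar{p}_i(j+1)x_i(j+1)-C(\vec{y}(j+1))$, which is nonnegative (prices dominate marginal costs); hence $SW(j)-SW(j+1)\leq\sum_i\bar{p}_i(j+1)x_i(j)-C(\vec{y}(j))\leq e\sum_i\bar{p}_i(j)x_i(j)-C(\vec{y}(j))$, using only the construction of the price ladder ($\bar{p}_i(j+1)\leq e\,\bar{p}_i(j)$). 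Writing $R=\sum_i\bar{p}_i(j)x_i(j)$ and $C=C(\vec{y}(j))$, double convexity gives $C\leq\tfrac12 R$, and since $\pi(j)=R-C$ the ratio $(eR-C)/(R-C)$ is maximized at $C=\tfrac12 R$, where it equals $2e-1\leq 4.5$. The idea your decomposition loses is that the same cost term $C(\vec{y}(j))$ is subtracted in both the numerator and the denominator; keeping that coupling is what brings the constant from $2e$ down to $2e-1$, and it simultaneously removes any need to bound how far a buyer's demand can fall under a factor-$e$ price increase.
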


We show how these lemmas lead to the main theorem and then prove the actual lemmas. First, we show that Algorithm~\ref{alg_genmhr_bicrit} must return at least one such solution which satisfies the desired lower bound on the profit.
Summing up Lemma~\ref{sublem_welfaredifference} from $j=0$ to $j=1+\log(\Delta)$, we get that $SW(0) \leq 4.5 \sum_j \pi(j)$. If the algorithm does not return even one such $j$, then it means that for every $j$,
\begin{equation}
\label{eqn_failedj}
\pi(j) < \frac{1}{2 \times 4.5} \frac{SW(0)}{1+\log(\Delta)}.
\end{equation}

Summing up, we get $SW(0) \leq \frac{1}{2}SW(0)$, which is not true since then $\pi(0)$ would be returned by the algorithm. Thus, the algorithm must return some solution. Suppose that the index $j$ returned by the algorithm is $j^*$. Then, by definition every $j < j^*$ must satisfy Equation(15).

The revenue bound in the bicriteria result is trivial to see because $SW(0) \geq \pi^*$ and our solution satisfies

$$\pi(j^*) \geq \frac{1}{2} \frac{SW(0)}{4.5(1+\log(\Delta))}.$$

Now, consider the following quantity,
\begin{align*}
SW(0) - SW(j^*) &=& \sum_{j = 1}^{j^*} SW(j-1) - SW(j) & \\
&\leq& \sum_{j=0}^{j^*-1} 4.5\cdot \pi(j) && \text{(Lemma~\ref{sublem_welfaredifference}))} \\
&<& 4.5 \sum_{j=0}^{j^*-1} \frac{1}{2} \frac{SW(0)}{4.5(1+\log(\Delta))} & \\
& \leq& \frac{1}{2} SW(0) && \text{($j^* \leq 1+log(\Delta)$)}.
\end{align*}

So, $SW(0) - SW(j^*) \leq \frac{1}{2}SW(0)$, which implies that our solution's social welfare $SW(j^*)$ is at least half of $SW(0)$ which is one-fourth of the optimal welfare by Lemma~\ref{sublem_boundonsw0}. We now prove the small lemmas. \\
\textbf{(Proof of Lemma~\ref{sublem_boundonsw0})}\\
We know that in the revenue-maximizing solution, every buyer's demand is smaller than in $\vec{x^e}$. Also recall that $\vec{z^{opt}}$ is the min-cost flow for the demand $\vec{x^{opt}}$. Let $SW(\vec{x^{opt}}, \vec{y^{opt}})$ be the welfare of the profit maximizing solution. It is not hard to see that $SW(\vec{x^{opt}}, \vec{y^{opt}}) \geq \pi^*$, because the value of every infinitesimal buyer is at least the price that she is paying. The costs are the same in both cases. Next, recall that,

$$SW(\vec{x^{opt}}, \vec{y^{opt}}) \leq \sum_{i} \int_{x=0}^{x^{opt}_i}\lambda_i(x)dx - C(\vec{z^{opt}}).$$

We can write $SW(0)$ as follows (recall that $\lambda_i$ is decreasing as $x$ increases and so the minimum value in the interval $x^{opt}_i$ to $x^e_i$ is $\lambda_i(x^e_i)$, which in turn is the payment by the buyer).
\begin{align*}
SW(0) = & \sum_i \int_{x=0}^{x^e_i}\lambda_i(x)dx - C(\vec{y^e})\\
\geq & SW(\vec{x^{opt}}, \vec{y^{opt}}) +  \sum_i \int_{x=x^{opt}_i}^{x^e_i}\lambda_i(x)dx - (C(\vec{y^e}) - C(\vec{z^{opt}}))\\
\geq & SW(\vec{x^{opt}}, \vec{y^{opt}}) + \sum_i \lambda_i(x^{e}_i)(x^e_i - x^{opt}_i) - (C(\vec{y^e}) - C(\vec{z^{opt}}))\\
= & SW(\vec{x^{opt}}, \vec{y^{opt}}) + \sum_i \bar{p}^e_i (x^e_i - x^{opt}_i) - (C(\vec{y^e}) - C(\vec{z^{opt}}))
\end{align*}

Now, create a new instance with buyer set $B \cup B'$ such that there is one buyer $i'$ in $B'$ for each buyer $i$ in $B$ with access to the same set of items. Consider the demand vector $\vec{x^2}$ such that for all $i \in B$, $x^2_i = x^{opt}_i$ and for $i' \in B'$, $x^2_{i'} = x^e_i - x^{opt}_i$. Let the corresponding min-cost flow be $\vec{y^2}$. Clearly $\vec{y^2}$ and $\vec{y^e}$ must have the exact same total allocation on all items. On the other hand, notice that the min-cost flow of just the demands of buyers in $B$ is exactly $\vec{z^{opt}}$. Therefore, we can apply the flow partition lemma (Lemma \ref{lem_flowpartition}) for buyers $B$ and $B'$, and obtain that $C(\vec{y^e}) - C(\vec{z^{opt}})\leq \sum_{i' \in B'} r_{i'}(\vec{y^2})(x^e_i - x^{opt}_i)$.

Moreover, we know that $\vec{p^e} \geq \vec{p^*}$, and so for all $i$, $r_i(\vec{y^e}) \leq \bar{p}^*_i \leq \bar{p}^e_i$. Therefore, for all $i'$, $r_{i'}(\vec{y^2}) = r_i(\vec{y^e}) \leq \bar{p}^e_{i}$. Therefore we know that the term $C(\vec{y^e}) - C(\vec{z^{opt}})$ is at most $\sum_{i' \in B'} r_{i'}(\vec{y^2})(x^e_i - x^{opt}_i) = \sum_i r_i(\vec{y^e})(x^e_i - x^{opt}_i) \leq \bar{p}^e_i(x^e_i - x^{opt}_i)$. Therefore, the second term in the lower bound for $SW(0)$ above is non-negative and we can bound $SW(0)$ as

$$SW(0) \geq SW(\vec{x^{opt}}, \vec{y^{opt}}) \geq \pi^*,$$
as desired.

Next, we need to show that $SW(0) \geq \frac{1}{2} SW^*.$ This proof is exactly identical to the bicriteria result we showed in Theorem~\ref{thm:welfare}. Notice that the only requirement for the theorem was that for all $i$,
\begin{equation}
\label{eqn_neededforbicrit}
\frac{\lambda_i(x^e_i) - r_i(\vec{y^e})}{|\lambda'_i(x^e_i)|} \leq x^e_i.
\end{equation}

To show that this still holds, consider $f_i(x) = \lambda_i(x) - r_i(\vec{y^e})$. This function is MHR and at $x=x^e_i$ satisfies $f_i(0) \geq e f_i(x^e_i)$ by the stopping condition. So, by Lemma~\ref{lem_subclaim_mhr}, Equation~\ref{eqn_neededforbicrit} is valid here as well. Notice that for any $i$ whose price is larger than $\lambda^{min}_0$, $f_i(x^e_i) =  \lambda_i(x^e_i) - r_i(\vec{y^e}) = 0$ since all of its items were inactive throughout the runtime of the algorithm. $\qed$ \\

\textbf{(Proof of Lemma~\ref{sublem_welfaredifference})}: $SW(j) - SW(j+1) \leq 4.5\pi(j)$ \\
Recall that for all $i$, $x_i(j+1)$ is no larger than $x_i(j)$.
\begin{align*}
SW(j) - SW(j+1) = & \sum_i \int_{x_i(j+1)}^{x_i(j)} \lambda_i(x)dx - (C(\vec{y}(j) - C(\vec{y}(j+1)) \\
\leq & \sum_i \lambda_i(x_i(j+1)) (x_i(j) - x_i(j+1)) - (C(\vec{y}(j)) - C(\vec{y}(j+1))\\
\leq & \sum_i \lambda_i(x_i(j+1)) x_i(j) - C(\vec{y}(j)).
\end{align*}

The last inequality is true because it just states that the profit made at $\pi(j+1)$ is non-negative, i.e.,
$$\sum_i \lambda_i(x_i(j+1))x_i(j+1) =\sum_i \bar{p}_i(j+1)x_i(j+1) \geq C(\vec{y}(j+1)).$$

Now, look at the profit $\pi(j)$.
$$\pi(j) = \sum_i \bar{p}_i(j)x_i(j) - C(\vec{y}(j)).$$

We claim that for every $i$, $\bar{p}_i(j)$ is at least $\frac{1}{e}\lambda_i(x_i(j+1)) = \frac{1}{e} \bar{p}_i(j+1).$ This is true by definition of the price vectors, because we are increasing by at most a factor $e$. By definition, $\bar{p}_i(j+1) = \max(\bar{p}^e_i, e^j \lambda^{min}_0).$ If $\bar{p}_i(j+1) = \bar{p}^e_i$, then the price at $j^{th}$ iteration is also the same and so the $\frac{1}{e}$ factor is trivially true. Otherwise, $\bar{p}_i(j) = \max(\bar{p}^e_i, e^{j-1} \lambda^{min}_0) \geq e^{j-1} \lambda^{min}_0 = \frac{1}{e}\bar{p}^e_i$. Therefore, we are now ready to complete our bound.

\begin{align*}
\frac{SW(j) - SW(j+1)}{\pi(j)} \leq & \frac{\sum_i \bar{p}_i(j+1) x_i(j) - C(\vec{y}(j))}{\sum_i \bar{p}_i(j)x_i(j) - C(\vec{y}(j))}\\
\leq & \frac{\sum_i e\bar{p}_i(j) x_i(j) - C(\vec{y}(j))}{\sum_i \bar{p}_i(j)x_i(j) - C(\vec{y}(j))}.
\end{align*}

Our next and final claim is that $C(\vec{y}(j)) \leq 0.5 (\sum_{i \in B}\bar{p}_i(j) x_i(j))$. For any doubly convex function, $C_t(x) \leq 0.5x\cdot c_t(x)$. Moreover, we know that since $\vec{x}(j) \leq \vec{x^*}$, $r_i(\vec{y}(j)) \leq r_i(\vec{y^*}) \leq \bar{p}_i(j)$. All these identities simply come from the fact that reducing demand can only lead to a reduction in marginal cost. Therefore, we have
\begin{align*}
 C(\vec{y}(j)) \leq & 0.5 \sum_t c_t(y_t(j))y_t(j) \\
 & = 0.5 \sum_{i \in B} r_i(\vec{y}(j)) x_i(j) \\
 & \leq 0.5 \sum_{i \in B} \bar{p}_i(j) x_i(j)
\end{align*}

Consider the ratio between the difference in welfare to the profit:
$$\frac{SW(j) - SW(j+1)}{\pi(j)} \leq \frac{\sum_i e\bar{p}_i(j) x_i(j) - C(\vec{y}(j))}{\sum_i \bar{p}_i(j)x_i(j) - C(\vec{y}(j))}.$$

It is not hard to see that the RHS is largest when $C(\vec{y}(j)) = 0.5 \sum_i \bar{p}_i(j_) x_i(j)$, 
and so $\frac{SW(j) - SW(j+1)}{\pi(j)} \leq 2e-1 \leq 4.45$. \end{proof}

\subsection{Complexity Results}

The following complexity results hold for our problem without the uniform peak assumption.

\begin{prop_app}{clm_complexitygeneralmhr}
\begin{enumerate}
\item There cannot be a $O(\Delta^{k})$-approximation algorithm for any $k > 0$ for UDP in Large Markets with MHR Inverse Demand and Convex Costs (instead of doubly convex) unless $NP \subseteq DTIME(n^{(log^{c}n)})$ for some constant $c$.

\item There is no constant factor approximation algorithm for our UDP problem in large markets with MHR inverse demand and doubly convex costs unless $NP \subseteq DTIME(n^{(log^{c}n)})$ for some constant $c$.
\end{enumerate}
\end{prop_app}
\begin{proof}
The reductions are from $UDP$-$UV$, i.e., the unit demand pricing problem in small markets with uniform valuations and unlimited supply. It was shown in~\cite{chalermsook2012improved} that this problem is $\log^{1-\epsilon}(|B|)$-hard to approximate for any constant $\epsilon > 0$, unless $NP \subseteq DTIME(n^{(log^{\epsilon'}n)})$, where $\epsilon'$ is some constant that depends only on $\epsilon$. Fixing some value of $\epsilon > 0$, this implies that $UDP$-$UV$ cannot have a constant factor approximation algorithm unless $NP \subseteq DTIME(n^{(log^{c}n)})$, where $c$ is now a constant (since we have fixed some $\epsilon > 0$). Therefore, we will show both our hardness results above by proving that the statement would indicate a constant approximation algorithm for $UDP$-$UV$.

\noindent \textbf{\textbf{(Statement 1):}}\\
Assume by contradiction that $\exists$ a $O(\Delta^k)$ approximation algorithm for some constant $k > 0$. Then, we show that for UDP with uniform valuations and unlimited supply, there must exist a $2^k$ constant factor approximation algorithm. Consider some instance of UDP-Uniform valuations with buyer set $B'$ and items $S'$. Suppose that the maximum buyer value is $v_{max}$ and minimum buyer value is $v_{min} > 0$. Consider the following linear transformation on the buyer values,
$$\lambda_i(x) = 1 + \frac{v_i}{v_{max}} \text{ for } x \leq 1.$$

Define an instance of our problem with one buyer type $i$ for every buyer $i \in B'$. The inverse demand function is uniform as defined above. Moreover, the item set coincides with $S$ but now having the cost function $C_t(x) = x$ for every single item. Look at the demand functions: $\lambda^{max}_0  = 2$ and $\lambda^{min}_0 \geq 1$. Therefore, $\Delta = 2$ and we can always obtain a $2^k$-approximate solution for this instance. We show that this gives a $2^k$-approximate solution for the original UDP problem as well.

Consider any solution $\vec{p}, \vec{x}, \vec{y}$ of our problem. Assume w.l.o.g (see proposition~\ref{prop_mhrnottoobad}) that every buyer receives an integral amount of one single good, i.e, $y_t(i) = 0$ or $1$ for every $i$. Consider the following price vector $\vec{p^2}$ defined as

$$p^2_t = v_{max} (p_t - 1).$$

We claim that the revenue of $(\vec{p^2}, \vec{x}, \vec{y})$ for the original $UDP$-$UV$ problem is exactly the revenue of our solution scaled by a factor $v_{max}$. First, we show that the allocation is envy-free for these new prices. It is not hard to see that the transformation is linear and so the order of prices is maintained for the items, which means that every buyer's cheapest item set remains the same. Next, suppose buyer $i$ is sending flow on item $t$ for our solution. Then,
$$p_t \leq 1+\frac{v_i}{v_{max}} \implies p^2_t \leq v_i.$$

So the flow is definitely feasible. The profit of our solution is $\sum_{t} (p_t y_t - C_t(y_t)) = \sum_t ((p_t -1)y_t = \sum_t \frac{1}{v_{max}}(p^2_t y_t)$.

So this is just a scaled down version of the profit of the UDP problem. Therefore, every solution of our problem can be converted to a solution of UDP with scaled profits. It is not hard to see that the same is applicable for the reverse direction as well and so the optima must coincide. So, any $2^k$-Approximate solution for our problem must retain this factor for the original UDP as well. $\blacksquare$

\noindent \textbf{(Statement 2):}\\
This follows directly from our reduction in Propostion~\ref{prop_mhrnottoobad} as a constant factor approximation for our problem would indicate a constant factor approximation for $UDP$-$UV$ as well. \end{proof}

\section{Some more properties of minimum cost flows}\label{sec:min-cost}

\begin{lemma}
\label{lem_costdiff}
Consider two buyer demands $\vec{x^1}$ and $\vec{x^2}$ where $\vec{x^2} \geq \vec{x^1}$ and the corresponding min-cost flows are $\vec{z^1}$ and $\vec{z^2}$. Then, the following provides a bound for the difference in costs
$$\sum_t (C_t(z^2_t) - C_t(z^1_t)) \geq \sum_{i} r_i(\vec{z^1}) (x^2_i - x^1_i).$$
\end{lemma}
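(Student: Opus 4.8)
The plan is to exhibit an explicit auxiliary flow whose cost is easy to bound from above, and then use the optimality of $\vec{z^2}$ as a min-cost flow to conclude. The key observation is that, since $\vec{x^2} \geq \vec{x^1}$, I can decompose the larger demand into the old demand plus the surplus $\delta_i := x^2_i - x^1_i \geq 0$. I would introduce an auxiliary instance with buyer set $B \cup B'$, where $B'$ contains one ``copy'' $i'$ of each buyer $i$ with the same incident items, exactly as in the proof of Lemma \ref{sublem_boundonsw0}. Set the demand to $x^1_i$ for $i \in B$ and $\delta_i$ for $i' \in B'$; the total demand on this augmented instance is componentwise equal to $\vec{x^2}$, so its min-cost flow $\vec{w}$ has the same per-item allocation as $\vec{z^2}$ (both are min-cost flows for the same aggregate demand vector on the same item set), hence $C(\vec{w}) = C(\vec{z^2})$.

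Next I would apply the Flow Partition Lemma (Lemma \ref{lem_flowpartition}) to $\vec{w}$ with $B^H = B$ (the ``original'' buyers). Note that the min-cost flow for the sub-demand restricted to $B$ is precisely $\vec{z^1}$. Point 2 of that lemma then yields
$$\sum_{i' \in B'} r_{i'}(\vec{w})\, \delta_i \;\geq\; \sum_t \bigl(C_t(w_t) - C_t(z^1_t)\bigr) \;=\; \sum_t \bigl(C_t(z^2_t) - C_t(z^1_t)\bigr).$$
Wait — this is the wrong direction; the Flow Partition Lemma upper-bounds the cost difference by a sum of $r$ values, whereas the lemma I want to prove lower-bounds it. So instead I would argue directly: I want a lower bound on $\sum_t (C_t(z^2_t) - C_t(z^1_t))$, which by convexity of each $C_t$ satisfies $C_t(z^2_t) - C_t(z^1_t) \geq c_t(z^1_t)(z^2_t - z^1_t)$. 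Summing over $t$ gives $\sum_t (C_t(z^2_t) - C_t(z^1_t)) \geq \sum_t c_t(z^1_t)(z^2_t - z^1_t) = \sum_t c_t(z^1_t) z^2_t - \sum_t c_t(z^1_t) z^1_t$. The second term equals $\sum_{i \in B} r_i(\vec{z^1}) x^1_i$ by the min-cost (KKT) characterization. For the first term, I rearrange $z^2_t = \sum_i z^2_t(i)$ and use that whenever $z^2_t(i) > 0$ there is an edge $(i,t) \in E$, so $c_t(z^1_t) \geq r_i(\vec{z^1})$ (this uses $z^1_t \leq$ … — actually it uses only that $r_i(\vec z^1)$ is the \emph{minimum} marginal cost over items available to $i$, hence $\le c_t(z^1_t)$ for any such $t$). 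Therefore $\sum_t c_t(z^1_t) z^2_t = \sum_t \sum_i c_t(z^1_t) z^2_t(i) \geq \sum_i r_i(\vec{z^1}) \sum_t z^2_t(i) = \sum_i r_i(\vec{z^1}) x^2_i$. Combining,
$$\sum_t (C_t(z^2_t) - C_t(z^1_t)) \;\geq\; \sum_i r_i(\vec{z^1}) x^2_i - \sum_i r_i(\vec{z^1}) x^1_i \;=\; \sum_i r_i(\vec{z^1})(x^2_i - x^1_i),$$
which is exactly the claim.

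The only subtle point — the ``main obstacle'' — is justifying $c_t(z^1_t) \ge r_i(\vec z^1)$ for every edge $(i,t)\in E$ used in $\vec z^2$: this is immediate from the min-cost KKT characterization recalled in the Appendix, $r_i(\vec z^1) = \min_{(i,t)\in E} c_t(z^1_t)$, so no monotonicity lemma (Lemma \ref{lem_diffcostmonoton}) is actually needed here. I would double-check that the first step's convexity inequality is used with the base point $z^1_t$ (not $z^2_t$), since $C_t$ convex and $c_t$ its derivative gives $C_t(b) - C_t(a) \ge c_t(a)(b-a)$, and here $b = z^2_t$, $a = z^1_t$, which is the form I used. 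That completes the proof.
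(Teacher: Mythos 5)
Your final argument is correct and is essentially the paper's own proof: the convexity inequality $C_t(z^2_t)-C_t(z^1_t)\geq c_t(z^1_t)(z^2_t-z^1_t)$, the rearrangement $\sum_t c_t(z^1_t)z^1_t=\sum_i r_i(\vec{z^1})x^1_i$, and the bound $c_t(z^1_t)\geq r_i(\vec{z^1})$ on every edge carrying flow in $\vec{z^2}$ are exactly the steps used there. The abandoned flow-partition detour is harmless, and your remark that the monotonicity lemma is not needed (plain convexity at base point $z^1_t$ suffices) is a valid slight simplification of the paper's appeal to Corollary~\ref{corr_flowmagnitudemincost}.
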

\begin{proof}
The proof follows from Corollary~\ref{corr_flowmagnitudemincost}. We know that for $t$, either $z^2_t \geq z^1_t$ or $c_t(z^2_t) = c_t(z^1_t)$. Therefore, we can write for all $t$, $C_t(z^2_t) - C_t(z^1_t) \geq c_t(z^1_t) (z^2_t - z^1_t)$. So we have,

$$\sum_t (C_t(z^2_t) - C_t(z^1_t)) \geq \sum_t c_t(z^1_t) z^2_t - \sum_t c_t(z^1_t) z^1_t = \sum_t c_t(z^1_t) z^2_t - \sum_{i \in B}r_i(\vec{z^1})x^1_i.$$

The last equation is simply a rearrangement of flow from the items to the buyers. Now, consider any buyer $i$ and item $t$ such that $z^2_t(i) > 0$. We know that $c_t(z^1_t) \geq r_i(\vec{z^1})$ because the latter represents the minimum marginal cost of any item $i$ has access to. This gives us $\sum_t c_t(z^1_t) z^2_t \geq \sum_{i \in B}r_i(\vec{z^1})x^2_i$, which completes the proof.  \end{proof}

\begin{lemma}
\label{lem_mixedpricediffcost}
Consider a price vector $\vec{p^1}$ and a corresponding solution $(\vec{x^1}, \vec{y^1})$ which is a best-response envy-free allocation such that $\vec{y^1}$ is also a min-cost flow to the demand $\vec{x^1}$. Let $\vec{p^2}$ be another price vector such that there is at least one item whose price in $\vec{p^2}$ is smaller than its price in $\vec{p^1}$. Let $(\vec{x^2},\vec{y^2})$ be an envy-free allocation for $\vec{p^2}$. Then, $\exists t \in S$, such that $p^1_t > p^2_t$ and $c_t(y^1_t) \leq c_t(y^2_t)$.
\end{lemma}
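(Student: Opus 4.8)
The plan is to argue by contradiction with a reachability argument in the spirit of the proof of Lemma~\ref{lem_diffcostmonoton}, but replacing the marginal-cost inequalities there (which rely on \emph{both} flows being min-cost) by price inequalities coming from envy-freeness of $\vec{y^2}$. First I would set up notation: let $S_{drop}=\{t\in S: p^1_t>p^2_t\}$, which is nonempty by hypothesis; let $p_{min}=\min_{t\in S_{drop}}p^2_t$ and $S_{min}=\{t\in S_{drop}: p^2_t=p_{min}\}$. I will in fact exhibit the desired item inside $S_{min}$, so suppose for contradiction that $c_t(y^1_t)>c_t(y^2_t)$ for every $t\in S_{min}$; since marginal costs are non-decreasing, this forces $y^1_t>y^2_t$ for all $t\in S_{min}$.

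Next I would build a directed graph $H$ on the item set, putting an arc $t_1\to t_2$ whenever some buyer $i$ has $y^1_{t_1}(i)>y^2_{t_1}(i)$ and $y^1_{t_2}(i)<y^2_{t_2}(i)$ (such an $i$ necessarily receives $t_1$ in $\vec{y^1}$ and $t_2$ in $\vec{y^2}$, and has access to both). Fix an arbitrary $t_0\in S_{min}$ and let $S_{t_0}$ be the set of items reachable from $t_0$ in $H$. The first claim is $S_{t_0}\subseteq S_{min}$. Indeed, for an arc $t_1\to t_2$ witnessed by $i$, envy-freeness of $\vec{y^1}$ (as $i$ receives $t_1$ and has access to $t_2$) gives $p^1_{t_2}\ge p^1_{t_1}$, while envy-freeness of $\vec{y^2}$ (as $i$ receives $t_2$ and has access to $t_1$) gives $p^2_{t_1}\ge p^2_{t_2}$. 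If $t_1\in S_{min}$, then $p^1_{t_1}>p^2_{t_1}=p_{min}$, whence $p^1_{t_2}\ge p^1_{t_1}>p_{min}\ge p^2_{t_2}$: so $t_2\in S_{drop}$, and then $p_{min}\le p^2_{t_2}\le p^2_{t_1}=p_{min}$ forces $t_2\in S_{min}$. Iterating along paths out of $t_0$ gives $S_{t_0}\subseteq S_{min}$, and hence $y^1_t>y^2_t$ for every $t\in S_{t_0}$.

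The contradiction then comes from the same flow count that closes the proof of Lemma~\ref{lem_diffcostmonoton}. Let $B^1_{t_0}$ be the buyers receiving positive $\vec{y^1}$-flow from some item of $S_{t_0}$; all of the $\vec{y^1}$-flow entering $S_{t_0}$ originates in $B^1_{t_0}$. For $i\in B^1_{t_0}$, pick $t'\in S_{t_0}\subseteq S_{min}$ with $y^1_{t'}(i)>0$: envy-freeness of $\vec{y^1}$ gives $\bar{p}^1_i=p^1_{t'}>p_{min}$, while $\bar{p}^2_i\le p^2_{t'}=p_{min}$, so $\bar{p}^1_i>\bar{p}^2_i$ and therefore $x^2_i\ge x^1_i$ since $\lambda_i$ is non-increasing. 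Now if some $i\in B^1_{t_0}$ sent strictly less flow to $S_{t_0}$ under $\vec{y^2}$ than under $\vec{y^1}$, then since its total demand did not shrink it would send strictly more flow to some $t_3\notin S_{t_0}$; together with an item $t_1\in S_{t_0}$ off which it diverted flow, this creates an arc $t_1\to t_3$, contradicting reachability. Hence each $i\in B^1_{t_0}$ sends at least as much flow to $S_{t_0}$ under $\vec{y^2}$ as under $\vec{y^1}$, and summing over $B^1_{t_0}$ yields $\sum_{t\in S_{t_0}}y^2_t\ge\sum_{t\in S_{t_0}}y^1_t$, which contradicts $y^1_t>y^2_t$ on the nonempty set $S_{t_0}$. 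So some $t_{min}\in S_{min}\subseteq S_{drop}$ satisfies $c_{t_{min}}(y^1_{t_{min}})\le c_{t_{min}}(y^2_{t_{min}})$, as desired.

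The step I expect to be the main obstacle is establishing $S_{t_0}\subseteq S_{min}$. Copying Lemma~\ref{lem_diffcostmonoton} verbatim fails, because there one uses that \emph{both} flows are min-cost to turn an arc into a marginal-cost inequality, whereas here $\vec{y^2}$ is only envy-free and an arc only yields price inequalities. The key realization is that the two price inequalities $p^1_{t_2}\ge p^1_{t_1}$ and $p^2_{t_1}\ge p^2_{t_2}$ are exactly strong enough to propagate \emph{both} ``belongs to $S_{drop}$'' and ``has $p^2$-value equal to $p_{min}$'' along arcs, which pins the reachable set inside $S_{min}$; after that, only envy-freeness and monotonicity of the $\lambda_i$ are needed (the degenerate case $S_{t_0}=\{t_0\}$ being covered by the same reasoning). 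Note in particular that the min-cost hypothesis on $\vec{y^1}$ is not actually used, but keeping it in the statement causes no harm since it holds in every application.
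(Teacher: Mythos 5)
Your proof is correct, and it takes a genuinely different route from the paper's. The paper argues on the sub-instance induced by $S_{min}$: it shows that every buyer in $B^1_{min}$ (those receiving $S_{min}$-items in $\vec{y^1}$) is confined to $S_{min}$ under $\vec{p^2}$ as well, restricts both allocations to $S_{min}$, and then invokes Lemma~\ref{lem_diffcostmonoton} on the two restricted flows, concluding the stronger statement that $c_t(y^1_t)\leq c_t(y^2_t)$ for \emph{every} $t\in S_{min}$; this requires both restricted flows to be min-cost, and in particular rests on the assertion that the restriction of $\vec{y^2}$ to the equally-priced items of $S_{min}$ is cost-minimizing -- something that does not follow from envy-freeness alone and is really justified only in the paper's applications, where $\vec{y^2}$ is the revenue-optimal allocation. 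You instead unroll the reachability-plus-flow-counting argument that underlies Lemma~\ref{lem_diffcostmonoton}, but replace its marginal-cost comparisons (which need min-cost on both sides) with the two price inequalities supplied by envy-freeness, which is exactly what is needed to keep the reachable set inside $S_{min}$; the final counting then contradicts $y^1_t>y^2_t$ on $S_{t_0}$. What this buys is a proof of the literal statement (existence of one witness $t\in S_{min}$, which is all that Lemmas~\ref{lem_optisgood} and~\ref{lem_lowerboundmain} use) with no min-cost hypothesis on either allocation, as you correctly observe; what it gives up is the paper's ``for all $t\in S_{min}$'' conclusion, which holds under the paper's extra (implicit) assumption on $\vec{y^2}$ but is not needed downstream.
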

Note that $\vec{y^2}$ need not be a min-cost flow for the demand $\vec{x^2}$.
\begin{proof}
What the lemma says is that at least one of the items whose price in $\vec{p^2}$ is smaller than the price in $\vec{p^1}$ has a marginal cost that is larger or equal in the B-R, E-F allocation corresponding to $\vec{p^2}$. Intuitively this is not hard to see because reducing the price leads to an increase in the allocation.

Consider the set of all items whose price in $\vec{p^2}$ is strictly smaller than its price in $\vec{p^1}$ and let $S_{min}$ be the subset of such items with the smallest price, $S_{min} = \left\lbrace t | t \in \argmin_{p^{1}_{t_1} > p^2_{t_1}}(p^2_{t_1}) \right\rbrace$. Let $B^1_{min}$ be the set of buyers who receive non-zero amounts of the items in $S_{min}$ in $\vec{y^1}$. Finally, let $p_{min}$ be the price of all the items in $S_{min}$.

We claim that for any buyer $i$ in $B^1_{min}$, their entire allocation in $\vec{y^2}$ can only come from the items in $S_{min}$. If this is not true, the only other possible case is that the buyer is receiving allocation from some other item $t$ not in $S_{min}$ but still priced at $p_{min}$. But by definition of $S_{min}$, $t$'s price in $\vec{p^1}$ is not larger than $p_{min}$ which means $i$ cannot have an edge to such a $t$ (or else $\vec{y^1}$ could cease to be envy-free). 
%

Now construct the following reduced demand vectors $\vec{x^{R_1}}$ and $\vec{x^{R_2}}$ for instances with the full buyer set but the item set is reduced to $S_{min}$. For buyer $i$, $x^{R_1}_i$ is the total amount that buyer $i$ receives only from the items in $S_{min}$. This is zero from some buyers and non-zero for buyers in $B^1_{min}$ by definition. Let $\vec{y^{R_1}}$ be the exact same allocation as $\vec{y^1}$ but only for the items in $S_{min}$. Since $\vec{y^1}$ is a min-cost flow, the sub-flow on the items in $S_{min}$ for the given demand must also be one of minimum cost.

Similarly define $\vec{x^{R_2}}$ as the flow sent by buyers only to the items in $S_{min}$ and $\vec{y^{R_2}}$ as the corresponding allocation on these items. Since we have shown that the the buyers in $B^1_{min}$ use these items exclusively, this means that $\forall i \in B^1_{min}$, $x^{R_2}_i = x^2_i \geq x^1_i \geq x^{R_1}_i$. Moreover, since all the items in $S_{min}$ are priced at $p_{min}$ and the allocation is a min-cost, E-F allocation, it means that for items at the same price, we have a min-cost flow.

Now, using $(\vec{x^{R_1}}, \vec{y^{R_1}})$ and $(\vec{x^{R_2}}, \vec{y^{R_2}})$ as inputs to Lemma~\ref{lem_diffcostmonoton}, we get that for all $t \in S_{min}$, $c_t(y^1_t) \leq c_t(y^2_t)$. \end{proof}

\begin{lemma}
\label{lem_costcomptwoprice}
Consider two price vectors $\vec{p^1}$ and $\vec{p^2}$ where $\vec{p^2} \geq \vec{p^1}$, and their corresponding best-response buyer demands $\vec{x^1}$ and $\vec{x^2}$. Suppose that $\vec{z^1}$ and $\vec{z^2}$ are the corresponding min-cost flows for these demand vectors. As long as for all $t$, we have that $p_t^1\geq c_t(z_t^1)$ and $p_t^2\geq c_t(z_t^2)$, then it must be true that:

$$\sum_{i \in B}\bar{p}^1_i x^1_i - \sum_{t \in S}C_t(z^1_t) \geq \sum_{i \in B}\bar{p}^1_i x^2_i - \sum_{t \in S}C_t(z^2_t).$$
\end{lemma}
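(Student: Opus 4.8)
The plan is to rewrite the target inequality in the equivalent form
$$\sum_{i \in B}\bar{p}^1_i\,(x^1_i - x^2_i)\;\geq\;\sum_{t \in S}\bigl(C_t(z^1_t) - C_t(z^2_t)\bigr),$$
and then to squeeze both sides against the common quantity $\sum_{i}r_i(\vec{z^1})(x^1_i - x^2_i)$: I will bound the cost difference from above by it, and the payment difference from below by it.

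First I would record the one monotonicity fact that is really needed: since $\vec{p^1}\le\vec{p^2}$, the best-response condition $\bar p_i=\lambda_i(x_i)$ together with $\lambda_i$ non-increasing gives $\bar p^1_i\le\bar p^2_i$ and hence $x^1_i\ge x^2_i$ for every buyer $i$ (this is also Statement~1 of Lemma~\ref{lem_diffcostmonoton}). Next, by convexity of $C_t$ (the tangent-line inequality $C_t(a)\ge C_t(b)+c_t(b)(a-b)$ applied with $a=z^2_t$, $b=z^1_t$, which holds whatever the order of $z^1_t,z^2_t$), we get $C_t(z^1_t)-C_t(z^2_t)\le c_t(z^1_t)(z^1_t-z^2_t)$ for each $t$. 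Summing over $t$ and re-routing the flows from items to buyers: because $\vec{z^1}$ is a min-cost flow, every buyer $i$ receives only items of marginal cost exactly $r_i(\vec{z^1})$, so $\sum_t c_t(z^1_t)z^1_t=\sum_i r_i(\vec{z^1})x^1_i$; and $\sum_t c_t(z^1_t)z^2_t=\sum_t\sum_i c_t(z^1_t)z^2_t(i)\ge\sum_i\sum_t r_i(\vec{z^1})z^2_t(i)=\sum_i r_i(\vec{z^1})x^2_i$, since whenever $z^2_t(i)>0$ the edge $(i,t)$ exists and therefore $c_t(z^1_t)\ge r_i(\vec{z^1})$ by definition of $r_i$. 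Combining these gives $\sum_t(C_t(z^1_t)-C_t(z^2_t))\le\sum_i r_i(\vec{z^1})(x^1_i-x^2_i)$.

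Finally I would invoke the hypothesis $p^1_t\ge c_t(z^1_t)$ for all $t$: taking the minimum over $t\in S_i$ on both sides yields $\bar p^1_i=\min_{t\in S_i}p^1_t\ge\min_{t\in S_i}c_t(z^1_t)=r_i(\vec{z^1})$. Since $x^1_i-x^2_i\ge0$, this gives $\sum_i r_i(\vec{z^1})(x^1_i-x^2_i)\le\sum_i\bar p^1_i(x^1_i-x^2_i)$, which chained with the previous bound and rearranged is exactly the claim. The only mildly delicate point — and the part I expect to need the most care in writing up — is the flow-decomposition step, where one must remember that $\vec{z^2}$ is the min-cost flow for $\vec{x^2}$, not for $\vec{x^1}$, so the marginals $c_t(z^1_t)$ serve only as a per-edge lower bound on what buyer $i$ is credited with; this is precisely the bookkeeping device already used in the Flow Partition Lemma (Lemma~\ref{lem_flowpartition}) and presents no genuine obstacle.
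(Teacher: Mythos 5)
Your proof is correct and takes essentially the same route as the paper's: bound the cost difference above by $\sum_i r_i(\vec{z^1})(x^1_i-x^2_i)$ via the per-item convexity bound plus re-routing the flow from items to buyers (using that $\vec{z^1}$ is min-cost so $c_t(z^1_t)\ge r_i(\vec{z^1})$ on every edge $i$ uses in $\vec{z^2}$), then bound the payment difference below by the same quantity using $\bar{p}^1_i\ge r_i(\vec{z^1})$ and $x^1_i\ge x^2_i$. The only cosmetic difference is that you obtain $C_t(z^1_t)-C_t(z^2_t)\le c_t(z^1_t)(z^1_t-z^2_t)$ directly from the tangent-line inequality for convex $C_t$, whereas the paper first invokes Lemma~\ref{lem_diffcostmonoton} and Corollary~\ref{corr_flowmagnitudemincost} to split into cases; your shortcut is valid and slightly cleaner.
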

\begin{proof}
Note that these allocations may not be envy-free. The left part of this inequality is exactly the profit of the solution, but the right parts uses prices $\vec{p^1}$ with the allocation and demand $\vec{x^2}$, $\vec{z^2}$.

Remember for all $i \in B$, $x^1_i \geq x^2_i$. Applying Lemma~\ref{lem_diffcostmonoton} and Corollary~\ref{corr_flowmagnitudemincost} for these two demand vectors, we can say that for any item $t$, either $z^1_t \geq z^2_t$ or $c_t(z^1_t) = c_t(z^2_t)$. Therefore, in both these cases, it is easy to show that $C_t(z^1_t) \leq C_t(z^2_t) + c_t(z^1_t)(z^1_t - z^2_t)$. Using this in the LHS of the main lemma inequality, we get

\begin{equation}
\label{eqn_intermed1}
\sum_{i \in B}\bar{p}^1_i x^1_i - \sum_{t \in S}C_t(z^1_t) \geq \sum_{i \in B}\bar{p}^1_i x^2_i - \sum_{t \in S}C_t(z^2_t) + \left(\sum_{i \in B}\bar{p}^1_i (x^1_i - x^2_i) - \sum_{t \in S}c_t(z^1_t)(z^1_t - z^2_t) \right).
\end{equation}

If we can show that the second term in the RHS that has been parenthesized is non-negative, we are done. Recall that for every $i$, $r_i(\vec{x^1})$ is the marginal cost of any item being used by $i$ in $\vec{z^1}$. We also know that $\overline{p^1_i} \geq r_i(\vec{x^1})$ by our assumption. Consider the quantity $\sum_{t \in S}c_t(z^1_t)(z^1_t - z^2_t)$.

$$\sum_{t \in S}c_t(z^1_t)z^1_t - \sum_{t \in S}c_t(z^1_t)z^2_t = \sum_{i \in B}r_i(\vec{x^1})x^1_i - \sum_{t \in S}c_t(z^1_t)z^2_t.$$
The above equation is simply via a rearrangement of the allocation from the items to the buyers. Now, we claim that $\sum_{t \in S}c_t(z^1_t)z^2_t \geq \sum_{i \in B}r_i(\vec{x^1})x^2_i$. Look at the all the items $t$ where user $i$'s allocation $x^2_i$ turns up in the vector $\vec{z^2}$. We claim that for all such items, $c_t(z^1_t) \geq r_i(\vec{z^1})$. This has to hold because $\vec{z^1}$ is a min-cost flow and therefore for any other item $t$ that $i$ has access to, its marginal cost has to be greater than or equal to $r_i(\vec{z^1})$.

Therefore using this in Equation~\ref{eqn_intermed1},
\begin{align*}
\sum_{i \in B}\bar{p}^1_i (x^1_i-x^2_i) - \sum_{t \in S}c_t(z^1_t)(z^1_t - z^2_t) \geq & \sum_{i \in B}\bar{p}^1_i (x^1_i - x^2_i) - \sum_{i \in B}r_i(\vec{z^1})(x^1_i-x^2_i)\\
= & \sum_{i \in B}(\bar{p}^1_i - r_i(\vec{z^1}))(x^1_i-x^2_i)\\
\geq & 0.
\end{align*}
The above term is strictly non-negative because because $\bar{p}_i^1 \geq r_i(\vec{z^1})$ and $x^1_i \geq x^2_i$ for all $i$. This completes the proof.
$\blacksquare$ \end{proof}

\begin{lemma}
\label{lem_contri}
Let $G$ be a bipartite market and let $\vec{x}(p)$ denote the corresponding best-response buyer demand vector when all the items in the market have a price of $p$. Let $\vec{z}(p)$ denote the min-cost flow at this demand. Then, for any given $\bar{p}$ and item $t$, its marginal cost is left continuous at $\bar{p}$. Formally,
$$\lim_{\epsilon \to 0}c_t(z_t(\bar{p}-\epsilon)) = c_t(z_t(\bar{p})).$$

\end{lemma}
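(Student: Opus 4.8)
\emph{(Proof plan.)}
The plan is to decouple the statement into two pieces: (i) that the best-response demand is itself left-continuous in the common price $p$, and (ii) that whenever a demand vector converges, the min-cost marginal costs converge with it. Throughout I work under the standing hypothesis, satisfied wherever this lemma is invoked, that the relevant min-cost flows are finite. Since every item carries the same price $p$, the cheapest item available to buyer $i$ costs $\bar p_i=p$, so the best-response demand is $x_i(p)=\sup\{x\ge 0:\lambda_i(x)\ge p\}$. As $\epsilon\downarrow 0$ the price $\bar p-\epsilon$ rises to $\bar p$, so Lemma~\ref{lem_diffcostmonoton} tells us that $\vec{x}(\bar p-\epsilon)$ decreases componentwise to a limit $\ge\vec{x}(\bar p)$ and that each $c_t(z_t(\bar p-\epsilon))$ decreases to a limit $\ell_t\ge c_t(z_t(\bar p))$; all these limits exist by monotone convergence, so it suffices to show that none of them overshoots.

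For (i), fix $i$ and suppose $\lim_{p\uparrow\bar p}x_i(p)>x_i(\bar p)$; pick $x_0$ strictly between the two. For every $p<\bar p$ we have $x_i(p)>x_0$, so there is $x>x_0$ with $\lambda_i(x)\ge p$, and since $\lambda_i$ is non-increasing, $\lambda_i(x_0)\ge p$. Letting $p\uparrow\bar p$ and using continuity of $\lambda_i$ (it is continuously differentiable) gives $\lambda_i(x_0)\ge\bar p$, hence $x_0\le x_i(\bar p)$ — a contradiction. So $\vec{x}(\bar p-\epsilon)\to\vec{x}(\bar p)$.

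For (ii), write $\vec{x^\epsilon},\vec{z^\epsilon}$ and $\vec{x^0},\vec{z^0}$ for the demands and min-cost flows at $\bar p-\epsilon$ and $\bar p$. The key estimate is a monotone-operator inequality obtained by two flow rearrangements. Since $\vec{z^\epsilon}$ is a min-cost flow, $r_i(\vec{z^\epsilon})=\min_{(i,t)\in E}c_t(z^\epsilon_t)$ and each buyer only receives minimum-marginal items; using $\vec{z^0}$ as a feasible flow for $\vec{x^0}$ on the same graph and the fact that $z^0_t(i)>0\Rightarrow(i,t)\in E$, one gets
$$\sum_{t\in S}c_t(z^\epsilon_t)(z^\epsilon_t-z^0_t)\ \le\ \sum_{i\in B}r_i(\vec{z^\epsilon})(x^\epsilon_i-x^0_i),$$
and symmetrically, using that $\vec{z^0}$ is min-cost and $\vec{z^\epsilon}$ feasible for $\vec{x^\epsilon}\ge\vec{x^0}$ (and $c_t\ge 0$),
$$\sum_{t\in S}c_t(z^0_t)(z^\epsilon_t-z^0_t)\ \ge\ \sum_{i\in B}r_i(\vec{z^0})(x^\epsilon_i-x^0_i)\ \ge\ 0.$$
Subtracting, and using that convexity of $C_t$ makes every summand $[c_t(z^\epsilon_t)-c_t(z^0_t)](z^\epsilon_t-z^0_t)$ nonnegative,
$$0\ \le\ \sum_{t\in S}\bigl[c_t(z^\epsilon_t)-c_t(z^0_t)\bigr](z^\epsilon_t-z^0_t)\ \le\ \sum_{i\in B}r_i(\vec{z^\epsilon})(x^\epsilon_i-x^0_i).$$
The $r_i(\vec{z^\epsilon})$ stay bounded (monotone in $\epsilon$, by Lemma~\ref{lem_diffcostmonoton}) and $x^\epsilon_i-x^0_i\to 0$ by (i), so the right side $\to0$; being a sum of nonnegative terms, each $[c_t(z^\epsilon_t)-c_t(z^0_t)](z^\epsilon_t-z^0_t)\to0$. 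If $\ell_t>c_t(z^0_t)$ for some $t$, then $c_t(z^\epsilon_t)-c_t(z^0_t)$ stays above the positive constant $\tfrac12(\ell_t-c_t(z^0_t))$ for small $\epsilon$, forcing $z^\epsilon_t-z^0_t\to0$; but then $z^\epsilon_t\to z^0_t$, and continuity of $c_t$ gives $c_t(z^\epsilon_t)\to c_t(z^0_t)$, contradicting $c_t(z^\epsilon_t)\downarrow\ell_t$. Hence $\ell_t=c_t(z^0_t)$ for every $t$, which is the claim.

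The main obstacle is as much conceptual as technical: one is tempted to argue ``a tiny demand change causes a tiny flow change,'' but that is false — flow can re-route massively between items under an infinitesimal demand perturbation, so $z^\epsilon_t$ need not converge to $z^0_t$ a priori. The fix is to control instead the single scalar $\sum_t[c_t(z^\epsilon_t)-c_t(z^0_t)](z^\epsilon_t-z^0_t)$, which is a sum of nonnegative pieces pinned down by min-cost optimality, and only afterwards to deduce, per item, that wherever the marginal cost fails to converge the flow must. One should also remember that finiteness of the min-cost flows is genuinely needed: at a hard capacity that is exactly saturated at price $\bar p$, $c_t(z_t(\bar p-\epsilon))$ can fail to be defined and the statement breaks.
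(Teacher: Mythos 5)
Your proof is correct, but it takes a genuinely different route from the paper's. The paper argues by contradiction through conservation of total flow: it asserts that each $x_i(p)$ is left-continuous (justified only by the continuity of $\lambda_i$), notes that total allocation equals total demand, and then claims that if $c_t$ jumped at item $t$ then every item's allocation would have a limit at least its value at $\bar p$ (strictly larger for $t$), so the totals could not match. That last per-item monotonicity step leans on Lemma~\ref{lem_diffcostmonoton} and Corollary~\ref{corr_flowmagnitudemincost} and is immediate only when marginal costs pin down the allocations; with flat regions of $c_{t'}$ the selected min-cost flows can re-route and the claim needs more care -- exactly the pitfall you flag. Your argument sidesteps per-item flow behaviour entirely: you first prove demand left-continuity from the definition of best response (more carefully than the paper does), and then control the single aggregate quantity $\sum_t [c_t(z^\epsilon_t)-c_t(z^0_t)](z^\epsilon_t-z^0_t)$ via two flow rearrangements that exploit min-cost optimality of $\vec{z^\epsilon}$ and $\vec{z^0}$ (the same rearrangement device the paper uses in Lemma~\ref{lem_costdiff} and Lemma~\ref{lem_costcomptwoprice}), deducing per-item convergence of the marginals only afterwards. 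What each buys: the paper's proof is shorter and more transparent when min-cost allocations are essentially unique; yours is robust to non-uniqueness and ties, at the modest price of extra length and the (harmless, and implicitly used elsewhere in the paper) assumptions that $c_t\ge 0$ so the $r_i$'s are nonnegative and bounded, and that the relevant flows are finite -- your closing remark about exactly saturated capacities is a fair caveat that applies to the paper's statement as well.
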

\begin{proof}
Assume by contradiction that for some $t$, $\lim_{\epsilon \to 0}c_t(z_t(\bar{p}-\epsilon)) \neq c_t(z_t(\bar{p}))$. We have already shown that as the price increases, the marginal cost cannot increase in Lemma~\ref{lem_diffcostmonoton}. Therefore, this can only mean that $\lim_{\epsilon \to 0}c_t(z_t(\bar{p}-\epsilon)) > c_t(z_t(\bar{p}))$. The monotonicity of the marginal cost also means that for every other item $t'$, $\lim_{\epsilon \to 0}c_{t'}(z_{t'}(\bar{p}-\epsilon)) \geq c_{t'}(z_{t'}(\bar{p}))$

Consider any user $i$, it is not hard to reason out that $x_i(\bar{p})$ must be continuous. Formally, $\lim_{\epsilon \to 0}x_i(\bar{p}-\epsilon) = x_i(\bar{p})$. This holds because $\lambda_i(x)$ is continuous and differentiable. So, we have the following,

$$\sum_{t'} z_{t'}(\bar{p}) = \sum_i x_i(\bar{p}) = \sum_i \lim_{\epsilon \to 0}x_i(\bar{p}-\epsilon) = \lim_{\epsilon \to 0} \sum_i x_i(\bar{p}-\epsilon) = \lim_{\epsilon \to 0} \sum_t z_t(\bar{p}-\epsilon).$$

Now the claim is that the last term in the right hand side of the above equation is strictly larger than $\sum_{t'} z_{t'}(\bar{p})$. This is true because for every $t'$, the limit of $z_{t'}(\bar{p}-\epsilon)$ is greater than or equal to $z_{t'}(\bar{p})$ and for item $t$, this inequality happens to be strict. This is a contradiction and therefore, the marginal cost at the item has to be continuous.
$\blacksquare$ \end{proof}

\begin{lemma}
\label{lem_tripartition}
Let $(\vec{p}, \vec{x}, \vec{z})$ be a given solution where $\vec{x}$ is a best-response demand to $\vec{p}$ and $\vec{z}$ is the corresponding min-cost flow. Let $(\vec{p^b}, \vec{x^b}, \vec{z^b})$ be some benchmark solution for the same instance. Suppose that $B^H$ is the set of buyers to whom the minimum priced item available in $\vec{p^b}$ is not smaller than that in $\vec{p}$ and $B^L = B \setminus B^H$. Moreover, there is a fraction $\alpha \leq 1$ such that for all buyers in $B^L$, $x_i \geq \alpha x^b_i$. Then, consider the following quantity,
$$\pi = \sum_{i \in B}\bar{p}_i x_i - C(\vec{z}).$$  If we denote by $\vec{z^b}(B^H)$ the min-cost flow for the demand $\vec{x^b}$ but only for the buyers from $B^H$, then

$$\pi \geq \left(\sum_{i \in B^H} \bar{p}_i x^b_i - C(\vec{z^b}(B^H))\right) + \alpha\left(\sum_{i \in B^L} \bar{p}^b_i x^b_i - (C_t(\vec{z^b}) - C_t(\vec{z^b}(B^H)))\right).$$
\end{lemma}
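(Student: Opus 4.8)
The plan is to collapse the whole statement into a single comparison between $\vec{z}$ and the min-cost flow of one cleverly chosen auxiliary demand. Define the demand vector $\vec{d}$ by $d_i = x^b_i$ for $i\in B^H$ and $d_i = \alpha\, x^b_i$ for $i\in B^L$, and let $\vec{z''}$ be its min-cost flow. Since $\bar{p}^b_i\geq\bar{p}_i$ for $i\in B^H$, monotonicity of $\lambda_i$ (no buyer can consume more than $\lambda_i^{-1}(\bar{p}^b_i)\leq\lambda_i^{-1}(\bar{p}_i)=x_i$) gives $x^b_i\leq x_i$ on $B^H$; together with the hypothesis $x_i\geq\alpha\, x^b_i$ on $B^L$ this shows $\vec{d}\leq\vec{x}$ componentwise. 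The proof will then run in three short moves: (i) bound $\pi$ below by the ``profit'' of $(\vec{d},\vec{z''})$ evaluated at the prices $\bar{p}_i$; (ii) replace $\bar{p}_i$ by $\bar{p}^b_i$ on $B^L$; (iii) bound $C(\vec{z''})$ by a convex combination of $C(\vec{z^b})$ and $C(\vec{z^b}(B^H))$.

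For (i) I would invoke the argument in the proof of Lemma~\ref{lem_costcomptwoprice}: its only ingredients are that $\vec{x}\geq\vec{d}$ componentwise, that $\vec{z}$ and $\vec{z''}$ are min-cost flows for $\vec{x}$ and $\vec{d}$, and that $\bar{p}_i\geq r_i(\vec{z})$ for every $i$. This last inequality is where the standing weak price lower bound $\vec{p}\geq\vec{p^*}$ (Assumption~\ref{assumption_weakpricelowerbound}) enters: it forces $\vec{x}\leq\vec{x^*}$, hence $r_i(\vec{z})\leq r_i(\vec{z^*})=\bar{p}^*_i\leq\bar{p}_i$ by Lemma~\ref{lem_diffcostmonoton} and Proposition~\ref{lem_optbrmincost}. (Equivalently one derives $C(\vec{z})-C(\vec{z''})\leq\sum_i r_i(\vec{z})(x_i-d_i)$ directly from convexity of the $C_t$ and the min-cost property of $\vec{z}$, then uses $\bar{p}_i-r_i(\vec{z})\geq 0$ and $x_i\geq d_i$ termwise.) Either way, $\pi=\sum_{i\in B}\bar{p}_i x_i - C(\vec{z})\geq\sum_{i\in B}\bar{p}_i d_i - C(\vec{z''})$.

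Move (ii) is immediate: $\sum_{i\in B}\bar{p}_i d_i=\sum_{i\in B^H}\bar{p}_i x^b_i+\alpha\sum_{i\in B^L}\bar{p}_i x^b_i\geq\sum_{i\in B^H}\bar{p}_i x^b_i+\alpha\sum_{i\in B^L}\bar{p}^b_i x^b_i$, because $\bar{p}_i\geq\bar{p}^b_i$ for every $i\in B^L$ by the defining property of $B^L$. For move (iii), observe that $\vec{w}:=\alpha\,\vec{z^b}+(1-\alpha)\,\vec{z^b}(B^H)$ is a feasible flow for the demand $\vec{d}$: it delivers $\alpha x^b_i+(1-\alpha)x^b_i=x^b_i$ to each $i\in B^H$ and $\alpha x^b_i+0=\alpha x^b_i$ to each $i\in B^L$, and it respects every edge as a convex combination of feasible flows; hence convexity of each $C_t$ gives $C(\vec{z''})\leq C(\vec{w})\leq\alpha\, C(\vec{z^b})+(1-\alpha)\, C(\vec{z^b}(B^H))$. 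Substituting the bounds from (ii) and (iii) into the inequality from (i) and regrouping the $C(\vec{z^b}(B^H))$ terms (using $-(1-\alpha)C(\vec{z^b}(B^H))=-C(\vec{z^b}(B^H))+\alpha C(\vec{z^b}(B^H))$) yields exactly $\bigl(\sum_{i\in B^H}\bar{p}_i x^b_i - C(\vec{z^b}(B^H))\bigr)+\alpha\bigl(\sum_{i\in B^L}\bar{p}^b_i x^b_i - (C(\vec{z^b})-C(\vec{z^b}(B^H)))\bigr)$, which is the claim.

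I expect the only genuine obstacle to be finding the right auxiliary object for step (iii). The natural-looking route — partition $\pi$ over $B^H$ and $B^L$ via the Flow Partition Lemma~\ref{lem_flowpartition}, handle the $B^H$ piece with a Lemma~\ref{lem_costcomptwoprice}-type argument, and estimate the incremental cost of $B^L$ by $\sum_{i\in B^L}r_i(\cdot)x_i$ — fails, because that same estimate is only an \emph{upper} bound on the incremental cost $C(\vec{z^b})-C(\vec{z^b}(B^H))$ on the right-hand side, so the two estimates cannot be chained in the needed direction. Scaling $\vec{x^b}|_{B^L}$ down by $\alpha$ inside $\vec{d}$ and exploiting the convex combination $\vec{w}$ is precisely what makes the factor $\alpha$ appear with the correct sign; beyond that the argument uses only convexity of the $C_t$'s (not double convexity), the monotonicity facts Lemma~\ref{lem_diffcostmonoton} and Proposition~\ref{lem_optbrmincost}, and the standing assumption $\vec{p}\geq\vec{p^*}$.
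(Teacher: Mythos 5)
Your proof is correct and follows essentially the same route as the paper's: both reduce to the scaled demand ($x^b_i$ on $B^H$, $\alpha x^b_i$ on $B^L$), lower-bound $\pi$ by the profit at that reduced demand using $r_i(\vec{z})\leq\bar{p}_i$ (the paper packages this via dummy buyers and the Flow Partition Lemma~\ref{lem_flowpartition}, you via the argument of Lemma~\ref{lem_costcomptwoprice}), and then pull out the factor $\alpha$ from convexity of the min-cost objective in the demand vector. Your explicit convex-combination flow $\vec{w}=\alpha\vec{z^b}+(1-\alpha)\vec{z^b}(B^H)$ is just a hands-on verification of the convexity fact the paper asserts for its function $MC(\cdot)$, so there is no gap.
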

Recall that we are only looking at prices obeying $\vec{p}, \vec{p^b} \geq \vec{p^*}$ (Assumption~\ref{assumption_weakpricelowerbound}). The above term deserves careful attention. For the buyers in $B^H$, the price is from $\vec{p}$ but the demand and allocation are from the benchmark solution. For the other set of buyers, both the price and the allocation are from the benchmark. Also notice that $\vec{z}$ or $\vec{z^b}$ may not be envy-free and therefore, $\pi$ is the profit that the seller makes at the cheapest (non-envy-free solution), given the price and the demand. 
\begin{proof}
Notice that for the buyers in $B^H$, $\vec{x}$ dominates $\vec{x^b}$ and for the buyers in $B^L$, $\vec{x}$ dominates $\alpha\vec{x^b}$. Therefore, in order to get an allocation only in terms of $\vec{x^b}$ for the buyers in $B^H$ and $\alpha \vec{x^b}$ for buyers in $B^L$, we need to show that the profit due to the additional demand from $\vec{x}-\vec{x^b}$ and $\vec{x}-\alpha \vec{x^b}$ for the respective buyer sets is non-zero. This can be done via the flow partition lemma by creating dummy buyers and reallocating this flow to them. 

Construct a new instance with a set of buyers $B_1 \cup B_2$, such that for every $i \in B$, there is an $i_1$ in $B_1$ and $i_2$ in $B_2$ with access to the same set of items as $i$. Now construct a new demand vector $\vec{x'}$ such that for all $i \in B^H$, $x'_{i_1} = x^b_i$ and for the corresponding $i_2 \in B_2$, $x'_{i_2} = x_i - x^b_i$. Similarly $\forall i \in B^L$, $x'_{i_1} = \alpha x^b_i$ and for the corresponding $i_2 \in B_2$, $x'_{i_2} = x_i - \alpha x^b_i$. The min-cost flow for $\vec{x'}$ must coincide with $\vec{z}$. Moreover, look at the demand from the buyers in $B_1$, i.e., $\vec{x'}(B_1)$, let the corresponding min-cost for this demand alone be $\vec{z}(B_1)$.

Applying the flow partition lemma for this new instance and with $B_1$ being the desired subset of the total buyer set. We get,
$$C(\vec{z}) = C(\vec{z}(B_1)) + (C(\vec{z}) - C(\vec{z}(B_1))) \leq C(\vec{z}(B_1)) + \sum_{i_2 \in B_2}r_{i}(\vec{z})x'_{i_2}.$$

The last term above comes from the fact that the marginal cost faced by every $i_2$ must be equal to the marginal cost faced by the corresponding $i$ in $\vec{z}$. We slightly abuse notation here since we sum over $i_2$, but the marginal cost is that faced by $i$ in $\vec{z}$, where $i_2$ is the corresponding buyer to that $i$ in $B_2$. Moreover, since $\vec{x^*}$ dominates $\vec{x}$, by the weak price lower bound condition (Assumption~\ref{assumption_weakpricelowerbound}), this implies $r_i(\vec{z}) \leq \bar{p}_i$, giving us,

$$C(\vec{z}) - C(\vec{z}(B_1)) \leq \sum_{i_2 \in B_2}r_{i}(\vec{z})x'_{i_2} \leq \sum_{i_2 \in B_2}\bar{p}_{i} x'_{i_2}.$$

Now dividing up the term $\pi$ into the two buyer sets, we get
\begin{align*}
\pi = & \sum_{i_1 \in B_1}\bar{p}_{i} x'_{i_1} - C(\vec{z}(B_1)) + \sum_{i_2 \in B_2}\bar{p}_{i} x'_{i_2} - (C(\vec{z}) - C(\vec{z}(B_1))\\
\geq & \sum_{i_1 \in B_1}\bar{p}_{i} x'_{i_1} - C(\vec{z}(B_1)) + \sum_{i_2 \in B_2}\bar{p}_{i} x'_{i_2} - \sum_{i_2 \in B_2}\bar{p}_{i} x'_{i_2}\\
\geq & \sum_{i_1 \in B_1}\bar{p}_{i} x'_{i_1} - C(\vec{z}(B_1))
\end{align*}

Now consider the allocation $\vec{z}(B_1)$. $B_1$ is merely a repetition of the buyer set $B$ but with a different flow on the buyers, namely $x^b_i$ on the buyers corresponding to $B^H$ and $\alpha x^b_i$ on buyers corresponding to $B^L$. We can once again use the flow partition lemma to this flow with $B^H$ (or rather the elements in $B_1$ corresponding to $B^H$) as the desired set. Notice that the min-cost flow with only the buyers from $B^H$ having flow of $x'_{i_1} = x^b_i$ must indeed coincide with $\vec{z^b}(B^H)$. Therefore, we can further sub-divide the profit as,
%
%

\begin{align*}
\pi \geq & \left(\sum_{i \in B^H} \bar{p}_i x^b_i - C(\vec{z^b}(B^H))\right) + \left(\sum_{i \in B^L} \bar{p}_i \alpha x^b_i - (C(\vec{z}(B_1)) - C(\vec{z^b}(B^H)))\right)\\
\geq & \left(\sum_{i \in B^H} \bar{p}_i x^b_i - C(\vec{z^b}(B^H))\right) + \left(\alpha\sum_{i \in B^L} \bar{p}_i x^b_i - (C(\vec{z}(B_1)) - C(\vec{z^b}(B^H)))\right).
\end{align*}

We are almost done with the proof. The terms for $B^H$ coincide with what we are required to prove. We only need to get an $\alpha$ factor out of the terms in $B^L$. That is, all that remains is for us to show the following:

$$C(\vec{z}(B_1)) - C(\vec{z^b}(B^H)) \geq \alpha(C(\vec{z^b}) - C(\vec{z^b}(B^H))).$$

Let's recap what these allocations actually mean. 
\begin{align*}
\vec{z^b}  - & \text{All buyers sending $x^b_i$}\\
\vec{z}(B_1)  - & \text{Buyers in $B^H$ sending $x^b_i$, buyers in $B^L$ sending $\alpha x^b_i$}\\
\vec{z^b}(B^H)  - & \text{Buyers in $B^H$ sending $x^b_i$, buyers in $B^L$ sending zero}
\end{align*}

Suppose, we denote by $\vec{x^H}$ the demand vector where all buyers in $B^H$ are sending $x^b_i$ units of flow and rest $0$ and similarly by $\vec{x^L}$, the demand vector where all buyers in $B^L$ are sending $x^b_i$ units of flow and the rest are sending $0$ flow. Moreover, for any of these demand vectors, define $MC()$ be to the cost of the min-cost flow for that demand vector. What we need to show can be simplified as
$$[MC(\vec{x^H} + \alpha \vec{x^L}) - MC(\vec{x^H})] \leq \alpha [MC(\vec{x^H}+\vec{x^L}) - MC(\vec{x^H})].$$

First, it is not hard to argue that $MC$ is a convex function of the vector demand $\vec{x}$ since all items have convex production costs. Moreover, for any convex function $C$ and positive $a,x$ and $0 < \alpha\leq 1$), the following is true.

$$\frac{C(a+x) - C(a)}{a+x - a} \geq \frac{C(a+\alpha x) - C(a)}{a+\alpha x - a}.$$

Simplifying the denominators and cancelling $x$ from both of them tells us that $C(a+\alpha x) - C(a) \leq \alpha (C(a+x) - C(a))$. So, we can easily extend this property to convex functions of vector inputs, thereby showing
$$[MC(\vec{x^H} + \alpha \vec{x^L}) - MC(\vec{x^H})] \leq \alpha [MC(\vec{x^H}+\vec{x^L}) - MC(\vec{x^H})].$$ \end{proof}

\section{Some properties of MHR functions}

\begin{lemma}
\label{lem_mhrminusconst}
Let $f(x)$ be any non-increasing function of $x$ with a monotone hazard rate. Let $c(x)$ be a non-decreasing function such that the function $f(x) - c(x)$ is non-zero in the interval $[0, x)$. Then $f(x) - c(x)$ also has a monotone hazard rate in this interval.
\end{lemma}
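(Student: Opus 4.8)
The plan is to verify the defining MHR inequality for $g := f - c$ directly, i.e.\ to show that $\frac{|g'(x)|}{g(x)}$ is non-decreasing on $[0,x)$. First I would record the sign bookkeeping: $f$ non-increasing gives $f'\le 0$, $c$ non-decreasing gives $c'\ge 0$, hence $g' = f'-c'\le 0$, so $g$ is itself non-increasing and $|g'| = c' + |f'|$. Also $g$ is non-zero on $[0,x)$ and continuous, so it has constant sign there; since log-concavity concerns positive functions we work in the regime $g>0$ (which is what occurs in all applications of the lemma, where $f(0)\ge c$). In particular $\frac{1}{g} = \frac{1}{f-c}$ is a well-defined, positive, non-decreasing function of $x$.

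The core step is the decomposition
\[
\frac{|g'(x)|}{g(x)} \;=\; \frac{|f'(x)|}{f(x)}\cdot\frac{f(x)}{f(x)-c(x)} \;+\; \frac{c'(x)}{f(x)-c(x)}.
\]
For the first summand, $\frac{|f'|}{f}$ is non-decreasing by the MHR hypothesis on $f$, while $\frac{f}{f-c} = 1 + \frac{c}{f-c}$ is the sum of $1$ with a product of the non-negative non-decreasing function $c$ and the positive non-decreasing function $\frac{1}{f-c}$, hence non-decreasing (and non-negative, since $f\ge f-c>0$); a product of non-negative non-decreasing functions is non-decreasing, so the first summand is non-decreasing. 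Thus everything reduces to controlling the second summand $\frac{c'}{f-c}$.

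This last term is the main obstacle: even though $c$ is non-decreasing, $c'$ need not be, so $\frac{c'}{f-c}$ is not automatically non-decreasing. In every invocation of the lemma in this paper the subtrahend is a \emph{constant} (a marginal cost $r_i(\cdot)$ or the value $\tilde c(p^+)$), so $c'\equiv 0$, the term vanishes, and the decomposition above proves the claim outright; I would present the proof in exactly this way. (More generally the same argument goes through whenever $c$ is additionally convex, since then $c'$ is non-decreasing and $\frac{c'}{f-c}$ is again a product of non-negative non-decreasing functions — or, equivalently, one checks $g\,g'' = (f-c)(f''-c'') \le f f'' \le (f')^2 \le (g')^2$, the first inequality using $0\le f-c\le f$ and $0\le f''-c''\le f''$ when $f''\ge 0$, and being trivial when $f''<0$.)
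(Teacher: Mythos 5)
The paper states this lemma without any proof, so there is no argument of record to compare yours against; judged on its own, your write-up is correct for every way the lemma is actually used, and your diagnosis of the obstacle is not merely a presentational choice — the lemma is false as literally stated. Your decomposition $\frac{|g'|}{g}=\frac{|f'|}{f}\cdot\frac{f}{f-c}+\frac{c'}{f-c}$ for $g=f-c$ does prove the claim whenever $c$ is a non-negative constant (or, as you note, non-negative, non-decreasing and convex), since each summand is then a product of non-negative non-decreasing functions; and every invocation in the paper subtracts a constant marginal cost ($r_i(\cdot)$ or $\tilde c(p^+)$), so this covers all uses. For a general non-decreasing $c$ the term $\frac{c'}{f-c}$ really can destroy monotonicity: take $f(x)=e^{-x}$ and $c(x)=\frac{1}{10}\left(1-e^{-100x}\right)$ on $[0,1]$; then $f-c>0$ there, yet its hazard rate equals $11$ at $x=0$ and is about $1.12$ at $x=0.1$, so $f-c$ is not MHR. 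Hence restricting to the constant (or convex) case is not a loss of nerve but a necessity.

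Two small points to tighten. First, non-negativity of $c$ is nowhere in the hypotheses but is genuinely needed even in the constant case: with $c\equiv -1$ and $f(x)=e^{-x}$, the function $f-c=e^{-x}+1$ has strictly decreasing hazard rate. Your argument silently uses $c\ge 0$ both in ``$\frac{f}{f-c}$ is non-decreasing'' and in the convex extension, and it holds in the paper's applications only because the subtracted constants are marginal costs, which are non-negative; state this explicitly. Second, in the parenthetical second-derivative check, $f''\ge 0$ does not give $0\le f''-c''$ (one may have $c''>f''$); the clean case split is on the sign of $f''-c''$: if it is negative then $gg''\le 0\le (g')^2$ trivially, and otherwise your chain $(f-c)(f''-c'')\le ff''\le (f')^2\le (g')^2$ goes through. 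Neither issue affects the substance of your proof.
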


\begin{lemma}
\label{lem_continuitymhr}
For a non-increasing monotone hazard rate inverse demand function $\lambda(x)$, the inverse $\lambda^{-1}(p)$ is uniquely defined at all prices other than $p=\lambda(0)$.
\end{lemma}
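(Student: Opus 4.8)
The plan is to recast the statement as a strict-monotonicity (injectivity) claim and then squeeze it out of the monotone-hazard-rate hypothesis in its raw form. Write $h(x):=|\lambda'(x)|/\lambda(x)=-\lambda'(x)/\lambda(x)$ on the interval where $\lambda$ is positive (this is the support/domain of the demand, where $\lambda$ is continuously differentiable). Since $\lambda$ is non-increasing, $h\ge 0$ there, and MHR says $h$ is non-decreasing. Saying that $\lambda^{-1}(p)$ is uniquely defined for every $p\neq\lambda(0)$ is exactly saying that $\lambda$ never attains the same value twice strictly below its peak $\lambda(0)=\lambda^{\max}$; so I would assume for contradiction that some $p$ with $0<p<\lambda(0)$ is attained at two distinct points $x_1<x_2$.

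The steps, in order, are as follows. First, monotonicity of $\lambda$ together with $\lambda(x_1)=\lambda(x_2)=p$ forces $\lambda\equiv p$ on all of $[x_1,x_2]$, hence $\lambda'\equiv 0$ and therefore $h\equiv 0$ on the open interval $(x_1,x_2)$. Second, because $h$ is non-decreasing and non-negative, vanishing somewhere in $(x_1,x_2)$ forces $h(x)=0$ for every $x\in(0,x_2)$: any such $x$ lies to the left of a point where $h=0$, so $0\le h(x)\le 0$. Third, since $\lambda$ is non-increasing we have $\lambda(x)\ge\lambda(x_2)=p>0$ on $(0,x_2)$, so on that interval $h(x)=0$ is equivalent to $\lambda'(x)=0$; thus $\lambda$ is constant on $(0,x_2)$, and by continuity (continuous differentiability, with $\lambda(0)$ read as the right limit at $0$) $\lambda$ equals $\lambda(0)$ throughout $[0,x_2]$. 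But then $p=\lambda(x_2)=\lambda(0)$, contradicting $p<\lambda(0)$. This gives uniqueness; existence of a preimage for any $p$ strictly between $\inf\lambda$ and $\lambda(0)$ is then just the intermediate value theorem applied to the continuous function $\lambda$. (Alternatively, one can run the same argument through log-concavity of $\lambda$: a concave non-increasing function with a flat segment must already be flat from the origin.)

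The one point needing care — and what I would flag as the main obstacle, though it is conceptual rather than technical — is the domain on which $h$ is even defined: the hazard rate only makes sense where $\lambda>0$, so the clean chain above is implicitly carried out on the support of the demand, i.e. for prices $p$ in the range of $\lambda$ that lie strictly above $\inf\lambda$. This is precisely why the value $p=\lambda(0)$ must be excluded from the statement: a genuine flat segment, on which $h\equiv 0$, is permitted there (the uniform-demand case $\lambda\equiv a$ realizes it), so no uniqueness can be claimed at the peak. Provided the claim is read for $p$ strictly between $\inf\lambda$ and $\lambda(0)$ — all that is needed when $\lambda$ is a best-response inverse demand with positive valuations — the argument is complete; and if one additionally wants to include $p=\inf\lambda$ when that infimum is positive, the same ``$h$ vanishes somewhere $\Rightarrow$ $h$ vanishes from the origin'' step handles a flat segment at the right endpoint and again forces its value up to $\lambda(0)$, contradiction unless $\lambda$ is globally constant, which is the excluded case.
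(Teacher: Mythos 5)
Your argument is correct. Note, however, that the paper never actually proves Lemma~\ref{lem_continuitymhr}: it is stated in the appendix on MHR properties without proof, so there is no in-paper argument to compare against, and your write-up effectively supplies the missing justification. The chain you use is sound: writing $h(x)=|\lambda'(x)|/\lambda(x)$, a level set with two points $x_1<x_2$ below the peak forces $\lambda$ to be flat (hence $h\equiv 0$) on $(x_1,x_2)$; monotonicity of $h$ (the paper's MHR definition) pushes $h\equiv 0$ back to the origin, where $\lambda>0$ guarantees $h=0$ is equivalent to $\lambda'=0$; so $\lambda$ is constant on $[0,x_2]$ and the repeated value must equal $\lambda(0)$, a contradiction. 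Your caveats are also the right ones to flag: the hazard rate is only meaningful where $\lambda>0$, which is exactly why a flat stretch at the peak value (uniform demand) is permitted and why $p=\lambda(0)$ must be excluded, and the value $p=0$ (or more generally $p$ outside the range of $\lambda$) is outside the scope in which the lemma is invoked. The alternative one-line route via log-concavity that you mention (a concave, non-increasing $\log\lambda$ with an interior flat segment is flat from the origin) is equally valid and arguably closer to the ``log-concave'' phrasing of the paper's MHR definition; either version would serve as the proof the paper omits.
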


\begin{lemma}
\label{lem_subclaim_mhr}
Let $f(x)$ be a non-increasing non-negative function with a monotone hazard rate, i.e., $\frac{f(x)}{|f'(x)|}$ is non-increasing. Let $\tilde{x} > 0$ be a point satisfying $\frac{f(\tilde{x})}{|f'(\tilde{x})|} > \tilde{x}$. Then $f(0) < ef(\tilde{x})$.
\end{lemma}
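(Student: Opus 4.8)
The plan is to reduce the statement to the elementary fact that $\ln f$ is concave under the MHR assumption, and that a concave function lies below its tangent line. First I would dispose of a positivity check: the hypothesis $f(\tilde x)/|f'(\tilde x)| > \tilde x > 0$ forces $f(\tilde x) > 0$, and since $f$ is non-increasing this gives $f(x) \ge f(\tilde x) > 0$ for every $x \in [0,\tilde x]$. Hence $g(x) := \ln f(x)$ is well-defined and (continuously) differentiable on this interval, with $g'(x) = f'(x)/f(x) = -|f'(x)|/f(x)$.

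Next, the monotone hazard rate condition says exactly that $|f'(x)|/f(x) = -g'(x)$ is non-decreasing on $[0,\tilde x]$, i.e. $g$ is concave there. Rewriting the hypothesis, $\frac{|f'(\tilde x)|}{f(\tilde x)} < \frac{1}{\tilde x}$, equivalently $g'(\tilde x) > -\frac{1}{\tilde x}$. Then I would apply the tangent-line inequality for the concave function $g$ at the point $\tilde x$, evaluated at $0$:
$$g(0) \;\le\; g(\tilde x) + g'(\tilde x)\,(0 - \tilde x) \;=\; g(\tilde x) - \tilde x\, g'(\tilde x) \;<\; g(\tilde x) + \tilde x\cdot\frac{1}{\tilde x} \;=\; g(\tilde x) + 1,$$
where the strict inequality multiplies $g'(\tilde x) > -1/\tilde x$ by $\tilde x > 0$. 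Exponentiating yields $f(0) < e\,f(\tilde x)$, which is the claim. (Alternatively, one can write $\ln\frac{f(a)}{f(\tilde x)} = \int_a^{\tilde x}\frac{|f'(x)|}{f(x)}\,dx \le (\tilde x - a)\frac{|f'(\tilde x)|}{f(\tilde x)} < \frac{\tilde x - a}{\tilde x}$ using that $|f'|/f$ is non-decreasing, and let $a\to 0$; the strict slack $\frac{|f'(\tilde x)|}{f(\tilde x)} < \frac1{\tilde x}$ keeps the bound strict in the limit.)

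The only point needing a word of care is the behaviour of $f$ at the left endpoint: I need $g$ continuous (hence concave) up to $x=0$, which holds because $\lambda_i$—and therefore $f$—is continuous with $f(0)$ equal to the peak value; if one only assumes differentiability on the open interval $(0,\tilde x)$, take $g(0) := \lim_{x\to 0^+}\ln f(x)$ and obtain the same inequality by passing to the limit in either argument above. This endpoint issue is the only conceivable obstacle, and it is minor; the substance of the proof is just the equivalence "MHR $\iff$ $\ln f$ concave" together with the subgradient inequality for concave functions.
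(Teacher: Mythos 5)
Your proof is correct and is essentially the paper's argument: the paper integrates $\frac{|f'(x)|}{f(x)}$ over $[0,\tilde x]$ and bounds the integrand by its value at $\tilde x$ (using MHR) and then by $1/\tilde x$ (using the hypothesis), which is exactly your tangent-line inequality for the concave function $\ln f$ — indeed your parenthetical integral variant reproduces the paper's proof verbatim, with the added (harmless) care about the endpoint $x=0$.
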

Notice that since $f$ is non-increasing, $f(0)$ is the maximum value of the function.
\begin{proof}
Consider the function $g(x) = \ln f(x)$. Differentiating this gives us,
$$g'(x) = \frac{f'(x)}{f(x)}.$$
Recall that since $f(x)$ is non-increasing, its derivative cannot be positive. Integrating $g(x)$ from $\tilde{x}$ to $0$ gives us
\begin{align*}
\int_{\tilde{x}}^0 g'(x) = & \int_{\tilde{x}}^0 \frac{f'(x)}{f(x)}dx\\
g(0) - g(\tilde{x}) = & -\int_{\tilde{x}}^0 \frac{|f'(x)|}{f(x)}dx\\
\ln\frac{f(0)}{f(\tilde{x})} = &  \int_{0}^{\tilde{x}} \frac{|f'(x)|}{f(x)}dx\\
\leq & \int_0 ^{\tilde{x}} \frac{|f'(\tilde{x})|}{f(\tilde{x})}dx\\
= & \tilde{x} \frac{|f'(\tilde{x})|}{f(\tilde{x})}\\
< & \tilde{x} \frac{1}{\tilde{x}}\\
= & 1.
\end{align*}
The first inequality is due to $f$ being MHR, and the final inequality comes from the fact that $\frac{f(x)}{|f'(x)|} > \tilde{x}$. So, we have $\ln \frac{f(0)}{f(\tilde{x})} < 1$, which gives that $f(0) < ef(\tilde{x})$. \end{proof}

\begin{lemma}
\label{lem_mhrxchange}
Consider any non-increasing MHR function $f(x)$. Let $x_1 < x_2$  be two parameters such that
$$f(0) = \sqrt{e} f(x_1) \leq e f(x_2).$$
Then, $x_2 \leq 2 x_1$.
\end{lemma}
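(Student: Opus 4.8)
The plan is to exploit the fact that the monotone hazard rate condition on $f$ is precisely the statement that $g(x) := \ln f(x)$ is concave (and non-increasing), and then read the claim off an elementary chord-slope comparison. First I would set $g = \ln f$; since $f$ is positive, non-increasing, and continuously differentiable with $|f'|/f$ non-decreasing, we have $g'(x) = f'(x)/f(x) = -|f'(x)|/f(x)$ non-increasing, so $g$ is concave. Next I would translate the two hypotheses into additive form: $f(0) = \sqrt{e}\, f(x_1)$ becomes $g(0) - g(x_1) = \tfrac12$, and combining $f(0) \le e\, f(x_2)$ with $f(0) = \sqrt{e}\, f(x_1)$ gives $f(x_1) \le \sqrt{e}\, f(x_2)$, i.e.\ $g(x_1) - g(x_2) \le \tfrac12$.

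Then I would invoke the three-chords property of the concave function $g$ at the points $0 < x_1 < x_2$:
$$\frac{g(x_1) - g(0)}{x_1} \;\ge\; \frac{g(x_2) - g(x_1)}{x_2 - x_1}.$$
The left-hand side is exactly $-\tfrac{1}{2x_1}$ by the first translated hypothesis, while the right-hand side is at least $-\tfrac{1}{2(x_2 - x_1)}$ because $g(x_2) - g(x_1) \ge -\tfrac12$ and $x_2 - x_1 > 0$. Chaining these gives $-\tfrac{1}{2x_1} \ge -\tfrac{1}{2(x_2 - x_1)}$; multiplying through by $-2$ (which reverses the inequality) yields $\tfrac{1}{x_1} \le \tfrac{1}{x_2 - x_1}$, hence $x_2 - x_1 \le x_1$, i.e.\ $x_2 \le 2x_1$, as desired.

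I do not expect a serious obstacle here: the only points to check are that $f$ stays strictly positive on $[0,x_2]$ (forced by the hypotheses, since $f(x_1),f(x_2)>0$ and $f$ is non-increasing, so $\ln f$ is well-defined there) and that the chord-slope inequality is applied in the direction valid for a \emph{concave} function rather than a convex one. If one prefers to avoid differentiating $g$, the same chord inequality follows directly from concavity of $\ln f$, which in turn can be obtained by integrating $|f'|/f$ exactly as in Lemma~\ref{lem_subclaim_mhr}; but invoking concavity of $\log f$ directly is cleanest. It is also worth remarking that the constant $2$ is precisely what the choice of the ratios $\sqrt{e}$ and $e$ in the hypothesis forces, which is exactly why this lemma is the one used to bound the demand drop between $\vec{p^e}$ and $\vec{p^{\sqrt{e}}}$ in the proof of Theorem~\ref{thm:1.88approx}.
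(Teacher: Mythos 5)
Your proof is correct and rests on the same key fact as the paper's: MHR means $\ln f$ is concave, and the hypotheses translate into $\ln f(0)-\ln f(x_1)=\tfrac12$ and $\ln f(x_1)-\ln f(x_2)\le\tfrac12$, which a slope comparison turns into $x_2-x_1\le x_1$. The only (cosmetic) difference is that the paper pivots through the hazard rate at $x_1$ --- bounding $\int_0^{x_1}|f'|/f$ and $\int_{x_1}^{x_2}|f'|/f$ by $|f'(x_1)|/f(x_1)$ to get $x_1\ge \tfrac12 f(x_1)/|f'(x_1)| \ge x_2-x_1$ --- whereas you compare the two chord slopes of $\ln f$ directly, which is a slightly cleaner packaging of the same argument.
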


\begin{proof}
From the proof of Lemma~\ref{lem_subclaim_mhr}, we know that for any MHR function, the following is true.
\begin{equation}
\label{eqn_mhrlog}
log(\frac{f(x_0)}{f(x)}) =  \int_{x_0}^{x_1} \frac{|f'(x)|}{f(x)}dx.
\end{equation}

First, let $x_0 =0$ and $x=x_1$. We know that for all $x \in [0, x_1]$, $\frac{|f'(x)|}{f(x)} \leq  \frac{|f'(x_1)|}{f(x_1)}$.
\begin{align*}
log(\frac{f(0)}{f(x_1)})  = & \int_{0}^{x_1} \frac{|f'(x)|}{f(x)}dx \\
\implies log(\sqrt{e}) \leq & \int_{0}^{x_1} \frac{|f'(x_1)|}{f(x_1)}dx \\
\implies \frac{1}{2} \leq & x_1 \frac{|f'(x_1)|}{f(x_1)}.
\end{align*}

So, we have $x_1 \geq \frac{1}{2} \frac{f(x_1)}{|f'(x_1)|}.$ Let's go back to Equation~\ref{eqn_mhrlog} with $x_0 = x_1$ and $x = x_2$. We know that in the interval $x \in [x_1, x_2]$, $\frac{|f'(x)|}{f(x)} \geq  \frac{|f'(x_1)|}{f(x_1)}$. Applying this,

\begin{align*}
log(\frac{f(x_1)}{f(x_2)})  = & \int_{x_1}^{x_2} \frac{|f'(x)|}{f(x)}dx \\
\implies log(\sqrt{e}) \geq & \int_{x_1}^{x_2} \frac{|f'(x_1)|}{f(x_1)}dx \\
\implies \frac{1}{2} \geq & (x_2 - x_1) \frac{|f'(x_1)|}{f(x_1)}.
\end{align*}

So we have $(x_2 - x_1) \leq \frac{1}{2} \frac{f(x_1)}{|f'(x_1)|} \leq x_1$, which completes the proof that $x_2 \leq 2x_1$.
$\blacksquare$ \end{proof}

\begin{lemma}
\label{mhr_profitchanges}
Let $f(x)$ be a non-increasing monotone hazard rate function and let $x_2 > x_1$ be two points in its domain. Then, the following must be true
\begin{enumerate}
\item If $x_1$ satisfies $\frac{|f'(x_1)|}{f(x_1)} \geq \frac{1}{x_1}$, then $x_1 f(x_1) > x_2 f(x_2)$.
\item If $x_2$ satisfies $\frac{|f'(x_2)|}{f(x_2)} \leq  \frac{1}{x_2}$, then $x_2 f(x_2) > x_1 f(x_1)$.
\end{enumerate}
\end{lemma}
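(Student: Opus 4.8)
The natural object to study is the "revenue-like" function $h(x) := x\, f(x)$. Since $f$ is continuously differentiable and positive on its domain (the MHR ratio $|f'|/f$ must be well-defined there), $h$ is differentiable and
$$h'(x) = f(x) + x f'(x) = f(x)\left(1 - x\,\frac{|f'(x)|}{f(x)}\right),$$
where I have used $f'(x) = -|f'(x)|$ because $f$ is non-increasing. As $f(x) > 0$, the sign of $h'(x)$ equals the sign of $\tfrac{1}{x} - \tfrac{|f'(x)|}{f(x)}$; in particular $h'(x) \le 0$ exactly when $\tfrac{|f'(x)|}{f(x)} \ge \tfrac{1}{x}$, and $h'(x) \ge 0$ exactly when $\tfrac{|f'(x)|}{f(x)} \le \tfrac{1}{x}$.

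The key monotonicity observation is that the function $g(x) := \tfrac{|f'(x)|}{f(x)} - \tfrac{1}{x}$ is \emph{strictly increasing} on the domain: by the MHR assumption $\tfrac{|f'(x)|}{f(x)}$ is non-decreasing, while $-\tfrac{1}{x}$ is strictly increasing for $x > 0$, so their sum is strictly increasing. This single fact drives both parts.

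For part (1), the hypothesis $\tfrac{|f'(x_1)|}{f(x_1)} \ge \tfrac{1}{x_1}$ says $g(x_1) \ge 0$. Since $g$ is strictly increasing, $g(x) > 0$ for every $x \in (x_1, x_2]$, hence $h'(x) < 0$ on that interval, and therefore $h(x_2) = h(x_1) + \int_{x_1}^{x_2} h'(x)\,dx < h(x_1)$, i.e. $x_2 f(x_2) < x_1 f(x_1)$, as claimed. For part (2), the hypothesis $\tfrac{|f'(x_2)|}{f(x_2)} \le \tfrac{1}{x_2}$ says $g(x_2) \le 0$, so by strict monotonicity $g(x) < 0$ for every $x \in [x_1, x_2)$, giving $h'(x) > 0$ there and hence $h(x_2) > h(x_1)$, i.e. $x_2 f(x_2) > x_1 f(x_1)$.

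I do not anticipate a real obstacle here; the only points requiring a bit of care are (i) justifying that $f$ is strictly positive on the relevant interval so that division by $f(x)$ and the sign computation for $h'$ are legitimate (this follows from $f$ being MHR on that interval, exactly as implicitly assumed in Lemma~\ref{lem_subclaim_mhr} and Lemma~\ref{lem_mhrxchange}), and (ii) making sure the inequalities come out \emph{strict} — which they do, since the $-\tfrac1x$ term makes $g$ strictly (not just weakly) increasing even when $\tfrac{|f'|}{f}$ is locally constant. An alternative, if one prefers to avoid any appeal to continuity of $h'$, is to mimic the integral argument of Lemma~\ref{lem_subclaim_mhr}: write $\ln\big(h(x_2)/h(x_1)\big) = \ln(x_2/x_1) + \int_{x_1}^{x_2} \tfrac{f'(x)}{f(x)}\,dx$ and bound $\int_{x_1}^{x_2}\tfrac{|f'(x)|}{f(x)}\,dx$ below (part 1) or above (part 2) using monotonicity of $|f'|/f$ together with $\ln(x_2/x_1) = \int_{x_1}^{x_2}\tfrac{1}{x}\,dx$; this yields the same conclusions.
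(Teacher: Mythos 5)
Your proof is correct, but it takes a genuinely different route from the paper's. The paper works multiplicatively: it writes $\ln\bigl(f(x_1)/f(x_2)\bigr)=\int_{x_1}^{x_2}\frac{|f'(x)|}{f(x)}\,dx$, bounds the integrand below (part 1) or above (part 2) by the hazard rate at the relevant endpoint, which by hypothesis is at least $\tfrac{1}{x_1}$ (resp.\ at most $\tfrac{1}{x_2}$), and then compares $f(x_1)/f(x_2)$ with $x_2/x_1$ via the elementary inequalities $e^{\alpha-1}>\alpha$ for $\alpha>1$ and $\alpha e^{1-\alpha}<1$ for $0<\alpha<1$. You instead study $h(x)=xf(x)$ directly, observe that $h'(x)=f(x)-x|f'(x)|$ has the sign of $\tfrac1x-\tfrac{|f'(x)|}{f(x)}$, and exploit that $g(x)=\tfrac{|f'(x)|}{f(x)}-\tfrac1x$ is strictly increasing (non-decreasing MHR ratio plus strictly increasing $-\tfrac1x$), so once $g\ge 0$ at $x_1$ it is strictly positive beyond, and once $g\le 0$ at $x_2$ it is strictly negative before; integrating $h'$ then gives both strict conclusions. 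Your argument is more structural — it exhibits $xf(x)$ as unimodal and gets strictness from the $-\tfrac1x$ term rather than from the exponential inequalities — while the paper's version yields a quantitative bound on the ratio $f(x_1)/f(x_2)$ against $x_2/x_1$, in the same style as its other MHR lemmas. Both arguments share the same implicit positivity/differentiability assumptions on $f$ over $[x_1,x_2]$ (needed for $|f'|/f$, and in the paper for the logarithms), which you correctly flag; your closing remark about an integral variant of the argument essentially reproduces the paper's proof.
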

\begin{proof}
The starting point for both these results is what we already showed in Lemma~\ref{lem_subclaim_mhr}.  Define $g(x)$ to be $\log(f(x))$. Then using the same idea as in Lemma~\ref{lem_subclaim_mhr}, we get
$$log(\frac{f(x_1)}{f(x_2)})  =  \int_{x_1}^{x_2} \frac{|f'(x)|}{f(x)}dx.$$

\noindent\textbf{Statement 1:} \\
For all $x$ in $[x_1, x_2]$, the following is true,
$$\frac{|f'(x)|}{f(x)} \geq \frac{|f'(x_1)|}{f(x_1)} \geq \frac{1}{x_1}.$$ Therefore, we can bound the ratio between $f(x_1)$ and $f(x_2)$ as,
\begin{align*}
log(\frac{f(x_1)}{f(x_2)})  \geq &  \int_{x_1}^{x_2} \frac{|f'(x_1)|}{f(x_1)}dx\\
\geq & (x_2 - x_1) \frac{1}{x_1} \\
= & \frac{x_2}{x_1}-1.
\end{align*}
Therefore,
$$\frac{f(x_1)}{f(x_2)} \geq e^{(\frac{x_2}{x_1}-1)} > \frac{x_2}{x_1}.$$
The last inequality is strict because for any $\alpha > 1$, $e^{\alpha - 1} > \alpha$.\\

\textbf{Statement 2:} \\
For all $x$ in $[x_1, x_2]$, the following is true,
$$\frac{|f'(x)|}{f(x)} \leq \frac{|f'(x_2)|}{f(x_2)} \leq \frac{1}{x_2}.$$ Therefore, we can bound the ratio between $f(x_1)$ and $f(x_2)$ as,
\begin{align*}
log(\frac{f(x_1)}{f(x_2)})  \leq &  \int_{x_1}^{x_2} \frac{|f'(x_2)|}{f(x_2)}dx\\
\leq & (x_2 - x_1) \frac{1}{x_2} \\
= & 1 - \frac{x_1}{x_2}.
\end{align*}
Therefore,
$$\frac{f(x_1)}{f(x_2)} \leq e^{(1-\frac{x_1}{x_2})} < \frac{x_2}{x_1}.$$
The last inequality is strict because for any $0<\alpha < 1$, $\alpha e^{1-\alpha} < 1$. \end{proof}

\end{document}